\newtheorem{theorem}{Theorem}
\theoremstyle{definition}
\begin{document}
\thispagestyle{headings}
\markright{\hfill To appear in USENIX Security Symposium, 2022\hfill}

\title{Poisoning Attacks to Local Differential Privacy Protocols for Key-Value Data}

\author{
{\rm Yongji Wu,}\ \ \
{\rm Xiaoyu Cao,}\ \ \
{\rm Jinyuan Jia,}\ \ \
{\rm Neil Zhenqiang Gong}\ \ \
\\
{Duke University}\\
{\{yongji.wu769, xiaoyu.cao, jinyuan.jia, neil.gong\}@duke.edu}
}

\maketitle

\begin{abstract}

Local Differential Privacy (LDP) protocols enable an untrusted server to perform privacy-preserving, federated data analytics. 
Various LDP protocols have been developed for different types of data such as categorical data, numerical data, and key-value data. Due to their distributed settings, LDP protocols are fundamentally vulnerable to \emph{poisoning attacks}, in which fake users manipulate the server's analytics results via sending carefully crafted data to the server. However, existing poisoning attacks focused on LDP protocols for simple data types such as categorical and numerical data, leaving the security of LDP protocols for more advanced data types such as key-value data unexplored. 

In this work, we aim to bridge the gap by introducing novel poisoning attacks to LDP protocols for key-value data. In such a LDP protocol, a server aims to simultaneously estimate the frequency and mean value of each key among some users, each of whom possesses a set of key-value pairs. 
Our poisoning attacks aim to simultaneously maximize the frequencies and mean values of some attacker-chosen target keys via sending carefully crafted data from some fake users to the sever. Specifically, since our attacks have two objectives, we formulate them as a two-objective optimization problem. 
Moreover, we propose a method to approximately solve the two-objective optimization problem, from which we obtain the optimal crafted data the fake users should send to the server.
We demonstrate the effectiveness of our attacks to three LDP protocols for key-value data both theoretically and empirically.  We also explore two defenses  against our attacks, which are effective in some scenarios but have limited effectiveness in other scenarios. Our results highlight the needs for new defenses against our poisoning attacks.

\end{abstract}


\section{Introduction}

Nowadays, many Internet services rely on users' data. 
However, it poses significant challenges to  users' privacy for a server to collect raw data from users. 
\emph{Local Differential Privacy (LDP)}~\cite{erlingsson2014rappor} aims to address the challenges. Specifically, LDP is a variant of differential privacy~\cite{dwork2006calibrating} under a local setting, where each user locally perturbs his/her data before sending it to an untrusted server. The server aggregates the perturbed data and obtains the statistics of interest. LDP ensures that even if the server is compromised, users' privacy is still well-protected. Due to its promising resilience against untrusted server, LDP has been widely deployed by  Internet giants such as Google~\cite{erlingsson2014rappor}, Apple~\cite{appledf2017}, and Microsoft~\cite{ding2017collecting}. 

Moreover, LDP protocols have been proposed for different types of data, such as categorical data~\cite{wang2017locally,erlingsson2014rappor,jia2019calibrate,qin2016heavy,wang2018locally,wang2019locally},  numerical data~\cite{ding2017collecting,duchi2018minimax}, multidimensional data~\cite{wang2019collecting,zhang2018calm}, and key-value data~\cite{ye2019privkv,gu2020pckv}. 
For instance, in recommender systems, each user rates a set of items (e.g., products), where an item and a rating can be viewed as a key and a value, respectively. Thus, each user possesses a set of key-value pairs. In current recommender systems,  users  send their raw key-value pairs to the server. However, given access to users' raw key-value pairs, an untrusted server can infer users' sensitive attributes (e.g., gender, age, sexual orientation) via attribute inference attacks~\cite{gong2016you,jia2017attriinfer}. LDP protocols enable a server to   collect  frequency (i.e., popularity) and mean value (i.e., mean rating) of each key from users without accessing their raw key-value pairs and thus protect users' rating-behavior privacy. 
The collected frequencies and mean values can be used to rank keys and make recommendations to users.  

However, due to the distributed settings, LDP protocols are vulnerable to \emph{poisoning attacks}~\cite{cheu2019manipulation,cao2019data}, in which an attacker injects fake users into the system and manipulates the server's analytics results via sending carefully crafted data from the fake users to the server. Specifically, Cheu et al.~\cite{cheu2019manipulation} showed that poisoning attacks can degrade the overall performance for indiscriminate items, while Cao et al.~\cite{cao2019data} showed that poisoning attacks can promote  attacker-chosen target items in LDP protocols for frequency estimation and heavy hitter identification. However, these studies focused on simple data types such as categorical data and numerical data, in which each user possesses a single categorical item or numerical value. The security of LDP protocols for more advanced data types such as key-value data is largely unexplored.  

In this work, we aim to bridge this gap. Specifically, we perform a systematic study on poisoning attacks to LDP protocols for key-value data. 
In our poisoning attacks, an attacker aims to simultaneously promote the estimated frequencies and mean values for some attacker-chosen \emph{target keys}. An attacker can inject some fake users into the system and send carefully crafted data to the server to achieve the attack goals. Our attacks pose severe security threats to LDP protocols for key-value data. 
For example, when such a LDP protocol is deployed to collect popularity and mean ratings of mobile apps in a mobile-app recommender system, an attacker can use our attacks to promote a malicious app's popularity and mean rating such that it may be recommended to more people.

However, different from the poisoning attacks to LDP protocols for simple data types~\cite{cheu2019manipulation,cao2019data}, poisoning attacks to the LDP protocols for key-value data face new challenges. Specifically, key-value data are inherently heterogeneous, i.e., keys are categorical  and values are numerical. Moreover,   there are  correlations between  the keys and the values. In particular,  the estimated mean value of a key depends on the estimated frequency  of the key. 
Furthermore, each user may possess more than one key-value pair, while each user  only has a single item or numerical value in LDP protocols for categorical and numerical data. Therefore, existing poisoning attacks are insufficient for  LDP protocols for key-value data. 

To address the challenges, we formulate our poisoning attacks as a \emph{two-objective optimization problem}, which explicitly captures the attacker's two objectives on promoting both the estimated frequencies and mean values of the target keys. Specifically, we define the \emph{frequency gain} (or \emph{mean gain}) as the difference between the total estimated frequency (or mean value) of the target keys before and after attack. The expected frequency gain and expected mean gain are the two objective functions in our two-objective optimization problem, where the expectation is taken over the randomness in a LDP protocol. Moreover, we propose a method, called \emph{maximal gain attack (M2GA)}, to approximately solve the two-objective optimization problem. The solution corresponds to the crafted data fake users should send to the server. Specifically, M2GA can exactly maximize the expected frequency gain and approximately maximize the expected mean gain.  

To demonstrate the effectiveness of  M2GA, we also propose two baseline poisoning attacks, called \emph{random message attack ({RMA})} and \emph{random key-value pair attack ({RKVA})}. In RMA, each fake user sends a random message in the   domain allowed by the LDP protocol to the server, while in RKVA, each fake user picks a random target key, associates the largest allowable value with it, and perturbs the key-value pair  following the LDP protocol before sending it to the server.

We apply our attacks to three state-of-the-art LDP protocols for key-value data, e.g., PrivKVM~\cite{ye2019privkv}, PCKV-UE~\cite{gu2020pckv}, and PCKV-GRR~\cite{gu2020pckv}. Moreover, we evaluate our attacks both theoretically and empirically. Theoretically, we derive the expected frequency gains of our attacks exactly. However, it is challenging to derive the expected mean gains exactly because they involve divisions of random variables. To address the challenge, we derive the expected mean gains approximately via relaxing the divisions of random variables. We note that prior work~\cite{cheu2019manipulation,cao2019data} found security-privacy trade-offs in LDP protocols for categorical and numerical data, i.e., such a LDP protocol is more vulnerable to poisoning attacks when it is more privacy-preserving. One interesting finding from our theoretical analysis is that, such security-privacy trade-off does not necessarily hold in LDP protocols for key-value data. For instance, in M2GA to  PrivKVM~\cite{ye2019privkv}, the expected frequency gain increases (i.e., more vulnerable to M2GA) as the privacy budget decreases (i.e., more privacy-preserving) when an attacker selects one target key; the expected frequency gain does not depend on the privacy budget when an attacker selects two target keys; and the expected frequency gain decreases as the privacy budget decreases when an attacker selects more than two target keys. Empirically, we evaluate our attacks on multiple datasets. Our results show that M2GA can successfully promote the estimated frequencies and mean values of the target keys, and that M2GA substantially outperforms the two baseline attacks.

We also explore two defenses against our poisoning attacks. Specifically, in one defense, the server uses one-class  classifier to detect fake users via treating users' data sent to the server as their features. 
PrivKVM  requires multiple communication rounds between the users and the server. Therefore, in our second defense, the server detects fake users in PrivKVM via checking the consistency of their data  sent to the server in multiple rounds. Our intuition is that a fake user sends highly correlated data to the server in multiple rounds, while a genuine user does not. 
Our empirical results show that our defenses are effective in some scenarios. For instance, when the fraction of  fake users and the number of target keys are small, M2GA achieves negligible frequency gains and mean gains when the second defense is deployed. 
However, the defenses are ineffective in other scenarios, e.g., when the fraction of fake users or the number of target keys is large for the second defense, which highlights the needs for new defense mechanisms against our attacks.

Our contributions can be summarized as follows:
\begin{itemize}[leftmargin=*]
    \item To the best of our knowledge, we are the first to study  poisoning attacks to LDP protocols for key-value data. 
    \item We formulate our attacks as a two-objective optimization problem,     which aims to maximize both the expected frequency gain and expected mean gain of the target keys. 
    \item We evaluate our attacks on three state-of-the-art LDP protocols for key-value data both theoretically and empirically. 
    \item We investigate two defenses against our attacks. Our results show that the defenses can defend against our attacks in some scenarios but not in others, which highlights that new defenses are needed to mitigate our attacks.
\end{itemize}


\section{Related Work}
\label{sec:related}

\paragraph{Poisoning Attacks to LDP} Two concurrent studies~\cite{cheu2019manipulation,cao2019data} proposed poisoning attacks to LDP protocols for categorical and numerical data. In these LDP protocols, each user holds a single item or numerical value, and a server aims to estimate the frequencies of items or identify heavy hitters that have the largest item frequencies.  Cheu et al.~\cite{cheu2019manipulation} showed that an attacker can downgrade the accuracy of the estimated item frequencies or the identified heavy hitters for indiscriminate items via injecting fake users into the system. Cao et al.~\cite{cao2019data} showed that an attacker can increase the estimated frequencies for attacker-chosen target items or promote them to be identified as heavy hitters. In particular, Cao et al. formulated their poisoning attacks as a single-objective optimization problem, where the objective function is to maximize the frequency gains for the target items. As we discussed in Introduction, these poisoning attacks are insufficient for LDP protocols for key-value data. 

In particular, our work differs from \cite{cao2019data} in the following aspects. First, we formulate a two-objective optimization problem for key-value data instead of the single-objective one. Second, our solutions to the optimization problems are different. Third, we propose different defenses against the poisoning attacks. Fourth, we observe different privacy-security trade-off. Specifically, Cao et al. \cite{cao2019data} found that when the privacy guarantee is stronger, a protocol becomes less secure to poisoning attacks. We do not necessarily observe such privacy-security trade-off both theoretically (in some cases) and empirically for LDP protocols for key-value data. 

\paragraph{Poisoning Attacks to ML}
Poisoning attacks to machine learning systems have been studied extensively~\cite{li2016data,ji2017backdoor,liutrojaning2018,gu2017badnets,chen2017targeted,ji2018model,yang2017fake,fang2018poisoning,fang2020influence,fang2020local,nelson2008exploiting,munoz2017towards,biggio2012poisoning,carlini2021poisoning,hidano2020exposing}. 
In these attacks, an attacker  manipulates the training phase of a machine learning system via poisoning some carefully selected training examples or tampering the training process.  For instance,  training-data poisoning attacks have been studied for  support vector machines~\cite{biggio2012poisoning}, neural networks~\cite{liutrojaning2018,gu2017badnets,chen2017targeted}, and recommender systems~\cite{li2016data,yang2017fake,fang2018poisoning,fang2020influence,huang2021data}. Training-process poisoning attacks have been studied for federated learning~\cite{bhagoji2019analyzing,fang2020local,bagdasaryan2020backdoor}. Our poisoning attacks differ from these ones because the computational process of LDP protocols is significantly different from that of machine learning training phases.


\section{Preliminaries}
\label{sec:prelim}

\begin{table}[]
    \centering
    \begin{tabular}{|c|c|}
    \hline
       symbol  & representation \\\hline
       $n$  & \# genuine users\\\hline
       $m$ & {\# fake users}\\\hline
       $\beta$ & fraction of fake users\\\hline
       $\mathcal{K}$ & dictionary of keys\\\hline
       $d$ & \# keys\\\hline
       $\langle k, v\rangle$ & key-value pair\\\hline
       $f_k$ & frequency of $k$\\\hline
       $m_k$ & {mean value of $k$}\\\hline
       $\epsilon$ & privacy budget\\\hline
       $\ell$ & padding length\\\hline
       $r$ & \# target keys\\\hline
       $G_f$ & frequency gain\\\hline
       $G_m$ & mean gain\\\hline
    \end{tabular}
    \caption{{Notations used in this work.}}
    \label{tab:notations}
\end{table}

Before we dive into details, we summarize the important notations we use in Table \ref{tab:notations}.

\subsection{LDP Protocols for Key-Value Data}
Suppose we have $n$ users, we have a dictionary $\mathcal{K}$ of $d$ keys (i.e., $\mathcal{K}=\{1,2,\cdots, d\}$), and each user possesses a set of KV pairs $\langle k, v\rangle$, where $k\in \mathcal{K}$ and $v \in [-1,1]$. Note that, without loss of generality, we assume the values are transformed into the range $[-1,1]$. A server aims to estimate the frequency and mean value of each key among the $n$ users. The frequency of a key is the fraction of users who possess the key, while the mean value of a key is the average of the values in the  KV pairs that contain the key.  Formally, the true frequency $f_k$ and mean value $m_k$ for each key $k$ are defined as follows: 
\begin{equation*}
f_{k}=\frac{\sum_{u =1}^{n} \mathbb{I}_{S_{u}}(\langle k, \cdot\rangle)}{n}, \quad m_{k}=\frac{\sum_{u \in \{1,\cdots,n\},\langle k, v\rangle \in S_{u}} v}{n \cdot f_{k}},    
\end{equation*}
where $\mathcal{S}_u$ is the set of KV pairs possessed by user $u$ and $\mathbb{I}_{S_{u}}(\langle k, \cdot\rangle)$ is an indicator function that equals 1 if  one KV pair in $S_{u}$ contains the key $k$ and equals 0 otherwise.

\paragraph{Framework of LDP Protocols for Key-Value Data} In LDP protocols, each user randomly perturbs its KV pairs and sends the perturbed data (called \emph{message}) to the server. Roughly speaking, in LDP, any two sets of KV pairs are perturbed to the same message with close probabilities. State-of-the-art LDP protocols~\cite{ye2019privkv,gu2020pckv} for key-value data consist of the following three key steps.

\begin{itemize}[leftmargin=*]
    \item \textbf{Sample:} A user randomly samples a key from the dictionary and constructs a KV pair based on the sampled key. 
    \item \textbf{Perturb:} The user perturbs the constructed KV pair to obtain the message that should be sent to the server. 
    \item \textbf{Aggregate:} The server estimates the frequency and mean value of each key via aggregating the messages from all users. We denote by $\hat{f}_k$ and $\hat{m}_k$ the estimated frequency and mean value of a key $k$.  
\end{itemize}

Next, we briefly review three state-of-the-art LDP protocols for key-value data, i.e., PrivKVM~\cite{ye2019privkv}, PCKV-UE~\cite{gu2020pckv}, and PCKV-GRR~\cite{gu2020pckv}.  

\subsection{PrivKVM}
PrivKVM utilizes an iterative procedure, where the aforementioned three steps are performed for $N_{\text{iter}}$ rounds. Specifically, after each round, the server has an estimated mean value $\hat{m}_k$ for each key $k$, which is used  to construct messages for the users who do not possess the key $k$ in the next round. Next, we describe the three steps in each round.

\paragraph{Sample} For each user, PrivKVM samples a key $k$ from the dictionary uniformly at random. If the user possesses $k$, then the Sample step returns the user's KV pair $\langle k,v \rangle$, otherwise the Sample step constructs a KV pair $\langle k, v=\hat{m}_k\rangle$ ($\hat{m}_k$ is the estimated mean value in the previous round and is set to 0 in the first round). The value $v$ in the KV pair is then discretized to $v^*=1$ with a probability of $\frac{1+v}{2}$ and $v^*=-1$ with a probability $\frac{1-v}{2}$. Finally, the Sample step returns a KV pair $\langle k, v^*\rangle$  and a flag indicating whether $k$ is possessed by the user or not.

\paragraph{Perturb} First, the user perturbs the discretized value $v^*$ to be $v^{\prime}$ based on  the following rule:
\begin{equation}
    v^{\prime}=\left\{\begin{array}{ll}
v^* & \text { w.p. } \frac{e^{\epsilon_2}}{1+e^{\epsilon_2}} \\
-v^* & \text { w.p. } \frac{1}{1+e^{\epsilon_2}}
\end{array}\right.,
\end{equation}
where w.p. is short for with probability. 
Then, the user further perturbs the $\langle k, v^{\prime}\rangle$ pair to be $\langle k_p, v^{\prime}_p\rangle$. 
Specifically, if the user possesses the key $k$, then $\langle k_p, v^{\prime}_p\rangle$ is obtained  based on the following perturbation rule:
\begin{equation}
    \left\langle k_p, v^{\prime}_p\right\rangle=\left\{\begin{array}{ll}
\left\langle 1, v^{\prime}\right\rangle & \text { w.p. } \frac{e^{\epsilon_{1}}}{1+e^{\epsilon_{1}}} \\
\langle 0,0\rangle & \text { w.p. } \frac{1}{1+e^{\epsilon_{1}}}
\end{array}\right..
\end{equation}
If the user does not have $k$, then $\langle k_p, v^{\prime}_p\rangle$ is obtained as follows:
\begin{equation}
    \left\langle k_p, v^{\prime}_p\right\rangle=\left\{\begin{array}{ll}
\langle 0,0\rangle & \text { w.p. } \frac{e^{\epsilon_{1}}}{1+e^{\epsilon_{1}}} \\
\left\langle 1, v^{\prime}\right\rangle & \text { w.p. } \frac{1}{1+e^{\epsilon_1}}
\end{array}\right..
\end{equation}
Finally, the user sends the pair $\langle k_p, v^{\prime}_p\rangle$ and the index of the key $k$ to the server.

\paragraph{Aggregate}
We denote by $n_k$ the number of users reporting the index of key $k$ and the tuple $\langle 1, \cdot \rangle$. Then, the server computes the estimated frequency of $k$ as follows:
\begin{equation}
    \hat{f}_{k}=\frac{p-1+n_k/n}{2 p-1}, 
    \label{eq:freq_estimate_privkvm}
\end{equation}
 where  $p=\frac{e^{\epsilon_{1}}}{e^{\epsilon_{1}}+1}$. 
Then, the server counts the number of users $n_1^k$ (or $n_{-1}^k$) that report the index of key $k$ and the tuple $\langle 1,1 \rangle$ (or $\langle 1,-1 \rangle$). 
The server computes the estimated mean value of $k$ as follows: 
\begin{equation}
    \hat{m}_{k}=\frac{\hat{n}_{1}^k-\hat{n}_{-1}^k}{n_k},
    \label{eq:mean_estimate_privkvm}
\end{equation}
where $\hat{n}_{1}^k$ and $\hat{n}_{-1}^k$ are defined as follows:
\begin{align}
\hat{n}_{1}^k &=\frac{p-1}{2 p-1} \cdot n_k+\frac{n_{1}^k}{2 p-1}, \\
\hat{n}_{-1}^k &=\frac{p-1}{2 p-1} \cdot n_k+\frac{n_{-1}^k}{2 p-1}, 
\end{align}
where $p=\frac{e^{\epsilon_{2}}}{e^{\epsilon_{2}}+1}$. 
We note that frequency estimation is only conducted in the first round, while mean estimation uses the results after  $N_{\text{iter}}$ rounds.  The privacy budget  $\epsilon_1$ is only allocated to the first round, while the privacy budget  $\epsilon_2$ is equally allocated for each round. Specifically,  we have $\epsilon_1=\frac{\epsilon}{2}$ and $\epsilon_2=\frac{\epsilon}{2N_{\text{iter}}}$, where $\epsilon$ is the overall privacy budget.

\subsection{PCKV-UE and PCKV-GRR}
\label{back:pckv}
PCKV-UE and PCKV-GRR are  two protocols from the PCKV family~\cite{gu2020pckv}. PCKV improves PrivKVM by utilizing a \emph{padding-and-sampling} strategy in the Sample step to reduce the variance of frequency and mean value estimation. Moreover, unlike PrivKVM that performs aforementioned three steps for multiple rounds, PCKV only requires a single round. 
The two protocols PCKV-UE and PCKV-GRR mainly differ in the Perturb step and the Aggregate step, while sharing a common Sample step. Specifically, we have the following workflow:

\paragraph{Sample} Suppose a user $u$ has a set of KV pairs $\mathcal{S}_u$. If $|\mathcal{S}_u| < \ell$, where $\ell$ is called \emph{padding length} and is a parameter of the protocols, then the user $u$ pads the set $\mathcal{S}_u$ with dummy KV pairs $\{\langle d + 1,0 \rangle,\langle d + 2,0 \rangle, \dots, \langle d +l-|\mathcal{S}_u|,0 \rangle\}$. 

Note that the maximum number of dummy KV pairs is $\ell$ when $\mathcal{S}_u$ is an empty set. After the padding, a random KV pair $\langle k,v \rangle$ is drawn from the padded set. The value $v$ is then discretized in the same way as PrivKVM, i.e., the value $v$ is discretized to $v^*=1$ with a probability of $\frac{1+v}{2}$ and $v^*=-1$ with a probability $\frac{1-v}{2}$.

\paragraph{Perturb} We denote $d'=d+l$ and $\mathcal{K}^{\prime} = \{1,2,\cdots,d+l\}$ (the dictionary with dummy keys). The Perturb steps for PCKV-UE and PKCV-GRR are as follows:
\begin{itemize}[leftmargin=*]
    \item \textbf{PCKV-UE:} PCKV-UE leverages Unary Encoding (UE) to perturb KV pairs.
    In particular, a perturbed vector $\mathbf{y} \in \{1, -1, 0\}^{d^{\prime}}$ is sent to the server, where $y[i]$ contains value information of key $i$ and is obtained as follows:
    \begin{align}
     \mathbf{y}[k] & =\left\{\begin{array}{lll}
            v^*, & \text { w.p. } & a \cdot p \\
            -v^*, & \text { w.p. } & a \cdot(1-p)\\
            0, & \text { w.p. } & 1-a
            \end{array} \right.,\\
    \mathbf{y}[i] &=\left\{\begin{array}{lll}
            1, & \text { w.p. } & b / 2 \\
            -1, & \text { w.p. } & b / 2  \\
            0, & \text { w.p. } & 1-b
            \end{array}\right.,i \in \mathcal{K}^{\prime}\setminus \{k\},
    \end{align}    
    where $a$, $b$, and $p$ are as follows:
    \begin{align}
    a=\frac{1}{2}, b=\frac{2}{e^{\epsilon}+3}, p=e^{\epsilon} /\left(e^{\epsilon}+1\right).
\end{align}
    \item \textbf{PCKV-GRR:} PCKV-GRR leverages Generalized Random Response (GRR) to perturb KV pairs. Specifically, the KV pair $\langle k,v^* \rangle$ is randomly perturbed into $\langle k^{\prime}, v^{\prime} \rangle$ as follows:
    \begin{equation}
        \left\langle k^{\prime}, v^{\prime}\right\rangle=\left\{\begin{array}{lll}
        \langle k, v^*\rangle, & \text { w.p. } & a \cdot p \\
        \langle k,-v^*\rangle, & \text { w.p. } & a \cdot(1-p) \\
        \langle i, 1\rangle , & \text { w.p. } & b \cdot 0.5 \\
        \langle i,-1\rangle , & \text { w.p. } & b \cdot 0.5
        \end{array}\right.,
    \end{equation}
    where $i \in \mathcal{K}^{\prime} \setminus \{k\}$ and  $a$, $b$, and $p$ are as follows:
\begin{align}
    a=\frac{\ell\left(e^{\varepsilon}-1\right)+2}{\ell\left(e^{\varepsilon}-1\right)+2 d^{\prime}}, b=\frac{1-a}{d^{\prime}-1}, p=\frac{\ell\left(e^{\varepsilon}-1\right)+1}{\ell\left(e^{\varepsilon}-1\right)+2}.
\end{align}
The perturbed KV pair $\langle k^{\prime}, v^{\prime} \rangle$ is sent to the server. 
\end{itemize}

\paragraph{Aggregate} Due to the difference in Perturb step, the Aggregate steps for PCKV-UE and PCKV-GRR are also different. Given a key $k$, we respectively use $n^k_1$ and $n^k_{-1}$ to denote the number of users that support the KV pairs $\langle k,1 \rangle$ and $\langle k,-1 \rangle$. In particular, they can be computed as follows:
\begin{itemize}[leftmargin=*]
    \item \textbf{PCKV-UE}: Recall that, in PCKV-UE, $\mathbf{y}[k]$ contains the value information of the key $k$. We say $\mathbf{y}[k]$ supports $\langle k,1 \rangle$ (or $\langle k,-1 \rangle$)  if 
    $\mathbf{y}[k]=1$ (or $\mathbf{y}[k]=-1$). 
    Then, we can compute $n^k_1$ (or $n^k_{-1}$) as the number of users whose perturbed vectors satisfy $\mathbf{y}[k]=1$ (or $\mathbf{y}[k]=-1$).
    \item \textbf{PCKV-GRR}: In PCKV-GRR, each user sends a single perturbed KV pair $\langle k^{\prime}, v^{\prime} \rangle$ to the server. Similar to PCKV-UE, we say $\langle k^{\prime}, v^{\prime} \rangle$ supports $\langle k,1 \rangle$ (or $\langle k,-1 \rangle$)  if $k'=k$ and $v^{\prime}=1$ (or $v^{\prime}=-1$). Then, we can compute $n^k_1$ (or $n^k_{-1}$) as the number of users whose perturbed KV pairs satisfy $k'=k$ and $v^{\prime}=1$ (or $v^{\prime}=-1$).
    
\end{itemize}
Given $n^k_1$ and $n^k_{-1}$, the server can estimate the frequency of key $k$ as follows:
\begin{equation}
    \hat{f}_{k}=\frac{\left(n^k_{1}+n^k_{-1}\right) / n-b}{a-b} \cdot \ell,
    \label{eq:freq_estimate_pckv}
\end{equation}
The estimated mean value of the key $k$ is computed as follows:
\begin{equation}
    \hat{m}_{k}=\ell\left(\hat{n}^k_{1}-\hat{n}^k_{-1}\right) /\left(n \hat{f}_{k}\right),
    \label{eq:mean_estimate_pckv}
\end{equation}
where
\begin{align}
 &\left[\begin{array}{l}
    \hat{n}^k_{1} \\
    \hat{n}^k_{-1}
    \end{array}\right]=A^{-1}\left[\begin{array}{l}
    n^k_{1}-n b / 2 \\
    n^k_{-1}-n b / 2
    \end{array}\right], \\ 
    &A=\left[\begin{array}{cc}
    a p-\frac{b}{2} & a(1-p)-\frac{b}{2} \\
    a(1-p)-\frac{b}{2} & a p-\frac{b}{2}
    \end{array}\right].
\end{align}

We note that in all the three LDP protocols, the server can clip the estimated frequency $\hat{f}_k$ to be $\frac{1}{n}$ if it is smaller than  $\frac{1}{n}$ and to be 1 if it is larger than 1. Moreover, the server can clip the support counts $\hat{n}_1^k$ and $\hat{n}_{-1}^k$ into the range of $[0,\frac{n \hat{f}_k}{\ell}]$ in PCKV-UE and PCKV-GRR, as well as the range of $[0, n_k]$ in PrivKVM, before using them to estimate the mean value. 


\section{Threat Model}

\paragraph{Attacker's capability and background knowledge}
We assume that the attacker is able to inject some fake users into the system. Previous measurement study~\cite{thomas2013trafficking} has shown that an attacker can easily obtain a large number of fake/compromised users in online web services such as Twitter and Facebook. Specifically, we assume that the attacker has access to $m$ fake users. Together with the $n$ genuine users, the server estimates frequencies and mean values of keys among the $n+m$ users. For each fake user, the attacker can arbitrarily craft its message sent to the server. 
An attacker has access to the parameters  of the LDP protocol since the LDP protocol is executed on a user side. 
Specifically, an attacker has access to the dictionary of keys, as well as the implementation details of the Sample and Perturb steps of the LDP protocol.

\paragraph{Attacker's goal} An attacker aims to promote some target keys. We assume $r$ target keys and denote them as a set $\mathbb{T} = \{k_1,k_2,\cdots,k_r\}$. The attacker aims to increase the estimated frequencies and  mean values of the target keys via sending carefully crafted messages from the fake users to the server. 
Without loss of generality, we assume the $m$ fake users have IDs $n+1, n+2, \cdots, n+m$.  We denote the set of  messages the fake users send to the server as $\mathbb{Y} = \{y_i\}_{i=n+1}^{n+m}$, where $y_i$ is the message fake user $i$ sends to the server. We denote by  $\hat{f}_k$ and $\tilde{f}_k$ the estimated frequency of key $k$ among the $n$ genuine users and all the $n+m$ users, respectively. 
Moreover, we denote by $G_f(\mathbb{Y})=\sum_{k\in\mathbb{T}} \mathbb{E}[\Delta\hat{f}_k]$ the \emph{frequency gain} of the target keys, where $\Delta\hat{f}_k=\tilde{f}_k -\hat{f}_k$ and the expectation is taken over the randomness in a LDP protocol. 

Similarly, we denote by  $\hat{m}_k$ and $\tilde{m}_k$ the estimated mean value of key $k$ among the $n$ genuine users and all the $n+m$ users, respectively. Furthermore, we denote by   $G_m(\mathbb{Y})=\sum_{k\in\mathbb{T}} \mathbb{E}[\Delta\hat{m}_k]$ the \emph{mean gain} of the target keys, where $\Delta\hat{m}_k=\tilde{m}_k -\hat{m}_k$ and the expectation is taken over the randomness in a LDP protocol. An attacker aims to simultaneously maximize the frequency gain and mean gain via carefully crafting the messages $\mathbb{Y}$. We propose to formulate such an attack goal as the following two-objective optimization problem: 
\begin{equation}
    \max_{\mathbb{Y}} \ \begin{bmatrix}
        G_f(\mathbb{Y}) \\
        G_m(\mathbb{Y})
    \end{bmatrix}.
    \label{eq:combined_opt_goal_attack}
\end{equation}

Note that we consider the target keys are weighted equally for simplicity. However, our formulation can be extended to the scenario where  the attacker  assigns different weights to different target keys in the frequency and mean gains. 
A method to solve the  two-objective optimization problem is a poisoning attack to a LDP protocol for key-value data.


\section{Our Attacks}
\label{sec:attack}
We first introduce our three attacks and then apply them to PrivKVM, PCKV-UE, and PCKV-GRR. 

\subsection{Three Attacks}
We propose  \emph{Maximal Gain Attack (M2GA)}, which solves the two-objective optimization problem to construct the optimal messages the fake users should send to the server.  To show the effectiveness of M2GA, we also propose two baseline attacks: \emph{Random Message Attack (\textit{RMA})} and \emph{Random Key-Value Pair Attack (\textit{RKVA})}. Next, we describe them one by one. 

\subsubsection{M2GA}
Our idea is to unify the frequency gain and mean gain for different LDP protocols under the same framework, based on which we transform the two-objective optimization problem to be one that is easier to solve. 

\paragraph{Unifying the frequency gain} We first observe that the estimated frequency $\hat{f}_k$  can be unified as Eq.~\ref{eq:freq_estimate_pckv}. In particular, as discussed in Section~\ref{back:pckv}, PCKV-UE and PCKV-GRR use Eq.~\ref{eq:freq_estimate_pckv} to calculate $\hat{f}_k$. We can also use Eq.~\ref{eq:freq_estimate_pckv} to calculate $\hat{f}_k$ in PrivKVM, where the parameters $a$, $b$, and $l$ are set as follows: 
\begin{equation}
    a=\frac{e^{\epsilon_{1}}}{e^{\epsilon_{1}}+1}, b=\frac{1}{e^{\epsilon_{1}}+1},\ell=1.
    \label{eq:param_freq_estimate_privkvm}
\end{equation}

Therefore, we can represent the frequency gain $G_f(\mathbb{Y})$ as:
\begin{align*}
    G_f(\mathbb{Y})&= \sum_{k\in \mathbb{T}} \mathbb{E}\left[ \tilde{f}_k - \hat{f}_k \right] \\
    &=\sum_{k\in \mathbb{T}} \ell\left\{\mathbb{E}\left[\frac{(n^k_{1}+n^k_{-1} + \tilde{n}^k_{1} + \tilde{n}^k_{-1}) / (n+m) -b}{a-b}\right]\right. \\
    & \left.- \mathbb{E} \left[\frac{(n^k_{1}+n^k_{-1}) / n -b}{a-b} \right]\right\},
\end{align*}
where $n^k_{1}$ and $n^k_{-1}$ respectively are the support counts of $\langle k, 1 \rangle$ and $\langle k, -1\rangle$ among the $n$ genuine users, while $\tilde{n}^k_{1}$ and $\tilde{n}^k_{-1}$ are the ones among the $m$ fake users. We note that the messages $\mathbb{Y}$ only affect the term $\sum_{k\in \mathbb{T}} \mathbb{E}[\frac{(\tilde{n}^k_{1} + \tilde{n}^k_{-1})}{(n+m)(a-b)}]$ and the denominator $(n+m)(a-b)$ is irrelevant in the optimization for a given setting of LDP protocol. 
Therefore, optimizing the frequency gain is equivalent to optimizing the following:
\begin{equation}
    \max_{\mathbb{Y}} \sum_{k\in \mathbb{T}}  (\mathbb{E}[\tilde{n}^k_{1}] + \mathbb{E}[\tilde{n}^k_{-1}]).
    \label{eq:freq_attack_goal}
\end{equation}
Moreover, the frequency gain can be simplified as follows:
\begin{equation}
    G_{f}(\mathbb{Y}) =  \frac{\ell}{(n+m)(a-b)}\sum_{k\in \mathbb{T}} (\mathbb{E}[\tilde{n}^k_{1}] + \mathbb{E}[\tilde{n}^k_{-1}]) - c,
    \label{eq:freq_gain}
\end{equation}
where $c=\sum_{k\in\mathbb{T}}\frac{m \ell (n^k_1+n^k_{-1})}{n(n+m)(a-b)} = \frac{m\ell}{n+m}(f_{\mathbb{T}} + \frac{rb}{a-b})$. $f_{\mathbb{T}}=\sum_{k\in\mathbb{T}} f_k$ is the sum of the true frequencies of all target keys, which is a constant.

\paragraph{Unifying the mean gain} Similar to frequency estimation, the estimated mean value can also be unified in the following equation: 
\begin{equation}
    \hat{m}_k = \frac{\left(n^k_{1}-n^k_{-1}\right)(a-b)}{a(2 p-1)\left(n^k_{1}+n^k_{-1}-n b\right)},
    \label{eq:mean_estimation_ver2_pckv}
\end{equation}
where the parameters $a$, $b$, $p$, and $l$ are described in Section~\ref{back:pckv} for PCKV-UE and PCKV-GRR, and they are set as follows for PrivKVM: 
\begin{equation}
    a=1,b=0,p=\frac{e^{\epsilon_{2}}}{e^{\epsilon_{2}}+1}, l=1.
    \label{eq:param_mean_estimate_privkvm}
\end{equation}

Then, we can represent the mean gain $G_m(\mathbb{Y})$ as follows:
\begin{small}
\begin{align}
    G_m(\mathbb{Y}) &=\sum_{k\in \mathbb{T}} \mathbb{E}\left[\tilde{m}_k - \hat{m}_k\right] \nonumber\\
    &= \sum_{k\in \mathbb{T}}\left\{\mathbb{E}\left[ \frac{\left(n^k_{1}-n^k_{-1} + \tilde{n}^k_{1}-\tilde{n}^k_{-1} \right)(a-b)}{a(2 p-1)\left(n^k_{1}+n^k_{-1}+ \tilde{n}^k_{1}+\tilde{n}^k_{-1} - (n+m)b\right)}   \right]\right. \nonumber\\
    &\  \left. - \mathbb{E} \left[ \frac{\left(n^k_{1}-n^k_{-1}\right)(a-b)}{a(2 p-1)\left(n^k_{1}+n^k_{-1}-n b\right)} \right]\right\}.
\end{align}
\end{small}

However, unlike the frequency gain, it is non-trivial to compute the two expectations above because they involve divisions between random variables. Specifically, since $n^k_{1}$ and $n^k_{2}$ are random variables, both the numerator and the denominator are random variables. To address the challenge, we propose to use the  first-order Taylor expansion of functions of random variables~\cite{Casella1990StatisticalI} to approximately compute $G_m$. Specifically, given two random variables $X$ and $Y$, the first-order Taylor expansion means the following: 
\begin{equation}
    \mathbb{E}\left[\frac{X}{Y}\right] \approx \frac{\mathbb{E}[X]}{\mathbb{E}[Y]}.
\end{equation}
Note that we have the following:
$$
\mathbb{E}[n^k_{1} - n^k_{-1}]=n \frac{f_{k}}{\ell} a(2 p-1) m_{k},
$$
$$
\mathbb{E}[n^k_{1} + n^k_{-1}-nb]= n \frac{f_{k}}{\ell}(a-b),
$$
where $f_k$ and $m_k$ are the true frequency and mean value of $k$. 
Thus, based on the first-order Taylor expansion, we have: 
\begin{small}
\begin{equation}
    G_m \approx \sum_{k\in \mathbb{T}} \left(\frac{a-b}{a(2p-1)}\frac{n f_k a (2p-1) m_k/\ell +  \mathbb{E}[\tilde{n}^k_1] - \mathbb{E}[\tilde{n}^k_{-1}]}{n f_k (a-b) / \ell+ \mathbb{E}[\tilde{n}^k_1] + \mathbb{E}[\tilde{n}^k_{-1}] - mb} - m_k\right).
    \label{eq:mean_gain}
\end{equation}
\end{small}
For simplicity, we denote $c_1^k=n f_k a (2p-1) m_k/\ell$ and $c_2^k=n f_k (a-b) / \ell -mb$. Then, we approximate optimizing the mean gain as follows:
\begin{equation}
    \max_{\mathbb{Y}} \sum_{k\in \mathbb{T}} \frac{\mathbb{E}[\tilde{n}_{1}^k] - \mathbb{E}[\tilde{n}_{-1}^k] + c_1^k}{\mathbb{E}[\tilde{n}_{1}^k] + \mathbb{E}[\tilde{n}_{-1}^k] + c_2^k}.
    \label{eq:mean_attack_goal}
\end{equation}

\paragraph{Reformulated two-objective optimization problem} 
By combining Eq.~\ref{eq:freq_attack_goal} and~\ref{eq:mean_attack_goal}, we  re-formulate our  two-objective optimization problem as follows:
\begin{equation}
    \max_{\mathbb{Y}} \ \begin{bmatrix}
         \sum_{k\in \mathbb{T}}  (\mathbb{E}[\tilde{n}^k_{1}] + \mathbb{E}[\tilde{n}^k_{-1}]) \\
         \sum_{k\in \mathbb{T}} \frac{\mathbb{E}[\tilde{n}_1^k] - \mathbb{E}[\tilde{n}_{-1}^k] + c_1^k}{\mathbb{E}[\tilde{n}_1^k] + \mathbb{E}[\tilde{n}_{-1}^k] + c_2^k}
    \end{bmatrix}.
    \label{eq:combined_opt_goal}
\end{equation}

\subsubsection{RMA}
In this baseline attack, each fake user picks a  message uniformly at random from the message domain allowed by a LDP protocol and sends it to the server. 

\subsubsection{RKVA}
RMA does not consider any information about the target keys. 
Different from RMA, RKVA considers the target keys. Specifically, each fake user picks a random target key $k$, pairs it with an extreme value 1,  and the constructed KV pair is viewed as the fake user's KV pair. Then, the constructed KV pair is processed by the LDP protocol and the resulting message is sent to the server.

\subsection{Attacking PrivKVM}
\subsubsection{M2GA}
Recall that PrivKVM is an iterative procedure, in which the frequency estimation is performed in the first round while the mean estimation is performed in each round and the estimated mean values in the last round are used. 
Solving Eq.~\ref{eq:combined_opt_goal} exactly is non-trivial. Therefore, we propose a two-step approximate solution, which  first optimizes the frequency gain and then approximately optimizes the mean gain.

Since each user in PrivKVM sends a key index and a tuple  to the server, we craft such message for each fake user such that  Eq.~\ref{eq:freq_attack_goal} is maximized. For PrivKVM, a fake user can only inject a single key to be counted by the server. That is, a fake user can only increase either $\tilde{n}_1^k$ or $\tilde{n}_{-1}^k$ for a single key $k$. Therefore, for each fake user, we randomly select a target key $k$, and send the index $k$ and the tuple $\langle 1, \cdot \rangle$ to the server, where the reported value does not influence the frequency gain and we will discuss it for optimizing the mean gain. Thus, we have  $\sum_{k \in \mathbb{T}} (\mathbb{E}[\tilde{n}_{1}^{k}]+\mathbb{E}[\tilde{n}_{-1}^{k}])=m$, and we have the frequency gain as $G_f = \frac{m}{(n+m)(a-b)} -c$, where the parameters $a,b$ are defined in Eq.~\ref{eq:param_freq_estimate_privkvm}. In practice, for each target key,  $\frac{m}{r}$ fake users send messages including the target key to the server. 

For mean estimation, we attack each round of PrivKVM. Specifically, in PrivKVM, the value sent to the server is either 1 or -1. Therefore, to increase the estimated mean, a fake user always sends value 1 for a target key.
In other words, we have $\mathbb{E}[\tilde{n}^k_1]=\frac{m}{r}$ and  $\mathbb{E}[\tilde{n}^k_{-1}] = 0$ for each target key $k$. The mean gain is as follows: $G_{m} \approx \sum_{k \in \mathbb{T}} \frac{a-b}{a(2 p-1)} \frac{n f_{k} a(2 p-1) m_{k} / \ell+m / r}{n f_{k}(a-b) / \ell+ m / r -mb)}-m_{k}$, where $a,b,p$ are given in Eq.~\ref{eq:param_mean_estimate_privkvm}.

To summarize, each fake user sends a random target key and value 1 to the server in each round of PrivKVM.

\subsubsection{RMA}
In RMA, each fake user randomly chooses a key $k$ from the entire dictionary. Then, the fake user randomly chooses a tuple to report to the server. Specifically, $\langle 0, 0\rangle$ is chosen with a probability of $\frac{1}{2}$, while $\langle 1, -1\rangle$ and $\langle 1, 1\rangle$ are each chosen with probability $\frac{1}{4}$.

Therefore, we have a probability of $\frac{1}{2d}$ that the message of a fake user  supports key $k$, and the KV pairs $\langle k, 1\rangle$ and $\langle k, -1 \rangle$ would be supported with equal probabilities. Thus, we have $\mathbb{E}[\tilde{n}_{1}^{k}] = \mathbb{E}[\tilde{n}_{-1}^{k}] = \frac{m}{4d}$. By plugging the values into Eq.~\ref{eq:freq_gain} and Eq.~\ref{eq:mean_gain}, we have the frequency gain of $G_f = \frac{mr}{2(n+m)(a-b)d} - c$, and the mean gain  of $G_m\approx \sum_{k \in \mathbb{T}} \frac{a-b}{a(2 p-1)} \frac{n f_{k} a(2 p-1) m_{k}}{n f_{k}(a-b) + m / (2d) -m b} - m_k$. Again, we note that the parameters $a,b,p$ are different in $G_f$ and $G_m$.

\subsubsection{RKVA}
In RKVA, each fake user picks a target key $k$  uniformly at random and  the fake user's tuple is $\langle 1, 1\rangle$. 
This tuple is perturbed according to the {Perturb} step of the PrivKVM protocol. The perturbed tuple still supports $k$ with a probability of $\frac{e^{\epsilon_{1}}}{e^{\epsilon_{1}}+1}$, and the value is inverted with a probability of $\frac{1}{1+e^{\epsilon_2}}$. 

Therefore, we have $\mathbb{E}[\tilde{n}_{1}^{k}] = \mathbb{E}[\tilde{n}_{-1}^{k}] = \frac{m e^{\epsilon_{1}}}{2r(e^{\epsilon_{1}}+1)}$. 
$\mathbb{E}[\tilde{n}_{1}^{k}] = \frac{m e^{\epsilon_{1}}e^{\epsilon_2}}{r(e^{\epsilon_{1}}+1)(1+e^{\epsilon_2})}$ and $\mathbb{E}[\tilde{n}_{-1}^{k}] = \frac{m e^{\epsilon_{1}}}{r(e^{\epsilon_{1}}+1)(1+e^{\epsilon_2})}$.
The frequency gain is $G_f=\frac{m e^{\epsilon_1}}{(n+m)(a-b)(e^{\epsilon_1} + 1)} - c$, and the mean gain is $G_m \approx \sum_{k \in \mathbb{T}} \frac{a-b}{a(2 p-1)} \frac{n f_{k} a(2 p-1) m_{k} + m e^{\epsilon_{1}}(e^{\epsilon_2}-1)/r(e^{\epsilon_{1}}+1)(1+e^{\epsilon_2})}{n f_{k}(a-b) +  m e^{\epsilon_{1}} / (r(e^{\epsilon_{1}}+1)) -m b} - m_k$.

\subsection{Attacking PCKV-UE}
\subsubsection{M2GA}
In PCKV-UE, each user sends a vector of length $d + \ell$ to the server, and each dimension is checked independently on whether it supports the corresponding key. Therefore, a single user could support multiple keys. 

For each fake user, we put a 1 or -1 in all the dimensions corresponding to the target keys. Therefore, a single fake user can increase $\tilde{n}_1^k$ or $\tilde{n}_{-1}^k$ for all $k\in\mathbb{T}$. For the remaining dimensions, if we simply leave them as 0, the server may easily detect that these messages are from fake users. To address this issue, we sample some dimensions and set them to 1 or -1, such that the vectors we craft for the fake users would have the same number of 1 bits and -1 bits as the expectation of the genuine users'. Specifically, if a genuine user samples a KV pair $\langle \cdot, 1 \rangle$ (or $\langle \cdot, -1 \rangle$) to report, the perturbed vector would have $\lfloor a p + (d^{\prime} - 1) (b / 2) \rfloor$ 1 (or -1) bits and $a (1 - p) + (d^{\prime} - 1) (b / 2)$ -1 (or 1) bits on expectation. We note that this form of disguise does not affect the frequency gain and mean gain for the target keys.

Therefore, we have $\tilde{n}_1^k + \tilde{n}_{-1}^k= m$ for each target key $k$. The frequency gain is $G_f=\frac{ mr\ell}{(n+m)(a-b)} - c$. To further maximize the mean gain, we solve the optimization problem of Eq.~\ref{eq:mean_attack_goal} in the similar way as for PrivKVM. Specifically, we only need to maximize $\tilde{n}_{1}^{k}-\tilde{n}_{-1}^{k}$ for each target key $k$ under the constraints of $\tilde{n}_{1}^{k} \geq 0, \tilde{n}_{-1}^{k} \geq 0$ and $\tilde{n}_1^k + \tilde{n}_{-1}^k= m$. Therefore, we have the following optimal solution $\tilde{n}^k_1 = m$ and $\tilde{n}^k_{-1} = 0$.  
Thus, we obtain the  mean gain  as: $G_{m} \approx \sum_{k \in \mathbb{T}} \frac{a-b}{a(2 p-1)} \frac{n f_{k} a(2 p-1) m_{k} / \ell+m}{n f_{k}(a-b) / \ell+m(1-b)}-m_{k}$.

To summarize, each fake user sets the dimensions corresponding to the target keys to 1 in its vector. Moreover, to evade possible detection, each fake user randomly samples some other dimensions of its vector and set them to be 1 or -1 such that  the vector has the  same number of 1 bits and -1 bits as the expected 1 bits and -1 bits in a  genuine user's vector.

\subsubsection{RMA}
 In PCKV-UE, a message is a vector. 
For each fake user, we randomly sample the value of each dimension of the vector. Specifically, each dimension is randomly set to 1, -1, or 0 with an equal probability of $\frac{1}{3}$. Therefore, we have $\mathbb{E}[\tilde{n}_{1}^{k}]= \mathbb{E}[\tilde{n}_{-1}^{k}] = \frac{m}{3}$. The frequency gain is $G_f=\frac{2 mr\ell}{3(n+m)(a-b)} - c$. The mean gain  is $G_{m} \approx \sum_{k \in \mathbb{T}} \frac{a-b}{a(2 p-1)} \frac{n f_{k} a(2 p-1) m_{k} / \ell}{n f_{k}(a-b) / \ell+ 2m/3 - mb)}-m_{k}$. 

\subsubsection{RKVA}
For each fake user, a random target key $k$ is sampled from $\mathbb{T}$ and paired with a value of 1. The constructed KV pair is then perturbed by the PCKV-UE protocol. As the perturbation is independent for each dimension, we only need to focus on the dimensions corresponding to the target keys in $\mathbb{T}$. Similar to  PrivKVM, if $k$ is selected, its value remains the same with probability $ap$ and gets inversed to -1 with a probability of $a(1-p)$. 
However, different from PrivKVM, if a key other than $k$ is selected, the perturbed  vector  supports $k$ with  probability $b$ and the value is 1 or -1 with equal probability $\frac{b}{2}$. Therefore, we have $\mathbb{E}[\tilde{n}_{1}^{k}] =\frac{map+m(r-1)b/2}{r}$ and $\mathbb{E}[\tilde{n}_{-1}^{k}] =\frac{ma(1-p)+m(r-1)b/2}{r}$. 
The frequency gain and the mean gain are respectively as follows: $G_f=\frac{ma\ell + m(r-1)b\ell}{(n+m)(a-b)} - c$ and $G_{m} \approx \sum_{k \in \mathbb{T}} \frac{n f_{k} m_{k} / \ell + m/r}{n f_{k} / \ell+ m/r}-m_{k}$.

\subsection{Attacking PCKV-GRR}
\subsubsection{M2GA}
In PCKV-GRR, each user sends a KV pair as the message to the server. A user supports a KV pair if and only if the message he/she sent to the server is exactly the KV pair. Therefore,  similar to PrivKVM, a fake user can only increase the support count $\tilde{n}_1^k$ or $\tilde{n}_{-1}^k$ of a single target key. Similar to PrivKVM, we have $\sum_{k \in \mathbb{T}} \mathbb{E}\left[\tilde{n}_{1}^{k}+\tilde{n}_{-1}^{k}\right]=m$. The frequency gain is  $G_{f}=\frac{m\ell}{(n+m)(a-b)}-c$. 

The mean gain for each target key is then maximized in the same way as each round of PrivKVM, where we set $\tilde{n}_{1}^{k}=\frac{m}{r}$ and  $\tilde{n}_{-1}^{k}=0$, i.e., we set all the values in the KV pairs sent to the server to 1. Thus, the mean gain is  $G_m \approx \sum_{k \in \mathbb{T}} \frac{a-b}{a(2 p-1)} \frac{n f_{k} a(2 p-1) m_{k}/\ell+m / r}{n f_{k}(a-b)/\ell+ m/r - mb}-m_{k}$.

To summarize,  each fake user sends a random target key and value 1 to the server.

\subsubsection{RMA}
For each fake user, we randomly select a target key and  set its corresponding value as -1 or 1 uniformly at random , which is  the KV pair sent to the server. Therefore, we have $\mathbb{E}[\tilde{n}_{1}^{k}] = \mathbb{E}[\tilde{n}_{-1}^{k}] = \frac{m}{2d^{\prime}}$. 
We have the frequency gain as $G_f = \frac{mr\ell}{(n+m)(a-b)d^{\prime}} - c$, while the mean gain as $G_m^k\approx \sum_{k \in \mathbb{T}} \frac{a-b}{a(2 p-1)} \frac{n f_{k} a(2 p-1) m_{k}/\ell}{n f_{k}(a-b)/\ell + m / d^{\prime} -m b} - m_k$.

\subsubsection{RKVA}
For each fake user, we randomly select a target key $k$ from $\mathbb{T}$ and choose its value 1 to construct a KV pair. The KV pair is then perturbed according to the {Perturb} step of the PCKV-GRR protocol. The KV pair after perturbation still keeps $k$ as its key with a probability of $a$, and a  key other than $k$ gets perturbed to $k$ with a probability of $b$. 
Therefore, we get $\mathbb{E}[\tilde{n}_{1}^{k}]=\frac{map + m(r-1)b/2}{r}$ and $\mathbb{E}[\tilde{n}_{-1}^{k}]=\frac{ma(1-p) + m(r-1)b/2}{r}$. 
We have $G_f=\frac{ma\ell+m(r-1)b\ell }{(n+m)(a-b)}- c$ and $G_{m} \approx \sum_{k \in \mathbb{T}} \frac{n f_{k} m_{k} / \ell + m/r}{n f_{k} / \ell+ m/r}-m_{k}$.

\begin{table*}
    \caption{Frequency gains of the three attacks for PrivKVM, PCKV-UE, and PCKV-GRR.  $\beta=\frac{m}{n}$ is the fraction of fake users, $f_{\mathbb{T}}=\sum_{k \in \mathbb{T}} f_{k}$ is the sum of true frequencies of the target keys, $\epsilon$ is the privacy budget, $d$ is the total number of keys, $\ell$ is the padding length, $d^{\prime}=d+\ell$ is the padded dictionary size, and $r$ is the number of target keys.}
    \label{tab:expected_freq_gain}
    \centering
    \begin{tabular}{lccc}
        \toprule
         & PrivKVM & PCKV-UE & PCKV-GRR \\
        \midrule
        M2GA & $\frac{\beta}{1+\beta} \left[1-f_{\mathbb{T}} + \frac{2-r}{e^{\epsilon/2} -1}\right]$ & $\frac{\beta \ell}{1+\beta}\left[2r-f_{\mathbb{T}}+\frac{4r}{e^{\epsilon}-1}\right]$ & $\frac{\beta }{1+\beta}\left[(1-f_\mathbb{T})\ell + \frac{2(d^{\prime}-r)}{e^{\epsilon}-1} \right]$\\[6pt]
        
        RMA & $\frac{\beta}{1+\beta} \left[\frac{(e^{\epsilon/2}-2d+1)r}{2(e^{\epsilon/2}-1)d}-f_\mathbb{T}\right]$ & $\frac{\beta\ell}{1+\beta}\left[\frac{4e^{\epsilon}r}{3(e^\epsilon -1)}-f_\mathbb{T}\right]$ & $\frac{\beta (r-f_\mathbb{T}d^{\prime}) \ell}{(1+\beta)d^{\prime}}$ \\[6pt]
        
        RKVA & $\frac{\beta}{1+\beta}\left[ 1-f_\mathbb{T}+\frac{1-r}{e^{\epsilon/2}-1}\right]$ & $\frac{\beta \ell}{1+\beta}\left(1-f_\mathbb{T}\right)$ & $\frac{\beta \ell}{1+\beta}\left(1-f_\mathbb{T}\right)$\\
        \bottomrule
    \end{tabular}
\end{table*}

\begin{table*}
    \caption{Approximate mean gains of the three attacks for PrivKVM, PCKV-UE, and PCKV-GRR.  $f_k$ is the true frequency of key $k$, $m_k$ is the true mean value of $k$, $\epsilon_2=\frac{\epsilon}{2 N_{\text{iter}}}$ is the privacy budget in each round of PrivKVM, and  $N_{\text{iter}}$ is the number of rounds.}
    \label{tab:expected_mean_gain}
    \centering
    \begin{tabular}{lccc}
        \toprule
         & PrivKVM & PCKV-UE & PCKV-GRR \\
        \midrule
        M2GA & $\sum_{k \in \mathbb{T}} \frac{f_k m_k r + \frac{e^{\epsilon_2}+1}{e^{\epsilon_2}-1}\beta}{f_k r + \beta}-m_k$ & $\sum_{k \in \mathbb{T}}\frac{2\beta\ell (e^{\epsilon}+1) + (e^\epsilon -1)f_k m_k }{2\beta\ell (e^\epsilon+1) + (e^\epsilon -1)f_k} - m_k$ & $\sum_{k \in \mathbb{T}}\frac{(e^\epsilon-1)(\beta\ell+f_k m_k r) + 2\beta d^{\prime}}{\beta[(e^\epsilon -1)\ell + 2(d^{\prime}-r)]+(e^\epsilon-1)f_k r} - m_k$\\[6pt]
        
        RMA & $\sum_{k \in \mathbb{T}}\frac{2f_k m_k d}{2 f_k d+\beta}-m_k$ & $\sum_{k \in \mathbb{T}}\frac{3(e^\epsilon-1)f_k m_k}{3(e^\epsilon-1)f_k+4e^\epsilon \beta\ell } - m_k$ & $\sum_{k \in \mathbb{T}}\frac{f_k m_k d^{\prime}}{f_k d^{\prime} + \beta \ell} - m_k$ \\[6pt]
        
        RKVA & $\sum_{k \in \mathbb{T}}\frac{f_k m_kr(e^{\epsilon/2}+1) + e^{\epsilon/2}\beta}{f_k(e^{\epsilon/2}+1) + e^{\epsilon/2}\beta} - m_k$ & $\sum_{k \in \mathbb{T}}\frac{f_k m_k r + \beta\ell}{f_k r + \beta\ell} - m_k$ & $\sum_{k \in \mathbb{T}}\frac{f_k m_k r + \beta\ell}{f_k r + \beta\ell} - m_k$\\
        \bottomrule
    \end{tabular}
\end{table*}

\subsection{Theoretical Analysis}
\label{sec:theo_ana}
Table~\ref{tab:expected_freq_gain} summarizes the analytical forms of the frequency gains of our three attacks for the three LDP protocols, while Table~\ref{tab:expected_mean_gain} summarizes the analytical forms of the approximate mean gains of our three attacks for the three LDP protocols. 
We have replaced the protocol-dependent parameters $a$, $b$, and $p$ in the analytical forms. 
We note that we do not consider clipping the estimated frequency $\hat{f}_k$ and support counts $n^k_1, n^k_{-1}$ when deriving the analytical forms. 

First, we observe that  M2GA outperforms  RMA and RKVA.  This is because M2GA crafts the fake users' messages via solving a two-objective optimization problem. In particular, our two objectives are non-convex. However, M2GA achieves the optimal frequency gain for the three LDP protocols in all cases, as we discussed when describing M2GA for each protocol. Moreover, in a given execution of a LDP protocol, M2GA also achieves the optimal mean gain if there is only one target key $k$ (i.e., $r=1$) and $n_1^k\ge n_{-1}^k>\frac{(n+m)b}{2}$ (Appendix \ref{appendix:proof} shows the proof).
Second, we observe that the  frequency gain of an attack  increases as the fraction of fake users increases. However, we do not have this observation for mean gains. We suspect the reason is that the mean gain depends on the estimated frequency and that we approximate the mean gains via Taylor expansion. Third, the frequency gain is larger when the total true frequencies $f_\mathbb{T}$ of the target keys is smaller. This is because the frequency gain is the difference between the estimated frequencies before and after attack. Moreover, the approximate mean gain becomes larger when the true mean value of each target key becomes smaller.  

Prior work~\cite{cheu2019manipulation,cao2019data} observed  trade-off between security against poisoning attacks and privacy in LDP protocols for categorical and numerical data, i.e., such a LDP protocol is more vulnerable to poisoning attacks if it uses a smaller privacy budget. Our fourth observation is that such security-privacy trade-off does not necessarily hold in LDP protocols for key-value data. In particular, while we observe such security-privacy trade-off for the frequency gains of M2GA to PCKV-UE and PCKV-GRR, how the privacy budget $\epsilon$ influences the frequency gain of  M2GA to PrivKVM depends on $r$, the number of target keys. Specifically, the frequency gain of  M2GA to PrivKVM increases, does not change, and decreases as the privacy budget $\epsilon$ decreases when $r=1$, $r=2$, and $r>2$, respectively.  The \emph{approximate} mean gain of M2GA in Table \ref{tab:expected_mean_gain} has such security-privacy trade-off as we can verify that the derivative of the approximate mean gain of M2GA with respect to $\epsilon$ is negative. However, we do not necessarily observe such security-privacy trade-off for the \emph{true} mean gains of M2GA in our experiments. This is because the approximate mean gains are obtained using Taylor expansion of the true ones and the protocols clip frequencies and support counts in practice.


\section{Evaluation}

\subsection{Experimental Setup}

\subsubsection{Datasets} 
\begin{table}
    \caption{Dataset statistics. \#records indicates the total number of KV pairs in a dataset, while the 90th-percentile refers to that of the number of KV pairs possessed per user.}
    \label{tab:datasets}
    \centering
    \resizebox{0.98\columnwidth}{!}{
    \begin{tabular}{lrrrr}
        \toprule
        Dataset & \#users & \#keys & \#records & 90th-percentile \\
        \midrule
        Synthetic & \numprint{100000} & \numprint{100} & \numprint{100000} & 1.0 \\
        Clothing & \numprint{105508} & \numprint{5850} & \numprint{192198} & 3.0 \\        
        TalkingData & \numprint{60822} & \numprint{320} & \numprint{1327468} & 34.0 \\
        MovieLens-1M & \numprint{943} & \numprint{1682} & \numprint{100000} & 244.4 \\
        \bottomrule
    \end{tabular}
    }
\end{table}

\begin{figure}[!t]
    \centering 
    
    \subfloat{  \includegraphics[width=0.15\textwidth]{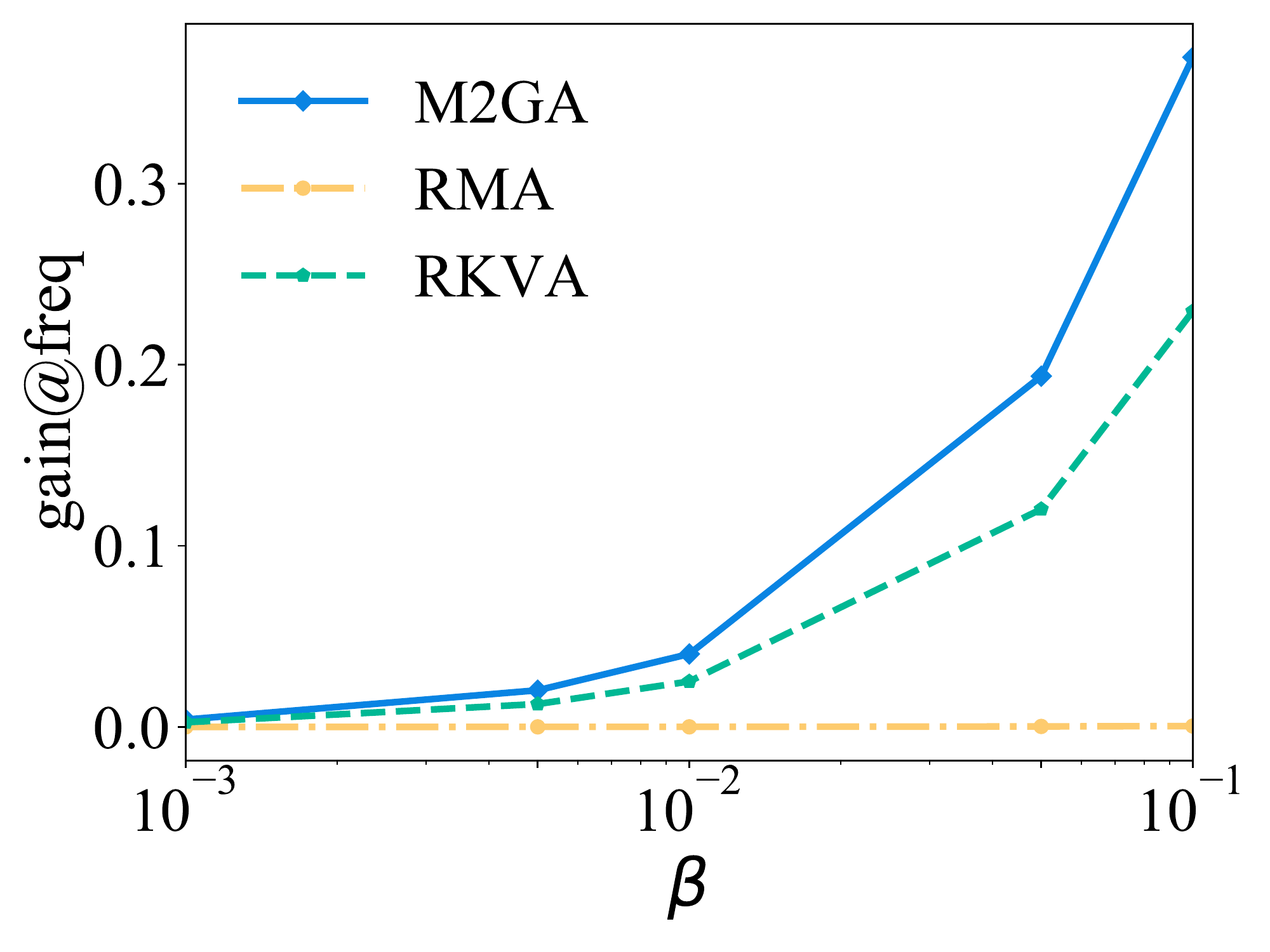}}
    \subfloat{  \includegraphics[width=0.15\textwidth]{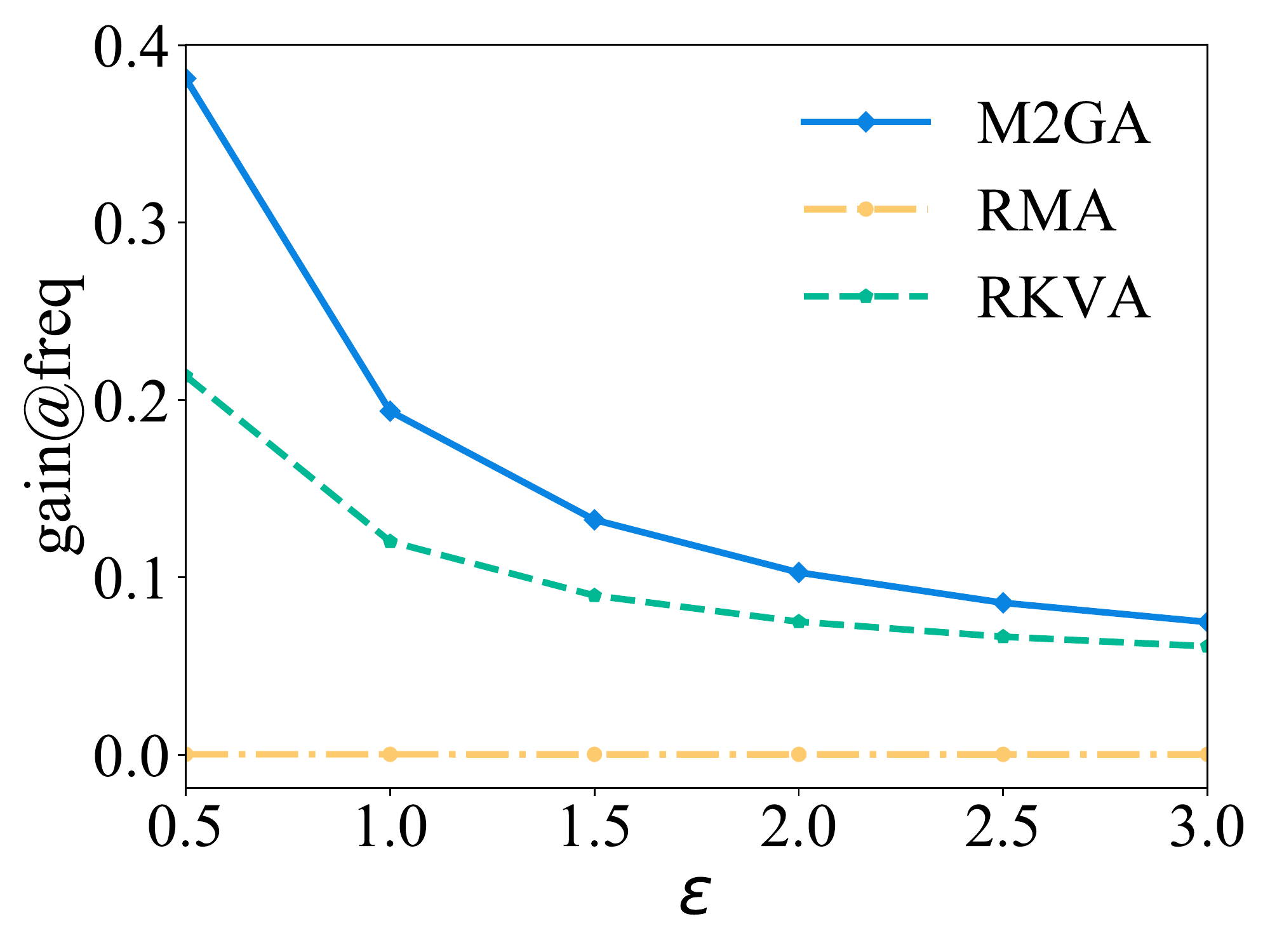}}
    \subfloat{  \includegraphics[width=0.15\textwidth]{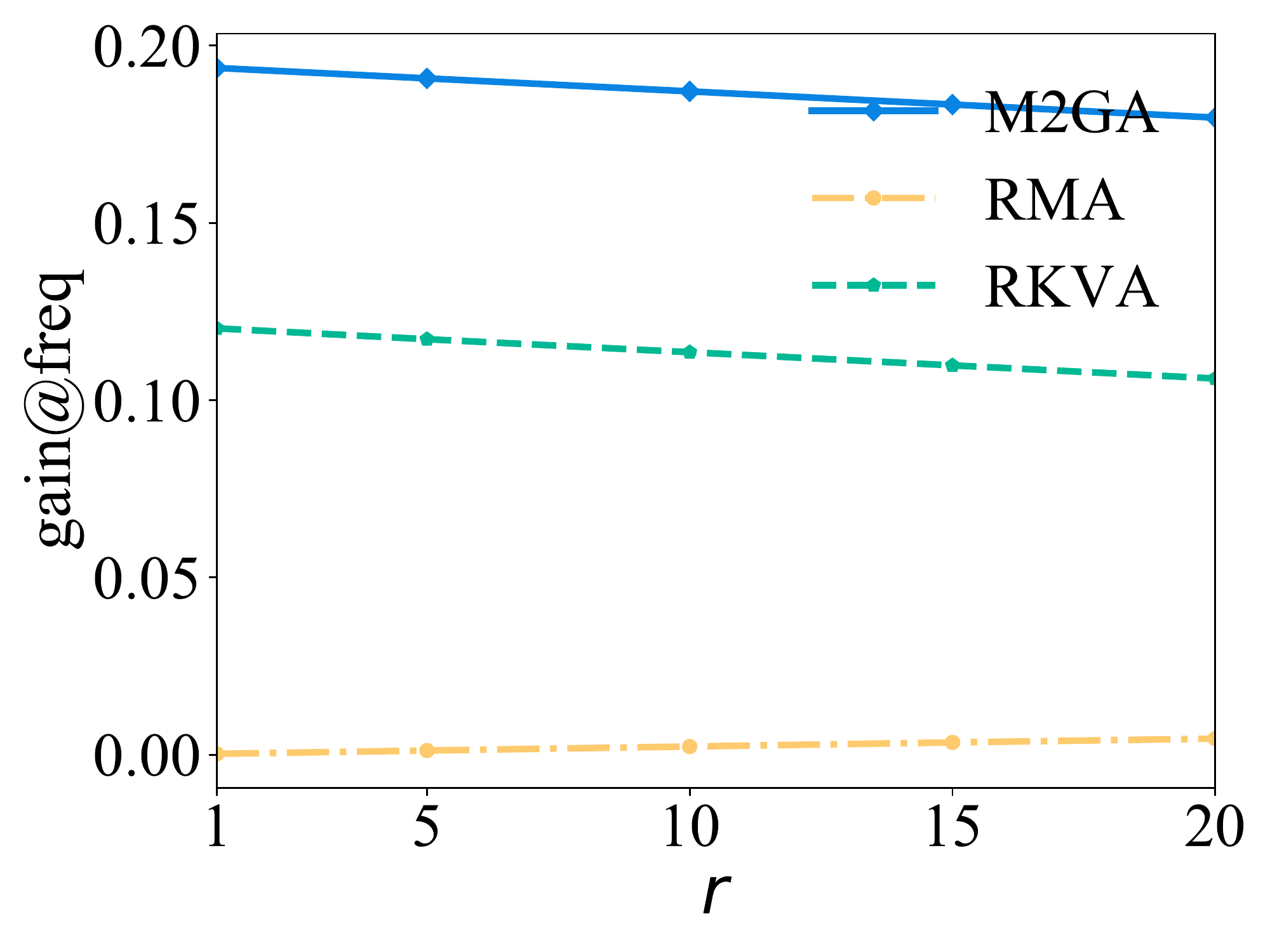}}

    \subfloat{  \includegraphics[width=0.15\textwidth]{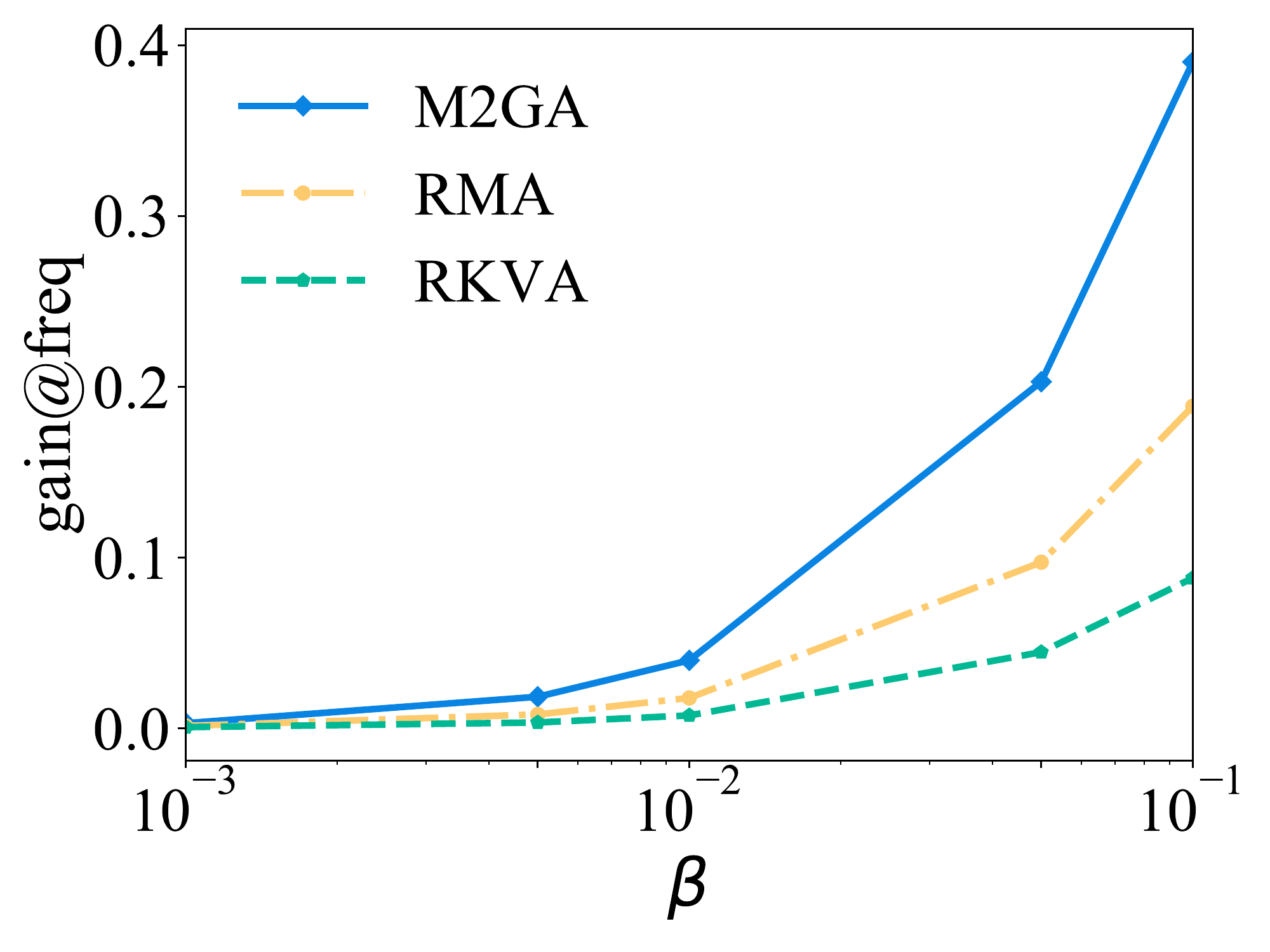}}
    \subfloat{  \includegraphics[width=0.15\textwidth]{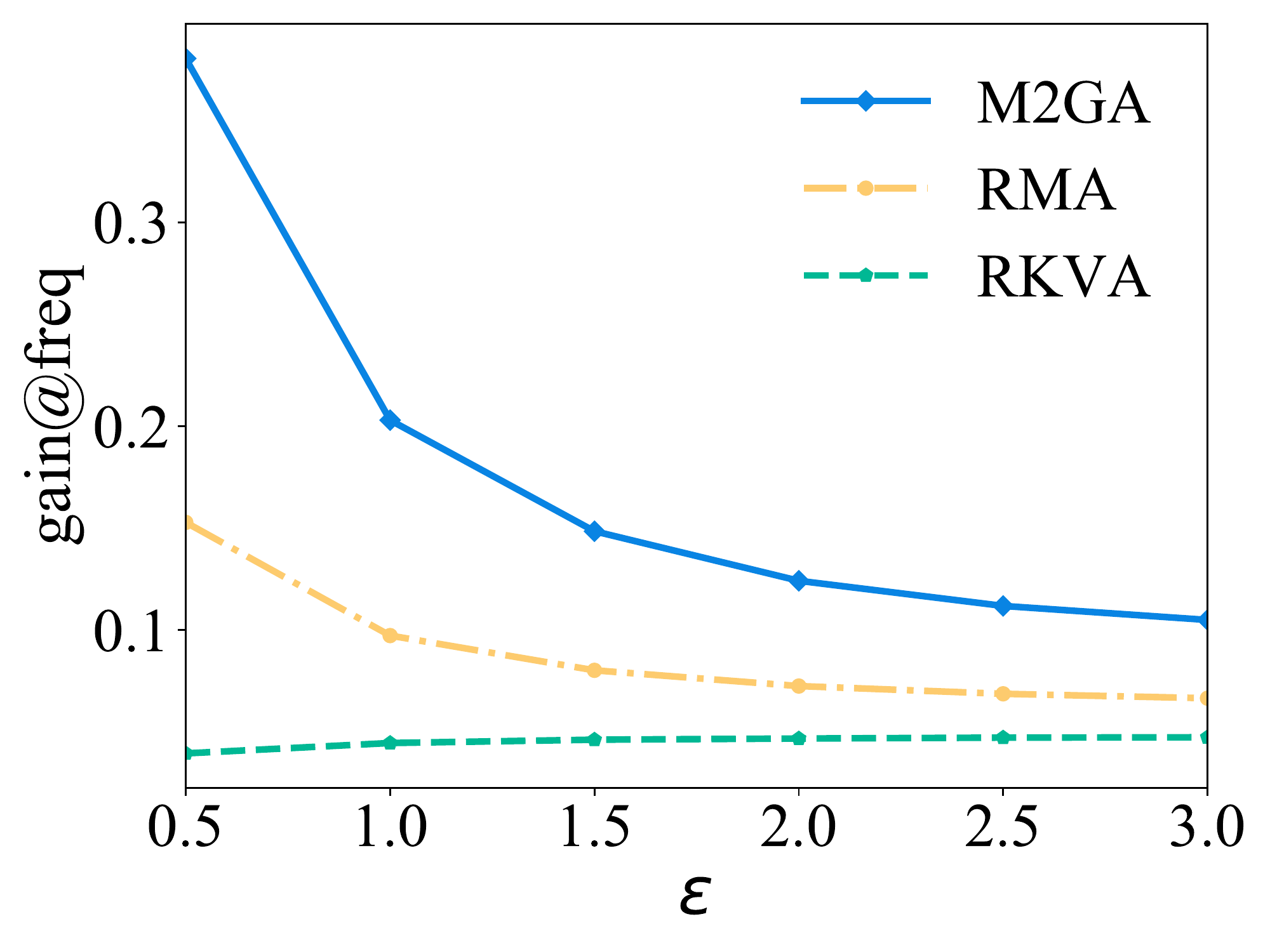}}
    \subfloat{  \includegraphics[width=0.15\textwidth]{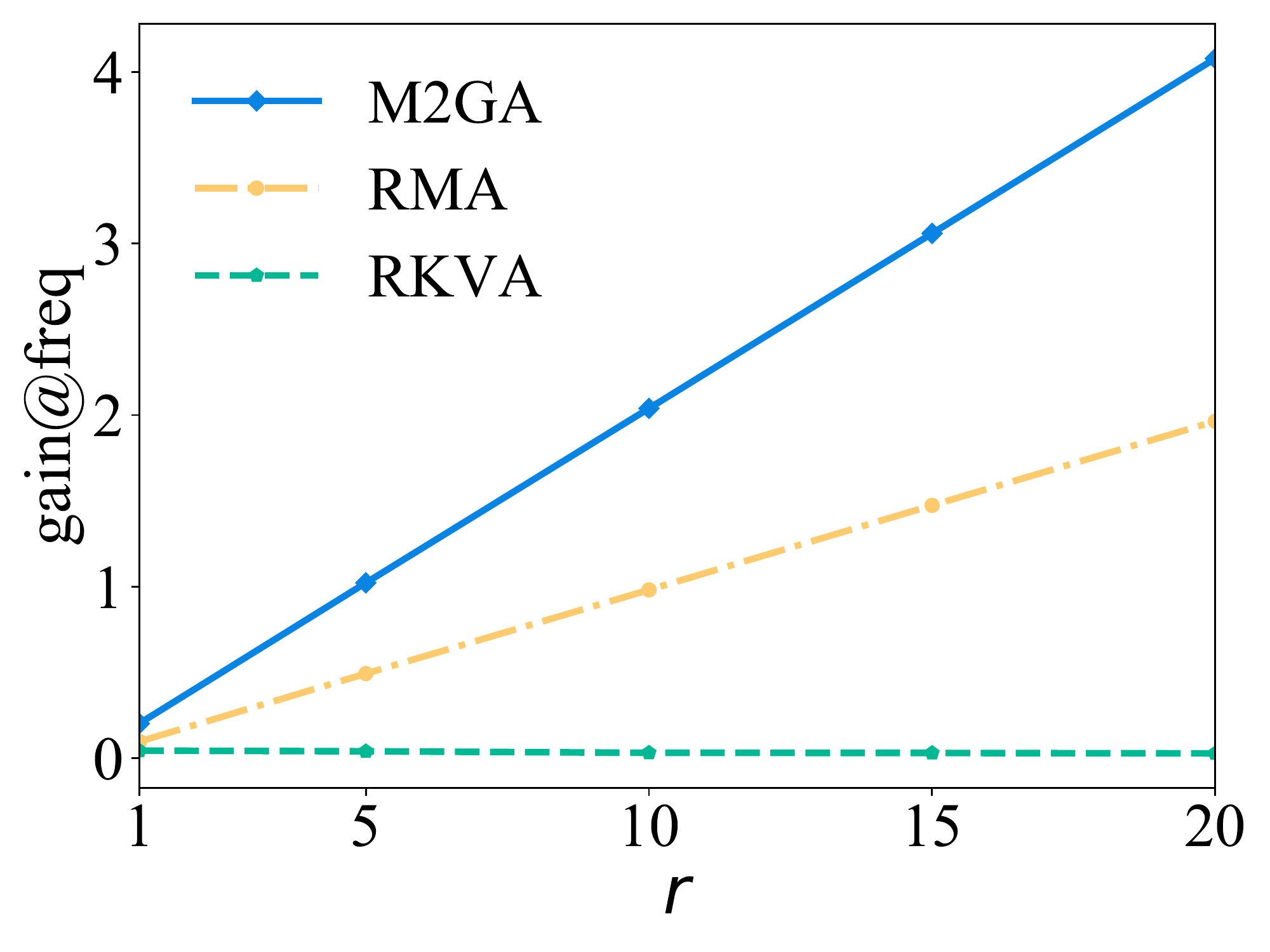}}

    \subfloat{  \includegraphics[width=0.15\textwidth]{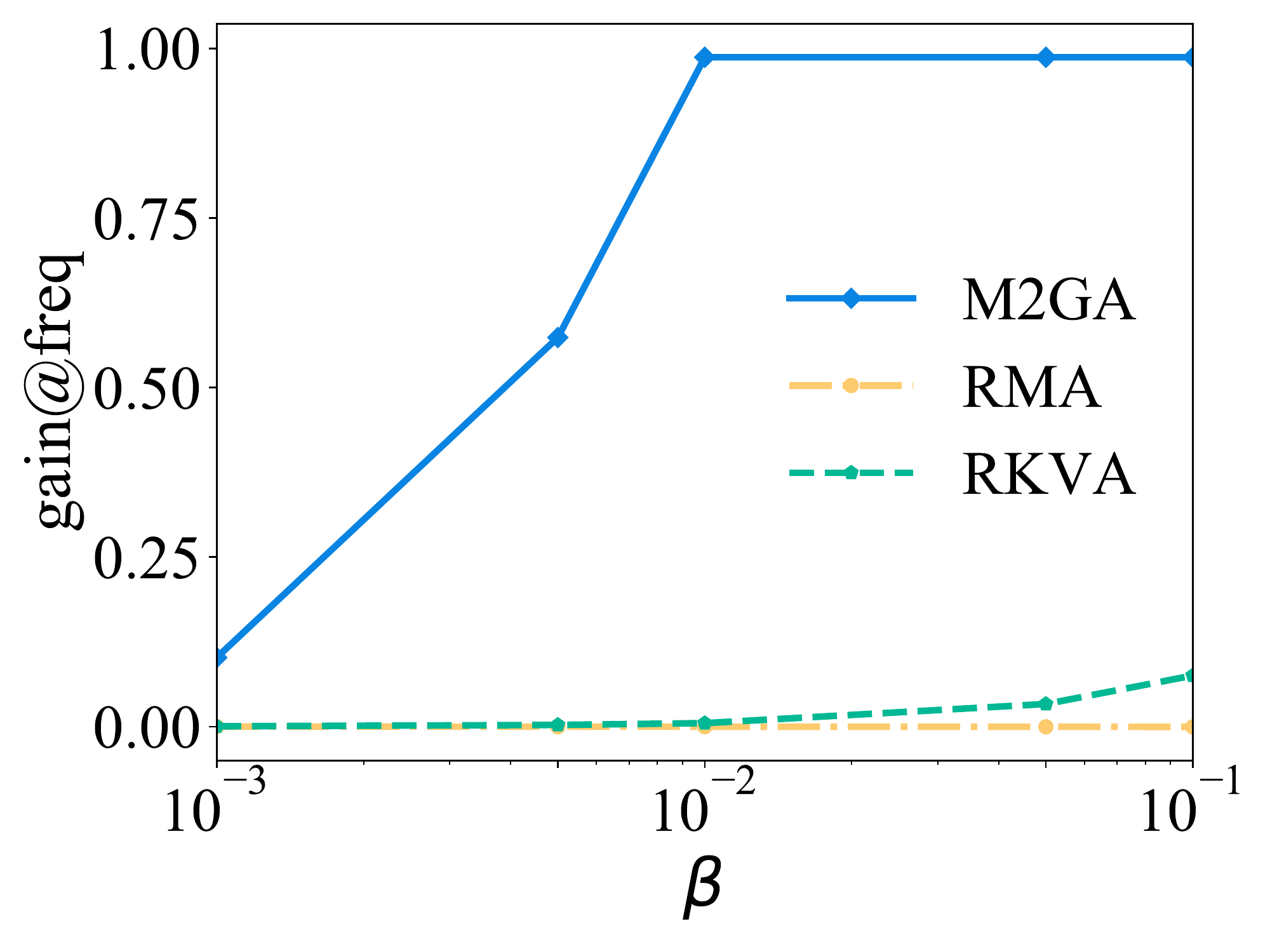}}
    \subfloat{  \includegraphics[width=0.15\textwidth]{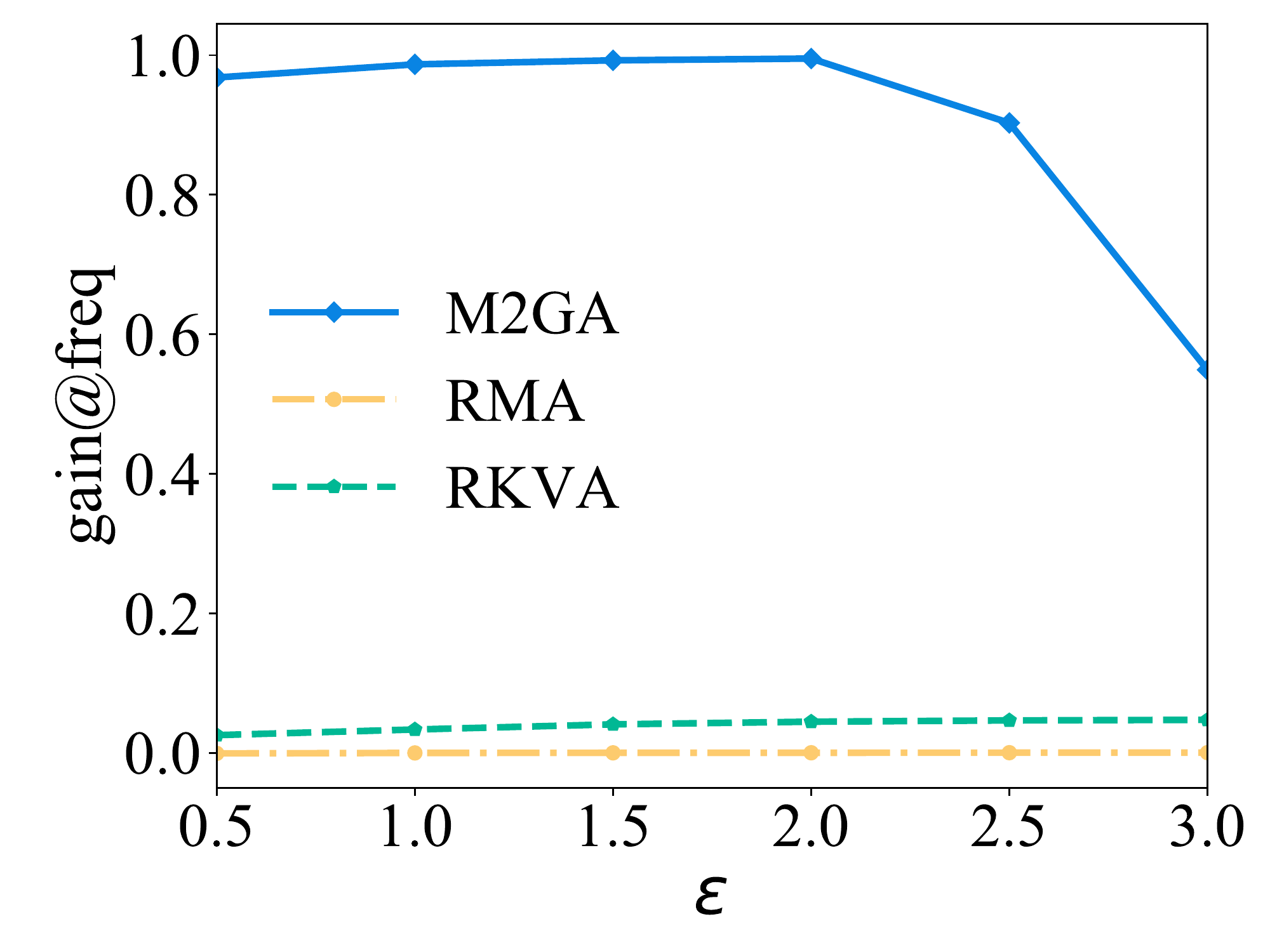}}
    \subfloat{  \includegraphics[width=0.15\textwidth]{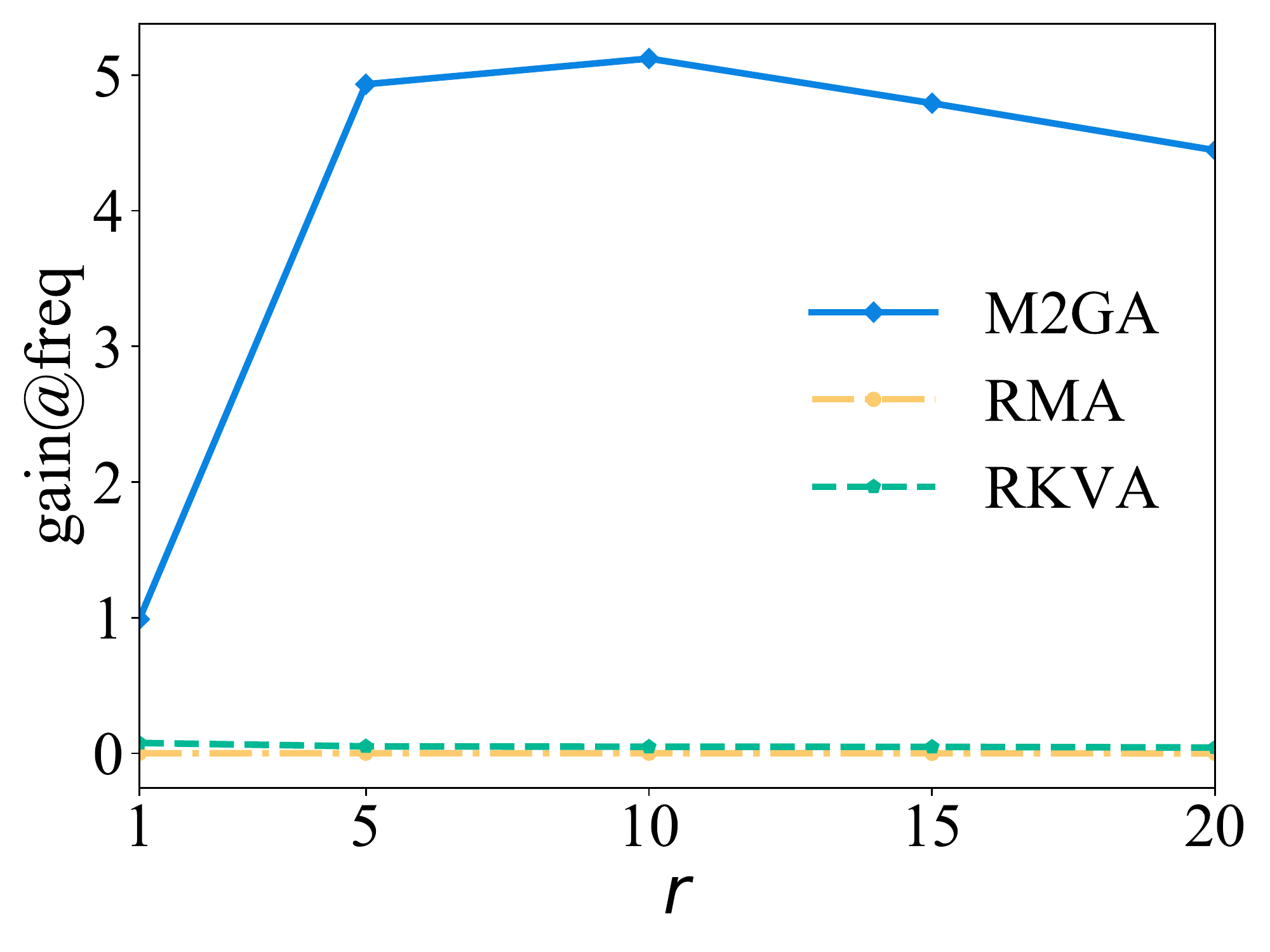}}

\caption{Impact of different parameters ($\beta, \epsilon, r$) on the frequency gains on Synthetic. The three rows are for PrivKVM, PCKV-UE, and PCKV-GRR, respectively.}
\label{fig:exp_attack_freq_synthetics}
\end{figure}

We evaluate our three attacks, i.e., M2GA, RMA, and RKVA, on a synthetic dataset and three real-world datasets. The statistics of the four datasets are shown in Table~\ref{tab:datasets}.
\begin{itemize}[leftmargin=*]
    \item \textbf{Synthetic}: Following~\cite{gu2020pckv,ye2019privkv}, we create a synthetic dataset to evaluate our attacks. In particular, we generate $10^5$ users and $100$ keys. Each user has a single KV pair. The keys and the values follow a zero-mean Gaussian distribution, where the standard deviation is 15 for keys and 1 for values.
    \item \textbf{Clothing}~\cite{Clothing}: This is a clothing fit dataset for product size recommendation. It contains users' rating scores for different products. We treat each product as a key and view each rating score as a value. Note that each user may have multiple pairs of $ \langle \text{product},\text{rating score} \rangle$.
    \item \textbf{TalkingData}~\cite{TalkingData}: This dataset contains mobile apps downloaded by users on their mobile devices. In particular, we treat each category of  mobile apps as a key and view the number of apps downloaded by a user in a category as a value.  A user may have multiple KV pairs. 
    \item \textbf{MovieLens-1M}~\cite{harper2015movielens}: This dataset contains users' rating scores for different movies. Each movie is a key and each rating score is a value.  A user may rate multiple movies. 
\end{itemize}

We scale the values in each dataset such that they fall into the range of $[-1,1]$. 

\begin{figure}[!t]
    \centering 
    
    \subfloat{  \includegraphics[width=0.15\textwidth]{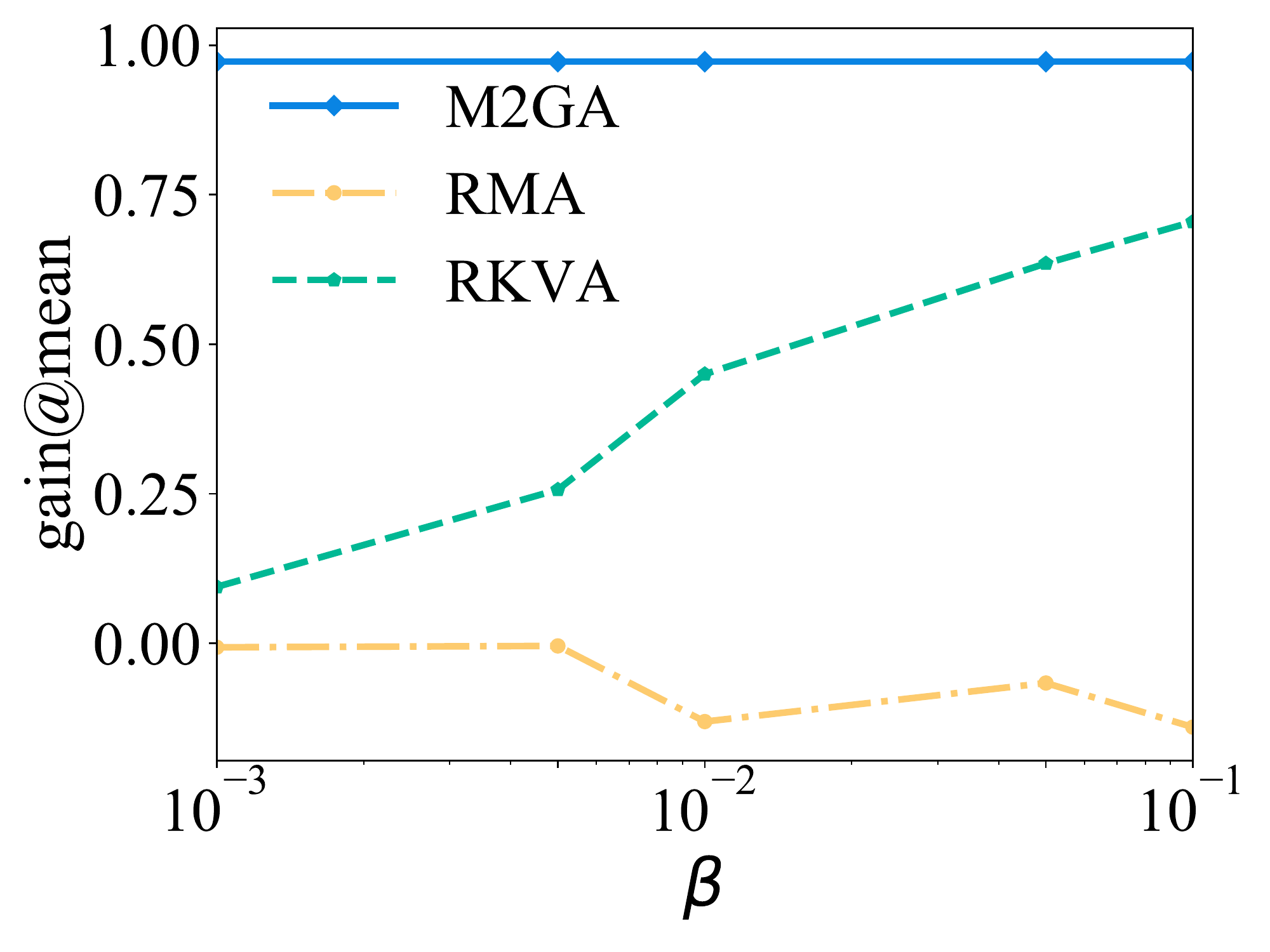}}
    \subfloat{  \includegraphics[width=0.15\textwidth]{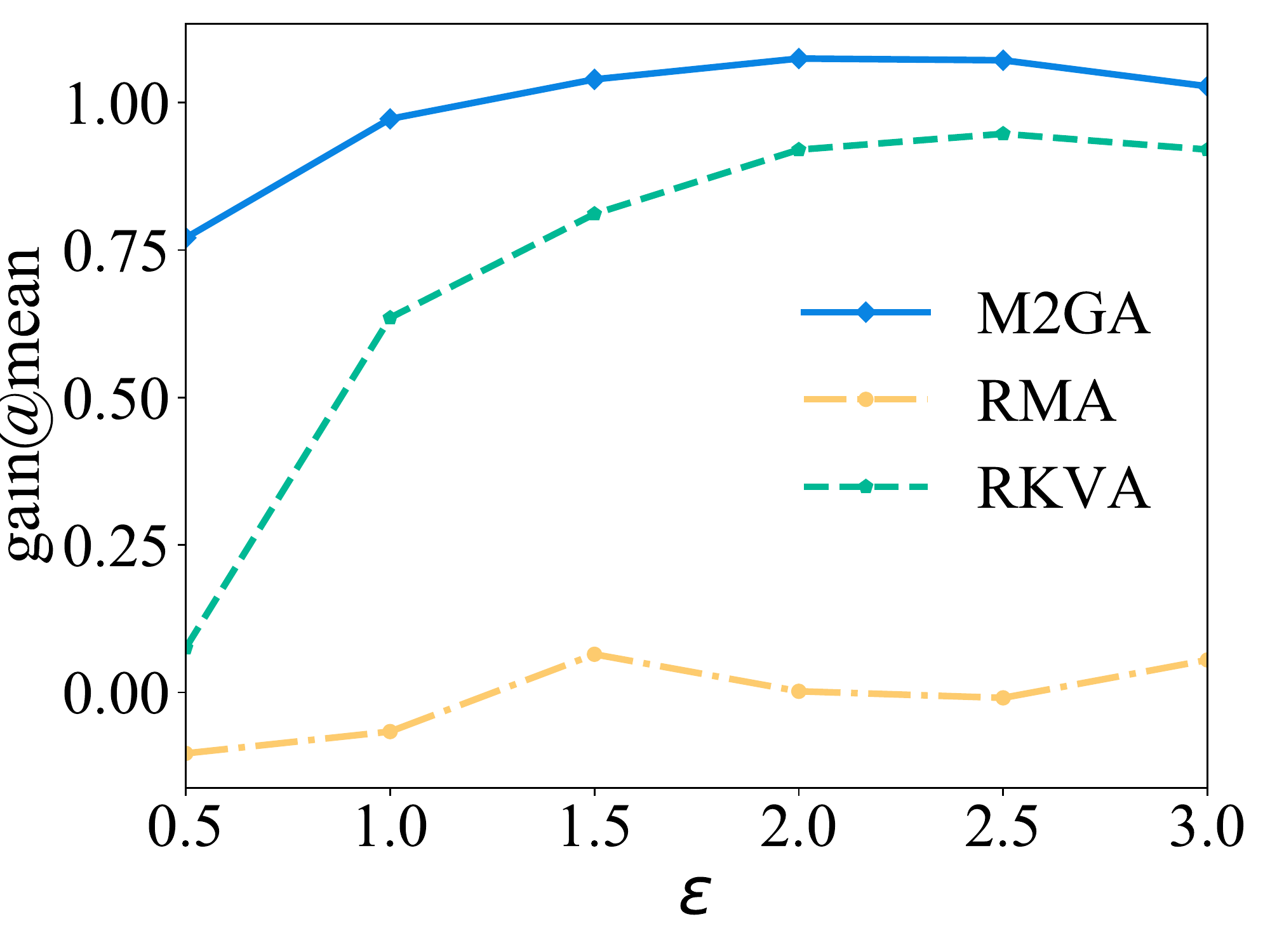}}
    \subfloat{ \includegraphics[width=0.15\textwidth]{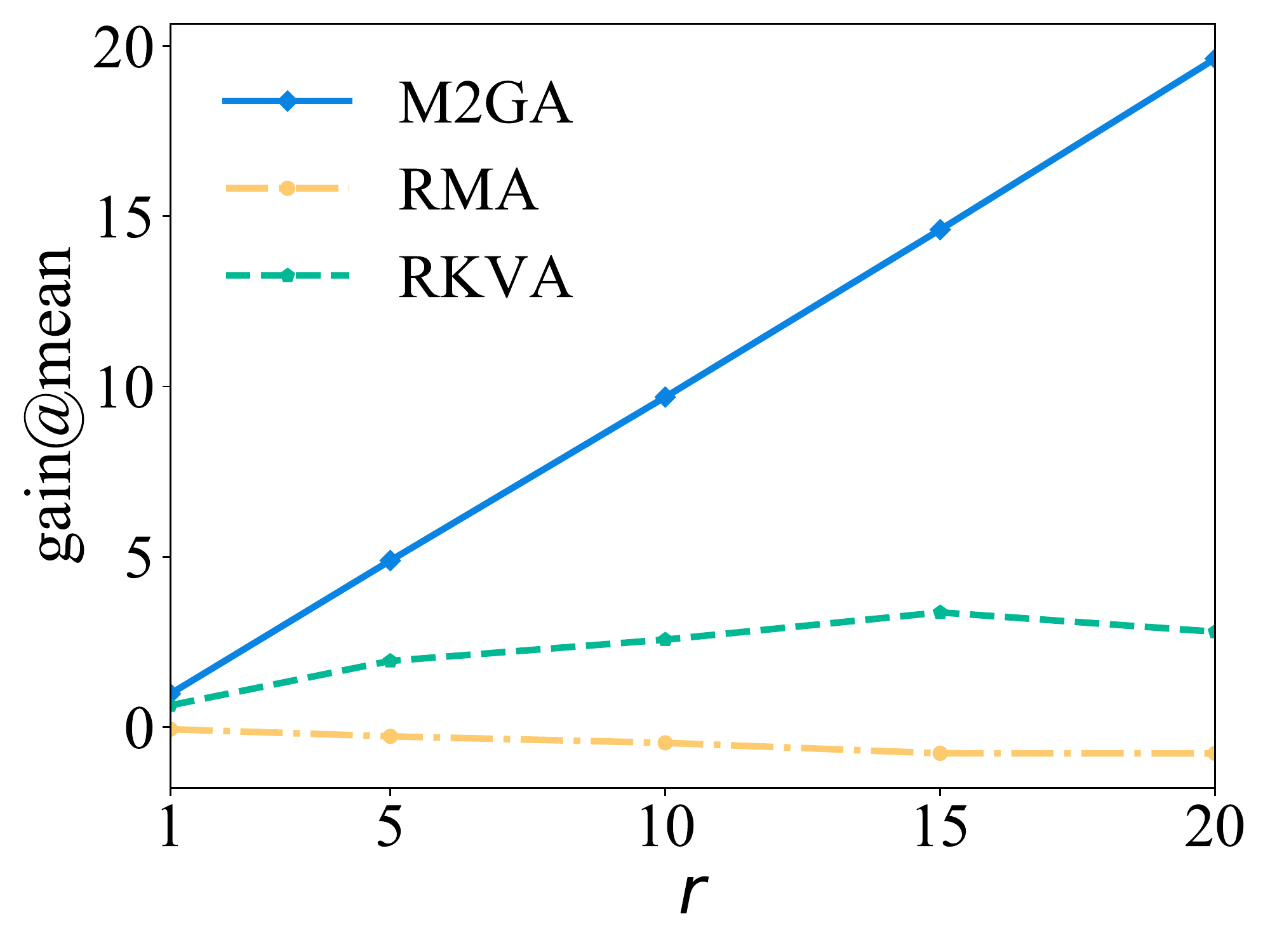}}

    \subfloat{  \includegraphics[width=0.15\textwidth]{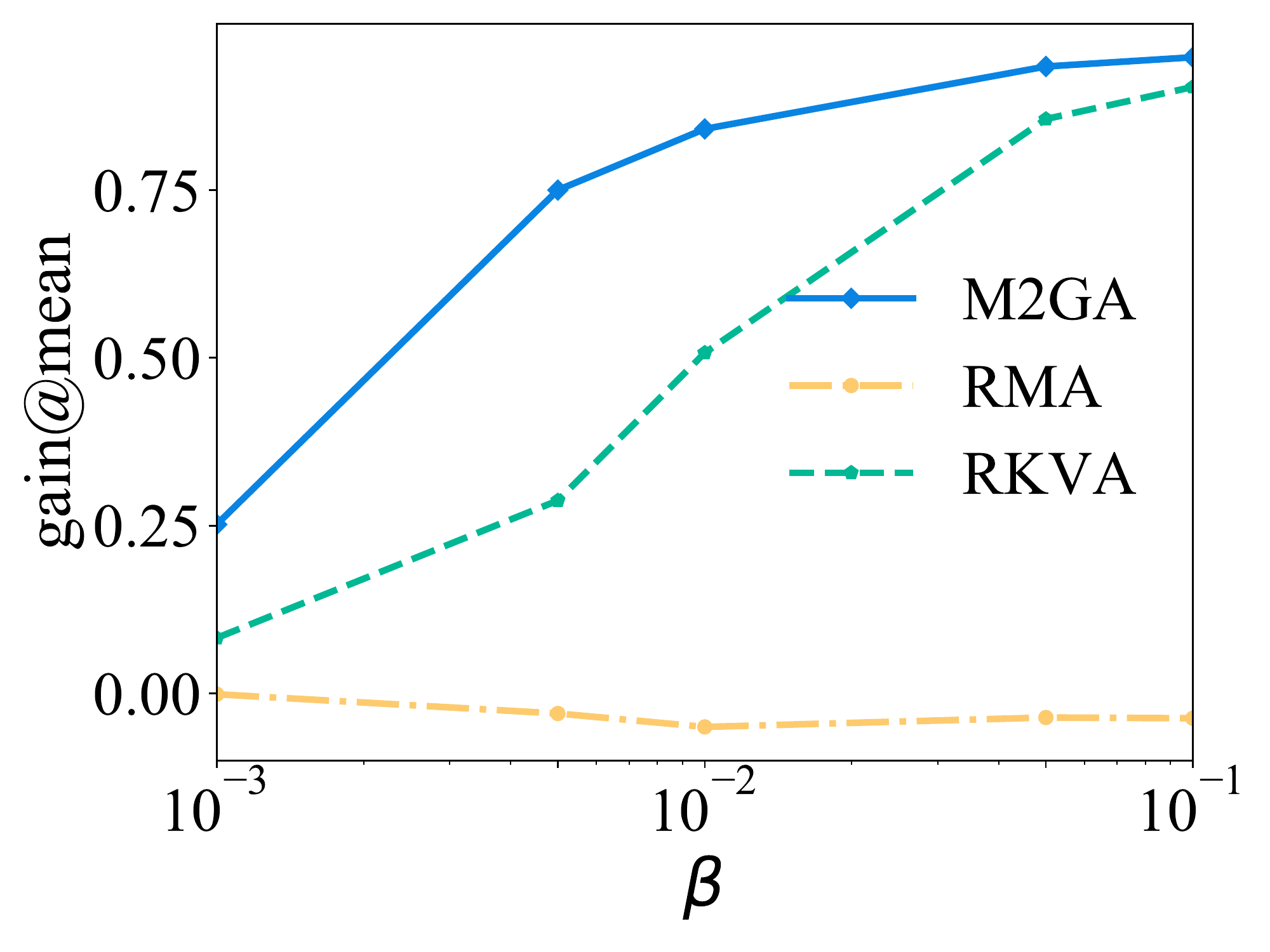}}
    \subfloat{  \includegraphics[width=0.15\textwidth]{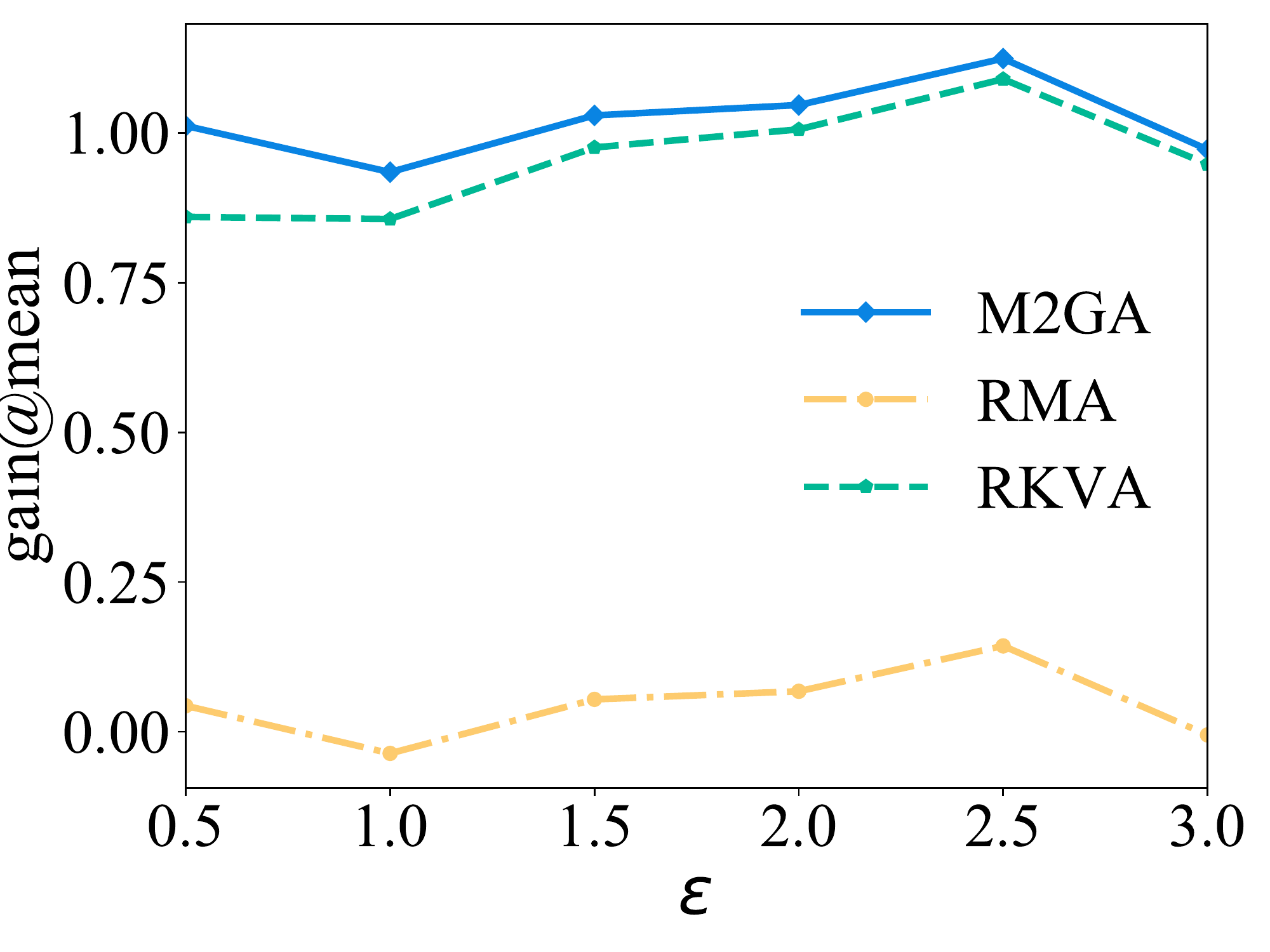}}
    \subfloat{  \includegraphics[width=0.15\textwidth]{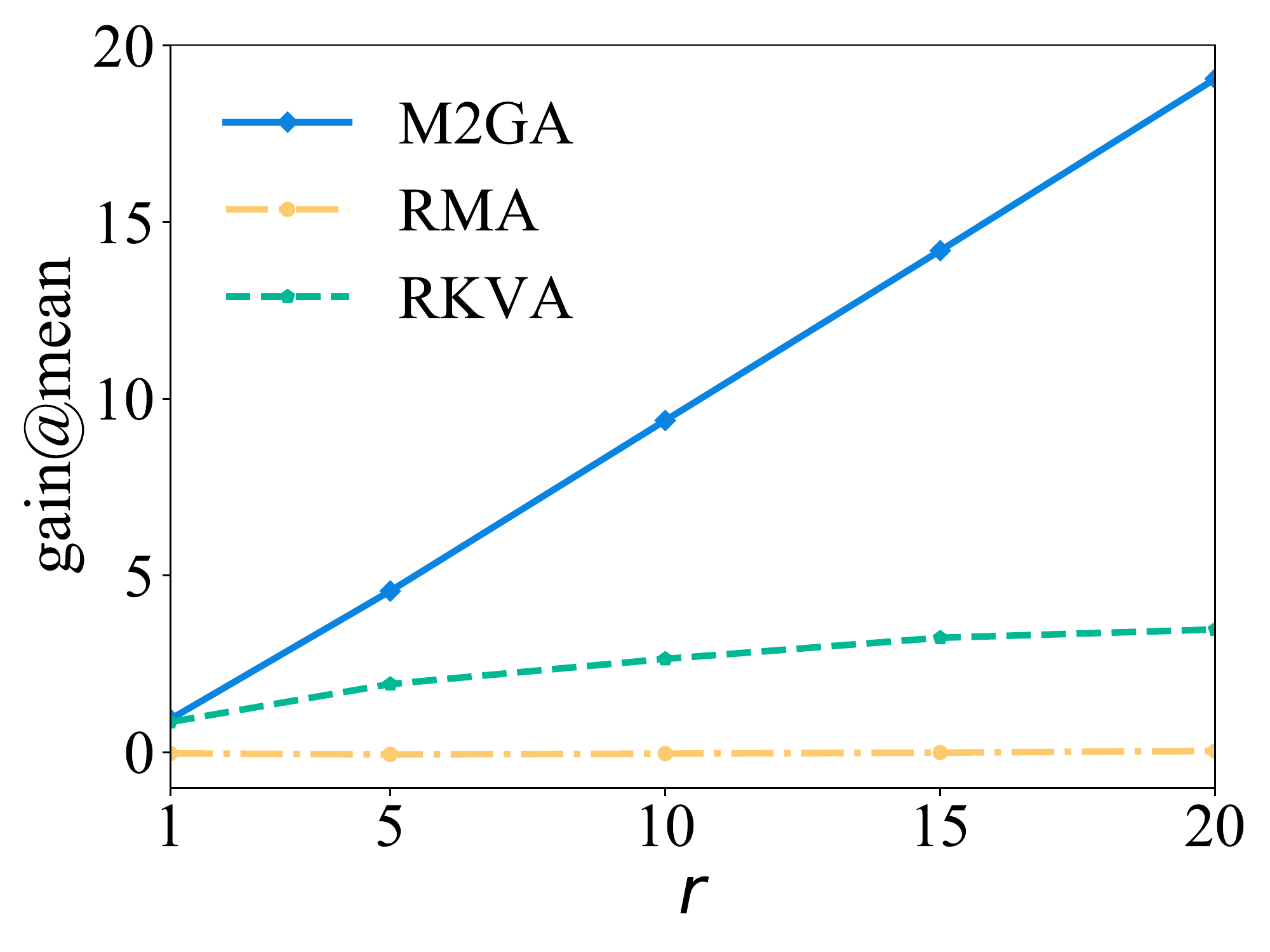}}
    
    \subfloat{  \includegraphics[width=0.15\textwidth]{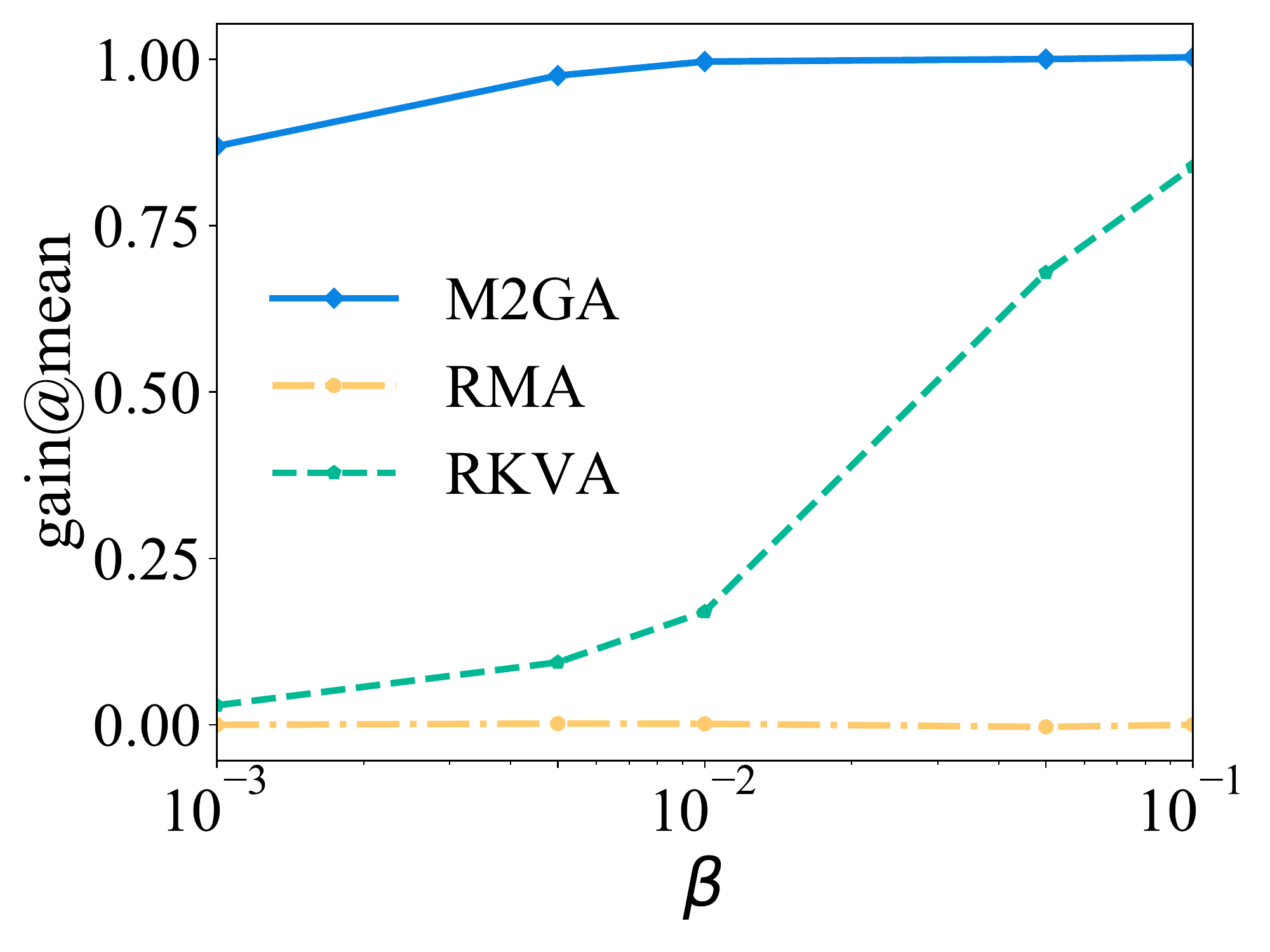}}
    \subfloat{  \includegraphics[width=0.15\textwidth]{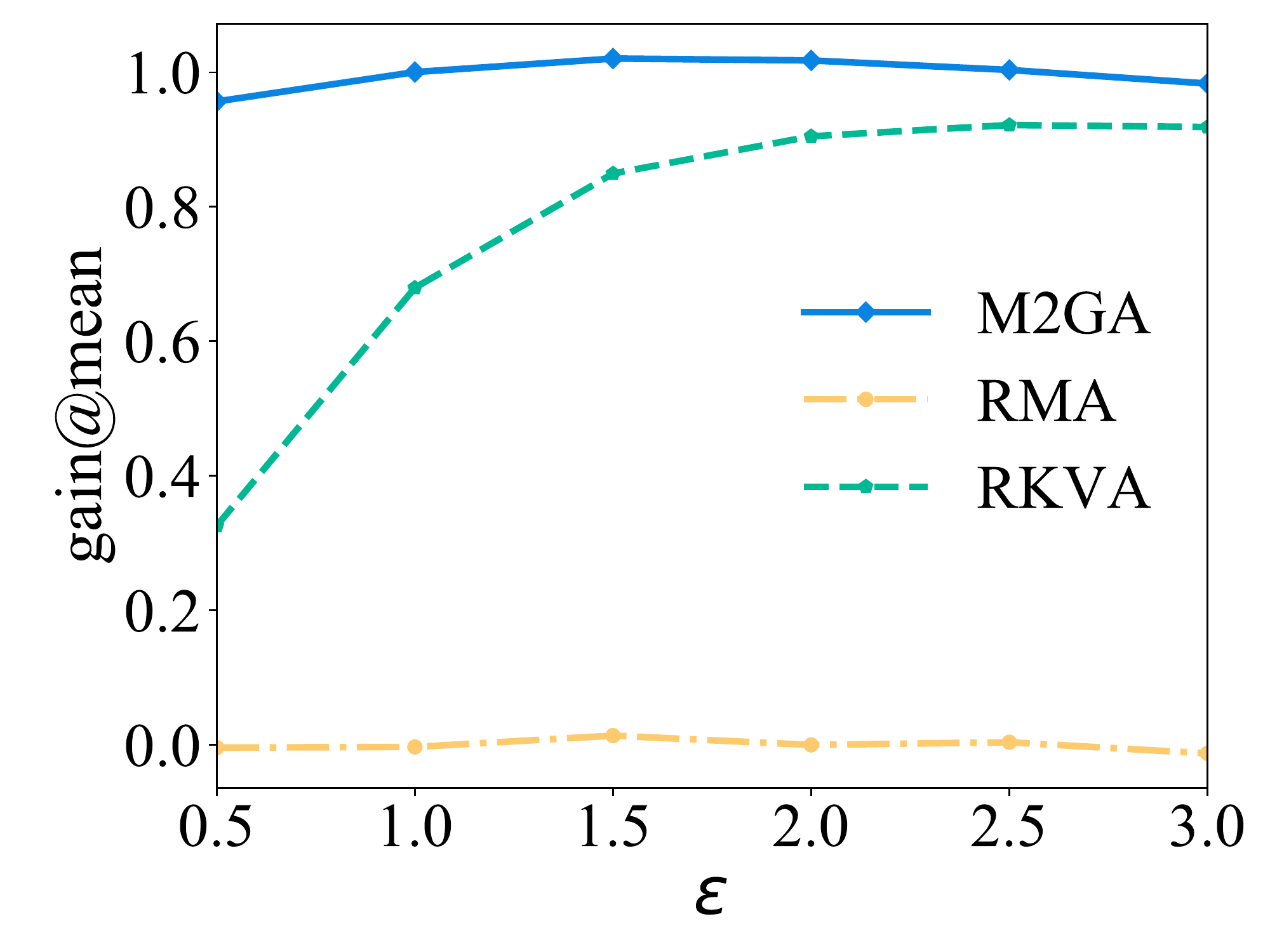}}
    \subfloat{  \includegraphics[width=0.15\textwidth]{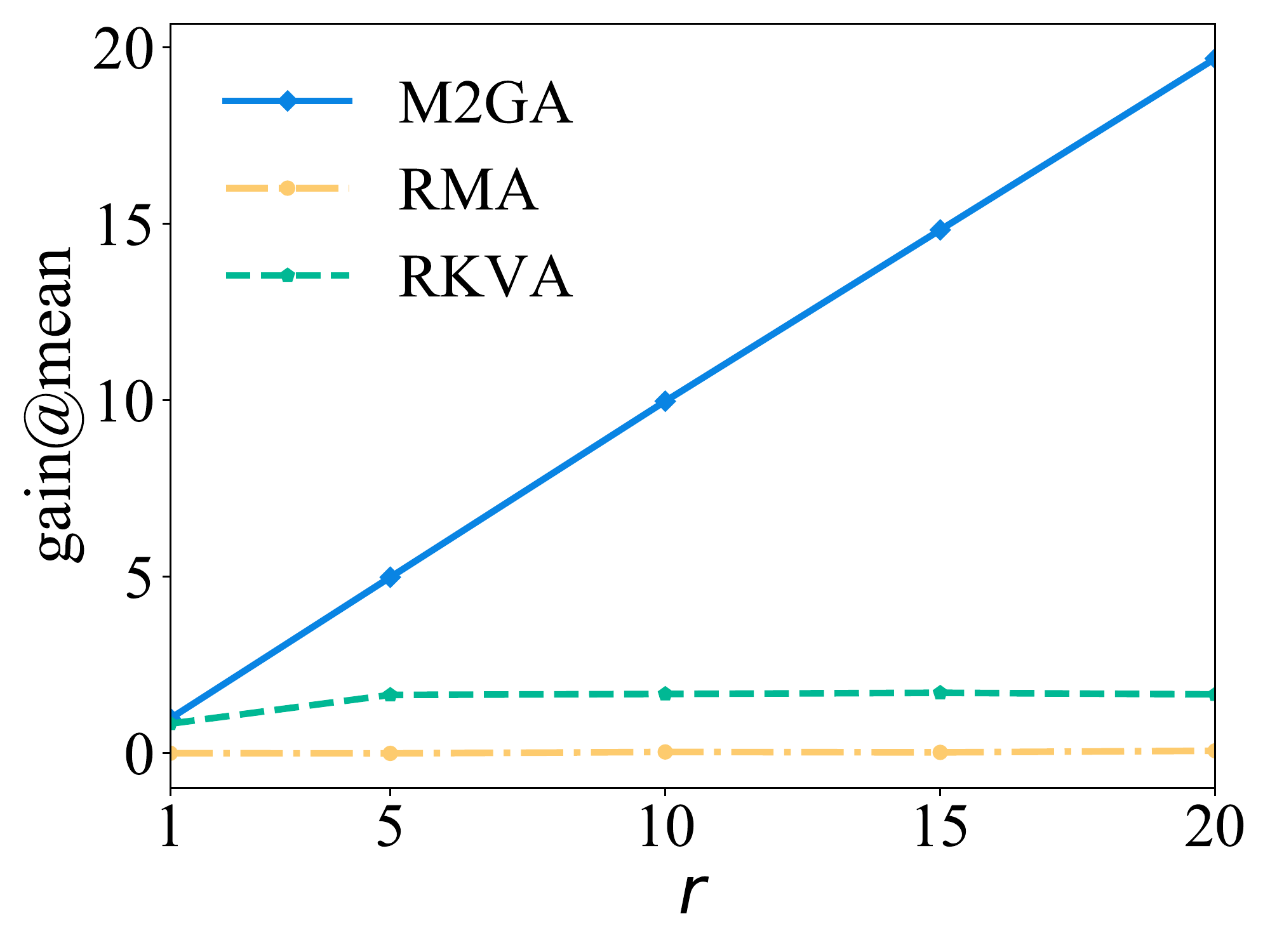}}

\caption{Impact of different parameters ($\beta, \epsilon, r$) on the mean gains on Synthetic. The three rows are for PrivKVM, PCKV-UE, and PCKV-GRR, respectively.}
\label{fig:exp_attack_mean_synthetics}
\end{figure}

\subsubsection{Evaluation Metrics}
\label{evaluation_metric_exp_sec}
\paragraph{gain@freq and gain@mean} We use \emph{frequency gain (gain@freq)} and \emph{mean gain (gain@mean)} of a set of target keys  as the evaluation metrics. In particular, given a set of target keys $\mathbb{T}$, gain@freq is computed as $\sum_{k \in \mathbb{T}} \mathbb{E}[\Delta \hat{f}_{k}]$ and gain@mean is computed as $\sum_{k \in \mathbb{T}} \mathbb{E}[\Delta \hat{m}_{k}]$, where $\Delta \hat{f}_{k}$ and $\Delta \hat{m}_{k}$ respectively measure the frequency gain and mean gain for the target key $k$. Note that frequency gain and mean gain involve expectations. In our experiments, we average the results over 100 trials to compute the expectations. Since in our experiments, we clip the estimated frequencies and support counts in the LDP protocols, the frequency gains may not be the same as those in Table~\ref{tab:expected_freq_gain}. 

\paragraph{ASR for recommender systems} We also consider recommender system as a downstream application. Specifically, the server first collects the frequency and mean value (i.e., average rating score) of each item/key from users using LDP protocols and then recommends top-$t$ items to all users based on the statistics. In this downstream application, the attacker's goal is to promote the target items/keys to be among the top-$t$ items  recommended by the system. Therefore, we use \emph{attack success rate (ASR)} as our metric, which we define as the fraction of target items that are in the $t$ recommended items after attack. Note that the target items are not among the $t$ recommended ones before attack.

We consider three different cases of recommender systems, i.e., frequency-based recommender system (Case 1), score-based recommender system (Case 2), and frequency-score-based recommender system (Case 3). In Case 1, the recommender system recommends the most popular $t$ items, i.e., the $t$ items with the largest estimated frequencies. Ties are broken by selecting the item with higher estimated average rating score. In Case 2, the recommender system recommends $t$ items with the highest estimated average rating scores. Ties are broken by selecting the item with larger estimated frequency. In Case 3, the recommender system considers both the popularity and the average rating score of an item. Specifically, the recommender system calculates the product of the estimated frequency and (uncalibrated) average rating score of each item, and recommends the $t$ items with the largest products. Ties are broken randomly. Roughly speaking, the product of the estimated frequency and average rating score of an item is the item's estimated total rating scores.

\begin{figure}[!t]
    \centering 
    
    \subfloat{  \includegraphics[width=0.15\textwidth]{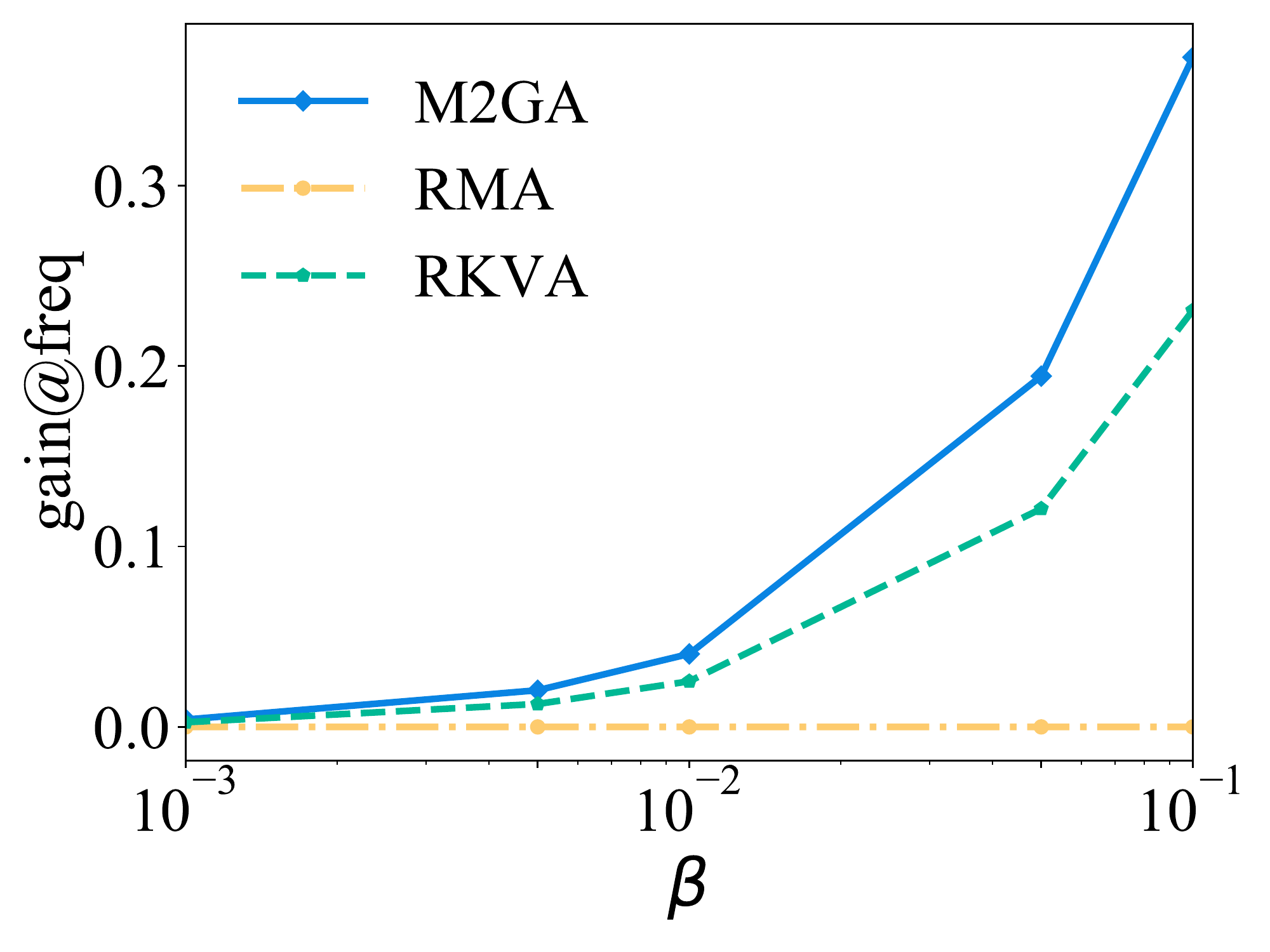}}
   \subfloat{  \includegraphics[width=0.15\textwidth]{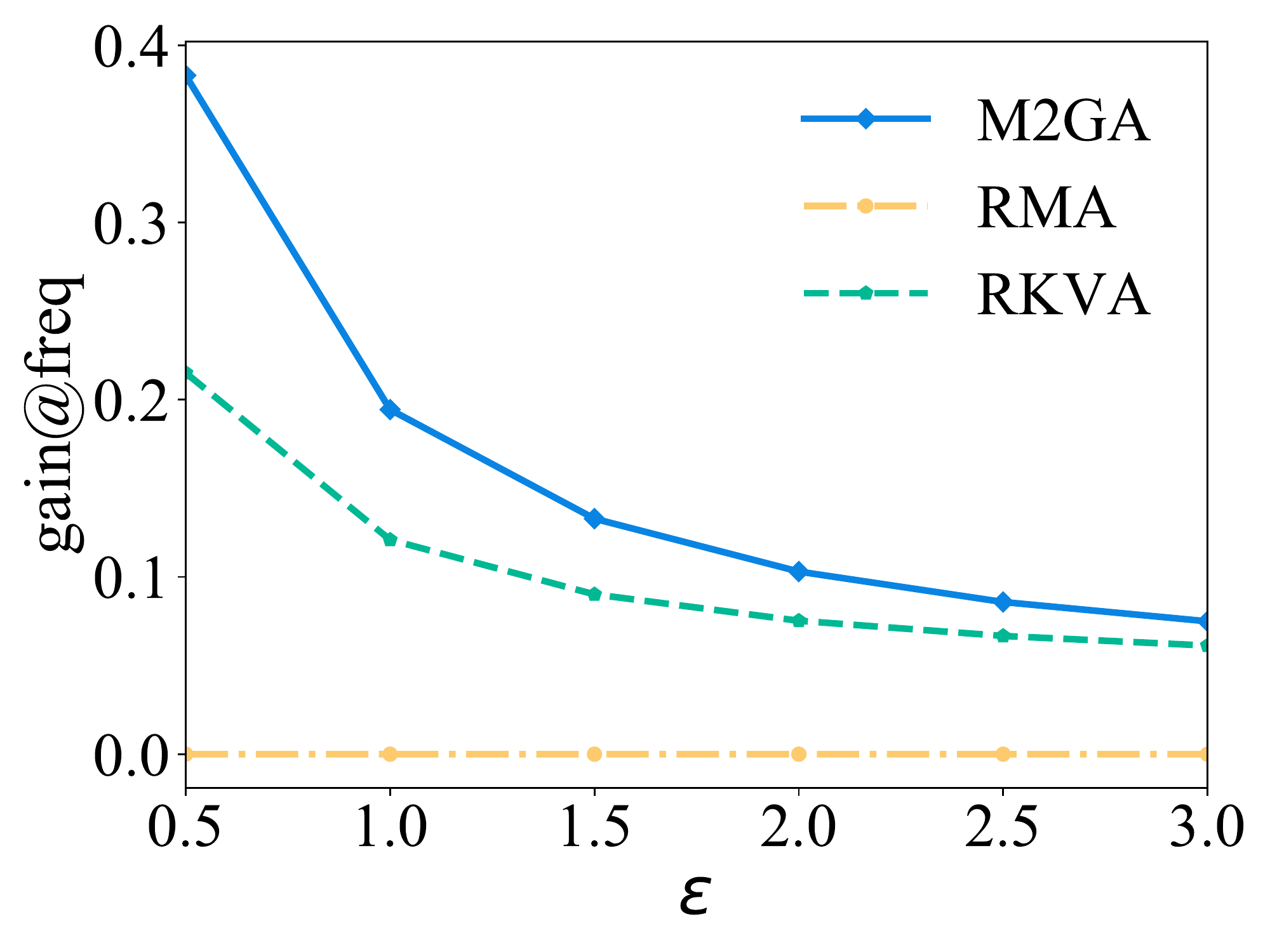}}
    \subfloat{  \includegraphics[width=0.15\textwidth]{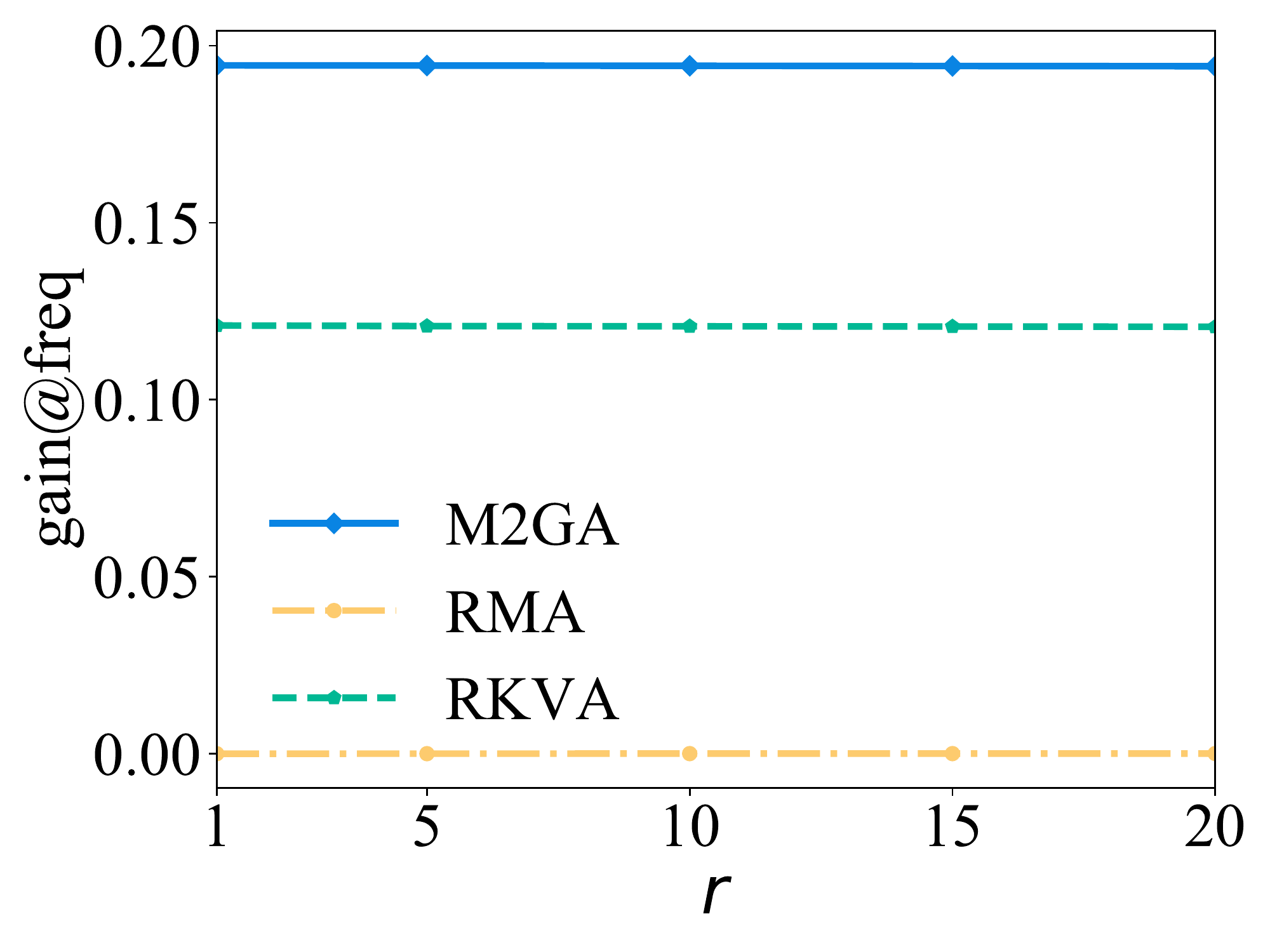}}
    
    \subfloat{  \includegraphics[width=0.15\textwidth]{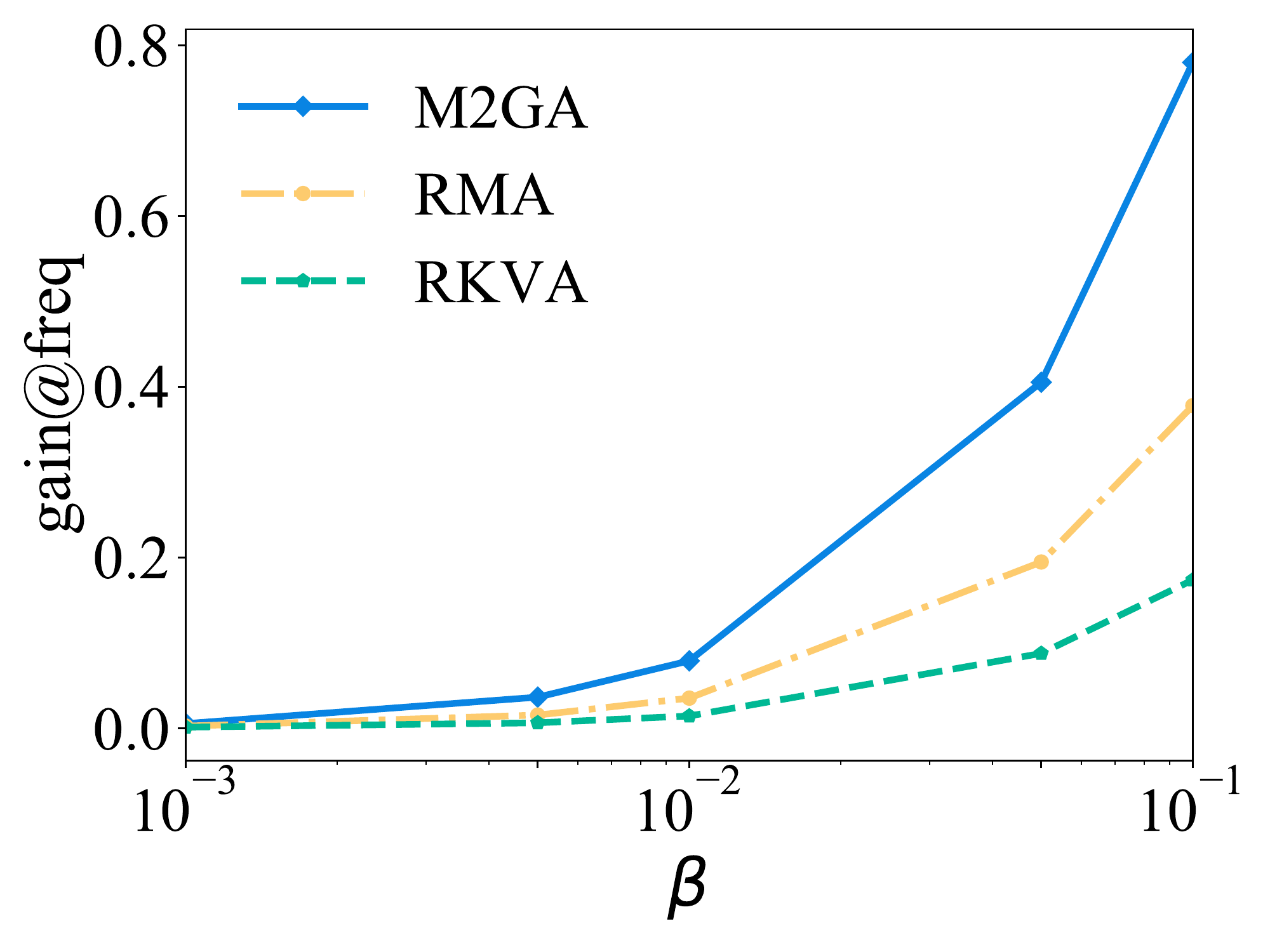}}
    \subfloat{  \includegraphics[width=0.15\textwidth]{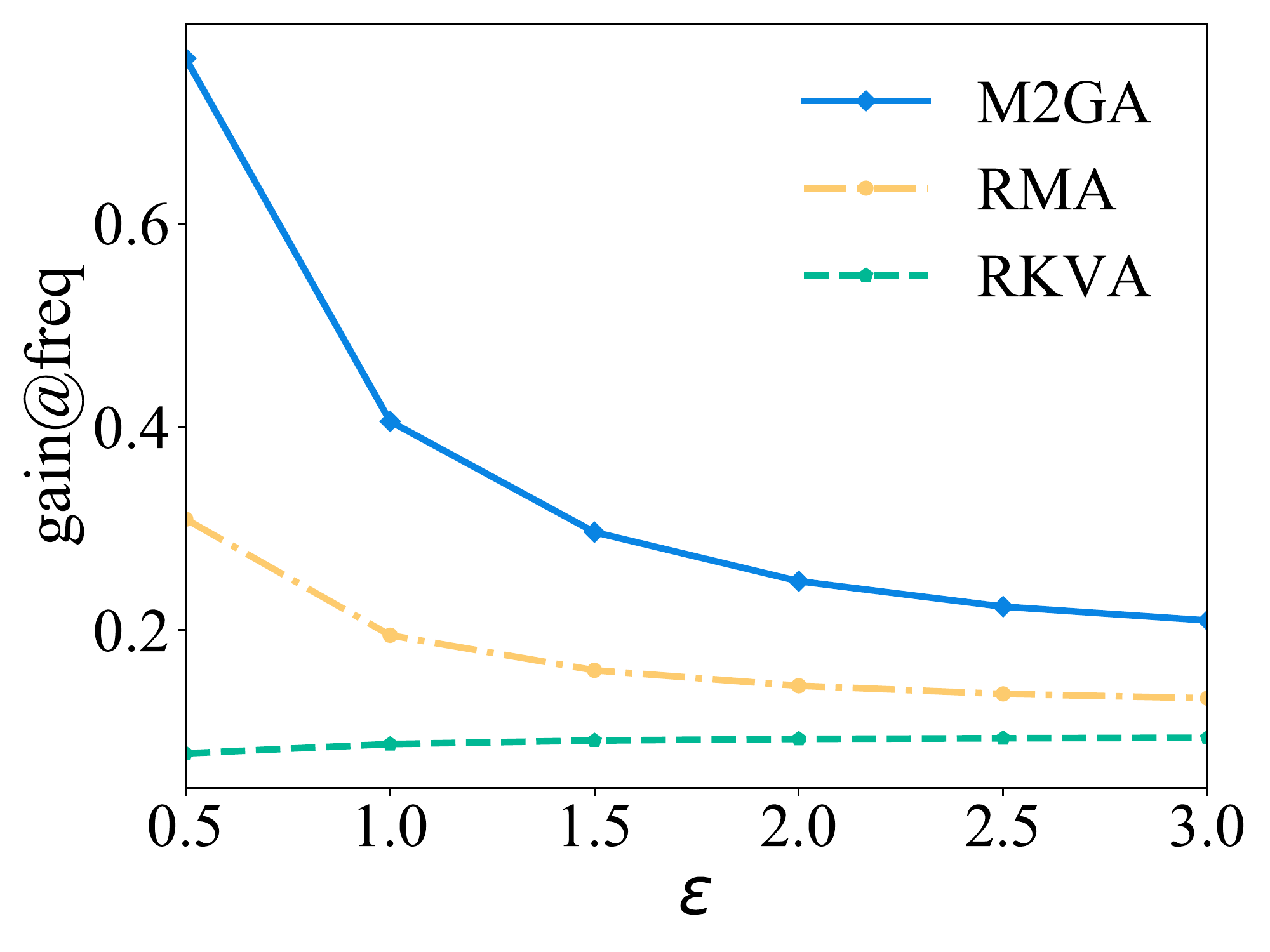}}
    \subfloat{  \includegraphics[width=0.15\textwidth]{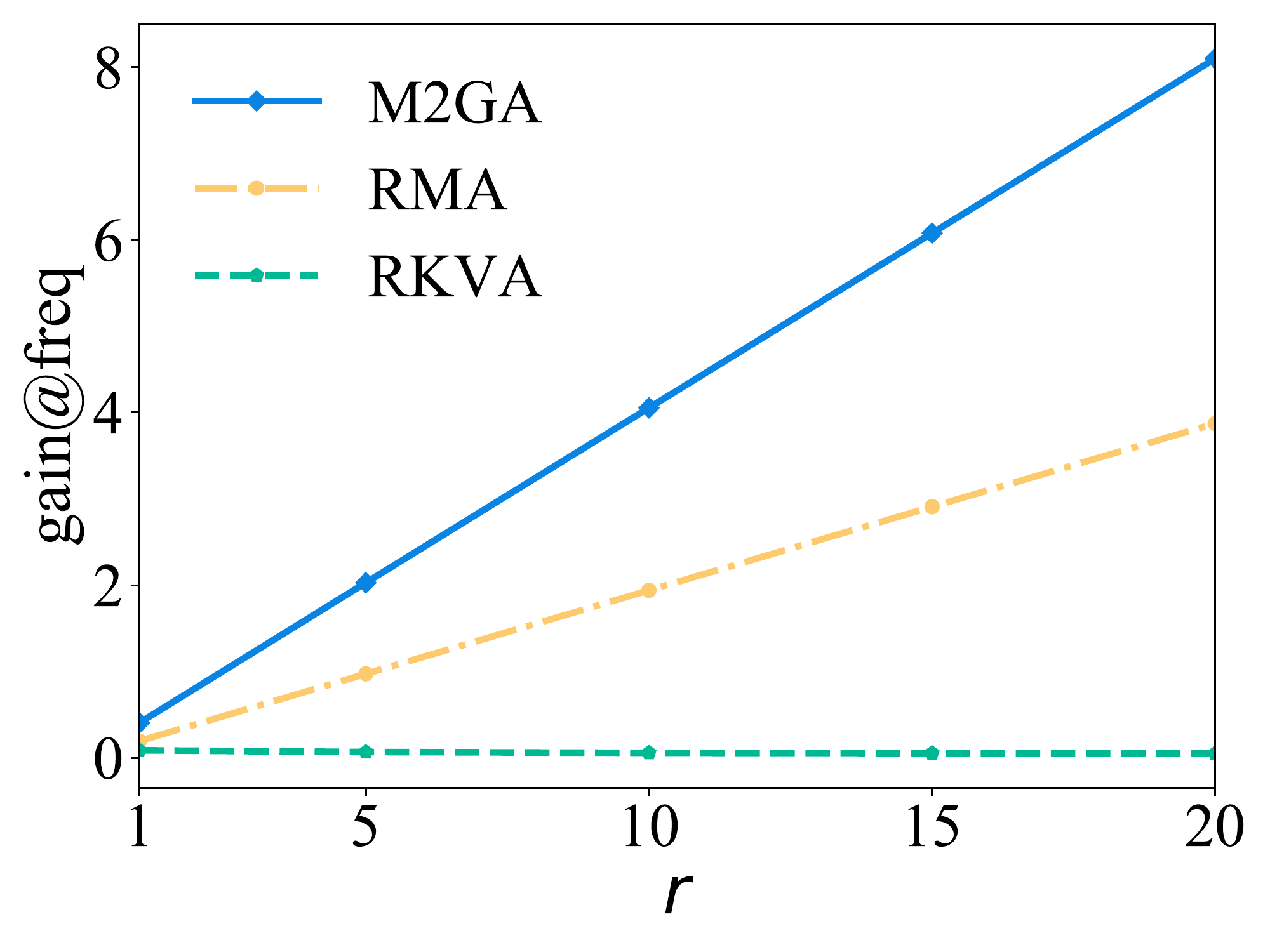}}

    \subfloat{  \includegraphics[width=0.15\textwidth]{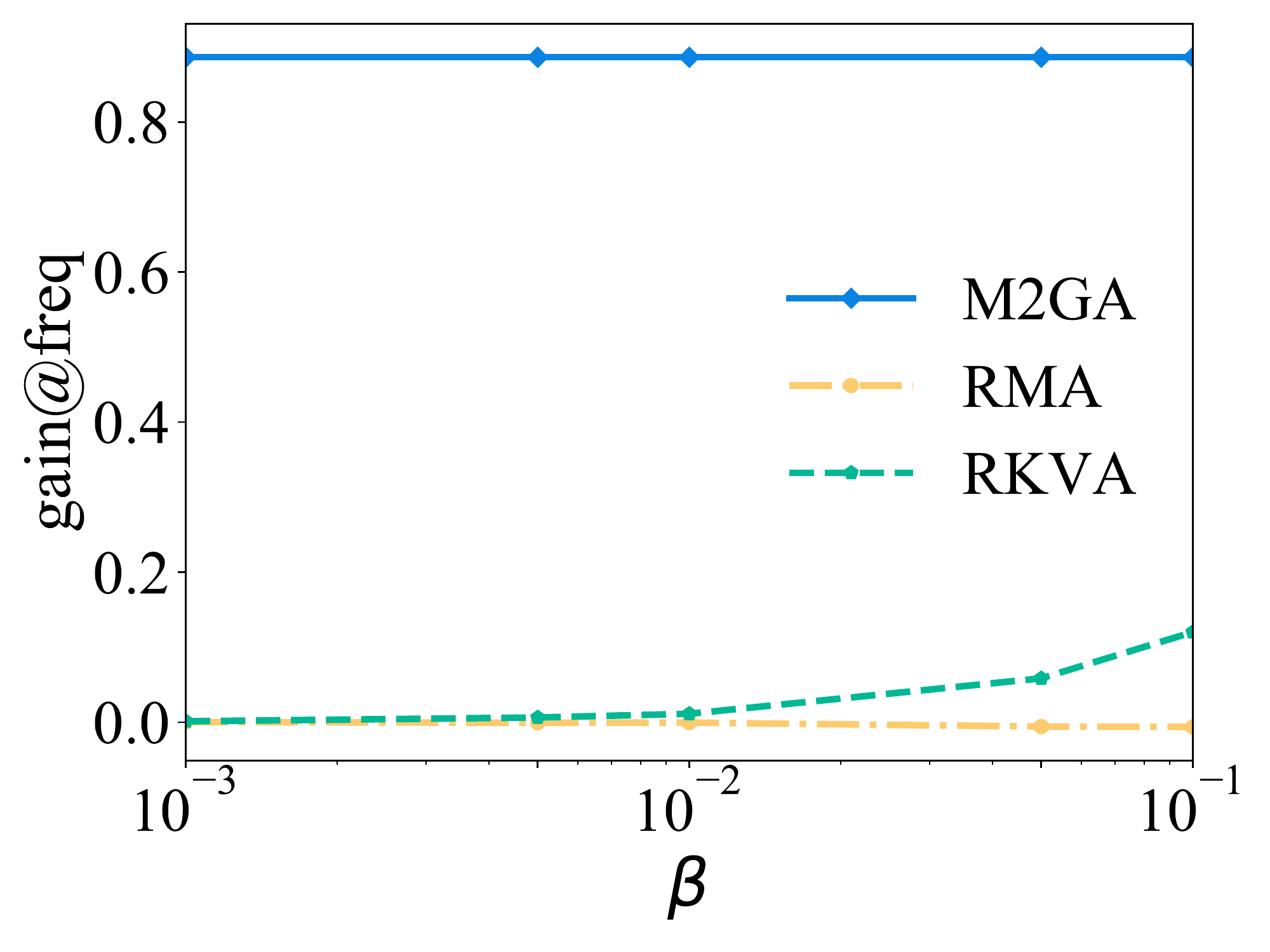}}
    \subfloat{  \includegraphics[width=0.15\textwidth]{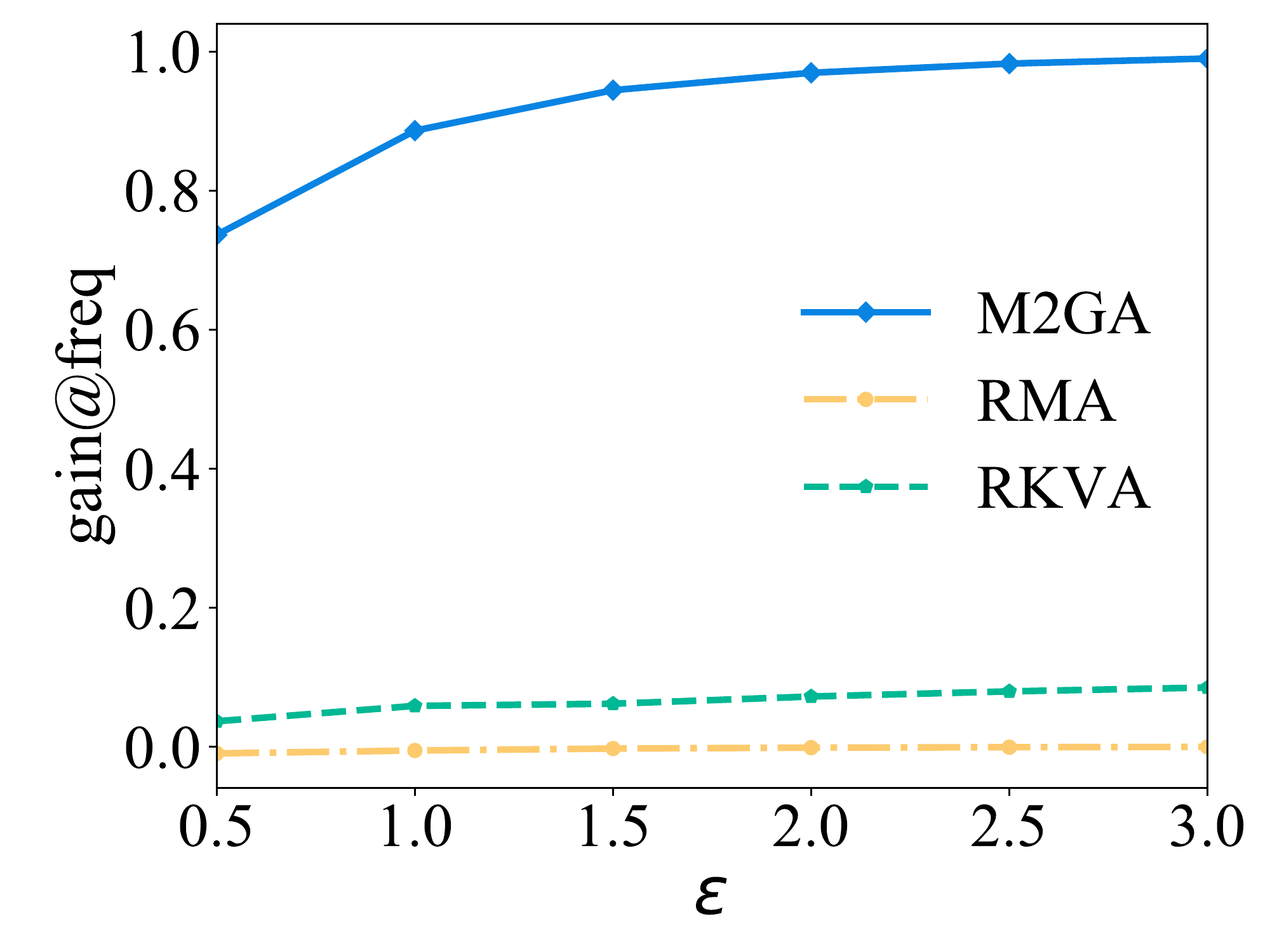}}
    \subfloat{  \includegraphics[width=0.15\textwidth]{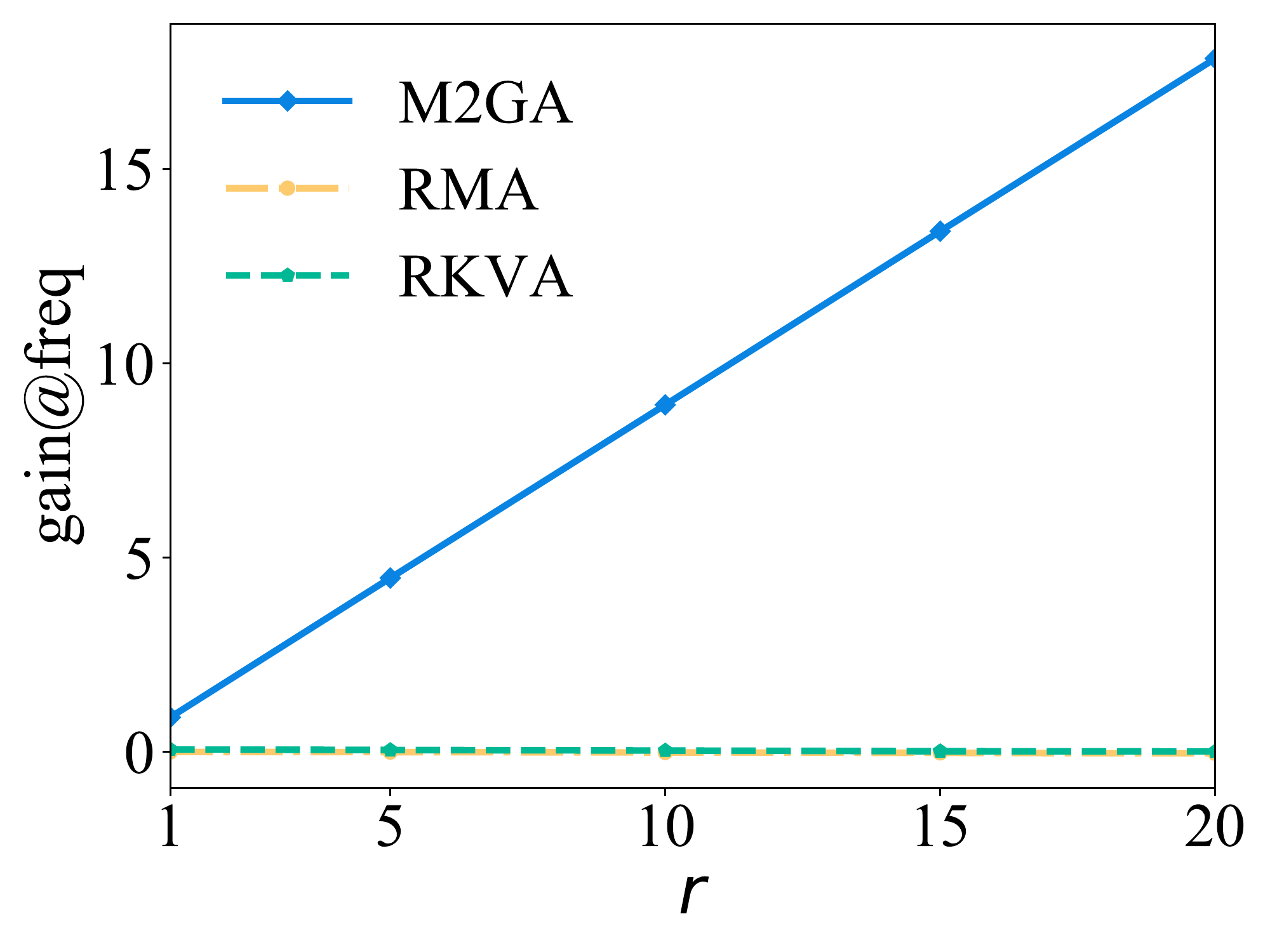}}

\caption{Impact of different parameters ($\beta, \epsilon, r$) on the frequency gains on Clothing. The three rows are for PrivKVM, PCKV-UE, and PCKV-GRR, respectively.}
\label{fig:exp_attack_freq_clothing}
\end{figure}

\begin{figure}[!t]
    \centering 
    
    \subfloat{   \includegraphics[width=0.15\textwidth]{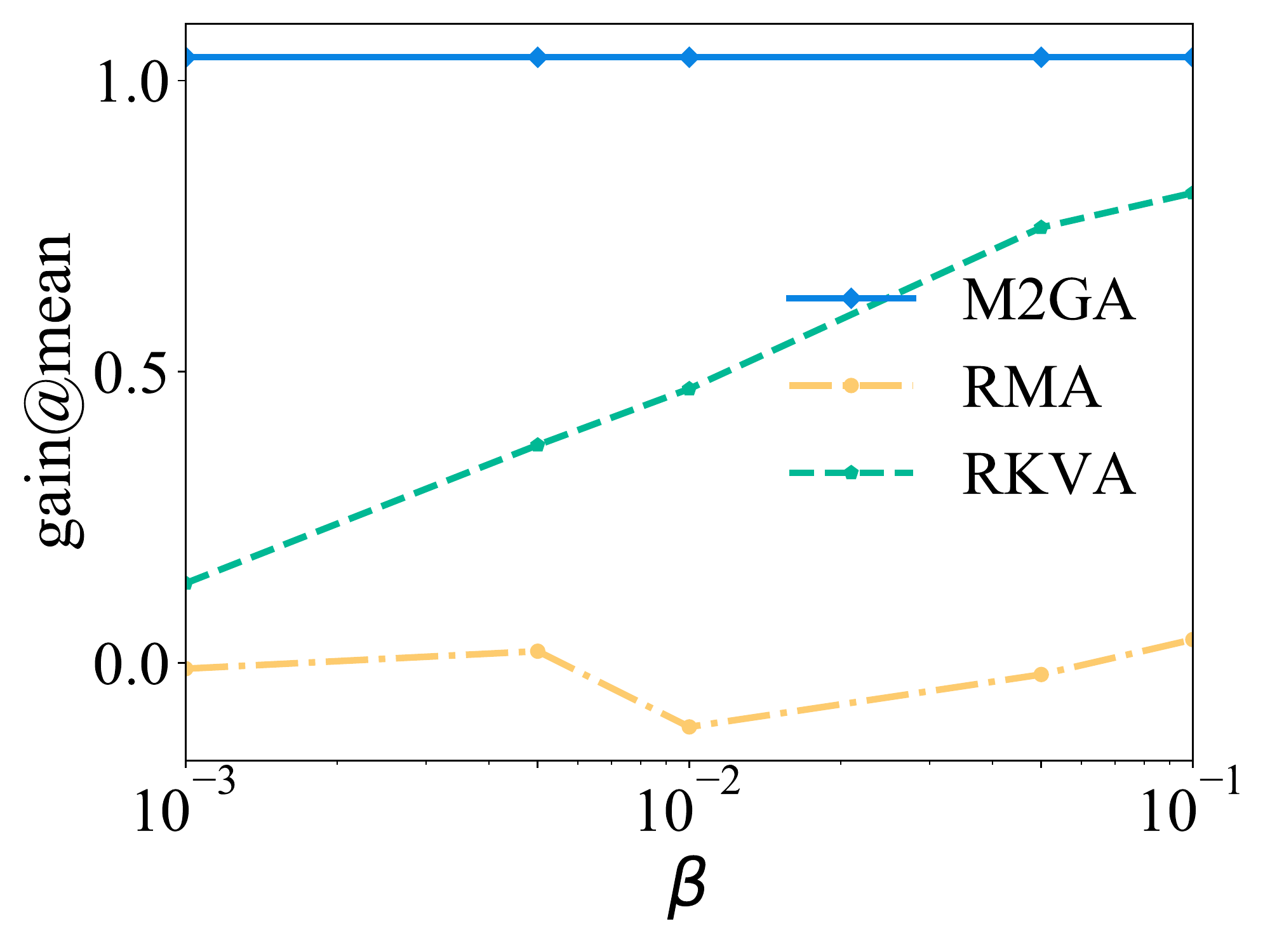}}
    \subfloat{   \includegraphics[width=0.15\textwidth]{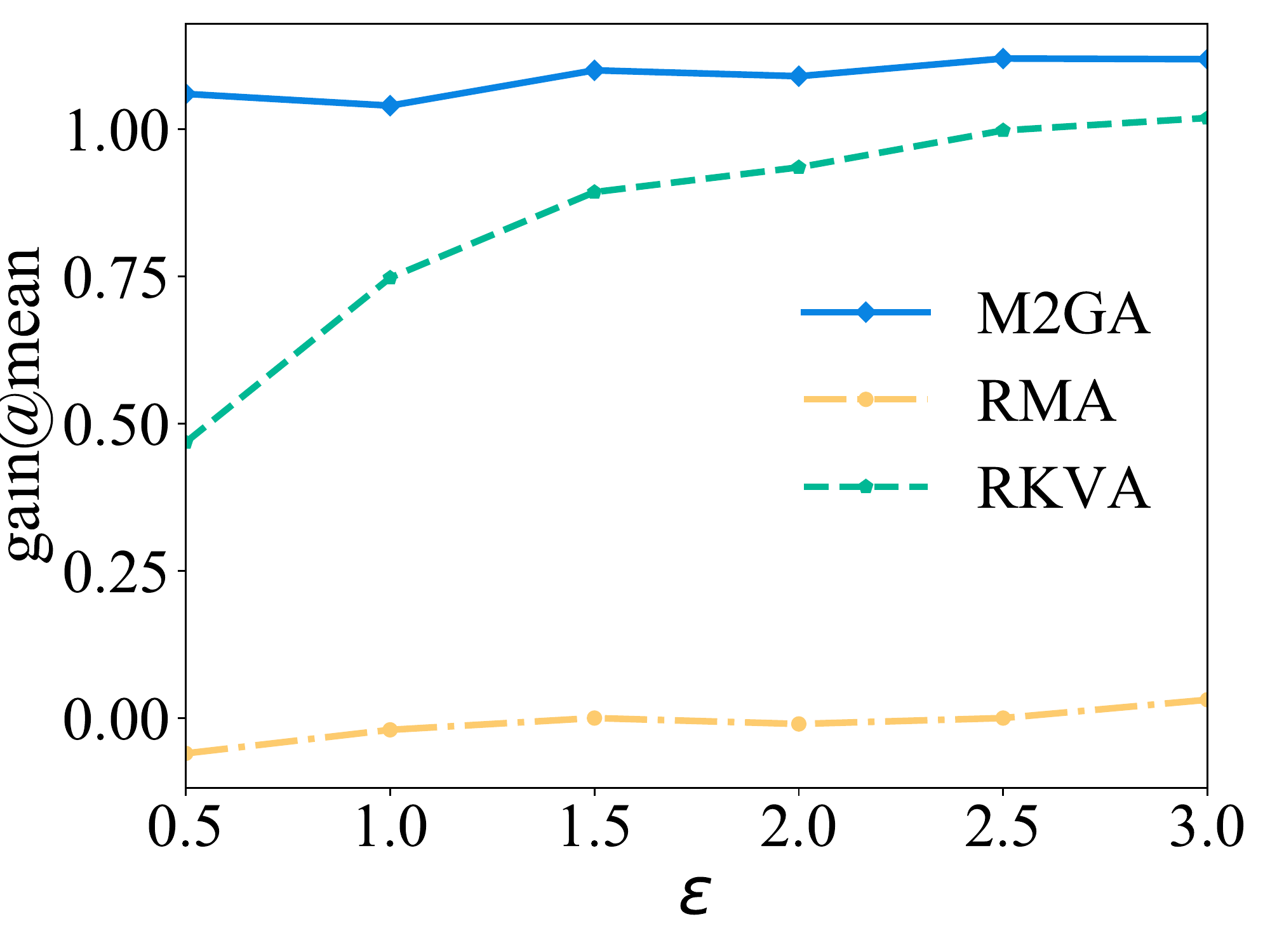}}
    \subfloat{   \includegraphics[width=0.15\textwidth]{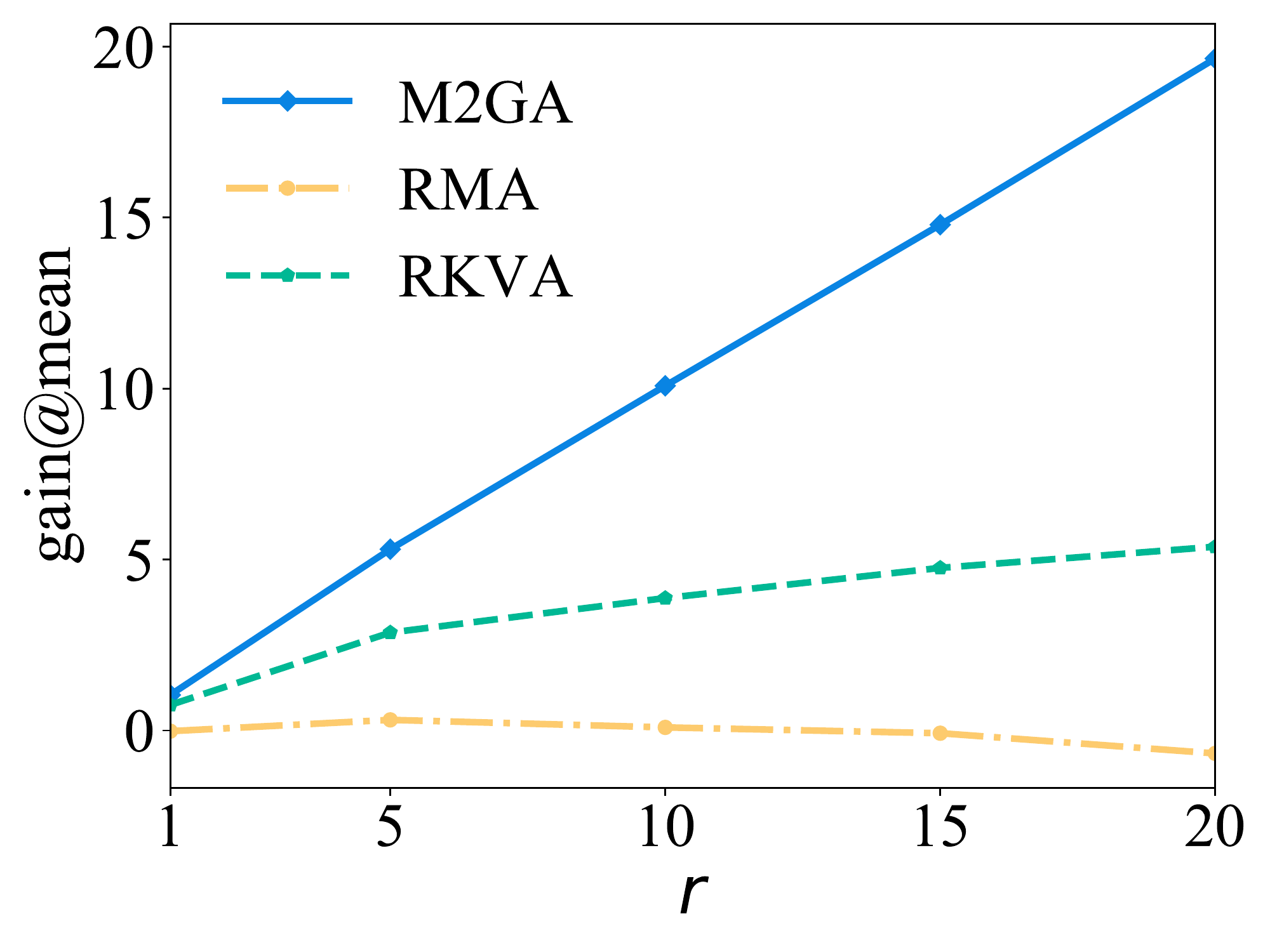}}
    
    \subfloat{   \includegraphics[width=0.15\textwidth]{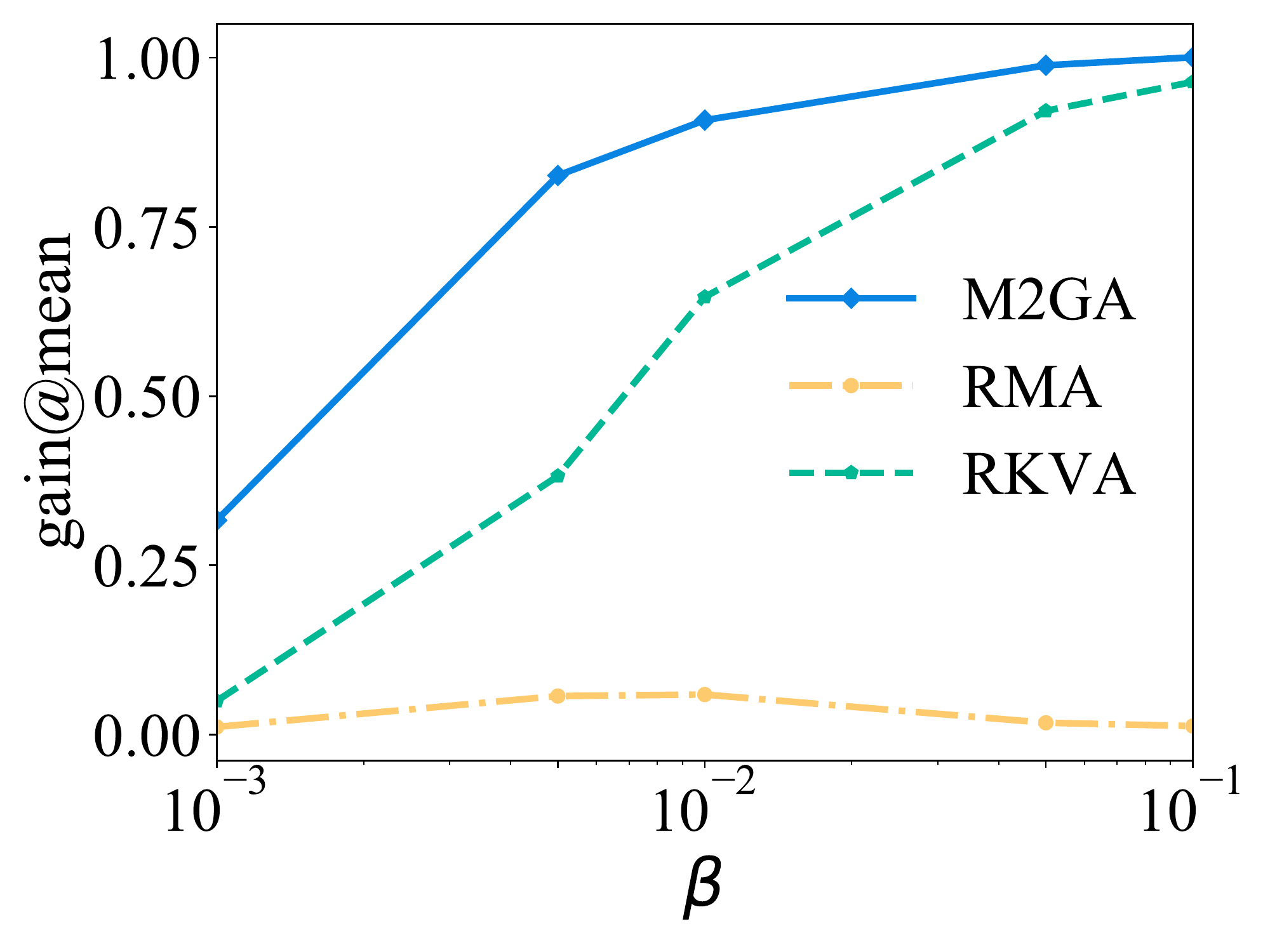}}
    \subfloat{   \includegraphics[width=0.15\textwidth]{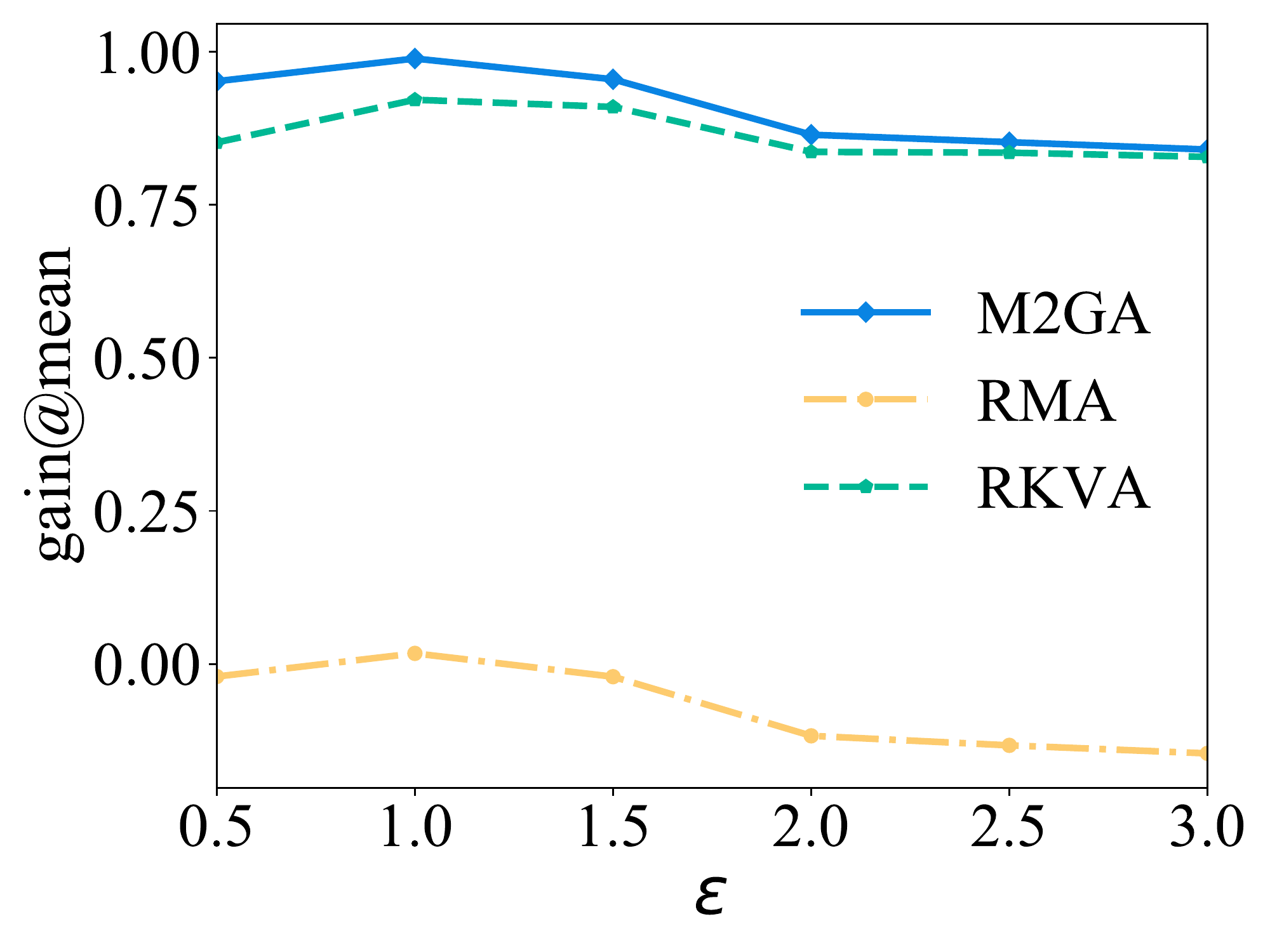}}
    \subfloat{   \includegraphics[width=0.15\textwidth]{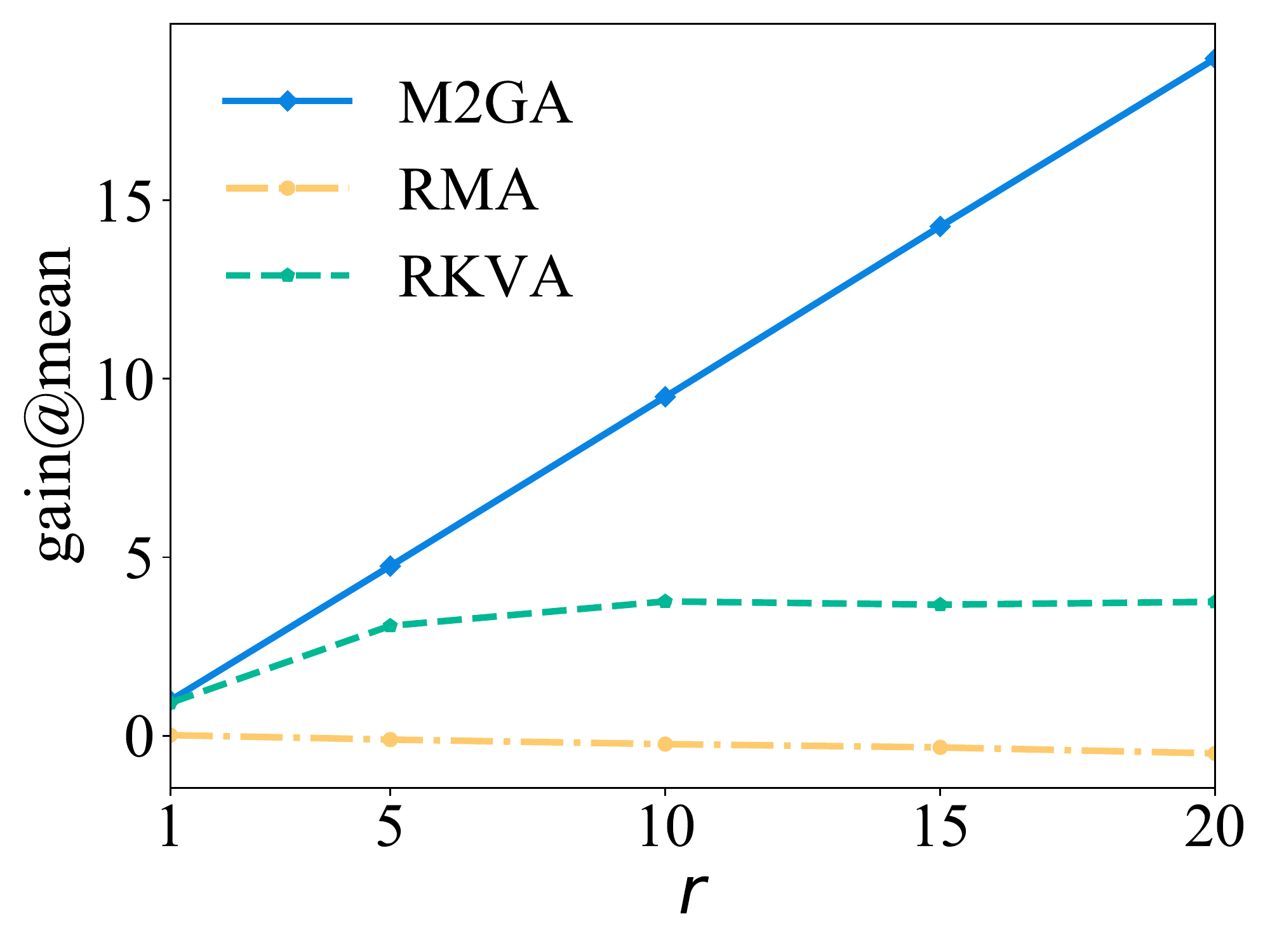}}

    \subfloat{   \includegraphics[width=0.15\textwidth]{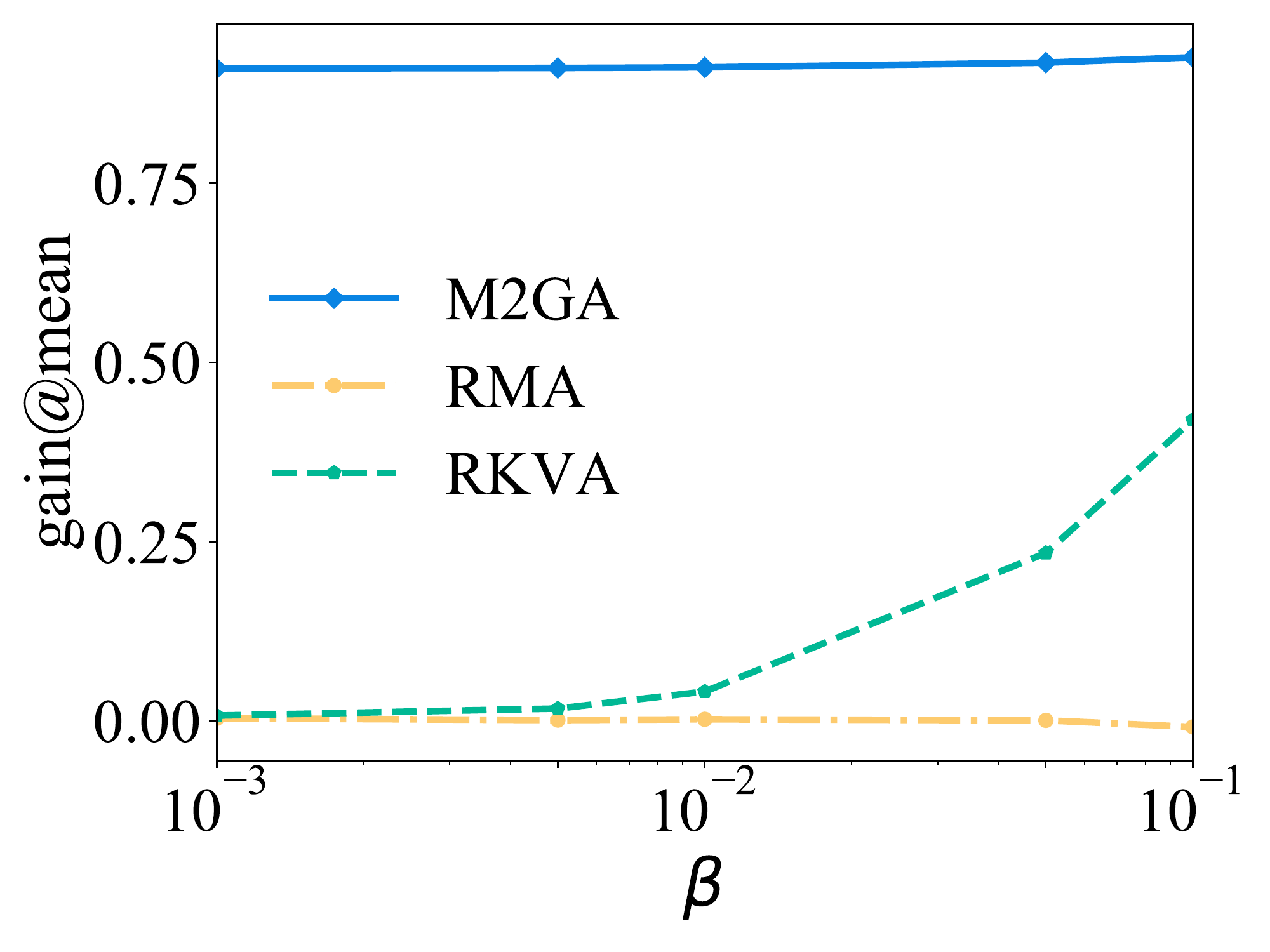}}
    \subfloat{   \includegraphics[width=0.15\textwidth]{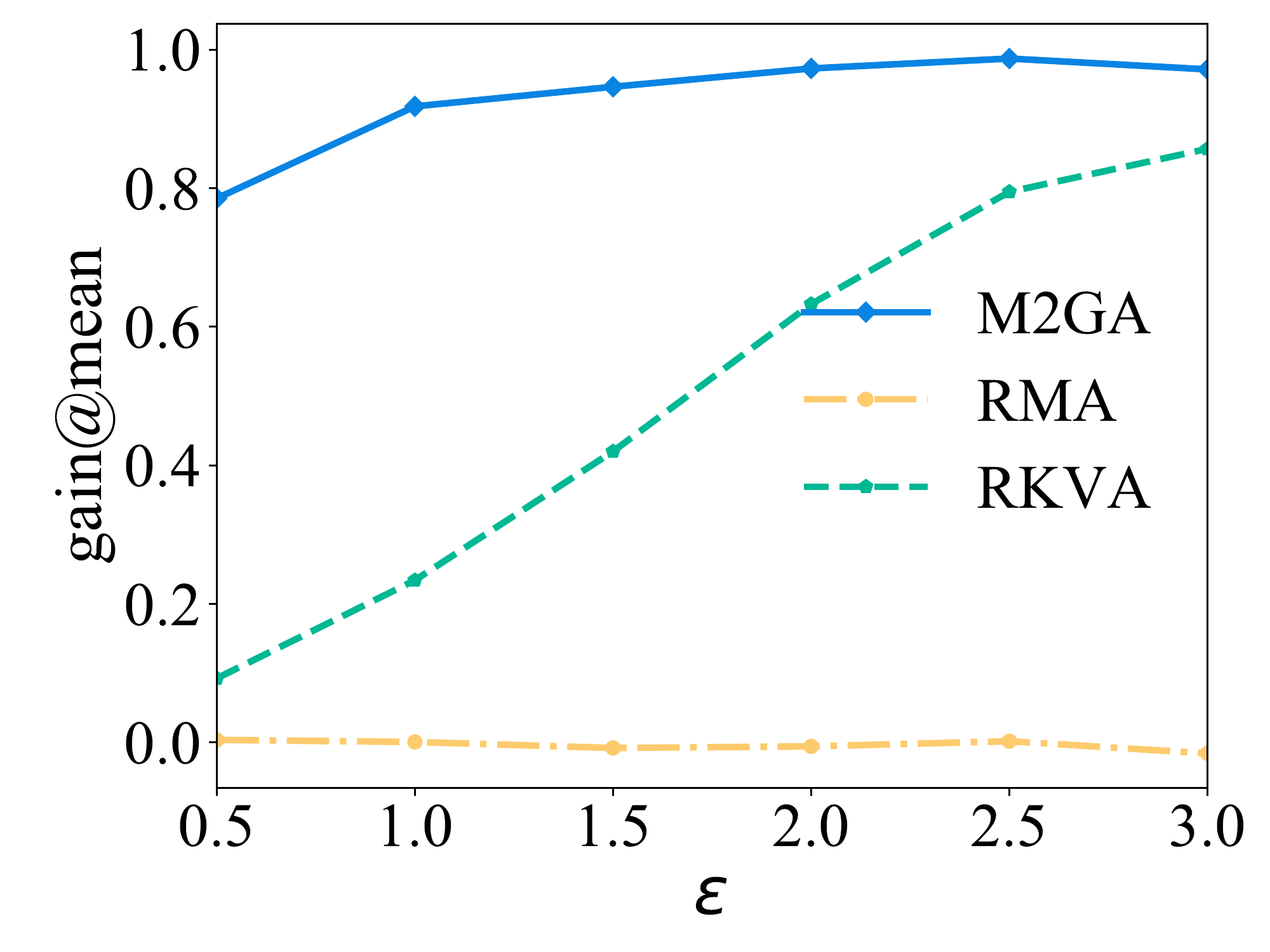}}
    \subfloat{   \includegraphics[width=0.15\textwidth]{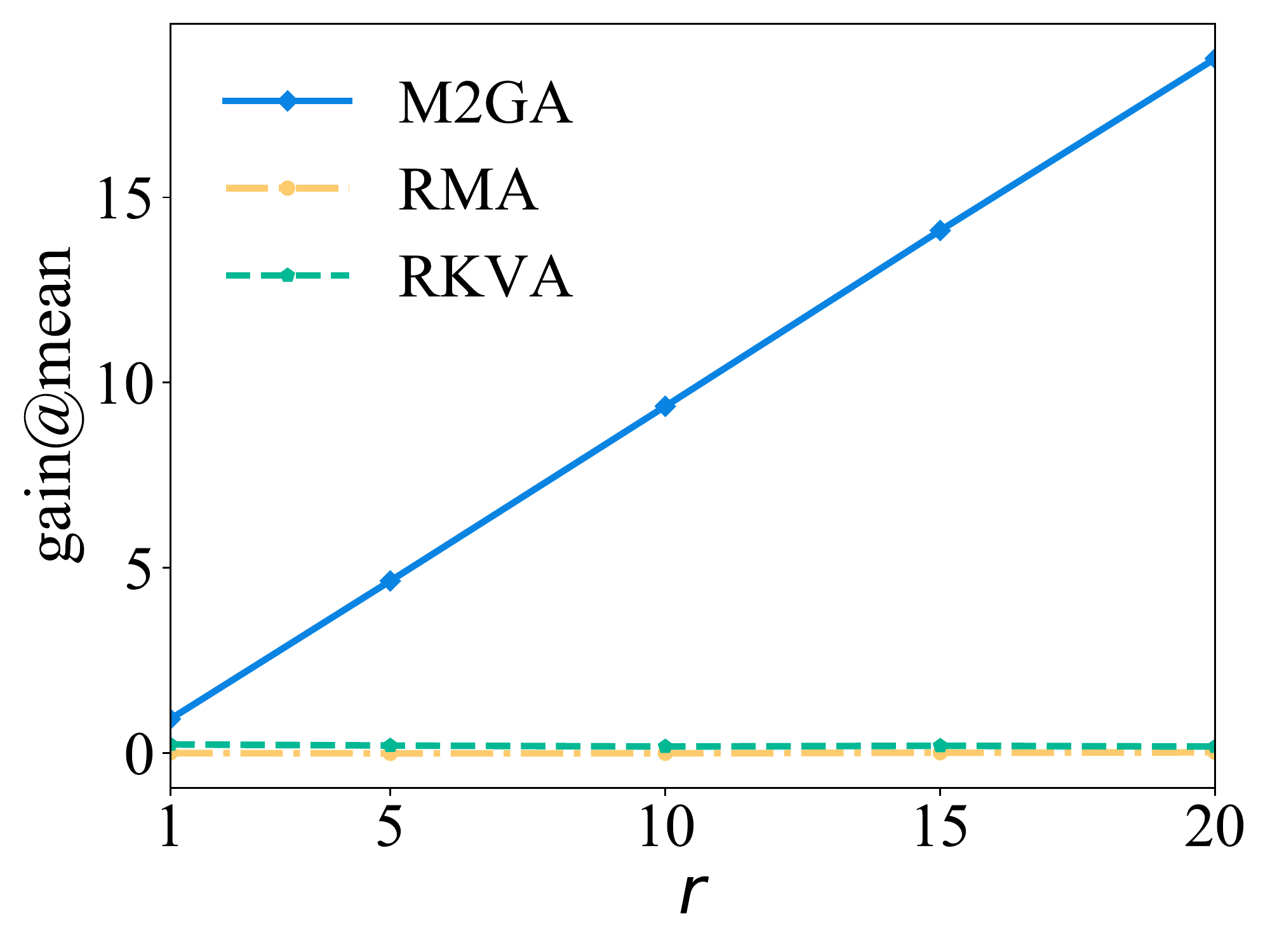}}
    
\caption{Impact of different parameters ($\beta, \epsilon, r$) on the  mean gains on Clothing. The three rows are for PrivKVM, PCKV-UE, and PCKV-GRR, respectively.}
\label{fig:exp_attack_mean_clothing}
\end{figure}

\begin{figure}[!tb]
    \centering 
    
    \subfloat{\includegraphics[width=0.15\textwidth]{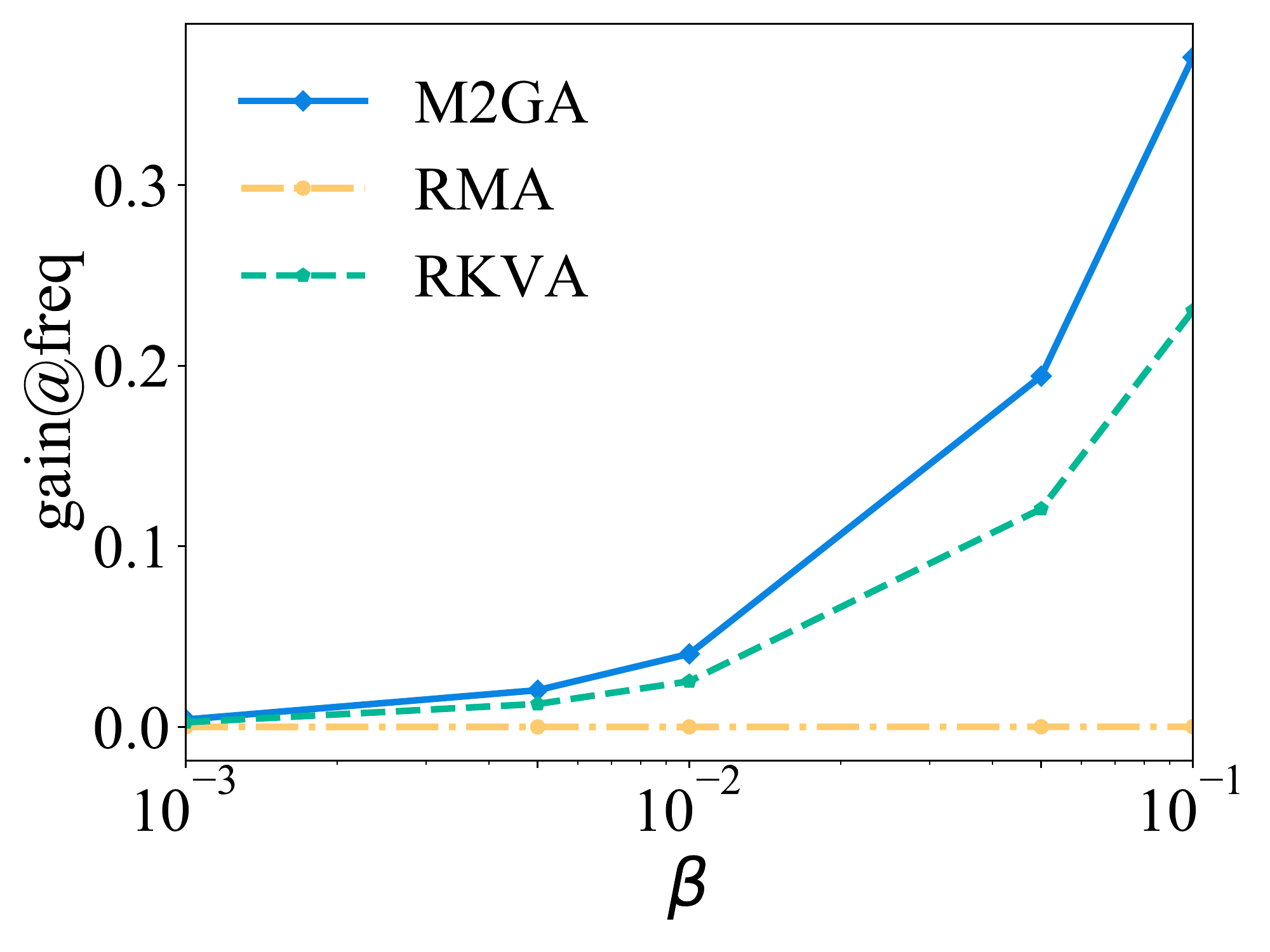}}
    \subfloat{\includegraphics[width=0.15\textwidth]{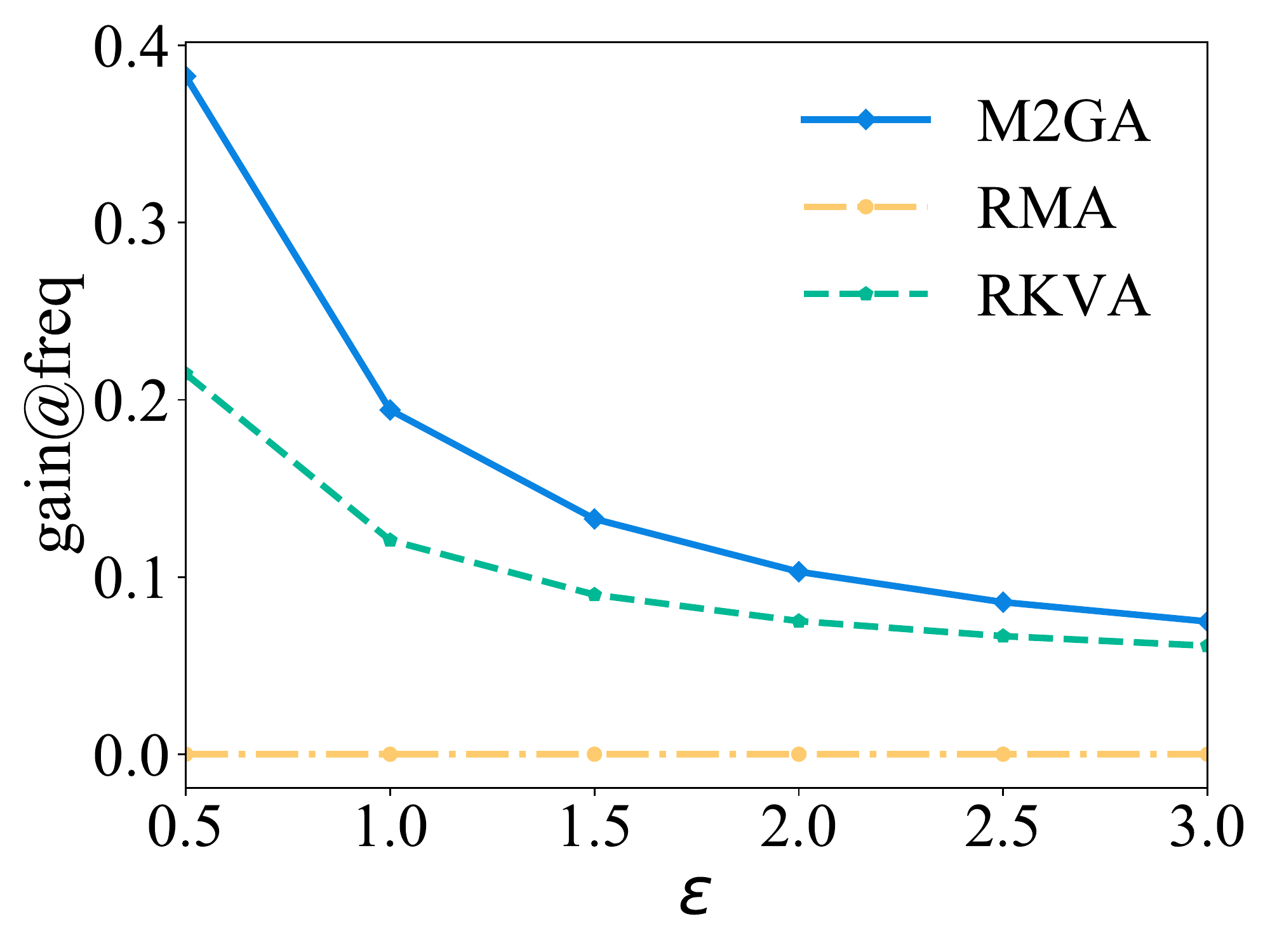}\label{fig:talkingdata_privkvm_epsilon}}
    \subfloat{\includegraphics[width=0.15\textwidth]{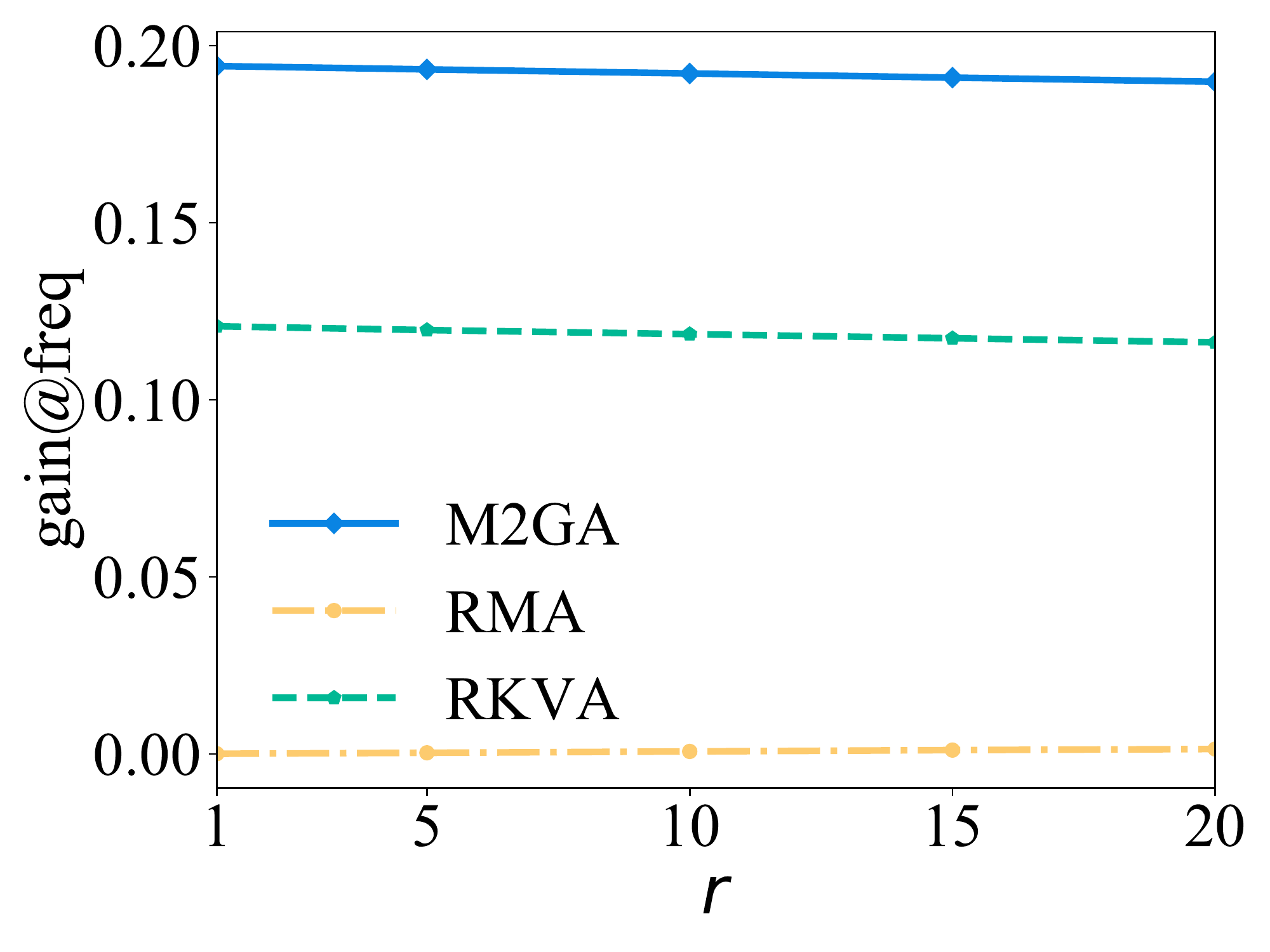}}
    
    \subfloat{\includegraphics[width=0.15\textwidth]{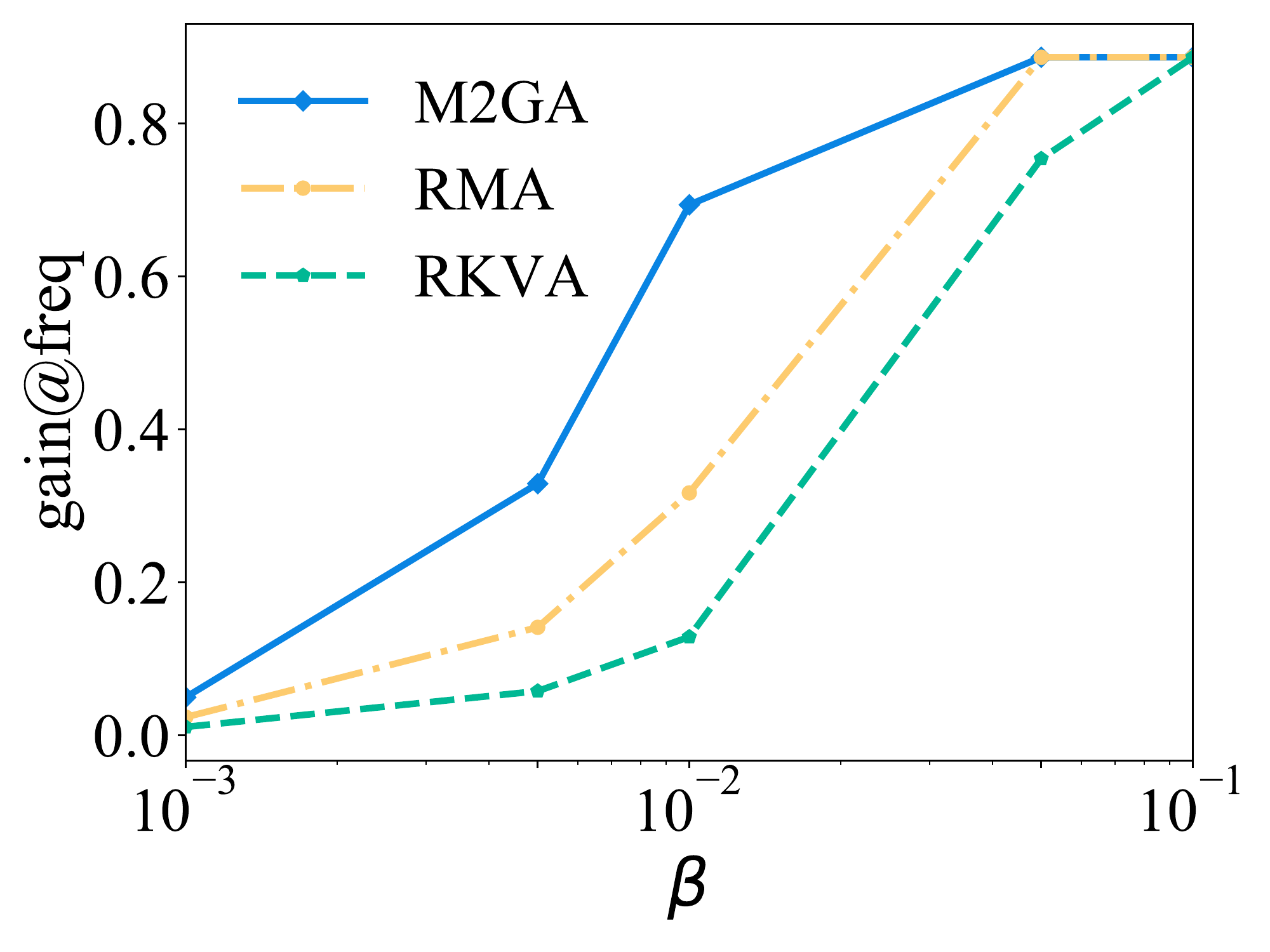}\label{fig:talkingdata_pckv_ue_freq_beta}}
    \subfloat{\includegraphics[width=0.15\textwidth]{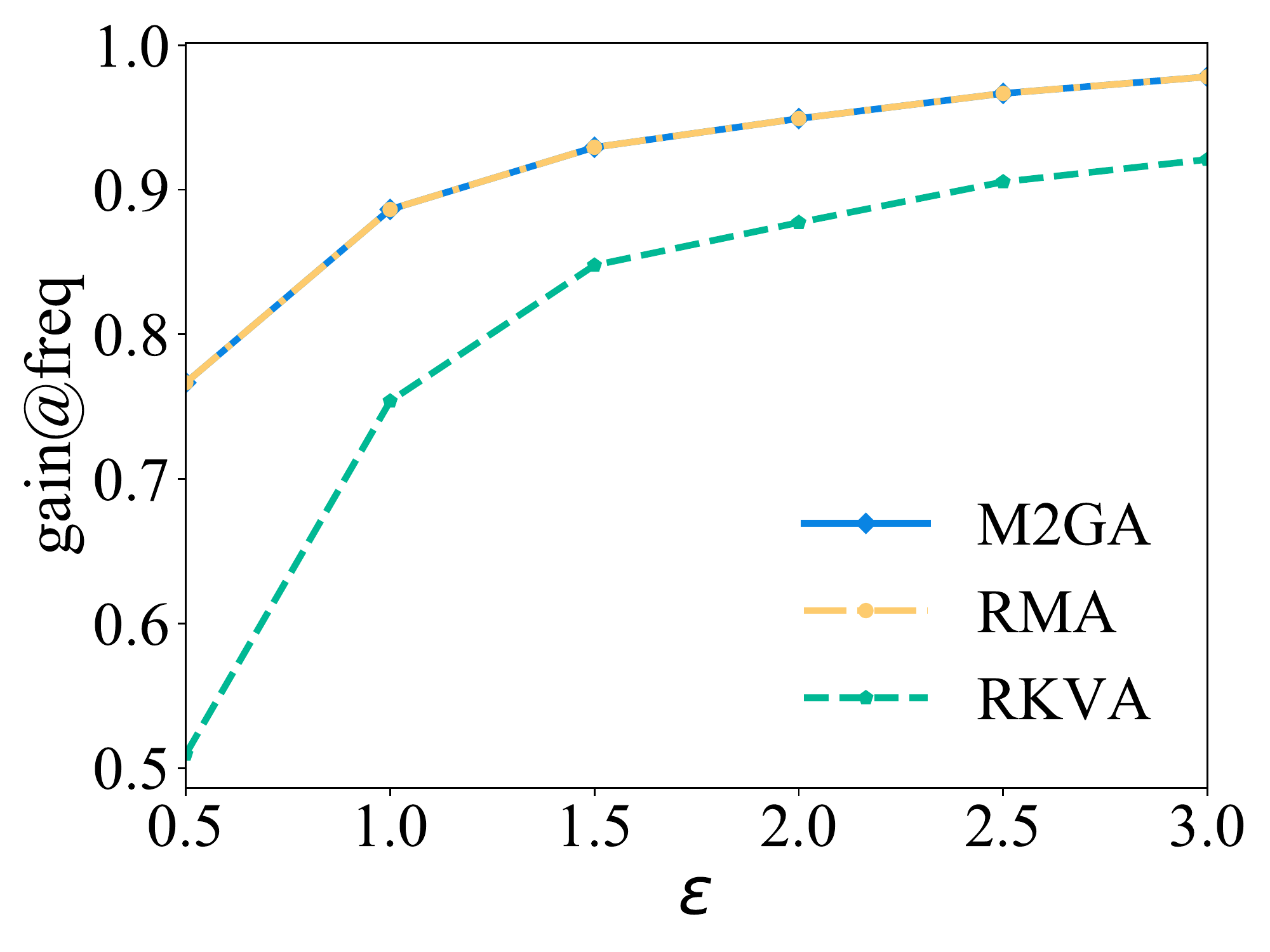}
      \label{fig:talkingdata_pckv_ue_freq_epsilon}}
    \subfloat{\includegraphics[width=0.15\textwidth]{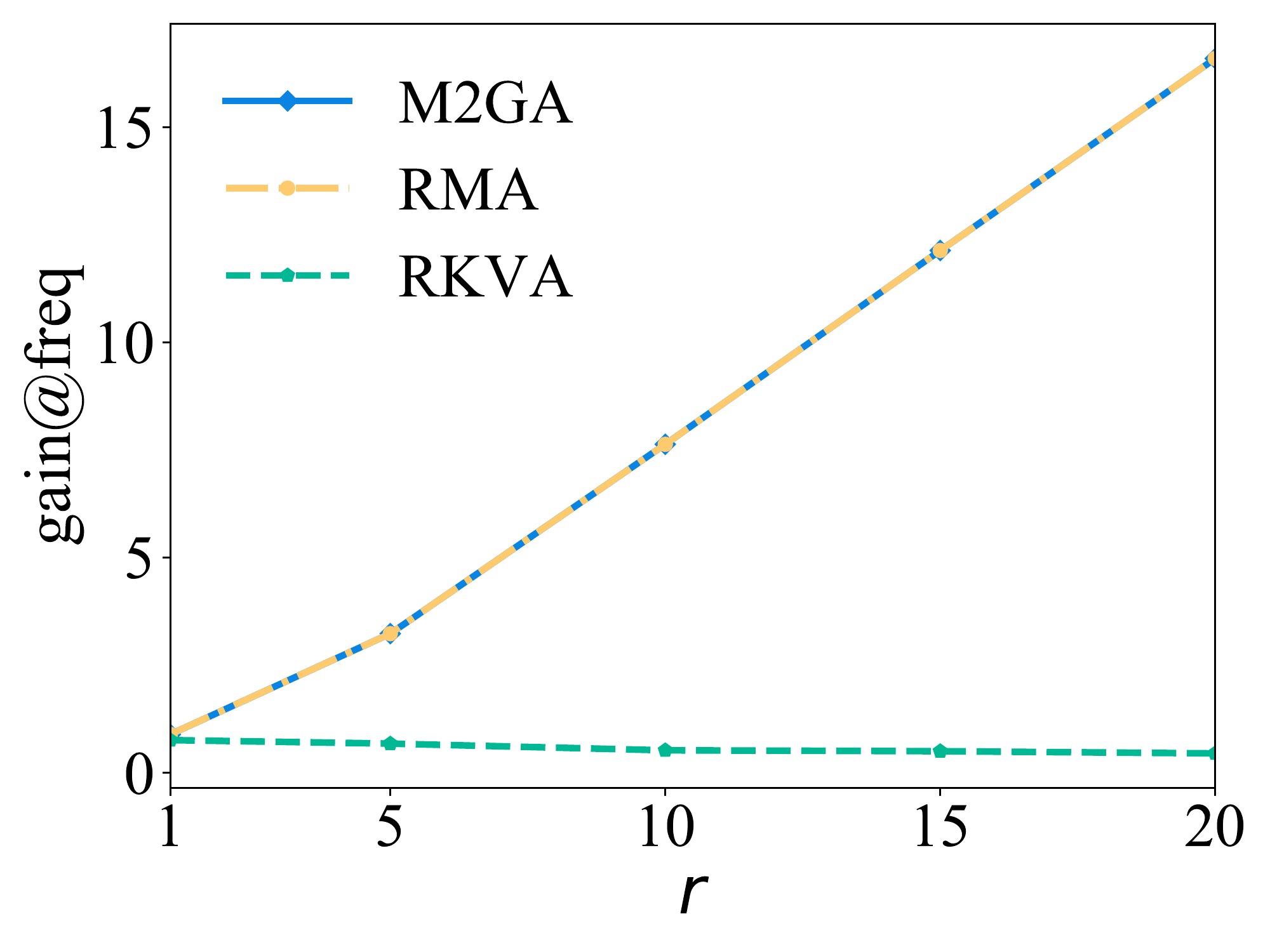}
      \label{fig:talkingdata_pckv_ue_freq_r}}

    \subfloat{\includegraphics[width=0.15\textwidth]{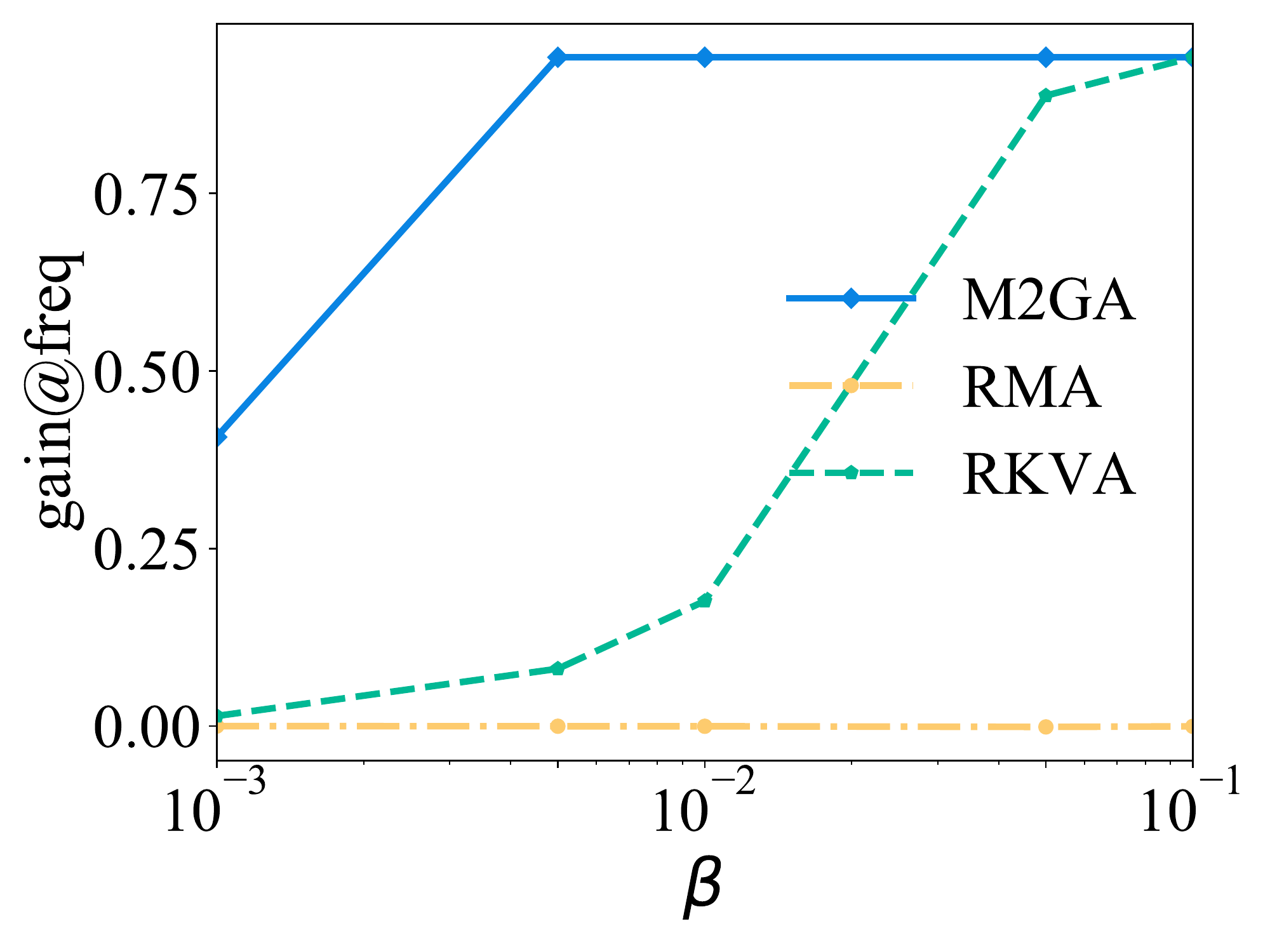}}
    \subfloat{\includegraphics[width=0.15\textwidth]{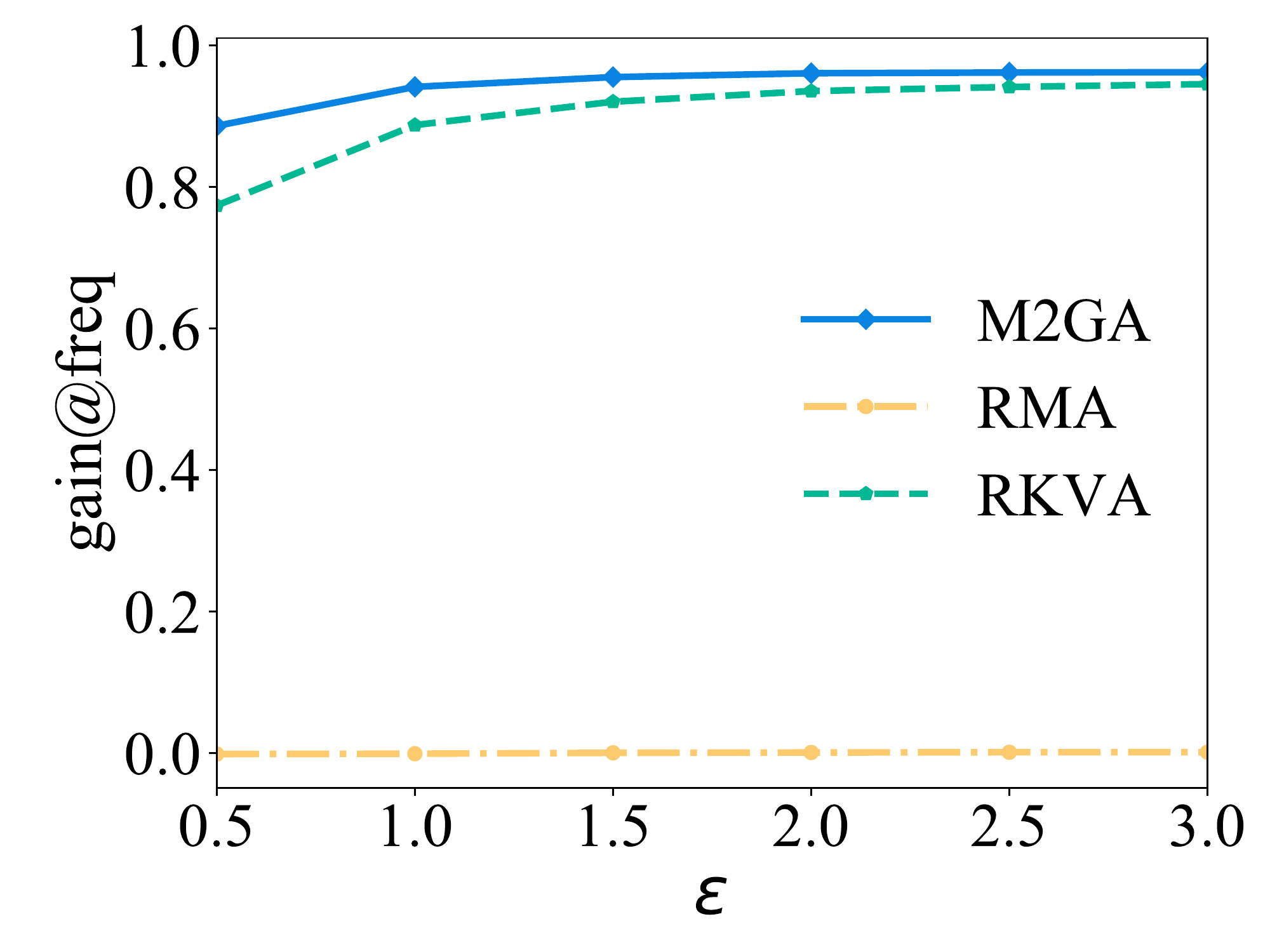}}
    \subfloat{\includegraphics[width=0.15\textwidth]{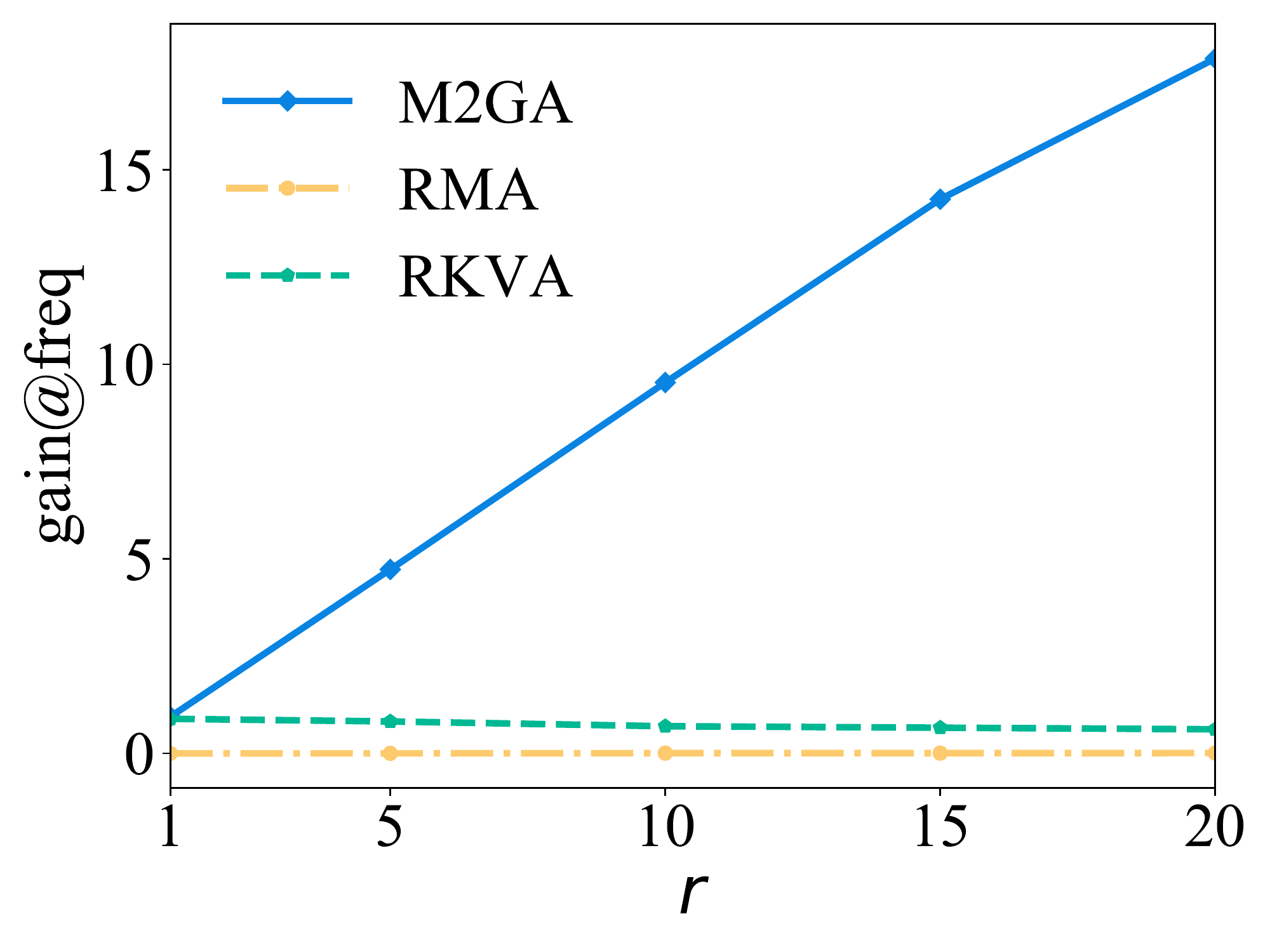}}

\caption{Impact of different parameters ($\beta, \epsilon, r$) on the  frequency gains on TalkingData. The three rows are for PrivKVM, PCKV-UE, and PCKV-GRR, respectively.}
\label{fig:exp_attack_freq_talkingdata}
\end{figure}

\begin{figure}[!tb]
    \centering 
    \vspace{1mm}
    \subfloat{\includegraphics[width=0.15\textwidth]{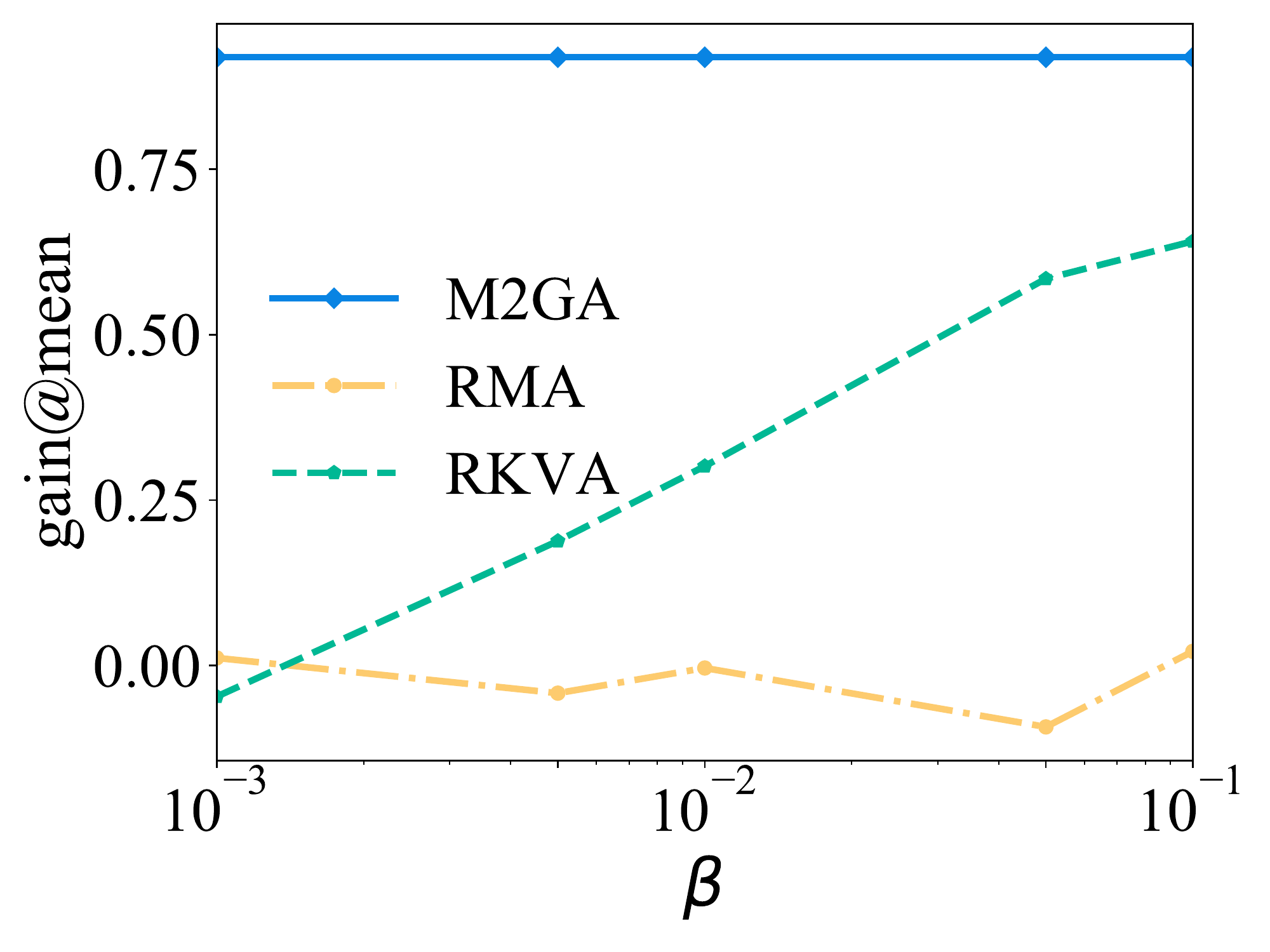}}
    \subfloat{ \includegraphics[width=0.15\textwidth]{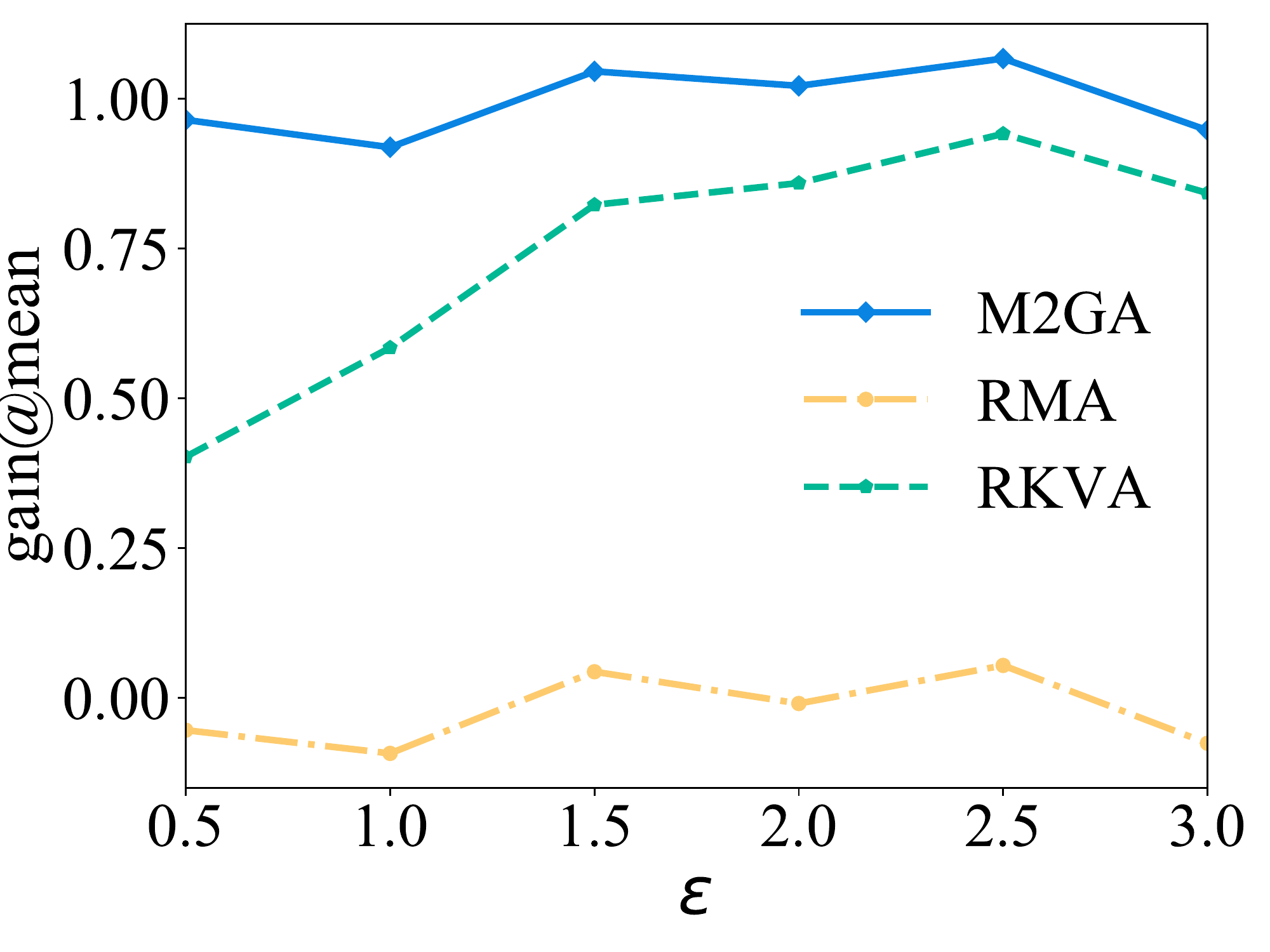}}
    \subfloat{  \includegraphics[width=0.15\textwidth]{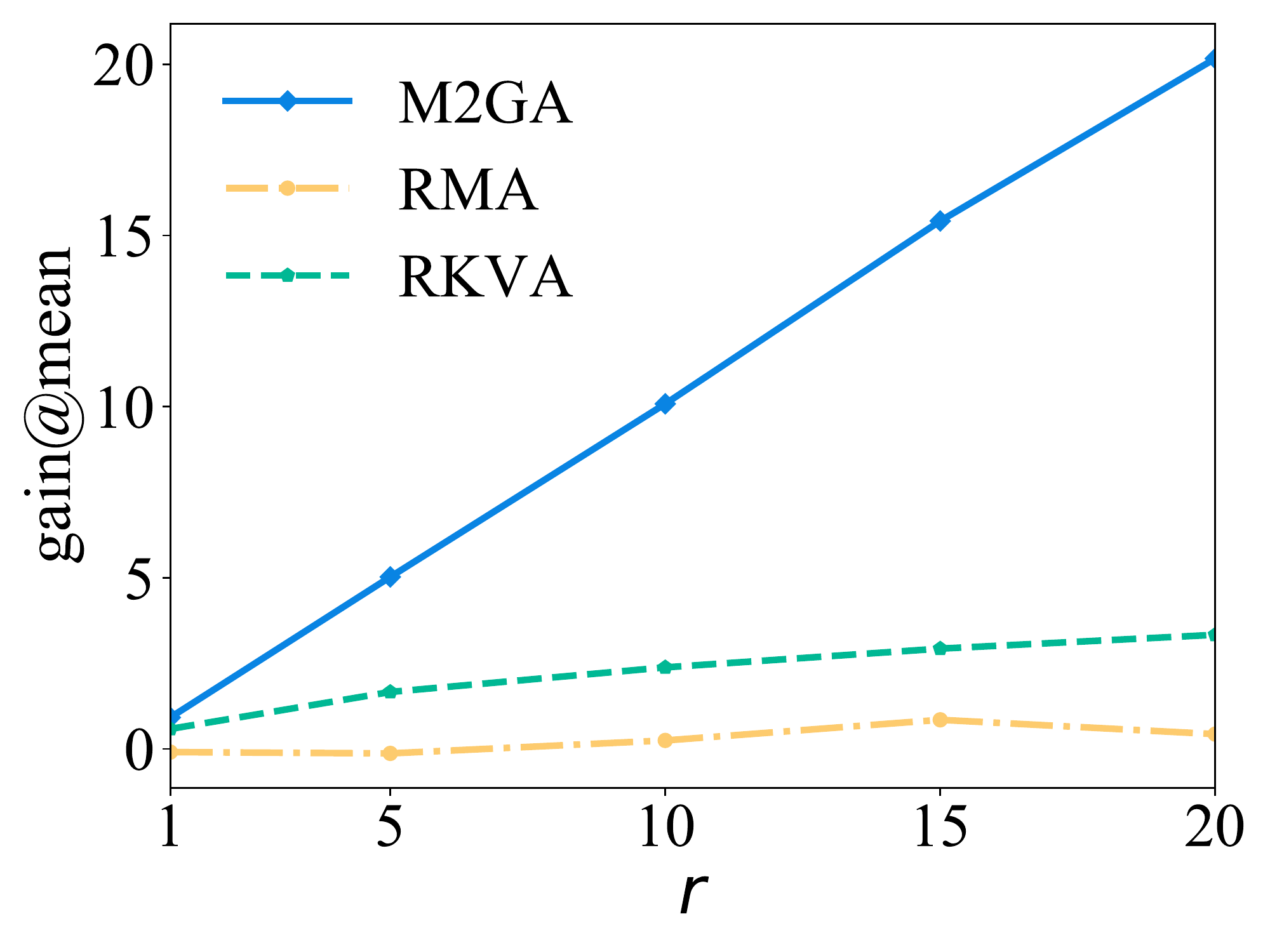}}

    \subfloat{  \includegraphics[width=0.15\textwidth]{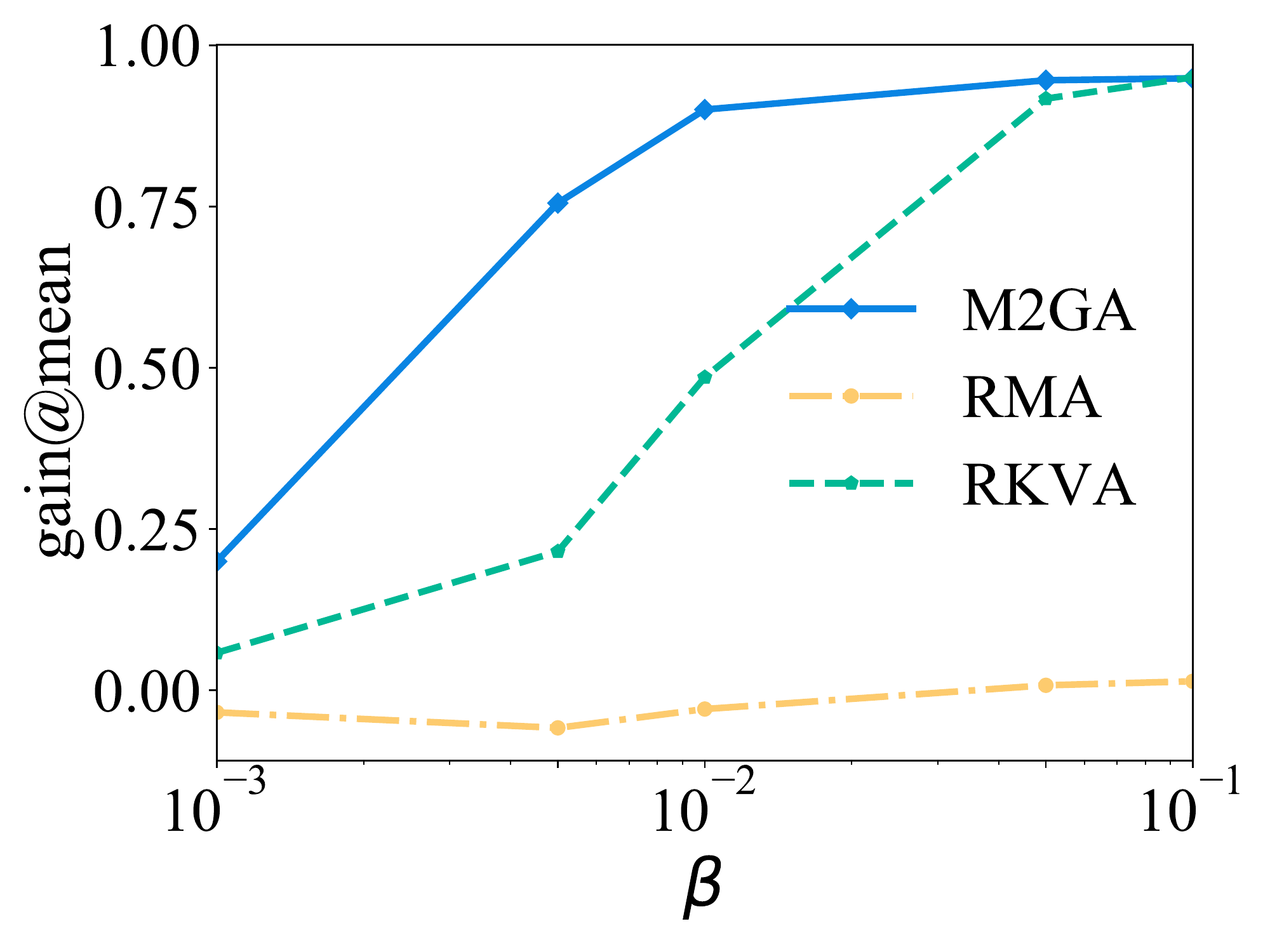}}
    \subfloat{  \includegraphics[width=0.15\textwidth]{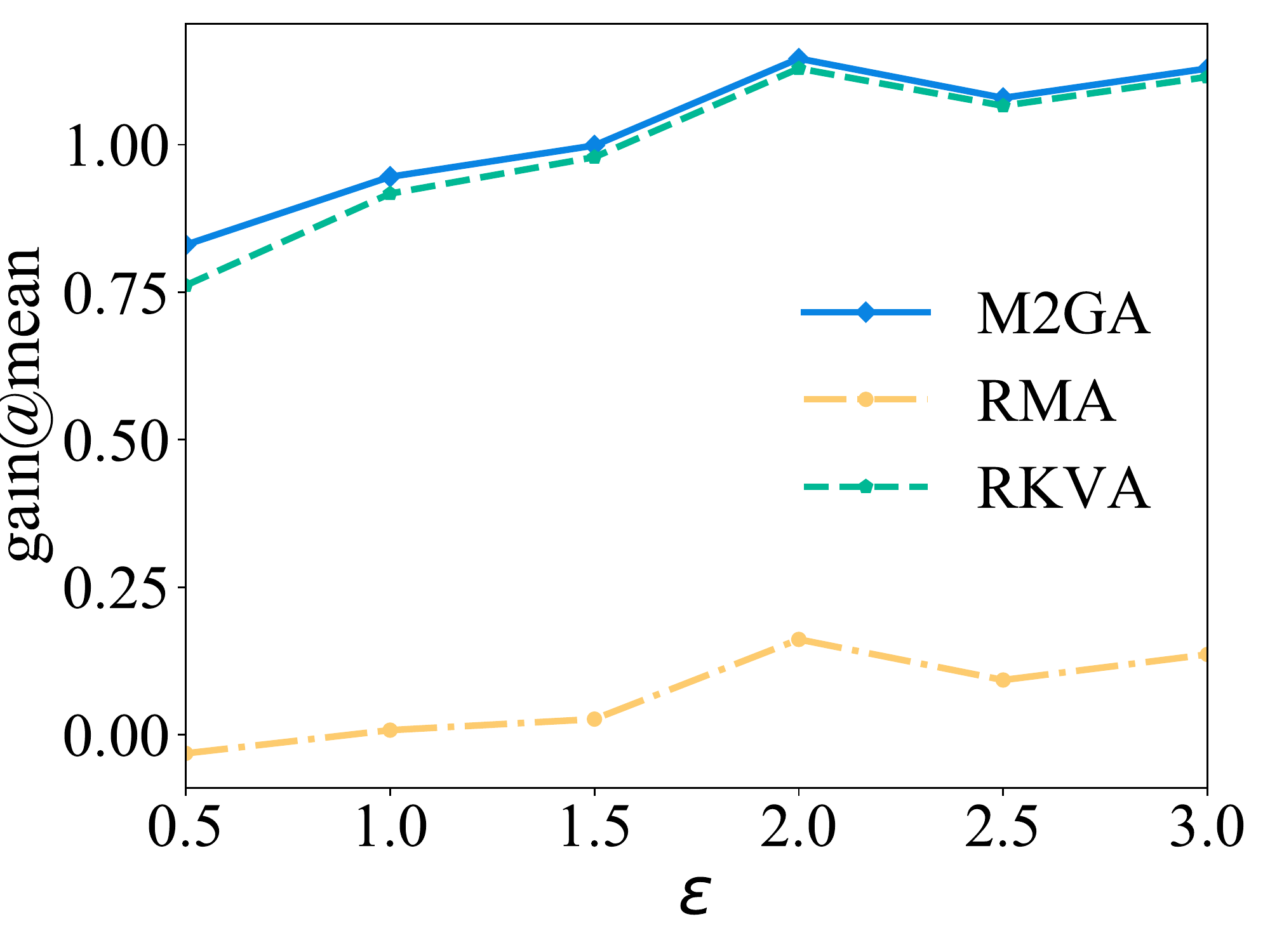}}
    \subfloat{  \includegraphics[width=0.15\textwidth]{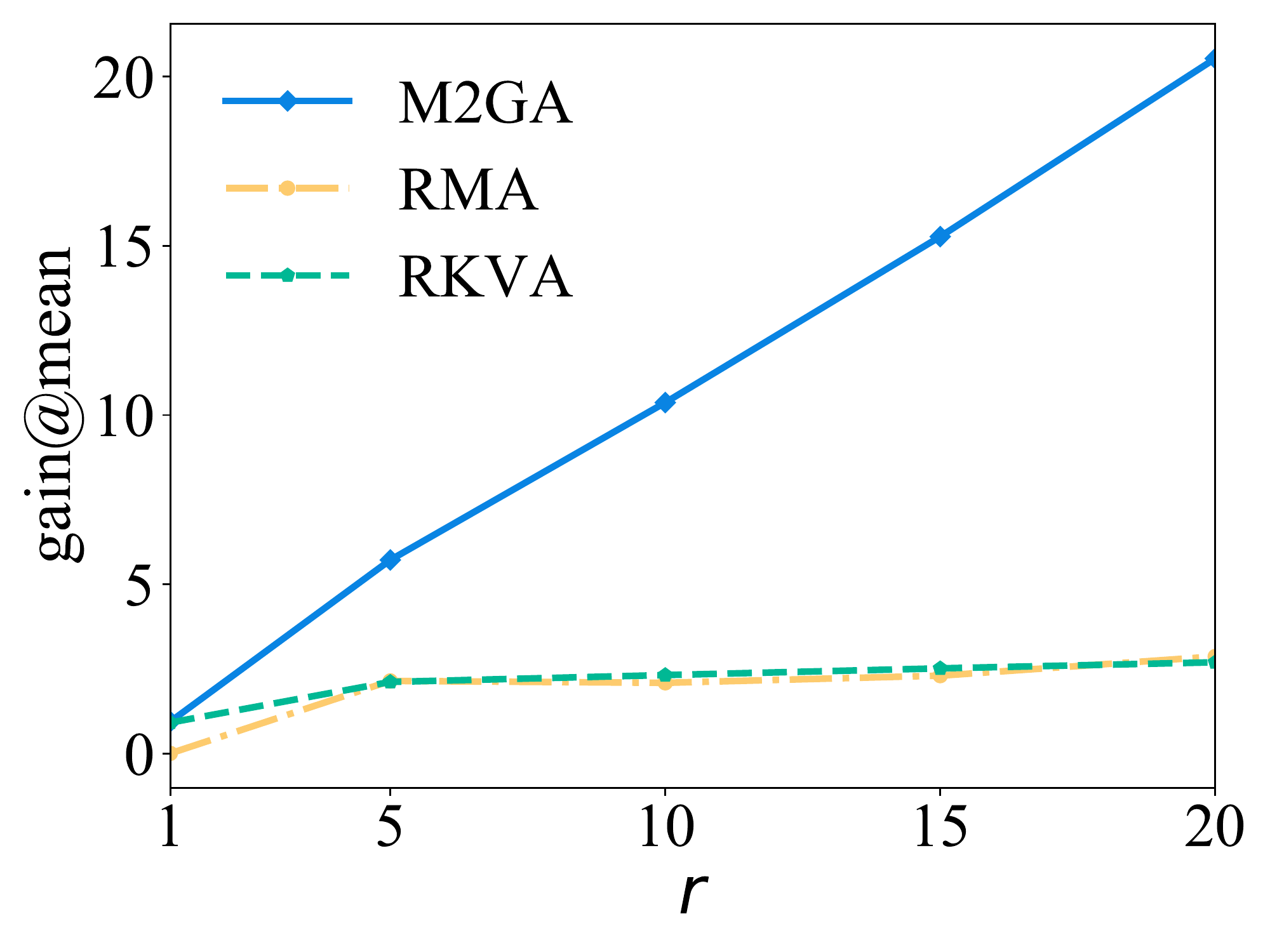}}

    \subfloat{  \includegraphics[width=0.15\textwidth]{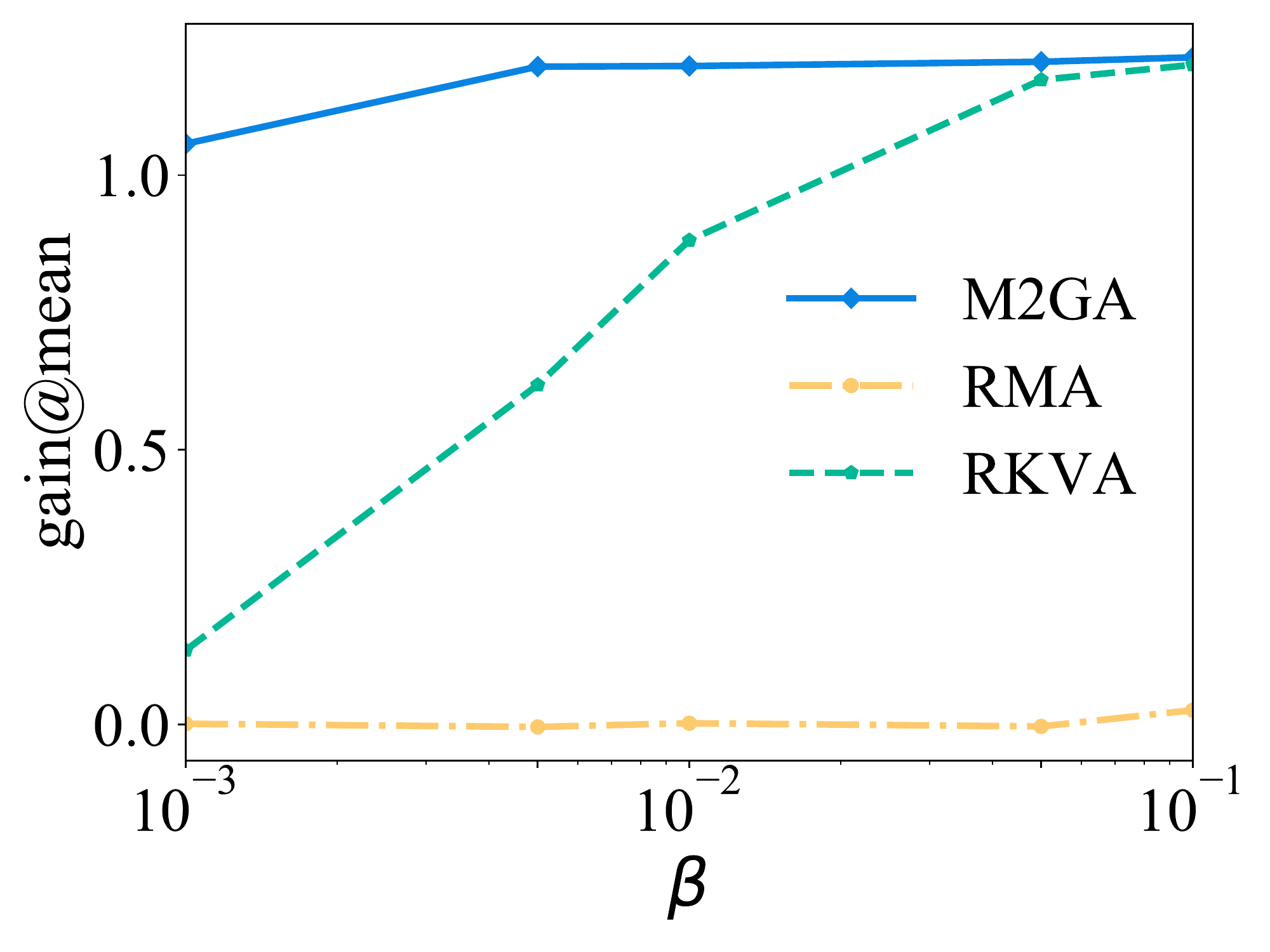}}
    \subfloat{  \includegraphics[width=0.15\textwidth]{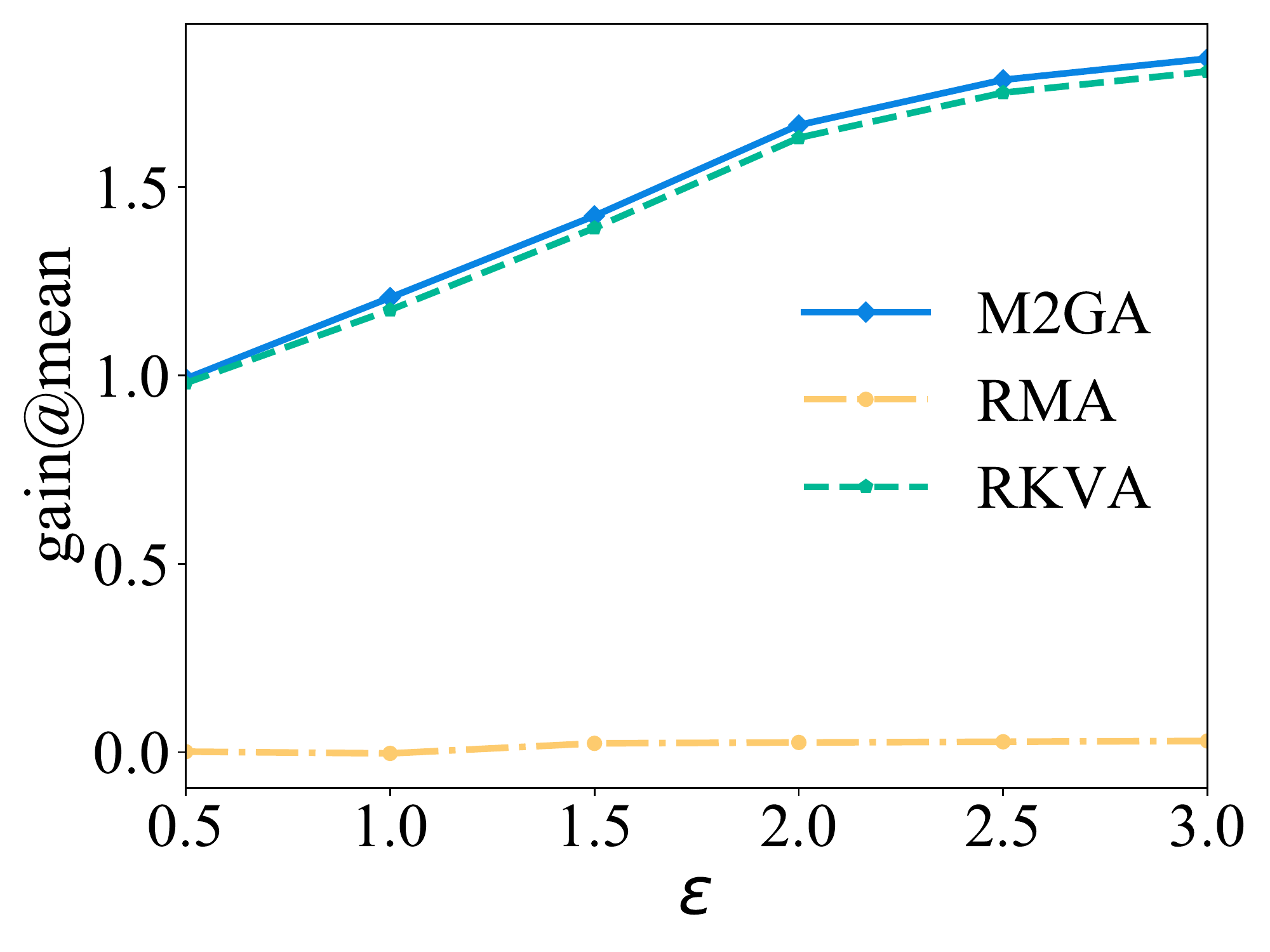}}
    \subfloat{  \includegraphics[width=0.15\textwidth]{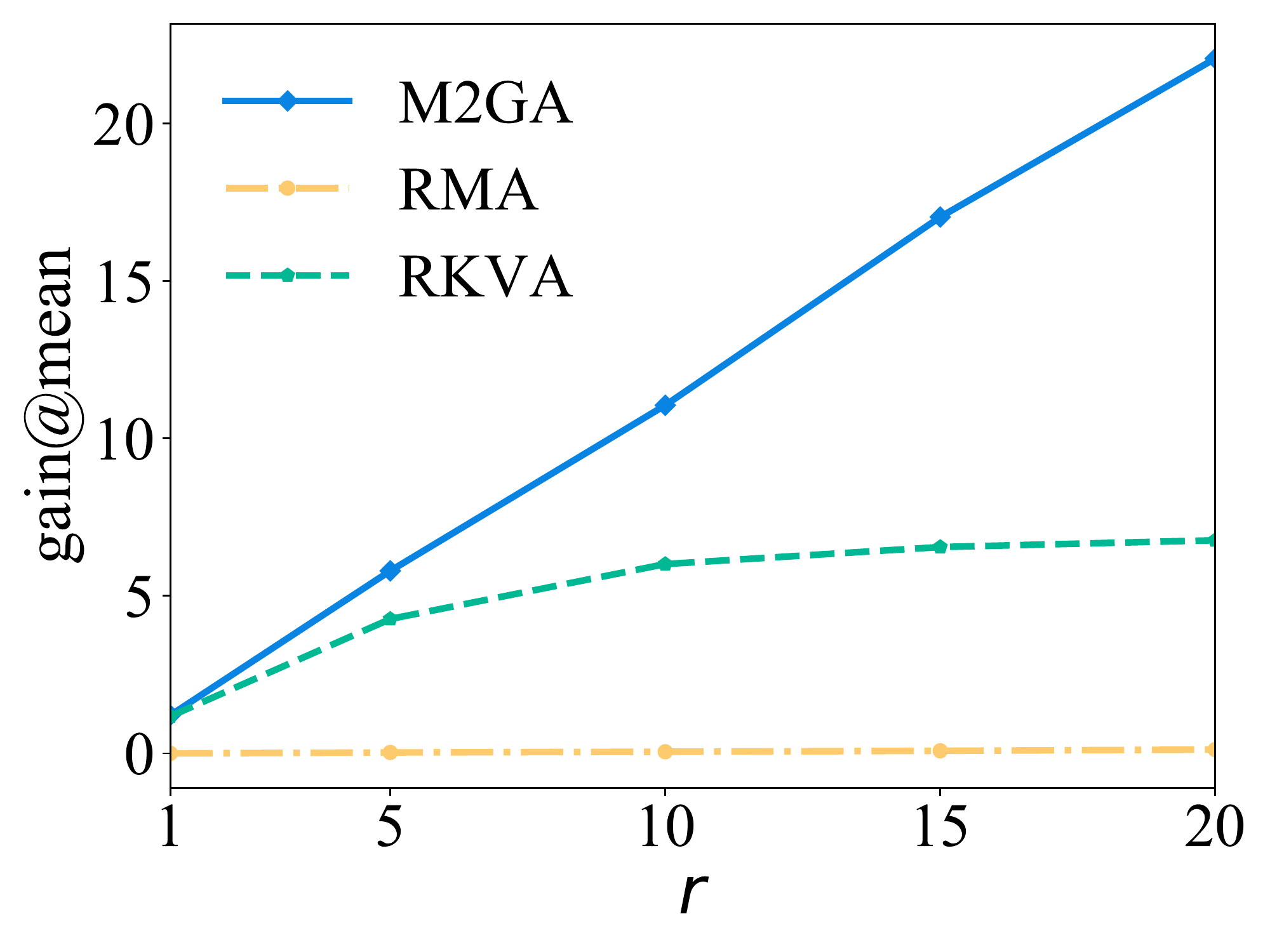}}

\caption{Impact of different parameters ($\beta, \epsilon, r$) on the  mean gains on TalkingData. The three rows are for PrivKVM, PCKV-UE, and PCKV-GRR, respectively.}
\label{fig:exp_attack_mean_talkingdata}
\end{figure}

\begin{figure}[!tb]
    \centering 
    \subfloat{  \includegraphics[width=0.15\textwidth]{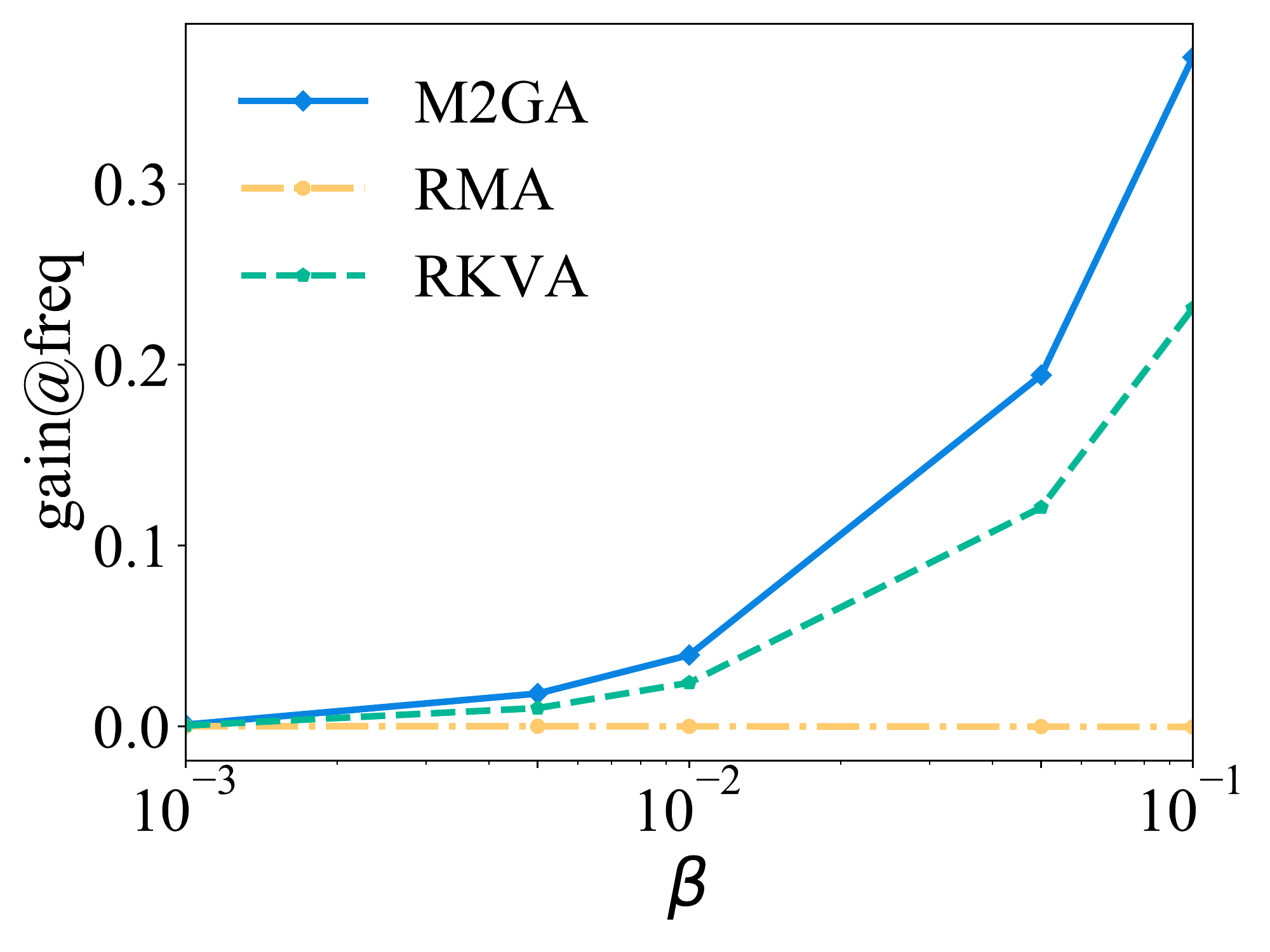}}
    \subfloat{  \includegraphics[width=0.15\textwidth]{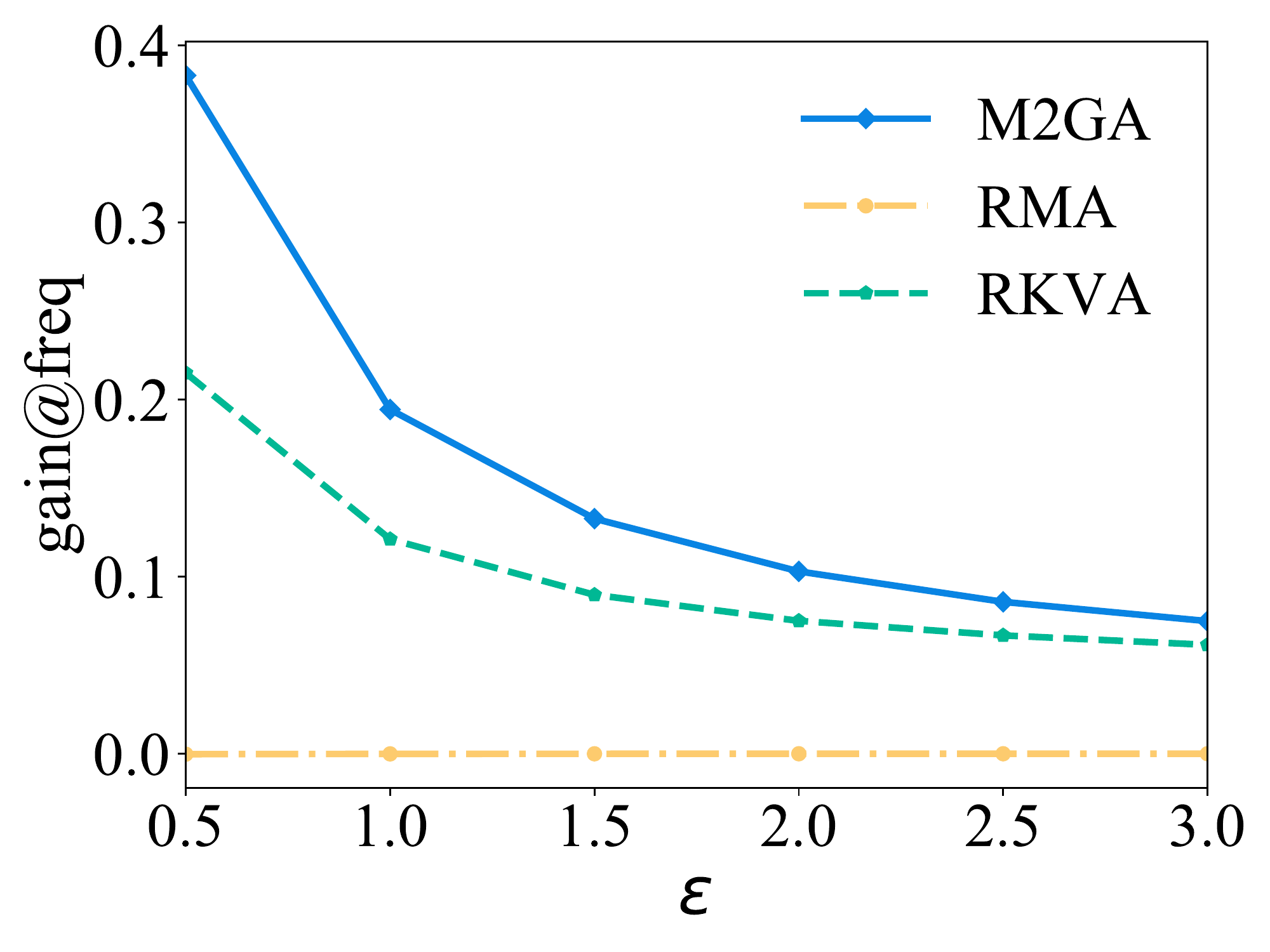}
      \label{fig:movielens_privkvm_epsilon}}
    \subfloat{  \includegraphics[width=0.15\textwidth]{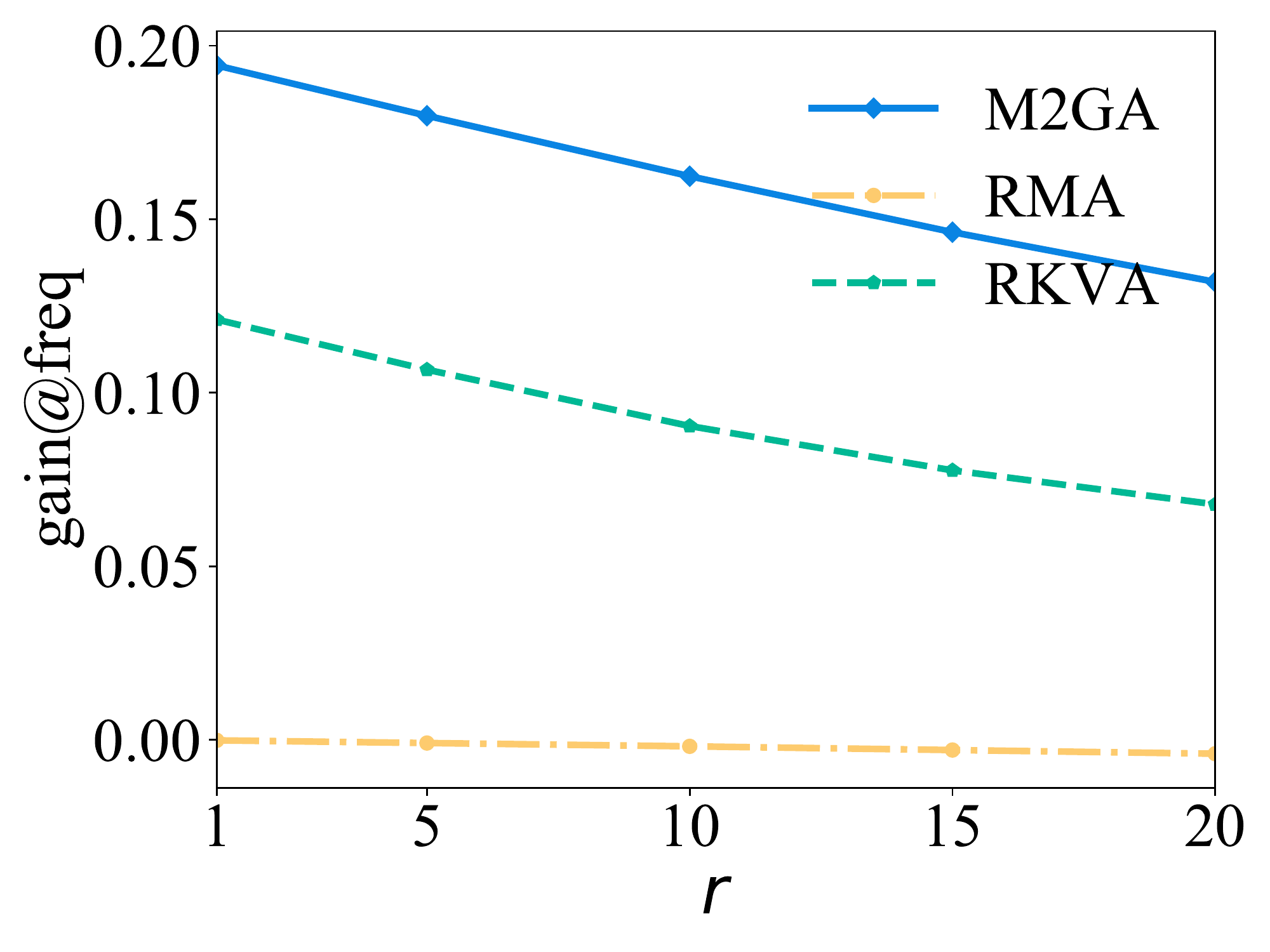}}

    \subfloat{  \includegraphics[width=0.15\textwidth]{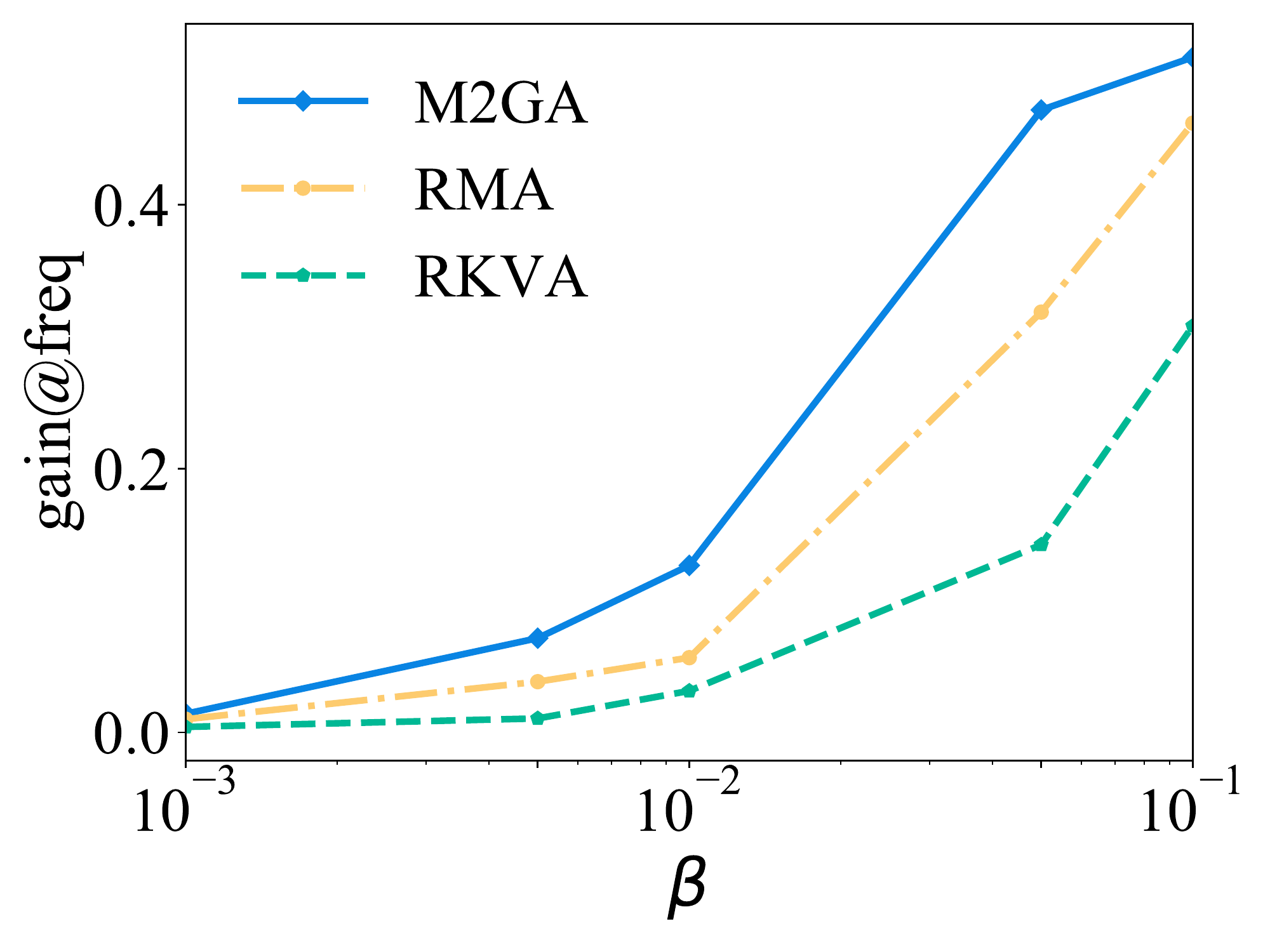}}
    \subfloat{  \includegraphics[width=0.15\textwidth]{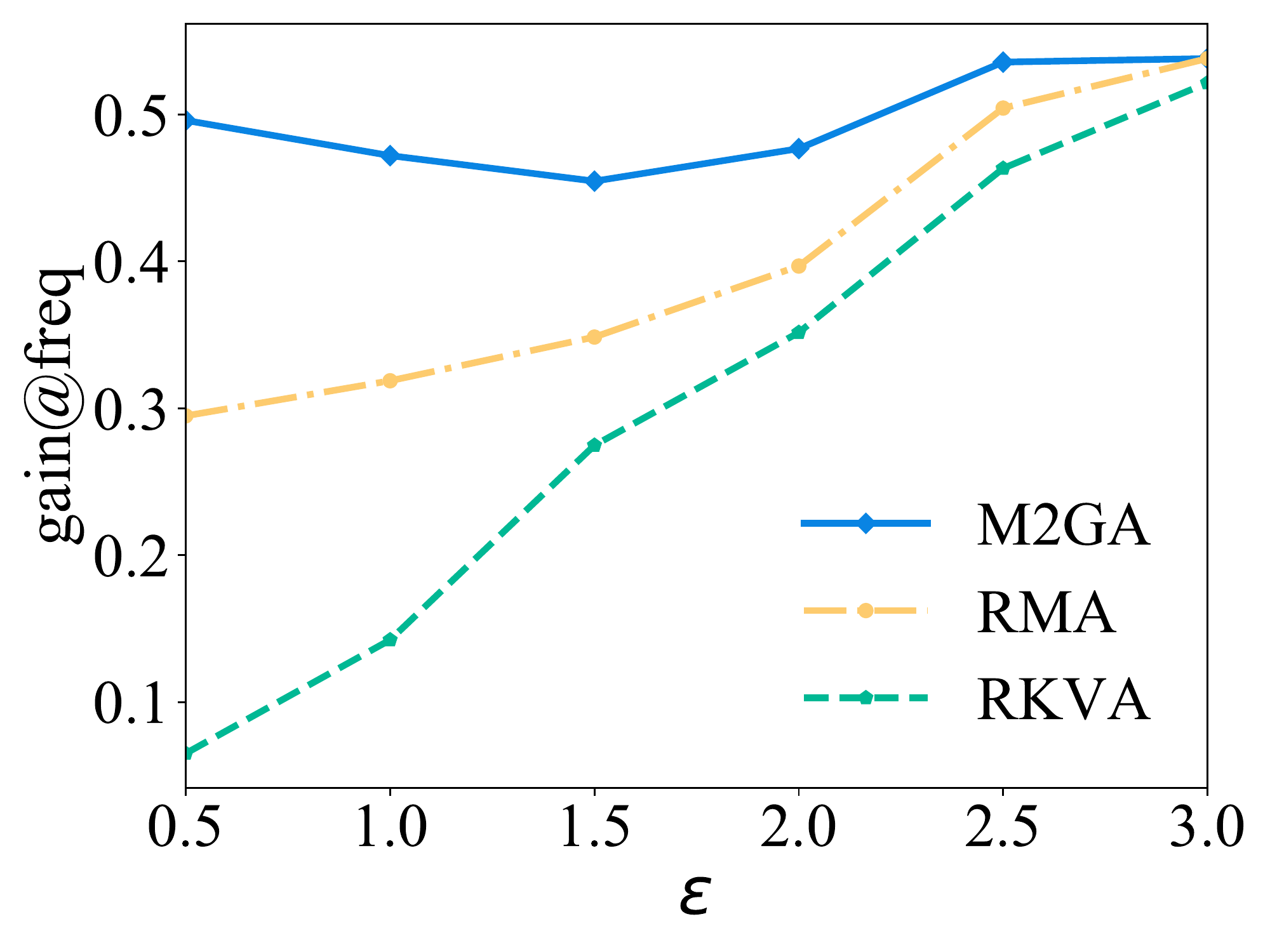}}
    \subfloat{ \includegraphics[width=0.15\textwidth]{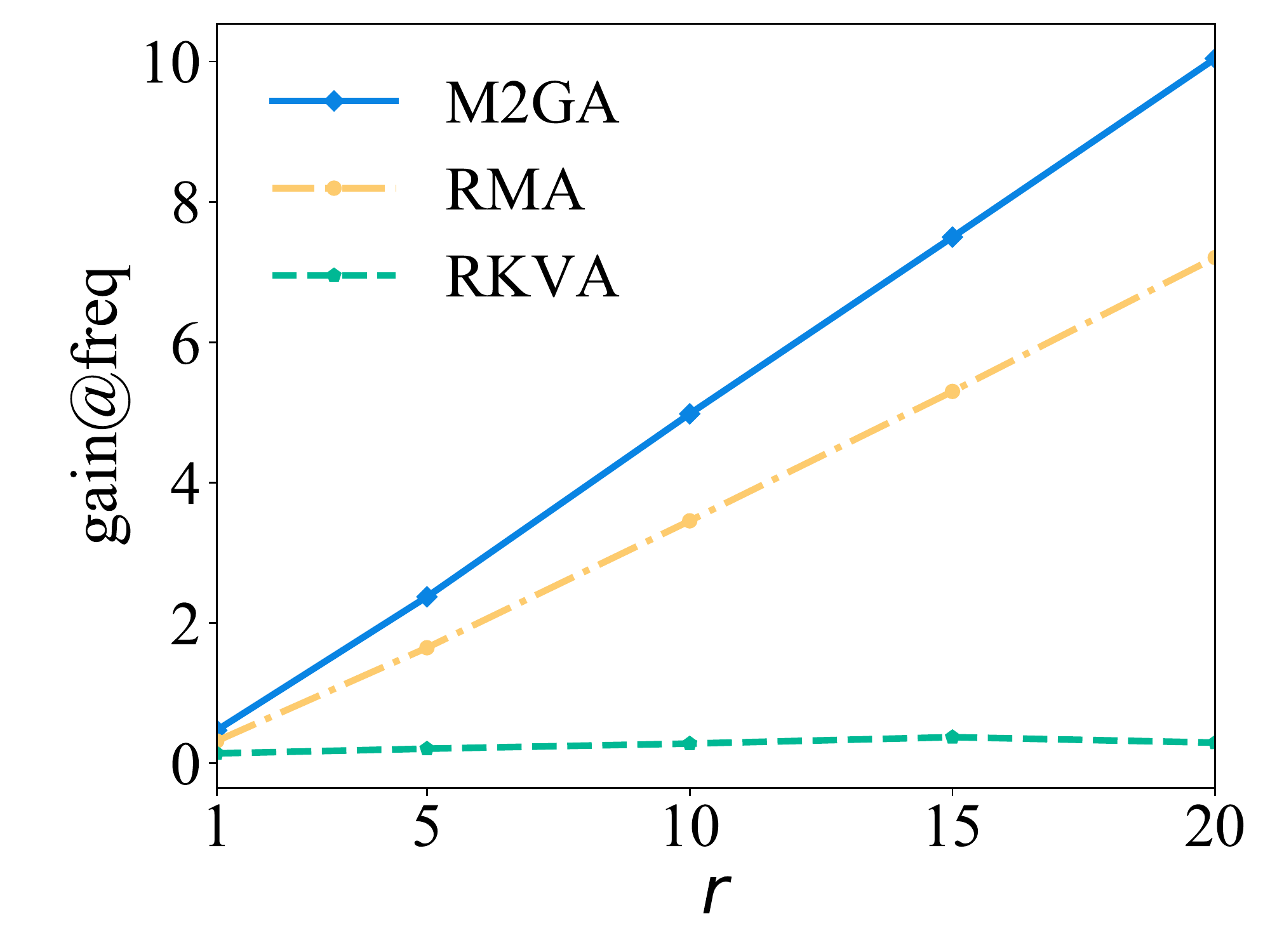}}
    
    \subfloat{  \includegraphics[width=0.15\textwidth]{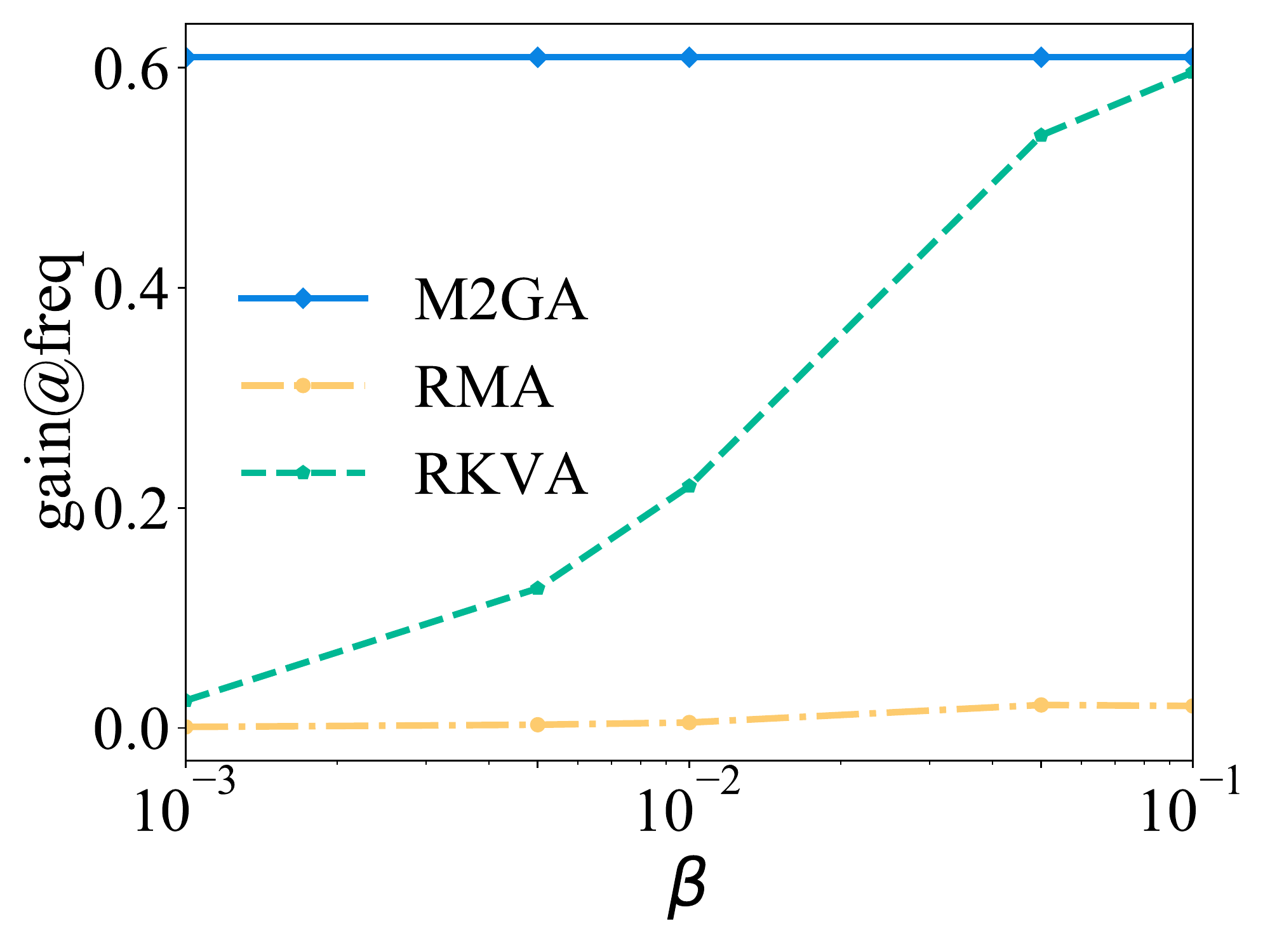}}
    \subfloat{  \includegraphics[width=0.15\textwidth]{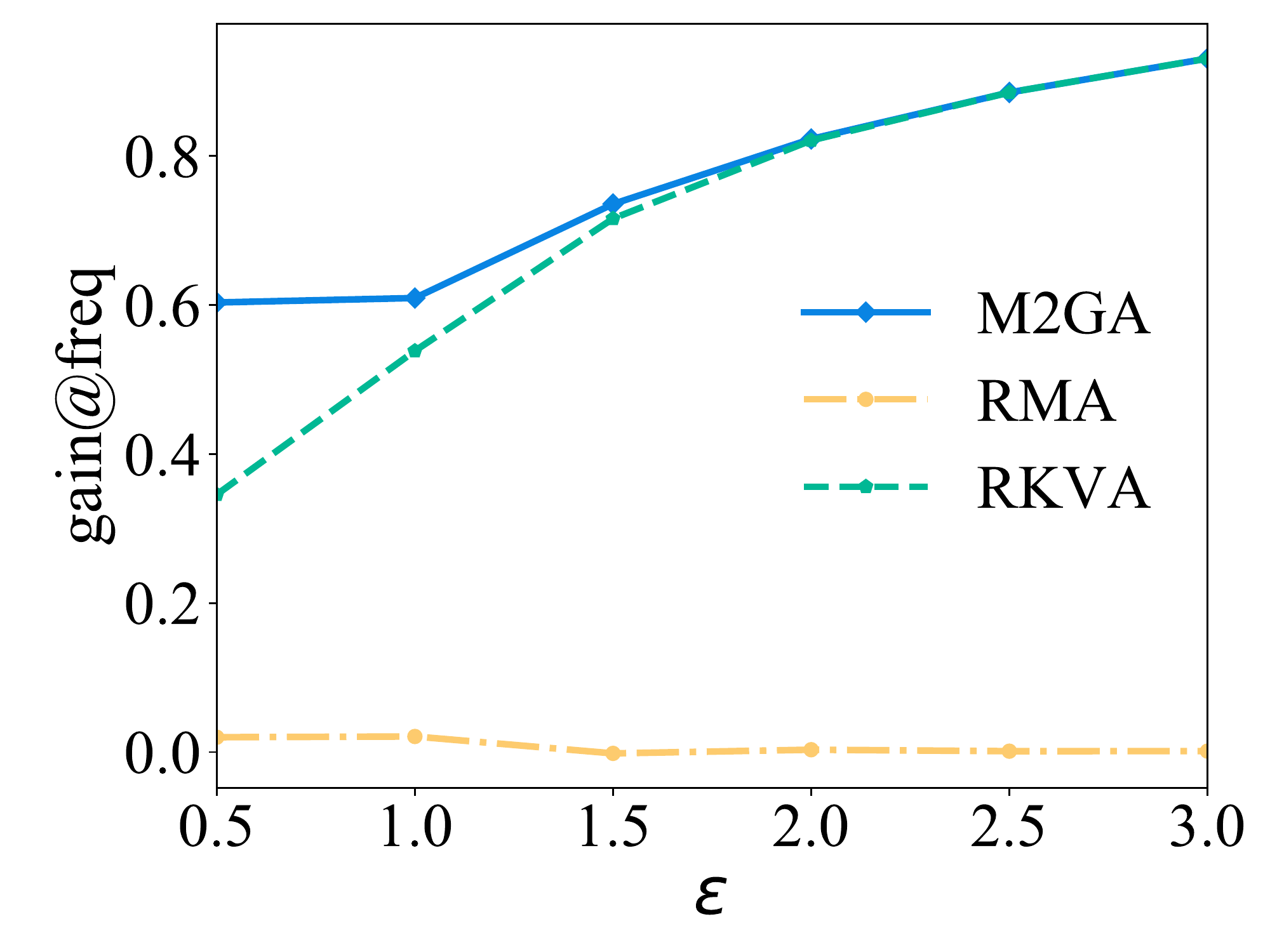}}
    \subfloat{ \includegraphics[width=0.15\textwidth]{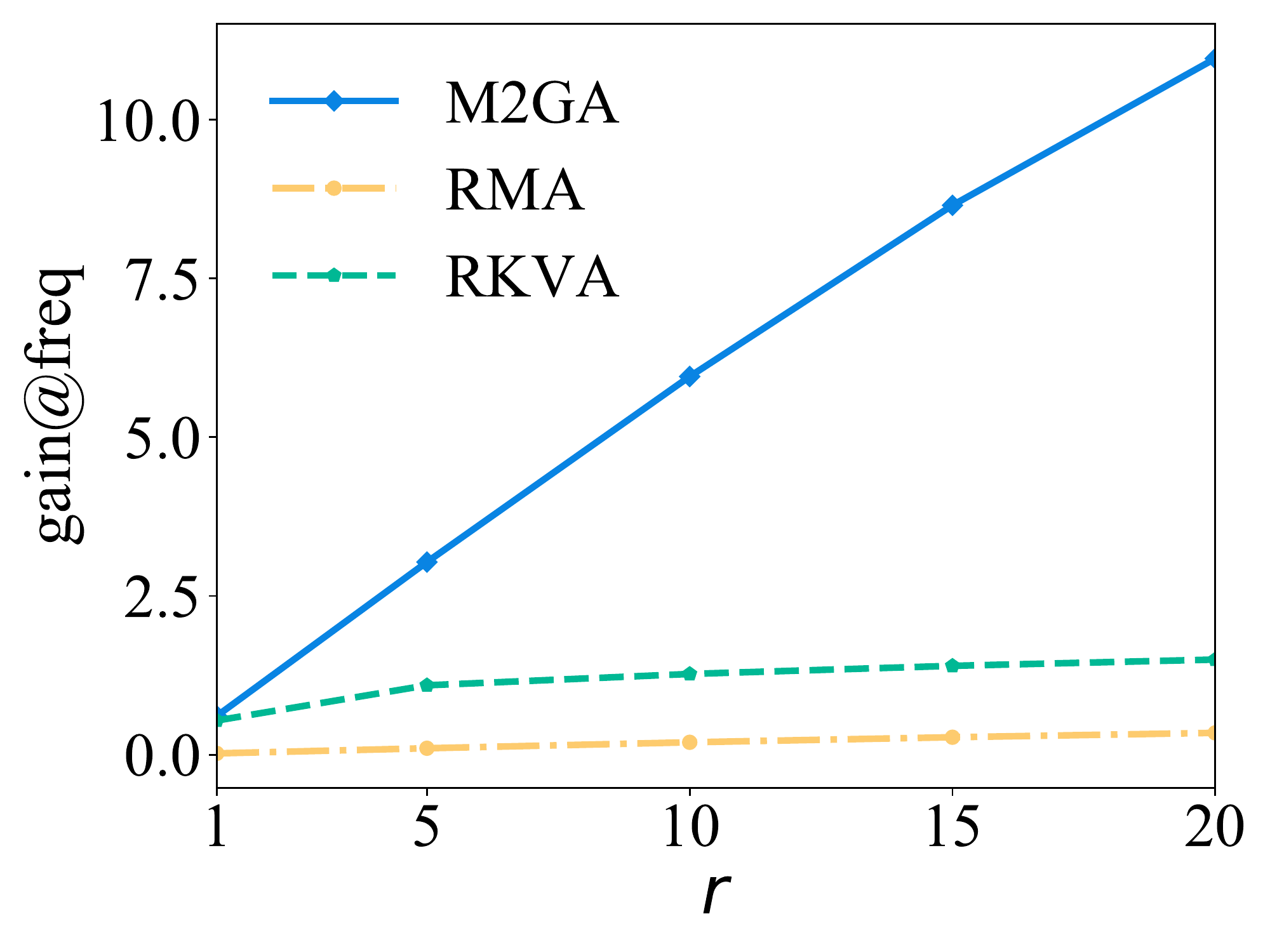}}

\caption{Impact of different parameters ($\beta, \epsilon, r$) on the  frequency gains on MovieLens-1M. The three rows are for PrivKVM, PCKV-UE, and PCKV-GRR, respectively.}
\label{fig:exp_attack_freq_movielens}
\end{figure}

\begin{figure}[tb]
    \centering 
    
     \subfloat{  \includegraphics[width=0.15\textwidth]{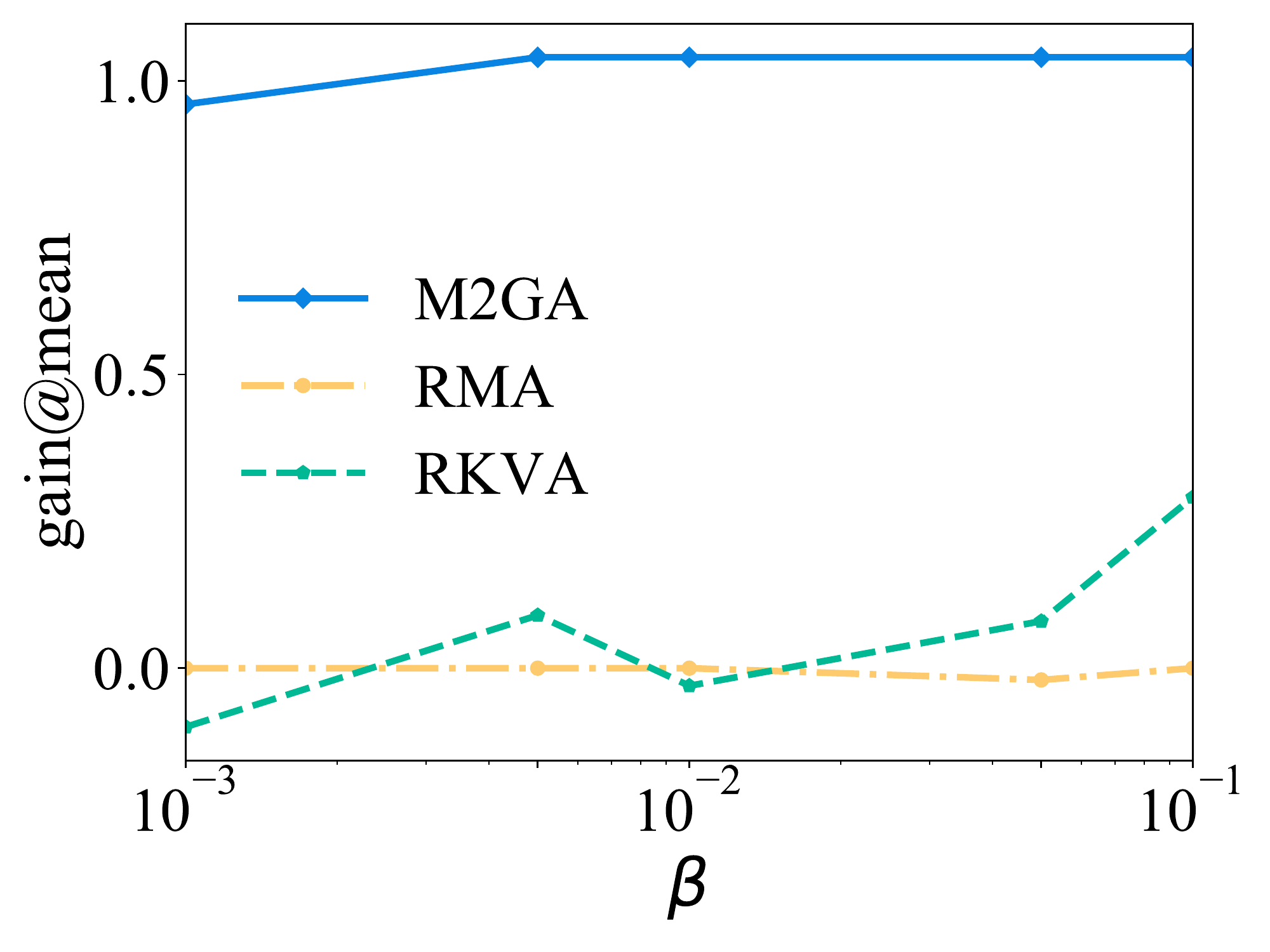}}
    \subfloat{   \includegraphics[width=0.15\textwidth]{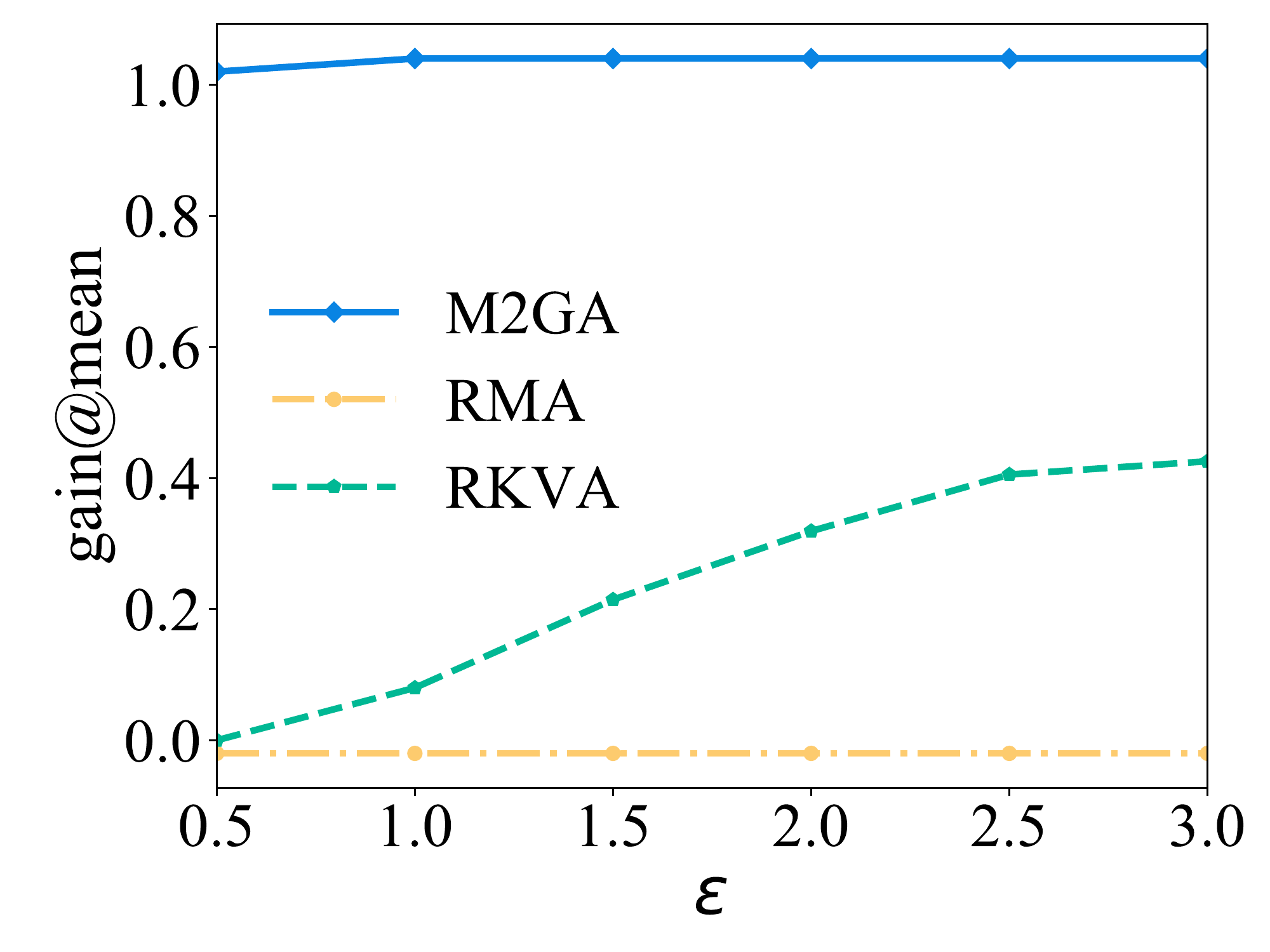}}
    \subfloat{   \includegraphics[width=0.15\textwidth]{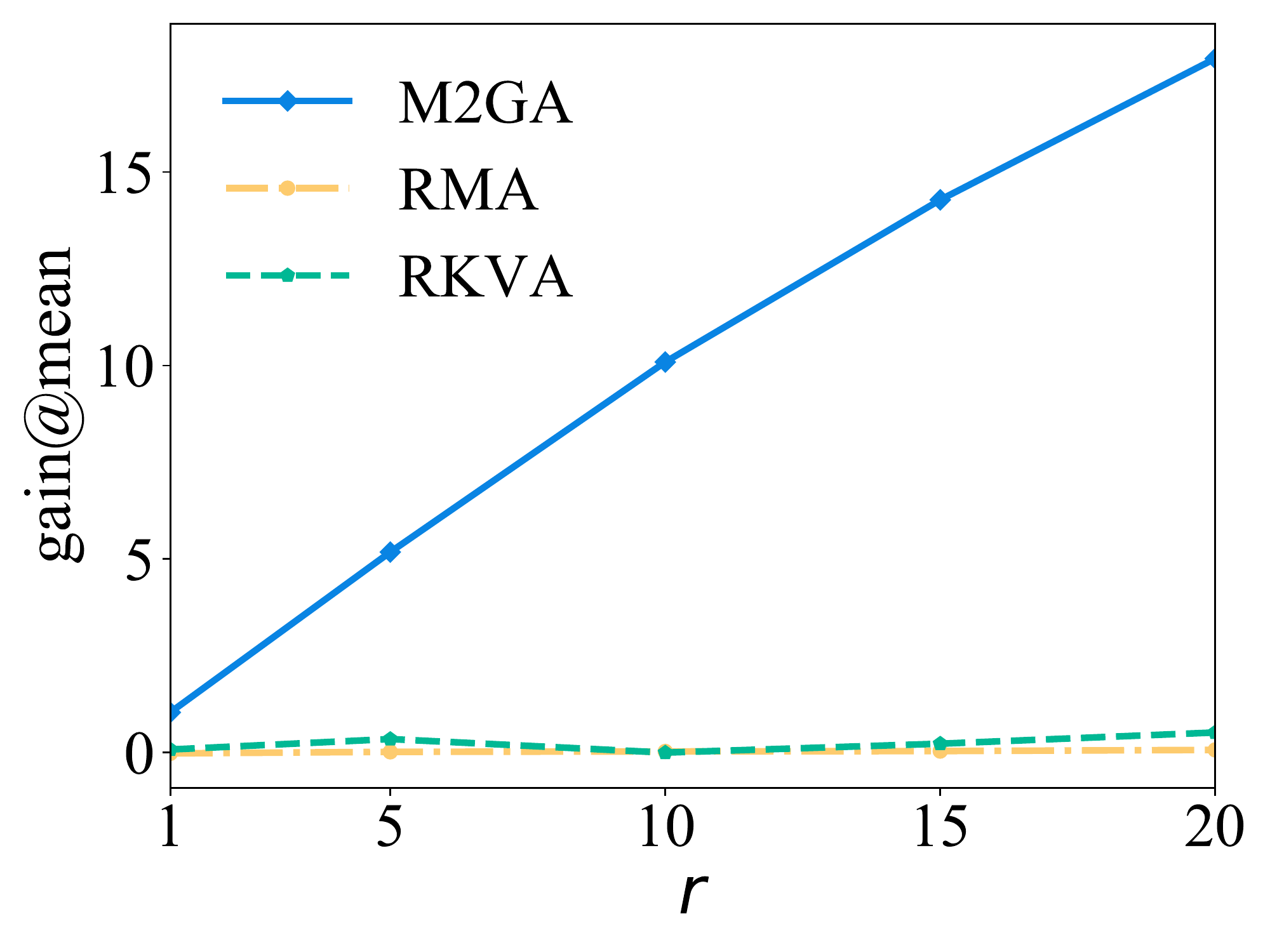}}

    \subfloat{    \includegraphics[width=0.15\textwidth]{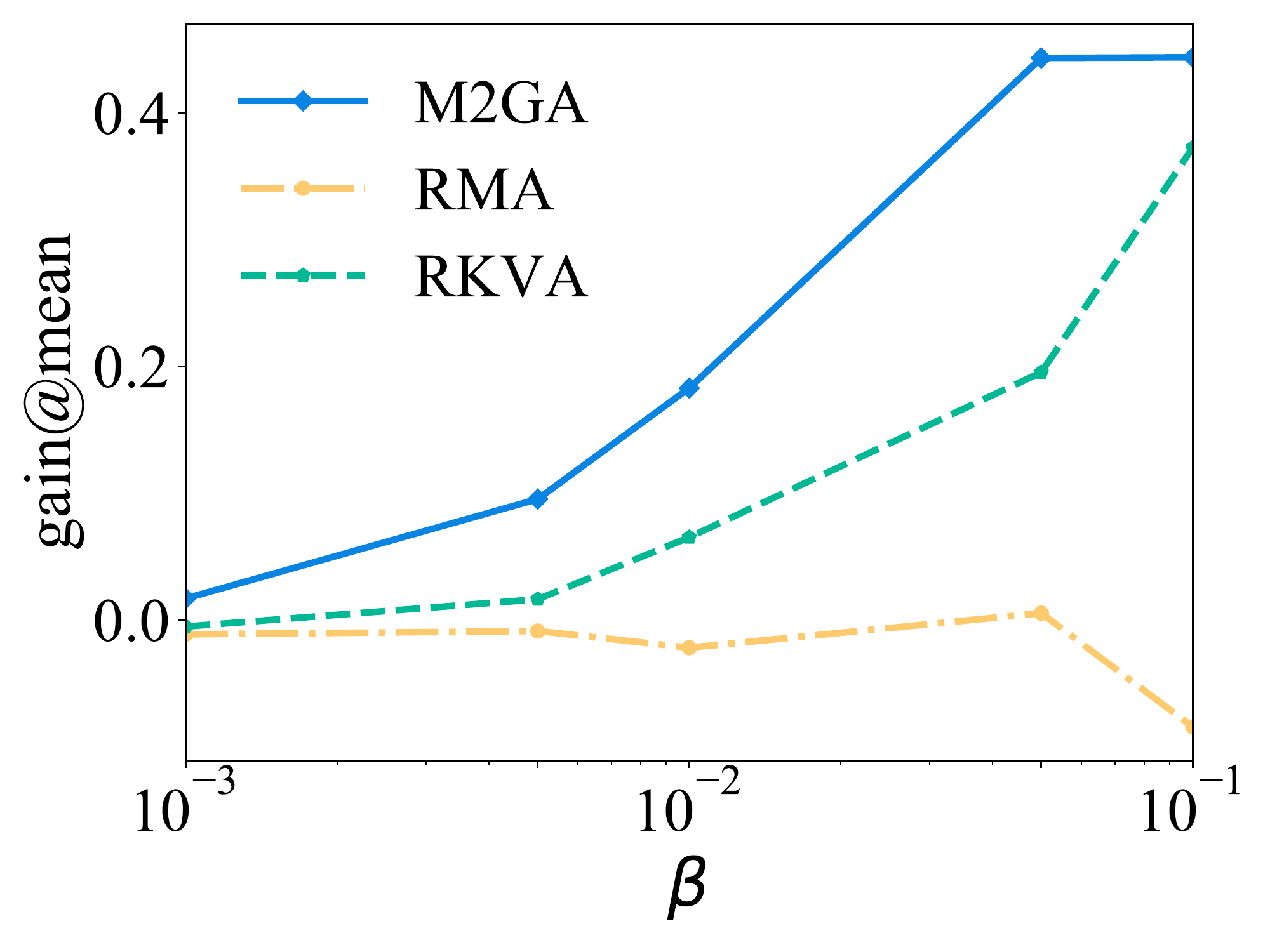}}
    \subfloat{  \includegraphics[width=0.15\textwidth]{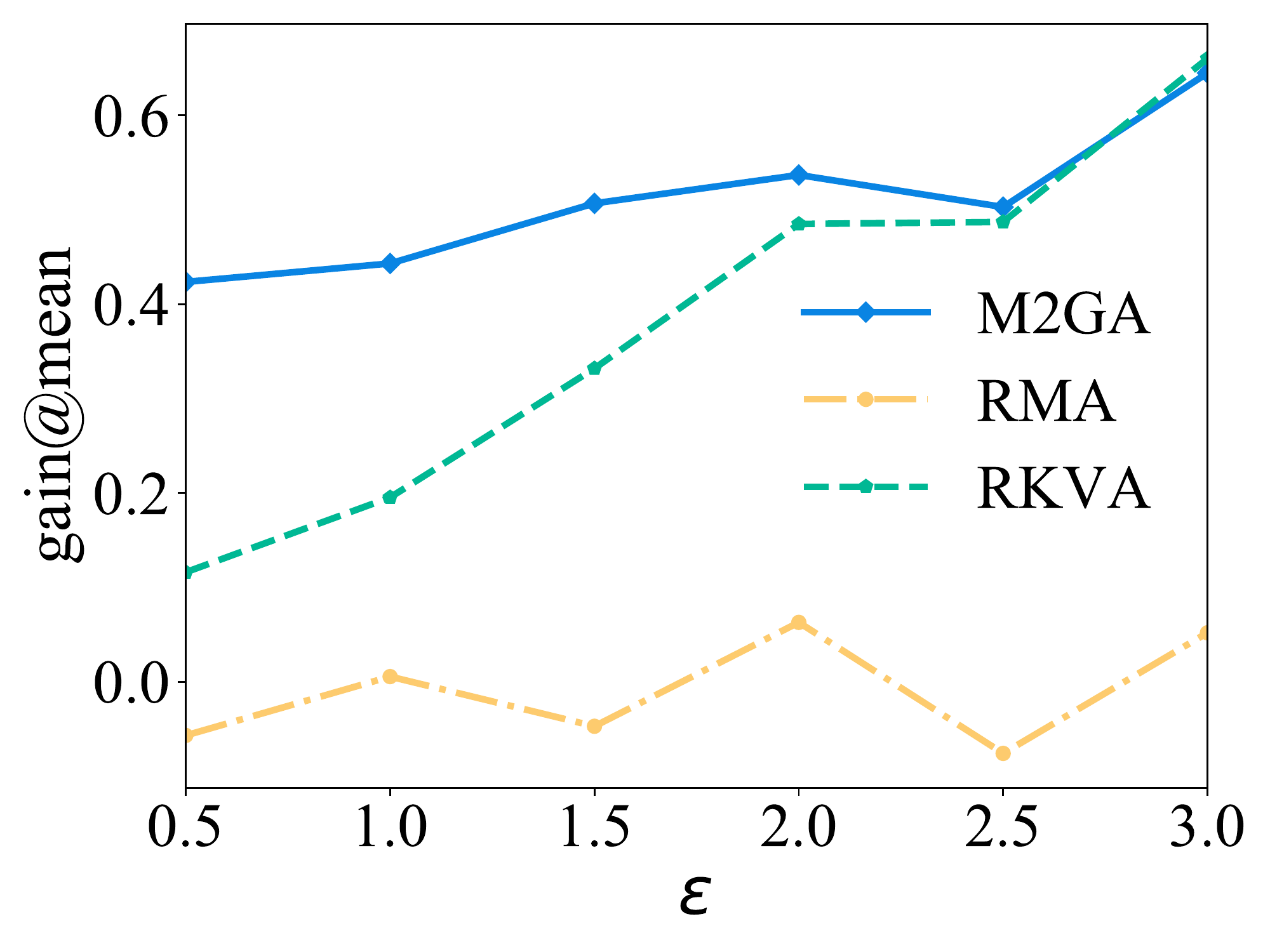}}
    \subfloat{   \includegraphics[width=0.15\textwidth]{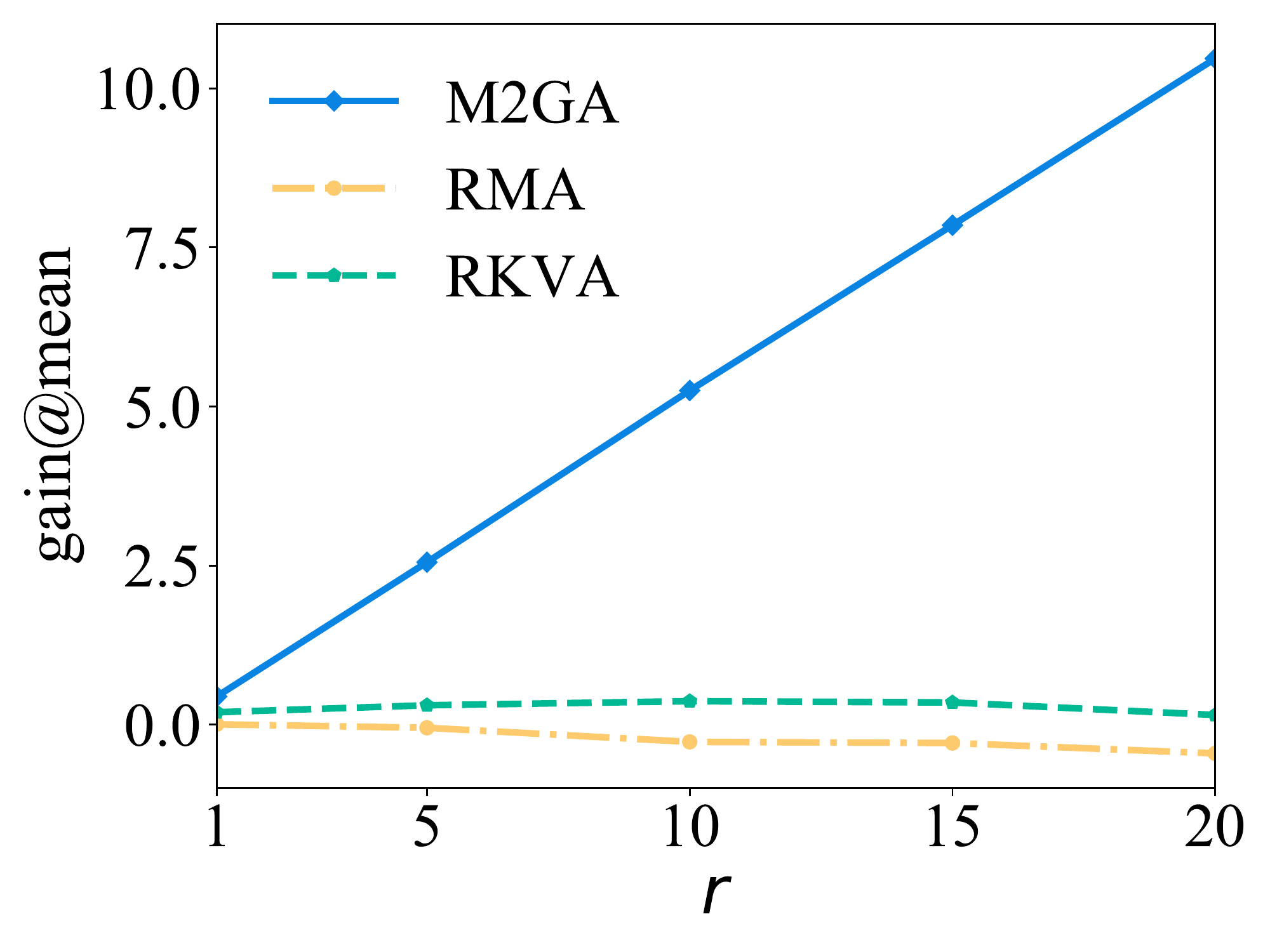}}

    \subfloat{  \includegraphics[width=0.15\textwidth]{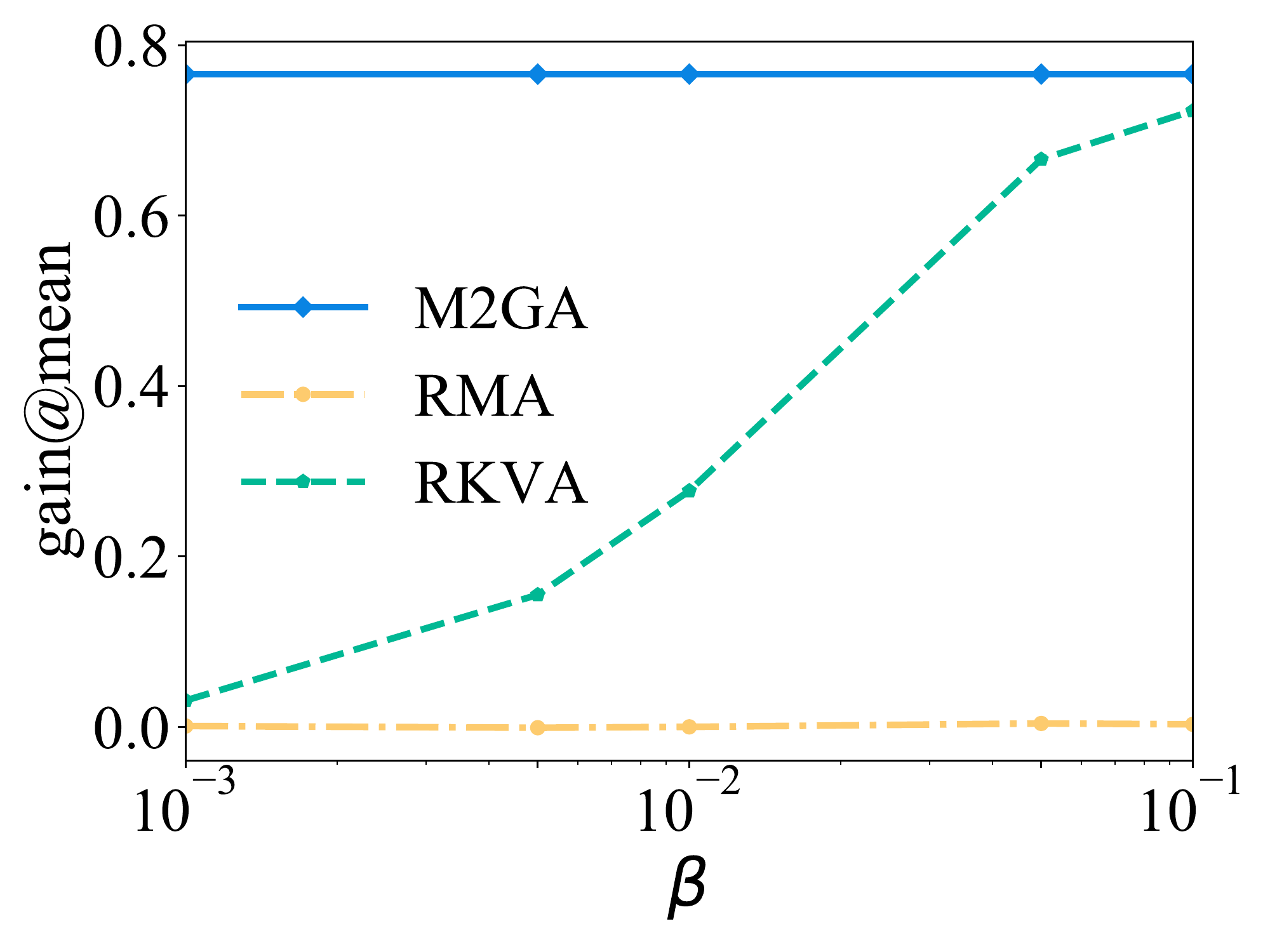}}
    \subfloat{    \includegraphics[width=0.15\textwidth]{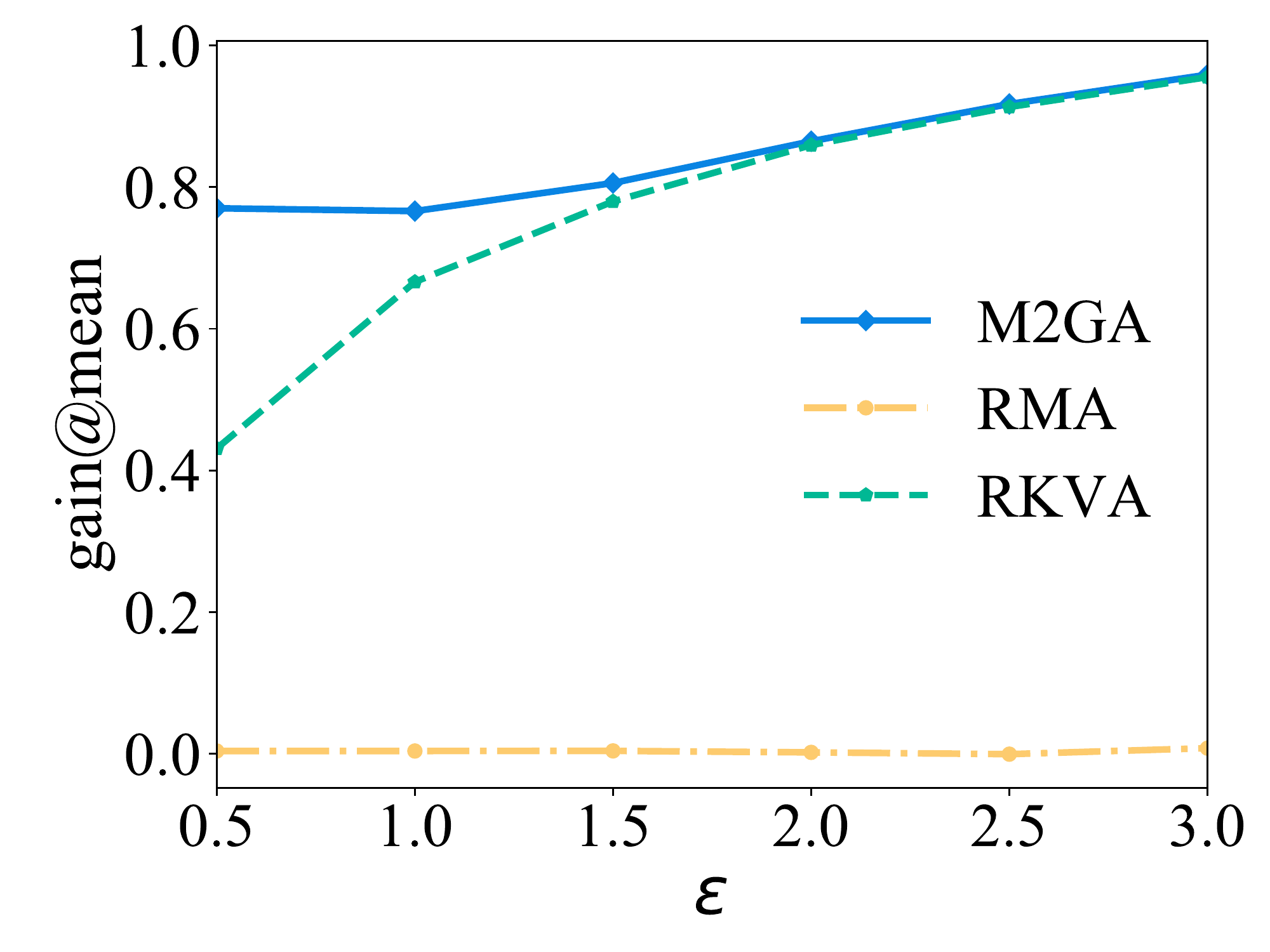}}
    \subfloat{    \includegraphics[width=0.15\textwidth]{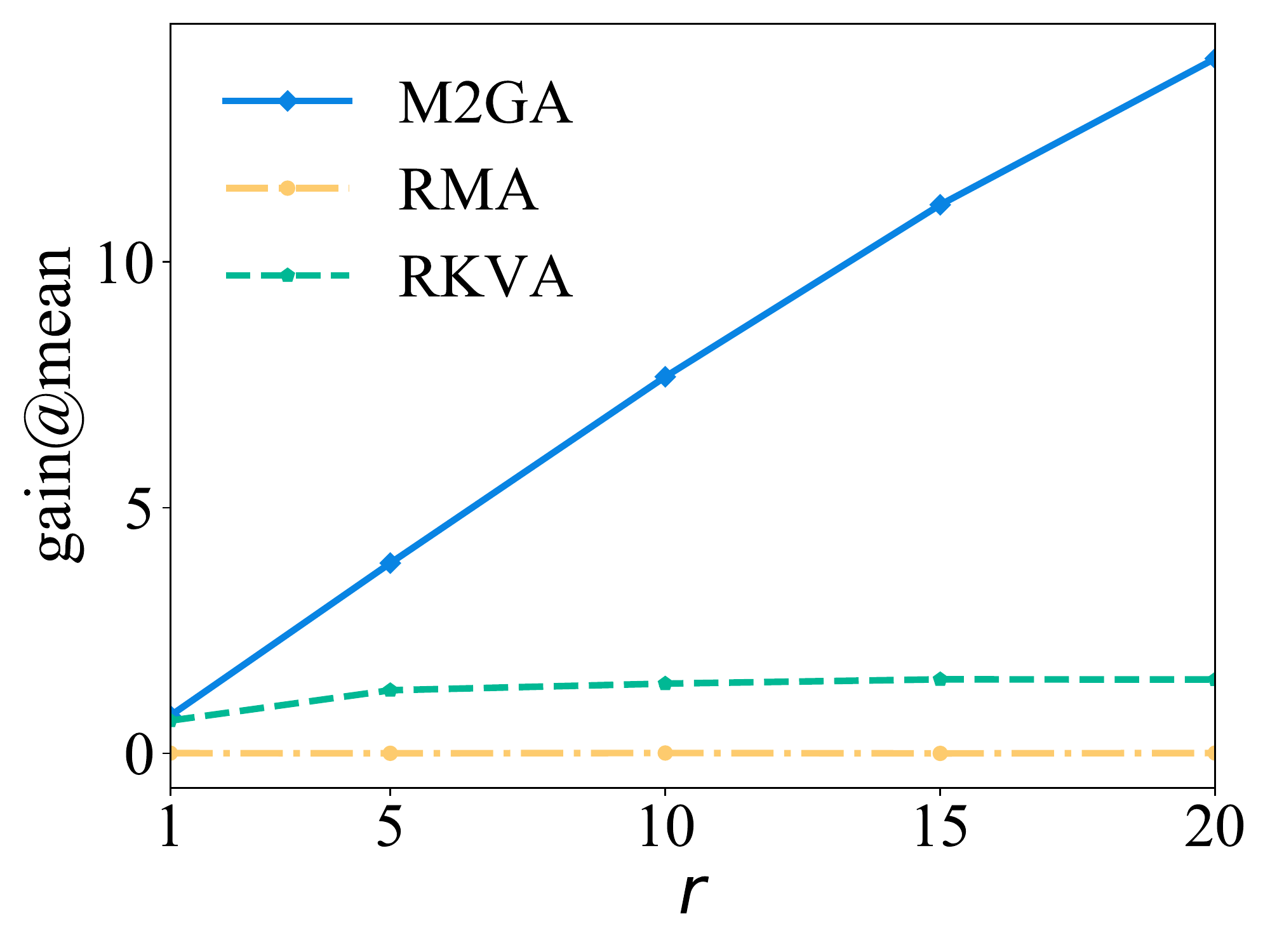}}

\caption{Impact of different parameters ($\beta, \epsilon, r$) on the  mean gains on MovieLens-1M. The three rows are for PrivKVM, PCKV-UE, and PCKV-GRR, respectively.}
\label{fig:exp_attack_mean_movielens}
\end{figure}

\begin{figure*}[!t]
    \centering 
    \subfloat{\includegraphics[width=0.24\textwidth]{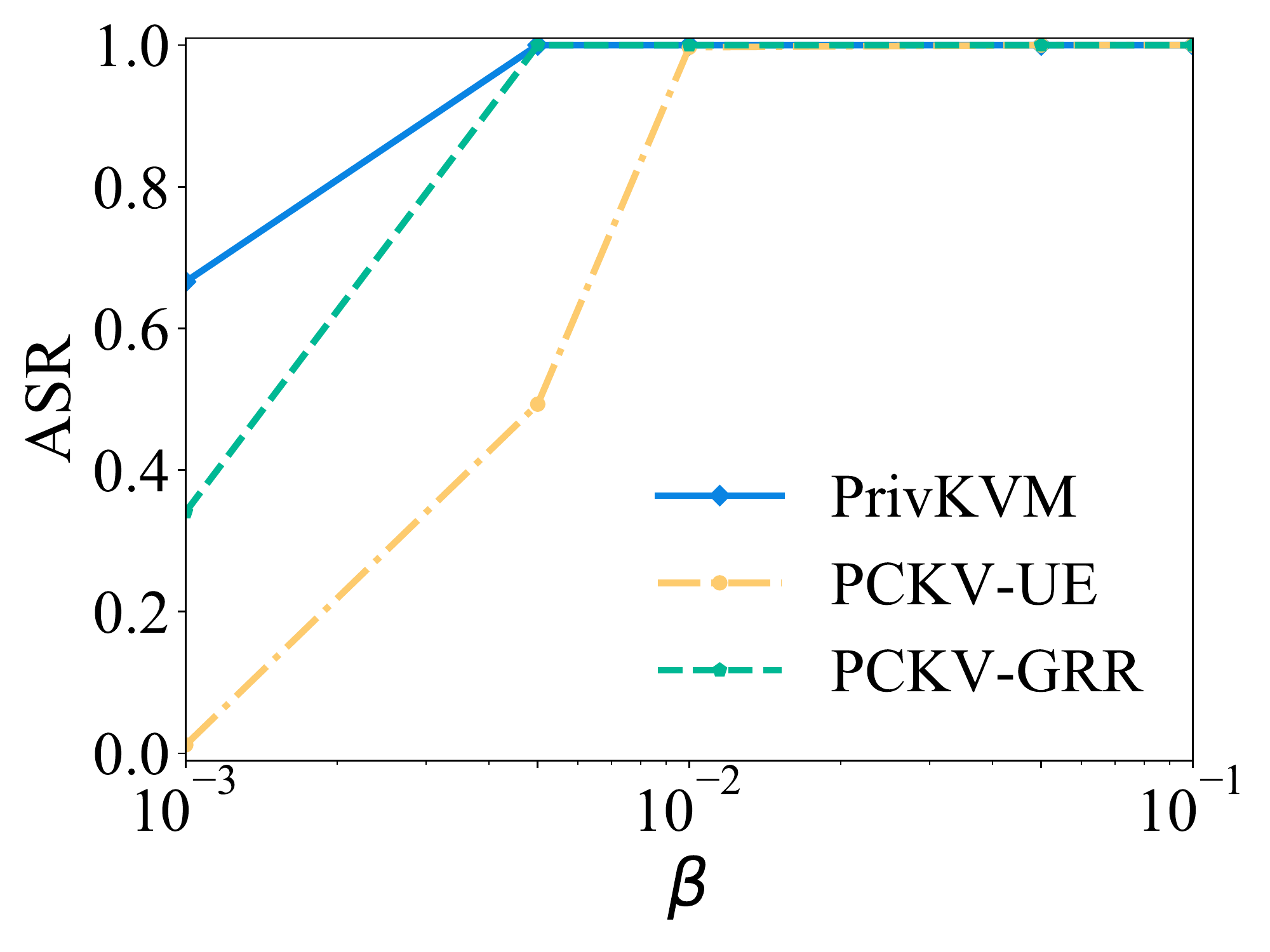}}
    \subfloat{\includegraphics[width=0.24\textwidth]{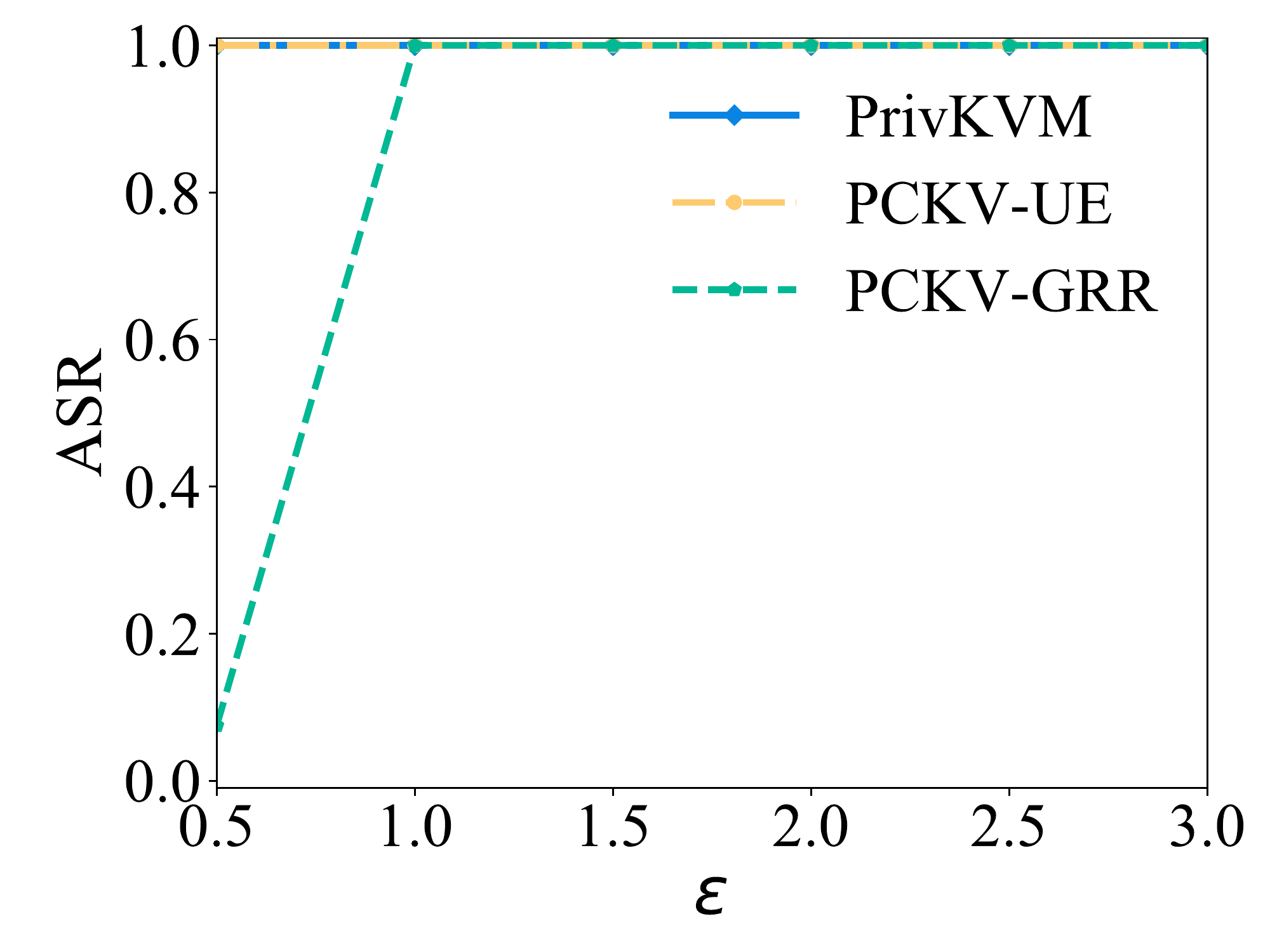}}
    \subfloat{\includegraphics[width=0.24\textwidth]{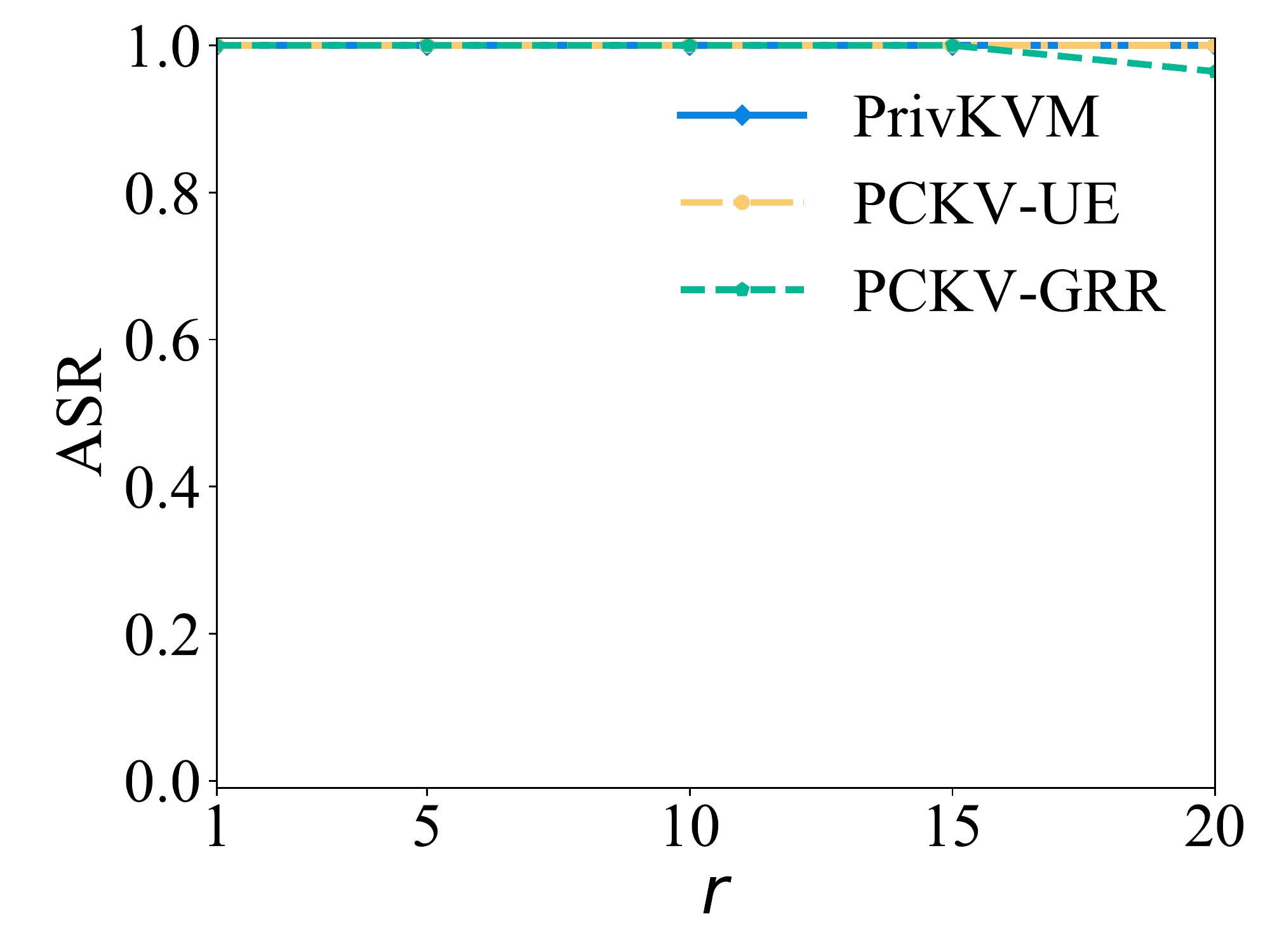}}
    \subfloat{\includegraphics[width=0.24\textwidth]{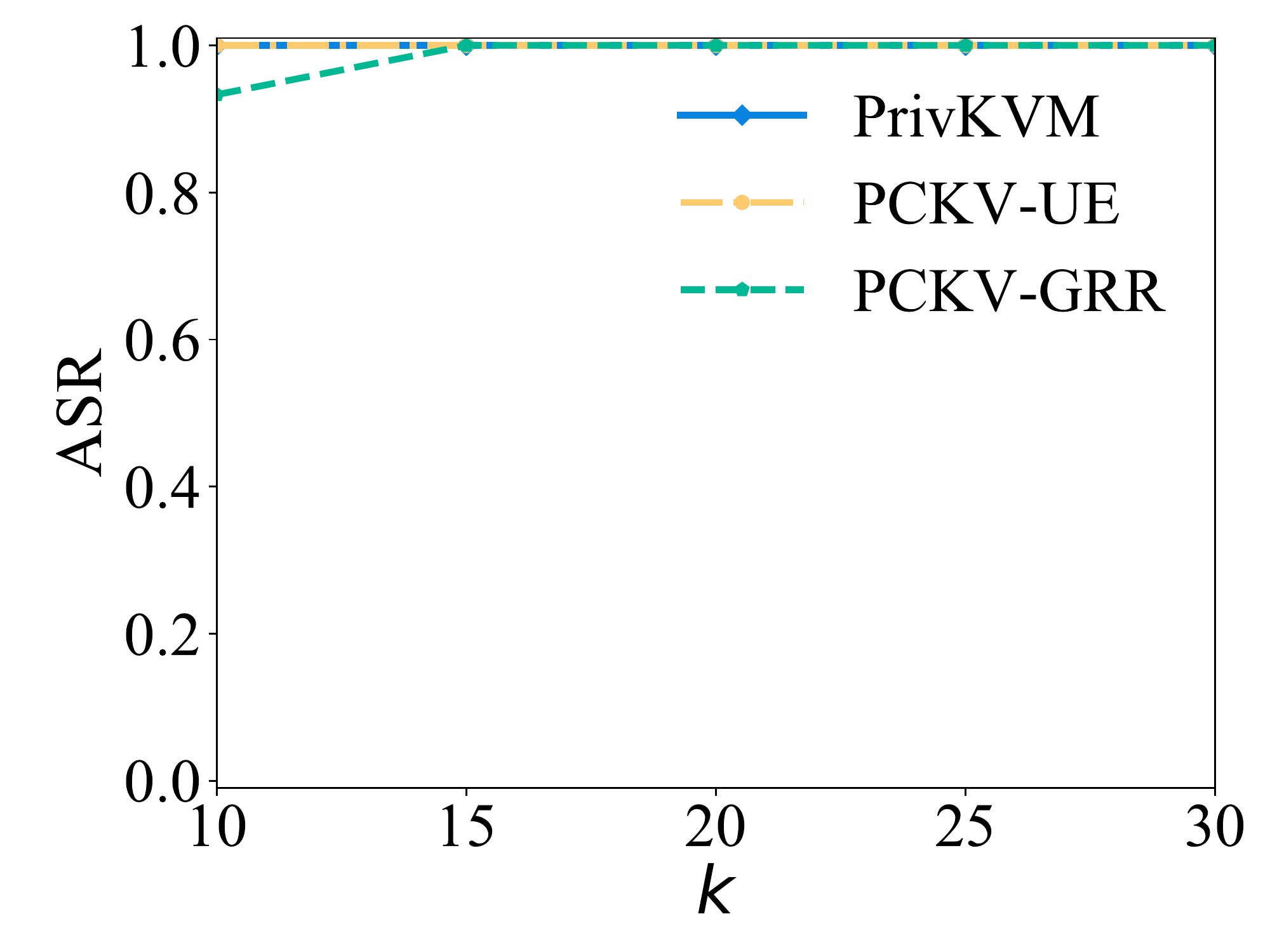}}
    \vspace{2mm}
    
    \subfloat{\includegraphics[width=0.24\textwidth]{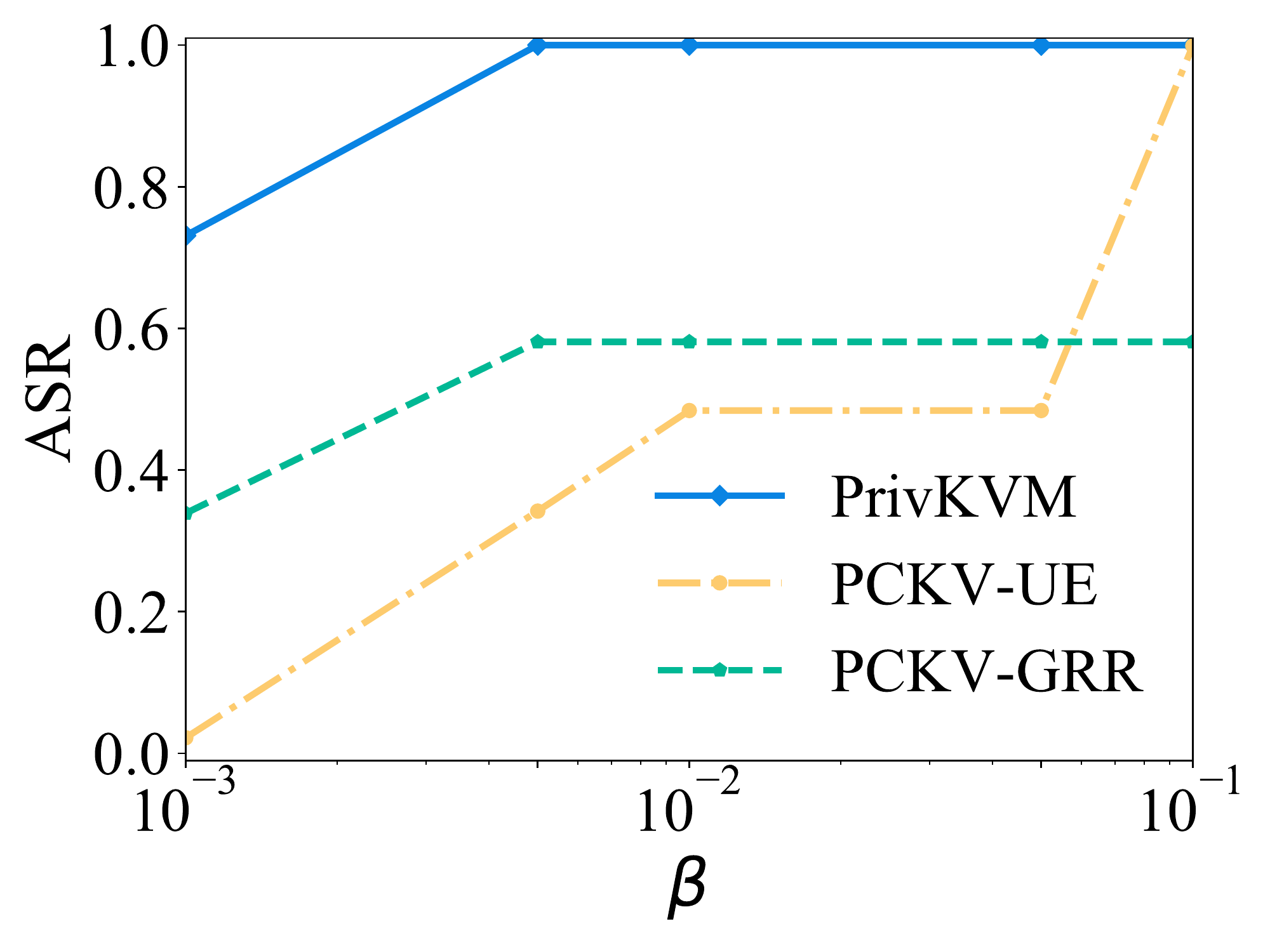}}
    \subfloat{\includegraphics[width=0.24\textwidth]{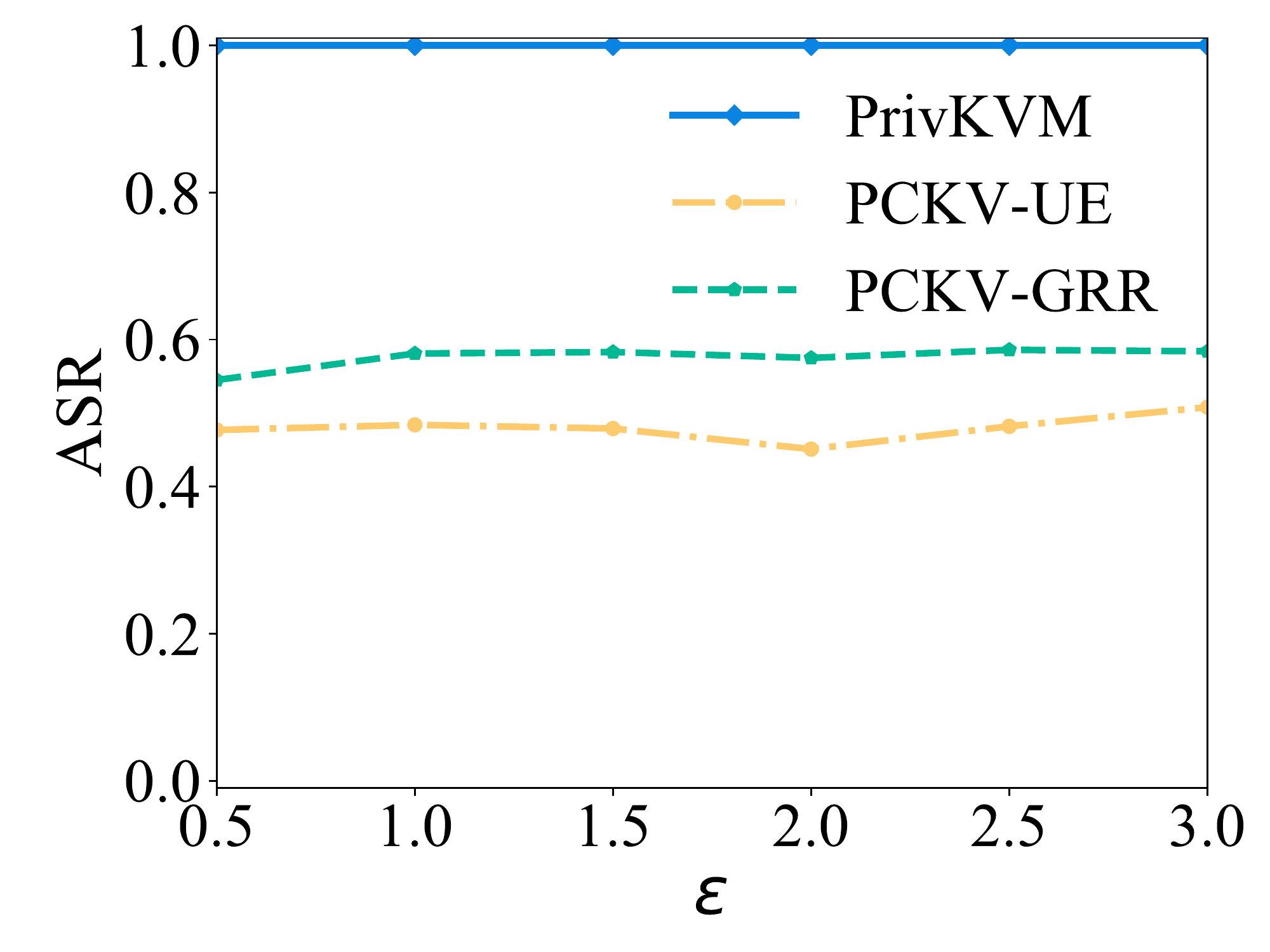}}
    \subfloat{\includegraphics[width=0.24\textwidth]{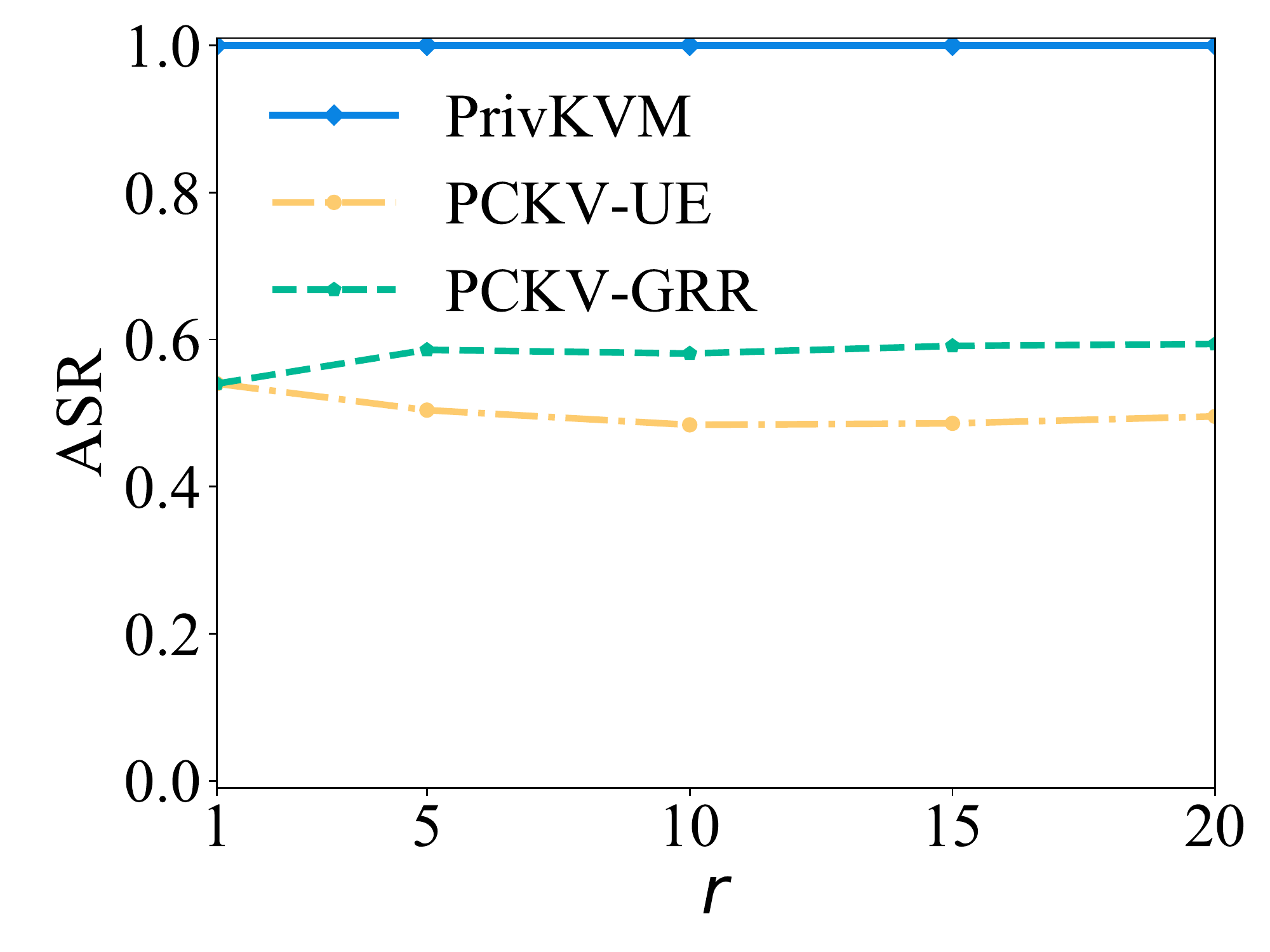}}
    \subfloat{\includegraphics[width=0.24\textwidth]{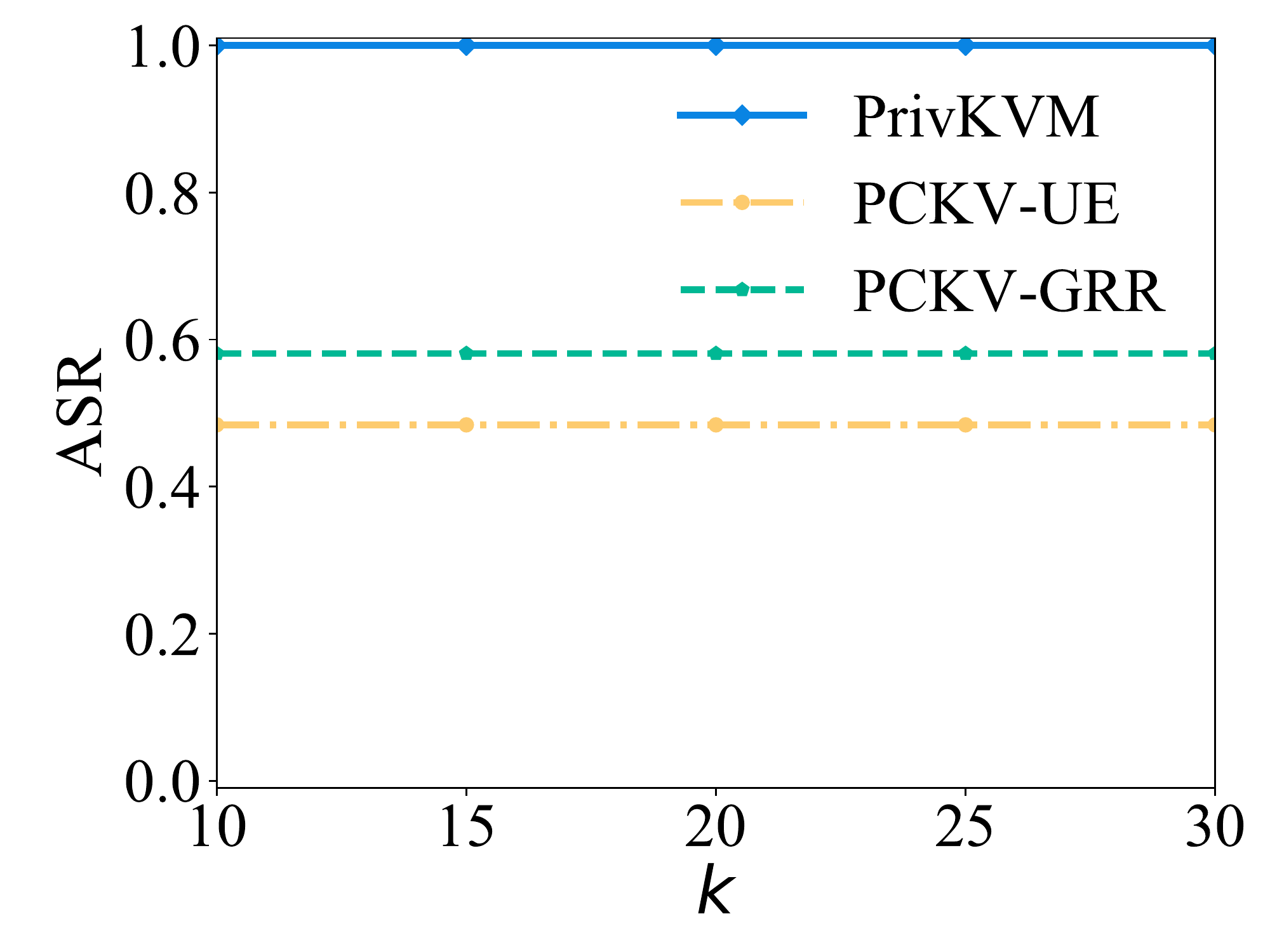}}
     \vspace{2mm}
     
    \subfloat{\includegraphics[width=0.24\textwidth]{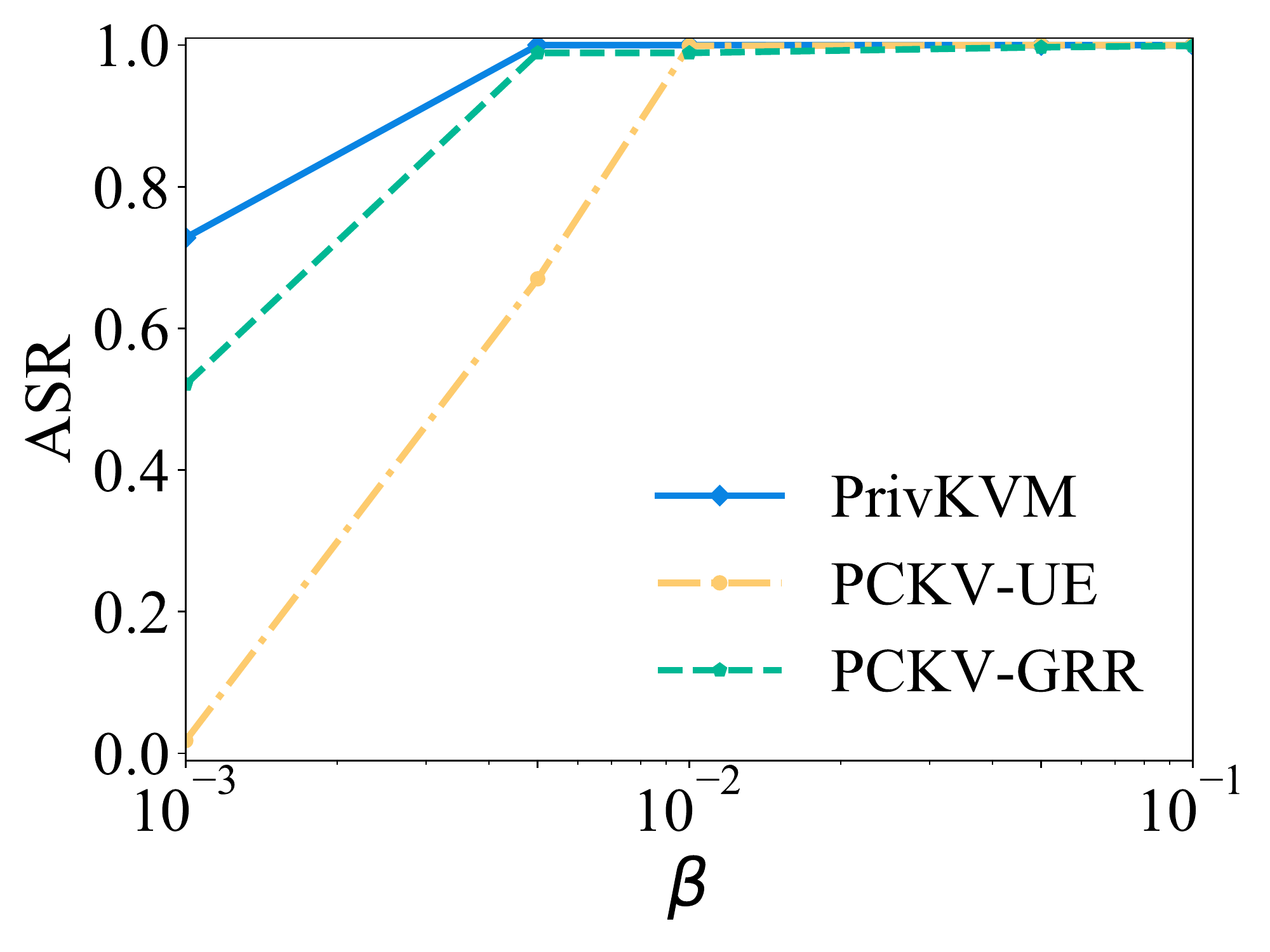}}
    \subfloat{\includegraphics[width=0.24\textwidth]{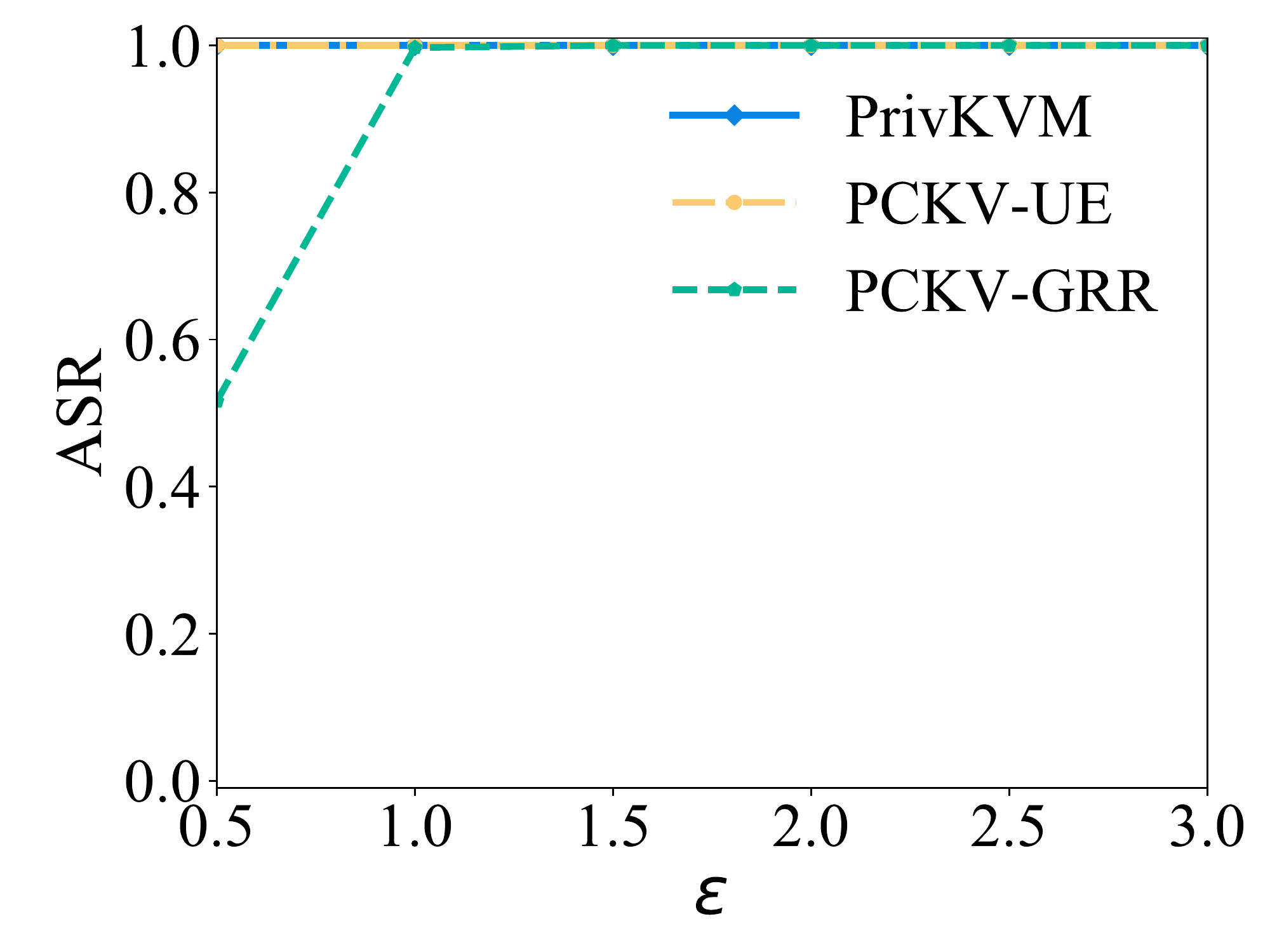}}
    \subfloat{\includegraphics[width=0.24\textwidth]{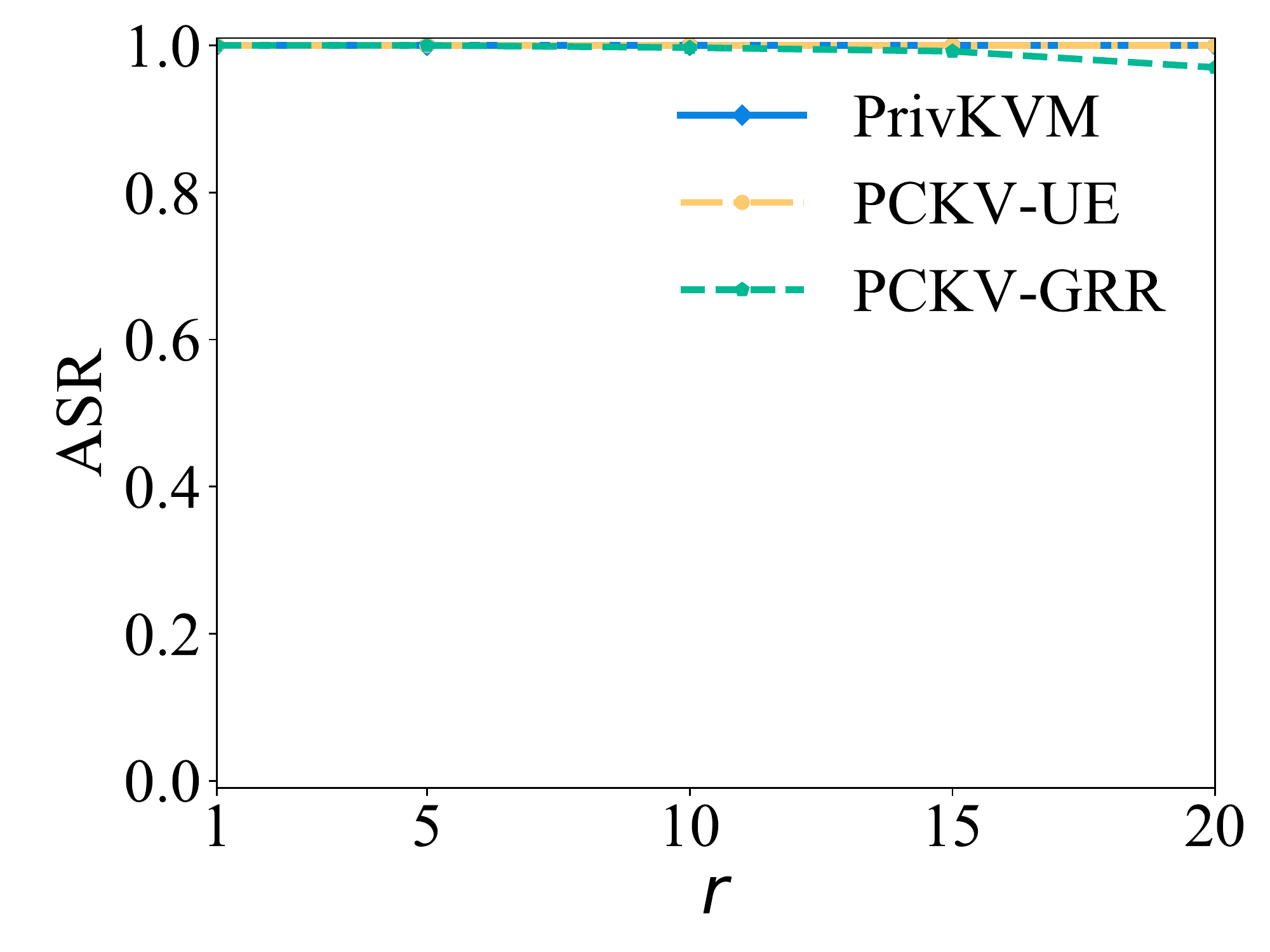}}
    \subfloat{\includegraphics[width=0.24\textwidth]{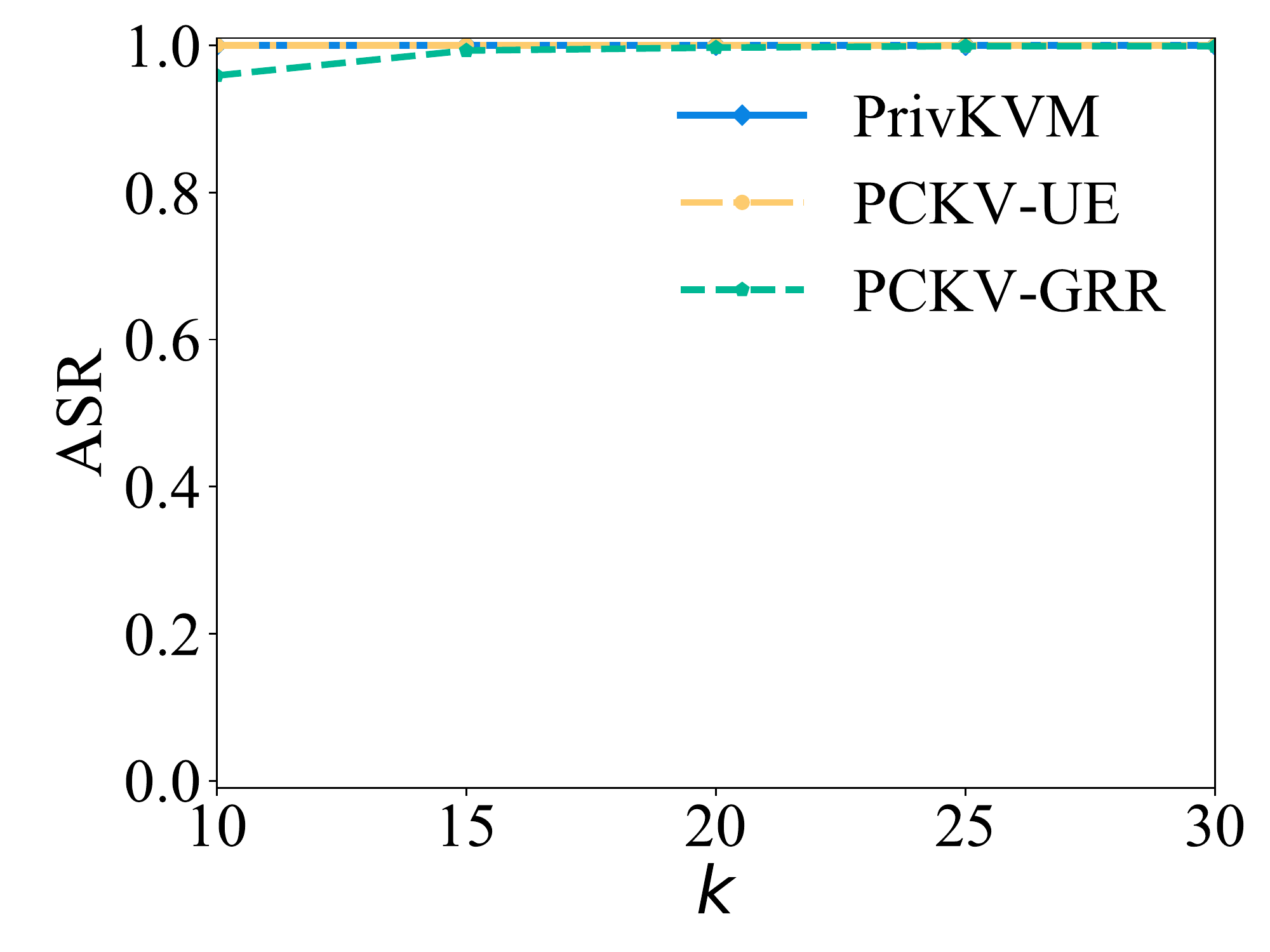}}
\caption{Impact of $\beta$, $r$, $\epsilon$, and $k$ on ASR of M2GA towards recommender systems (first row: Case 1, second row: Case 2, and third row: Case 3). Three LDP protocols and Clothing dataset are used. }
\label{fig:asr_case3}
\end{figure*}

\begin{figure}[!tb]
    \centering 
    \begin{subfigure}{0.4\columnwidth}
      \includegraphics[width=\linewidth]{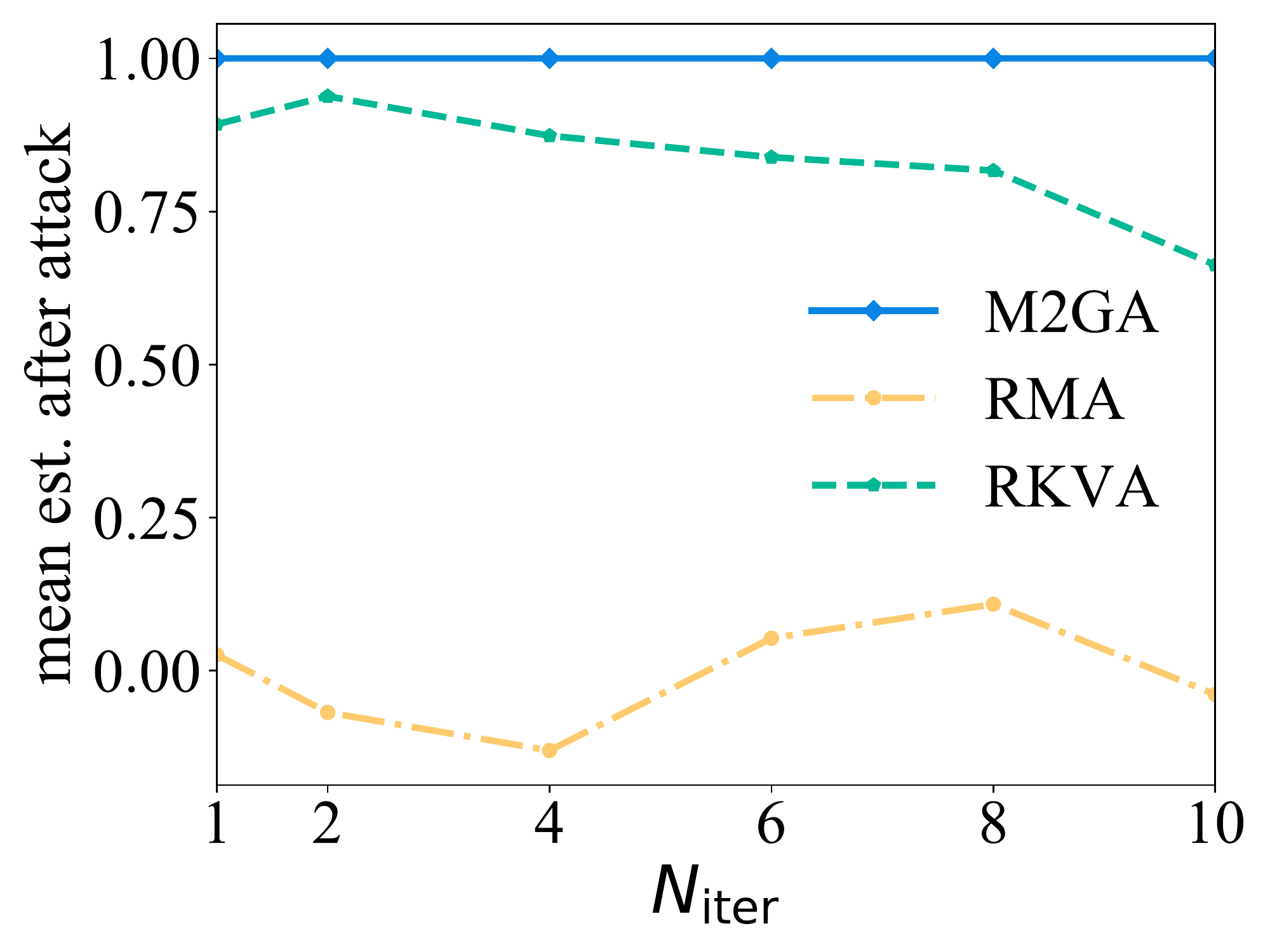}
      \caption{Synthetic}
    \end{subfigure} \hfil
    \begin{subfigure}{0.4\columnwidth}
      \includegraphics[width=\linewidth]{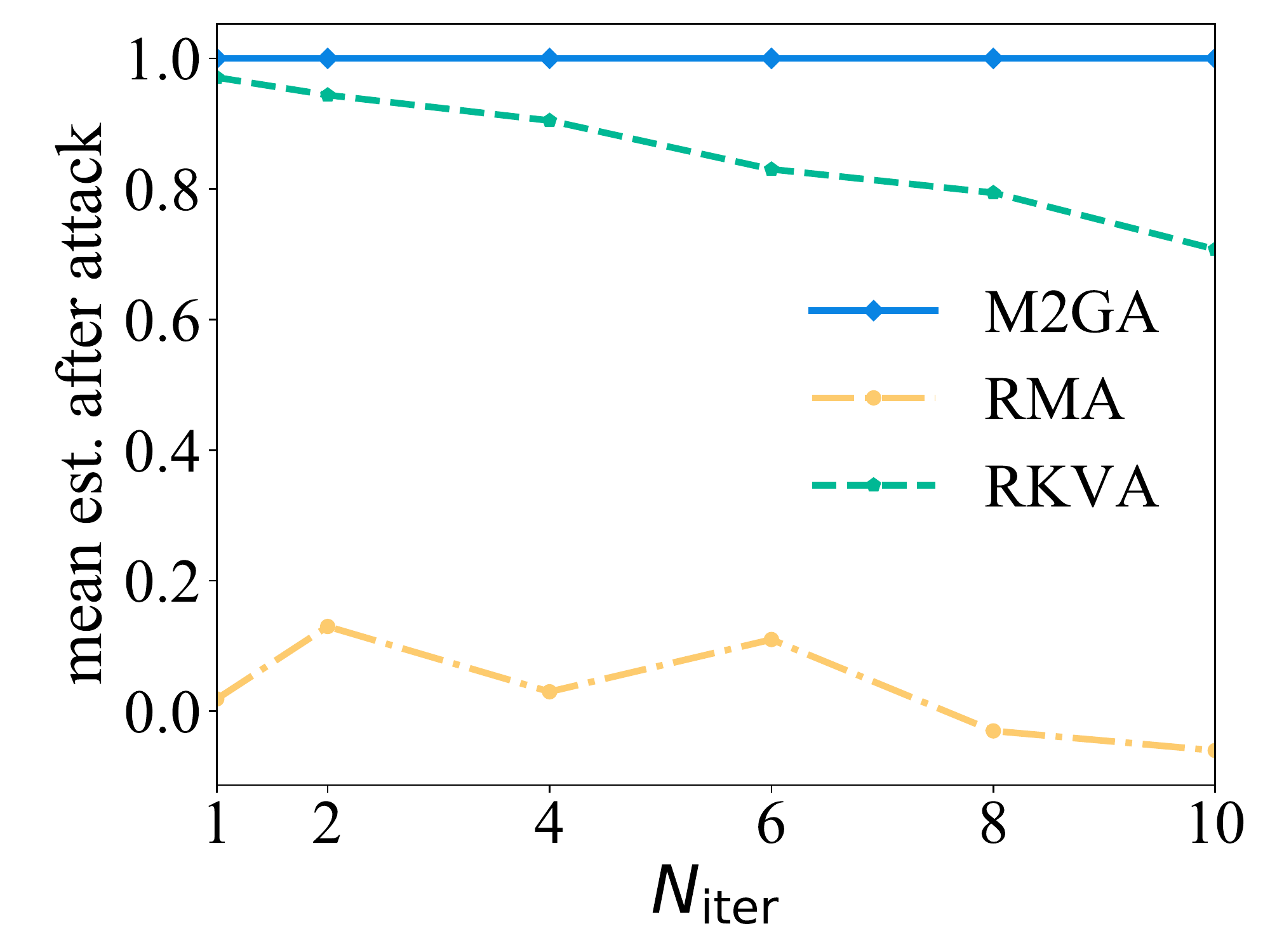}
      \caption{Clothing}
    \end{subfigure} \hfil
    \begin{subfigure}{0.4\columnwidth}
      \includegraphics[width=\linewidth]{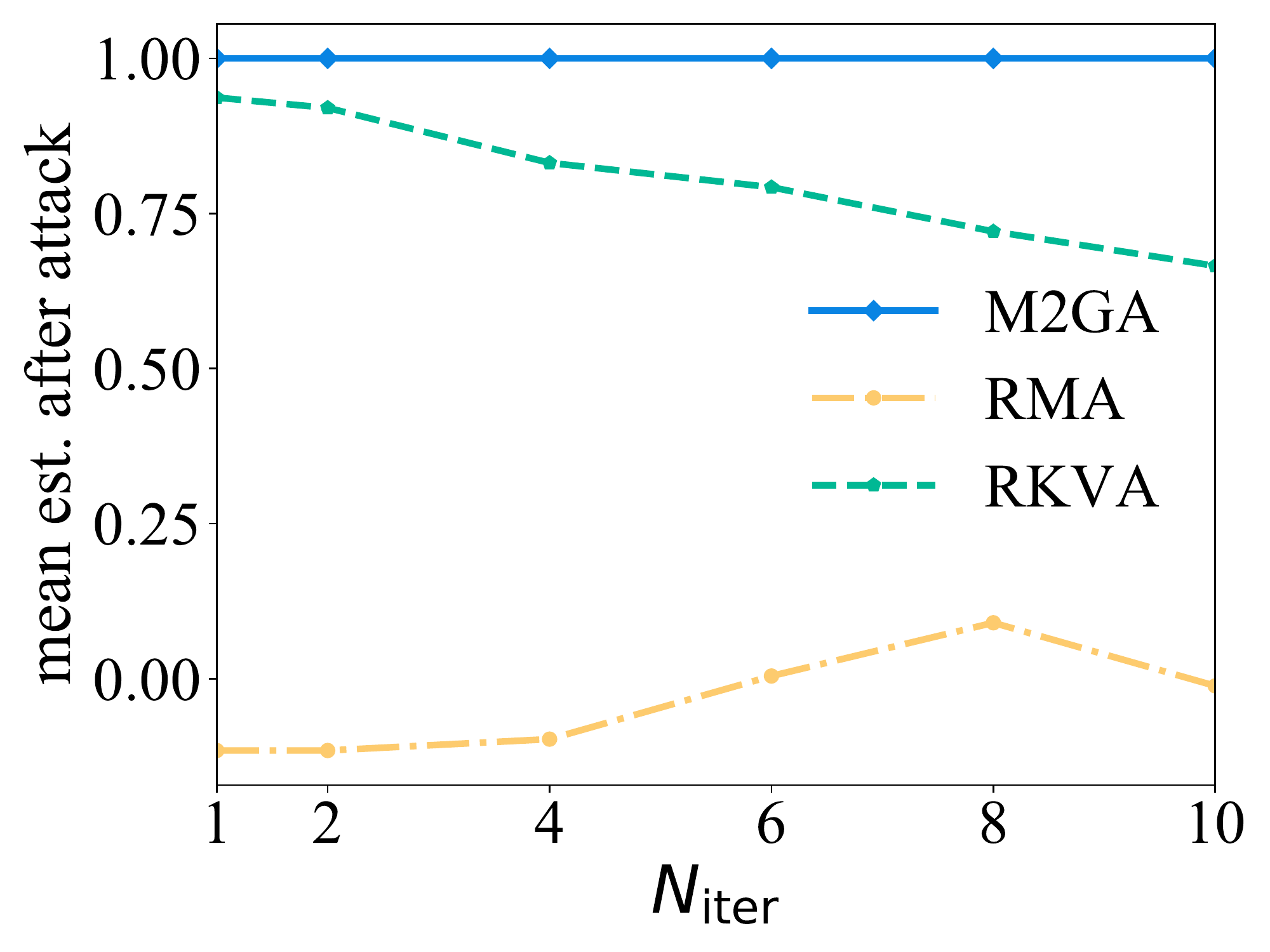}
      \caption{TalkingData}
    \end{subfigure} \hfil
        \begin{subfigure}{0.4\columnwidth}
      \includegraphics[width=\linewidth]{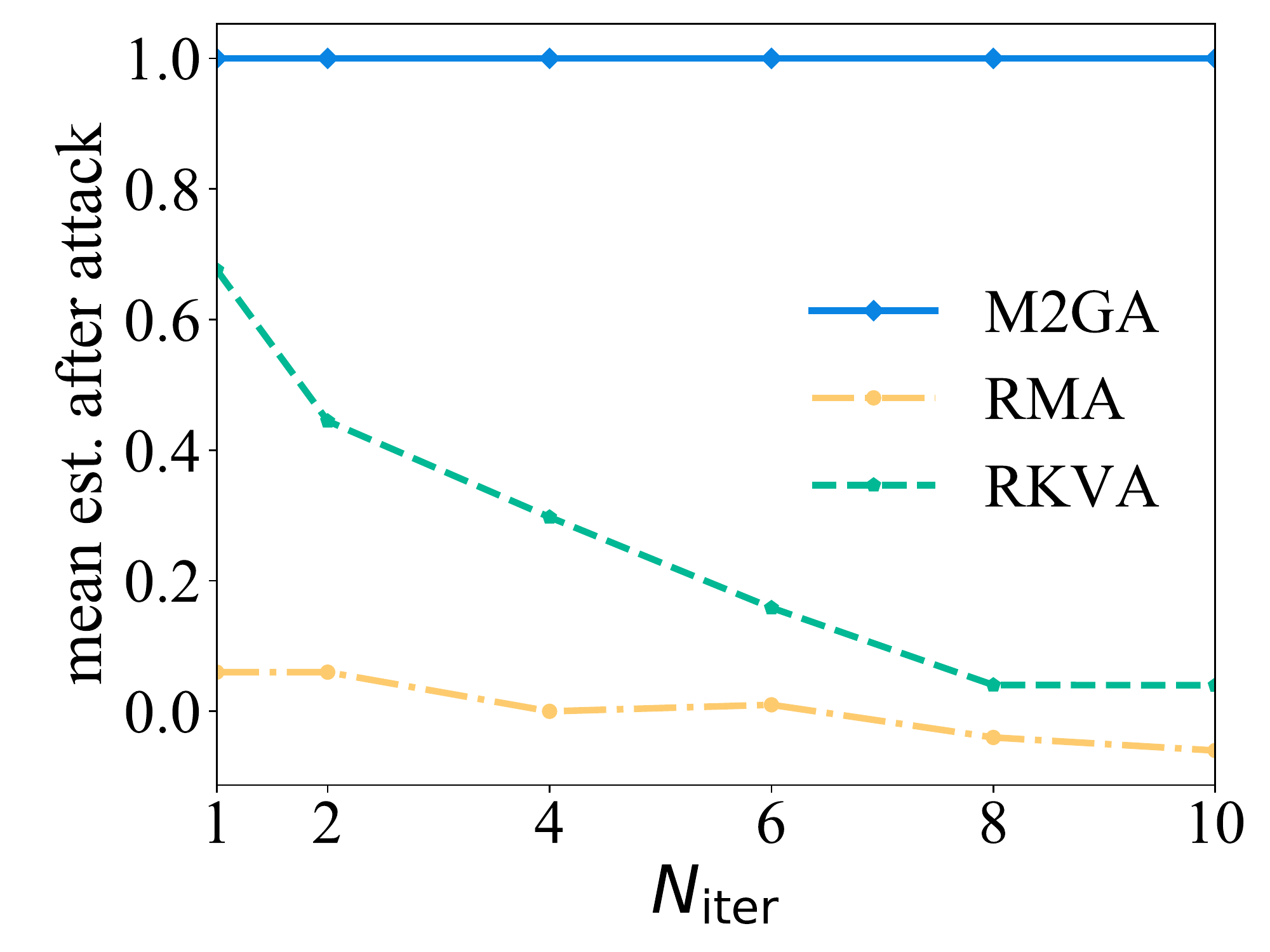}
      \caption{MovieLens-1M}
    \end{subfigure}
\caption{Impact of $N_{\text{iter}}$ on the estimated mean value after attack for PrivKVM on the four datasets.}
\label{fig:exp_privkvm_iterations_1}
\end{figure}

\subsubsection{Parameter Settings}
The parameters involved are $\beta$ (the fraction of fake users), $\epsilon$ (the privacy budget), and $r$ (the number of target keys). PrivKVM further involves  $N_{\text{iter}}$ (the number of rounds), while PCKV-UE and PCKV-GRR further involve $\ell$ (the padding length).  
Unless otherwise mentioned, we set the default values of these parameters as follows: $\beta=0.05$, $\epsilon=1.0$, $r=1$, $N_{\text{iter}}=10$, $\ell=1$ for Synthetic, $\ell=2$ for Clothing, $\ell=20$ for TalkingData, and $\ell=100$ for MovieLens-1M. We set $\ell$ differently for different datasets to  consider their different characteristics, which is suggested by~\cite{gu2020pckv}. 
We set $r=10$ and $t=20$ by default when evaluating our attacks to the recommender system downstream application. 
We randomly sample $r$ keys from the entire dictionary as the target keys for each dataset. We vary one  parameter while keeping the others fixed to their default values, to investigate its impact on the frequency and mean gains. 
We note that we clip the estimated frequencies and support counts in the LDP protocols as we described in Section~\ref{back:pckv}.

\subsection{Experimental Results}
\label{sec:exp_results}
Figure~\ref{fig:exp_attack_freq_synthetics}--Figure~\ref{fig:exp_attack_mean_movielens}  show the frequency gains and mean gains of our attacks on the  four datasets.  
Figure  \ref{fig:asr_case3} shows the ASRs of M2GA to the recommender systems in different cases on Clothing dataset.
Moreover, we also explore the impact of $N_{\text{iter}}$ on our attacks for PrivKVM, and the results are shown in Figure~\ref{fig:exp_privkvm_iterations_1}. Note that we don't show the results of frequency estimation since the frequencies of keys are estimated only in the first round and thus are not affected by $N_{\text{iter}}$. We have the following observations:
\begin{itemize}[leftmargin=*]
    \item  In all scenarios, M2GA  achieves larger  frequency and mean gains than the two baseline attacks (RMA and RKVA). This is because  M2GA is an optimization based attack.
    \item RKVA achieves larger frequency gains than RMA  except PCKV-UE, as RKVA considers target keys.  RMA achieves a  larger frequency gain for PCKV-UE  because the target key RKVA samples  gets perturbed and the perturbed message in PCKV-UE  continues to support this target key with a probability of $1/2$, while  a target key is supported with a probability of $2/3$ in RMA.
    \item M2GA and RKVA achieve larger frequency and mean gains as the  number of fake users (i.e.,  $\beta$) increases. However, the frequency/mean gains of RMA may increase,  not change, or fluctuate as $\beta$ increases in different datasets and for different LDP protocols. 
    
    \item The security-privacy trade-off does not necessarily hold. In particular, we observe security-privacy trade-off with respect to the frequency gains of M2GA (for PrivKVM, this is because we set $r=1$), i.e., the frequency gains of M2GA decrease as $\epsilon$ increases. However, the mean gains of M2GA may increase, fluctuate, or decrease as $\epsilon$ increases in different datasets and for different LDP protocols.

    \item The mean gains of M2GA increase as the number of target keys (i.e., $r$) increases for all the three LDP protocols. The frequency gains of  M2GA decrease  as $r$ increases for PrivKVM. This is because  a fake user can only increase the estimated frequency  for a single target key. The frequency gains of M2GA  increase as  $r$ increases for PCKV-UE. This is because a fake user in M2GA can simultaneously support all the target keys. 
    However, the frequency gains of  M2GA for PCKV-GRR  have different trends on different datasets. 
    We find that this is mainly caused by  clipping the estimated frequencies and support counts in PCKV-GRR.
    
    \item M2GA achieves high ASRs towards the recommender systems in different cases. Specifically, in Case 1 and Case 3, when $\beta\ge 0.01$ and $\epsilon\ge 1$, M2GA achieves close-to-1 ASRs. Our results mean that the recommender system recommends almost all target items under M2GA. In Case 2, M2GA still achieves ASRs that are close to 1 for PrivKVM, while the ASRs for PCKV-UE and PCKV-GRR are close to 0.5, which means that half of the target keys/items are among the recommended $t$ items. The ASRs for PCKV-UE and PCKV-GRR are smaller in Case 2 because the estimated average rating scores of many non-target keys are 1 in PCKV-UE and PCKV-GRR.
    
    \item  Our strongest attack, i.e., M2GA, is effective for different $N_{\text{iter}}$. Specifically, the estimated mean after attack is consistently 1.0 when $N_{\text{iter}}$ ranges from 1 to 10.
\end{itemize}


\section{Defenses}
\label{sec:def}

 Cao et al.~\cite{cao2019data} proposed three defenses against poisoning attacks to LDP protocols for categorical data. However, these defenses cannot be directly applied to defend against our attacks. 
This is because these defenses rely on the assumption that each user only holds one single item. In contrast, we consider key-value data, where each user usually has multiple KV pairs. We explore two methods to detect fake users as defenses against our poisoning attacks. For both methods, we assume the server knows the KV pairs sent from each user. For one-class classifier based detection, we further assume the server knows $\lambda$ fraction of genuine users as ground truth.

\subsection{One-class Classifier (OC) based Detection}
\label{sec:oc}
Detecting fake users is essentially an anomaly detection problem, where we aim to distinguish fake users as outliers from the genuine ones. Therefore, we can leverage the one-class machine learning classifiers that are commonly used for anomaly (outlier) detection to detect fake users. Specifically, we treat each user's messages sent to the server as its features. For PrivKVM, we concatenate each user's messages in multiple rounds as a single feature vector. We can then use these features as training data to fit an outlier detection classifier. In our experiments, we use isolation forest \cite{liu2008isolation}. An isolation forest trains an ensemble of randomly partitioned trees to detect outliers. After training, the isolation forest can categorize the users into two groups. We assume the server already knows $\lambda$ fraction of the genuine users as ground truth. Moreover, the server treats the group which includes more ground-truth genuine users as the ``genuine'' group and the other one as the ``fake'' group. The users in the ``fake'' group are considered as fake users and are excluded from aggregation. The server only uses the messages sent by users in the ``genuine'' group to estimate the frequencies and mean values. 
In our experiments, we use the implementation of isolation forest in Scikit-learn \cite{ifimplement}.

\begin{figure*}[!t]
    \centering 
    {\includegraphics[width=0.3\textwidth]{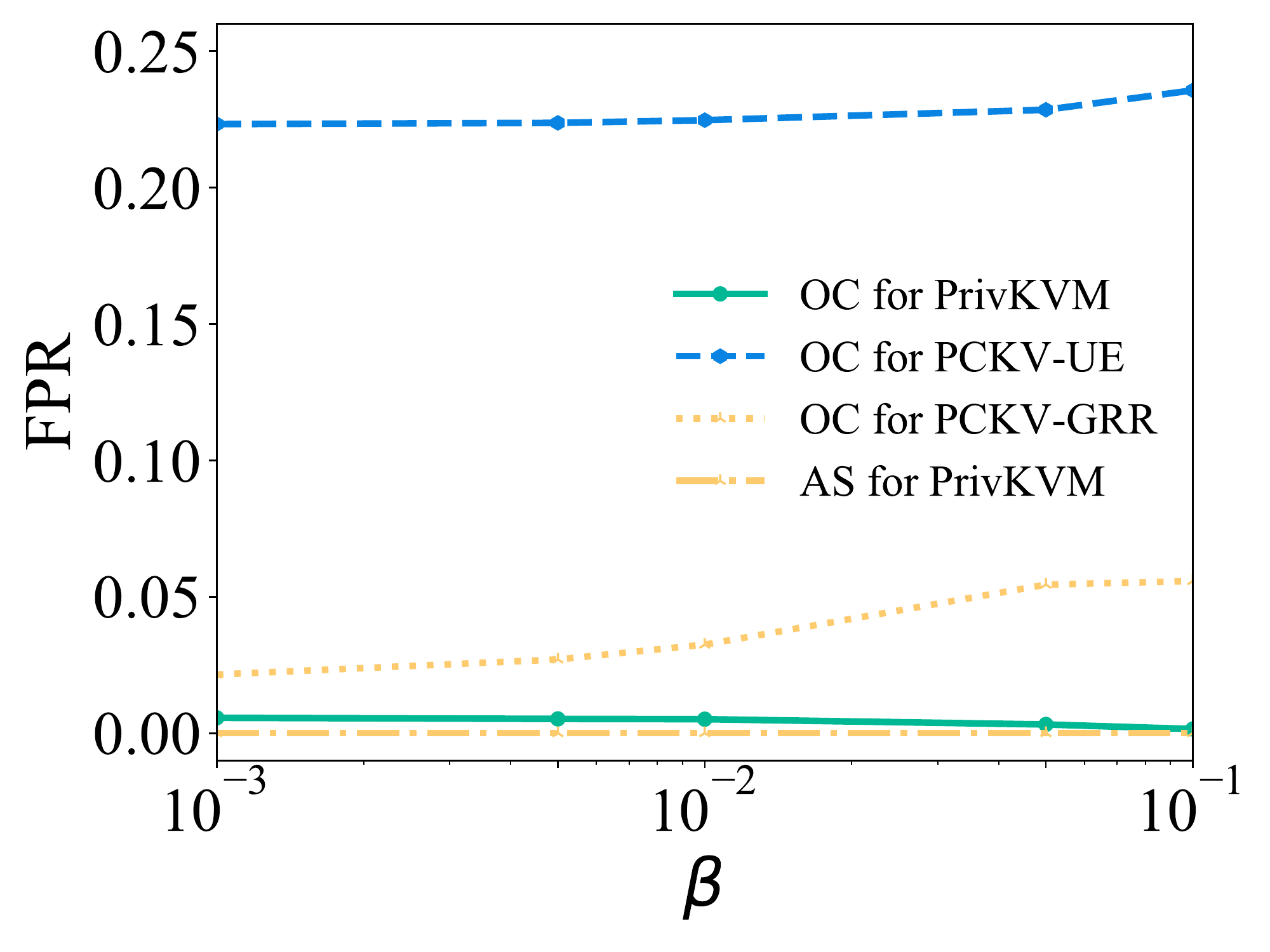}}
    {\includegraphics[width=0.3\textwidth]{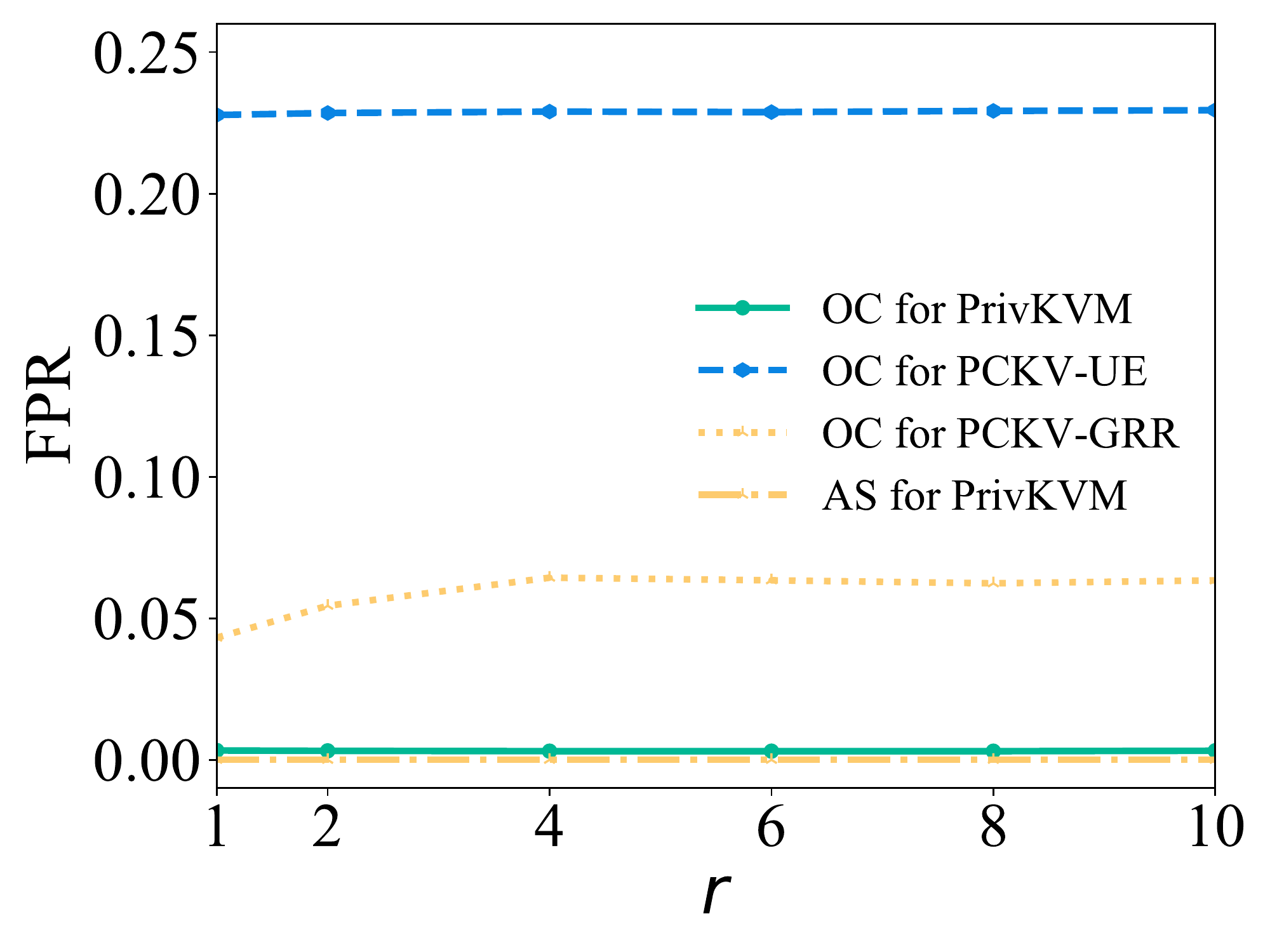}}
    {\includegraphics[width=0.3\textwidth]{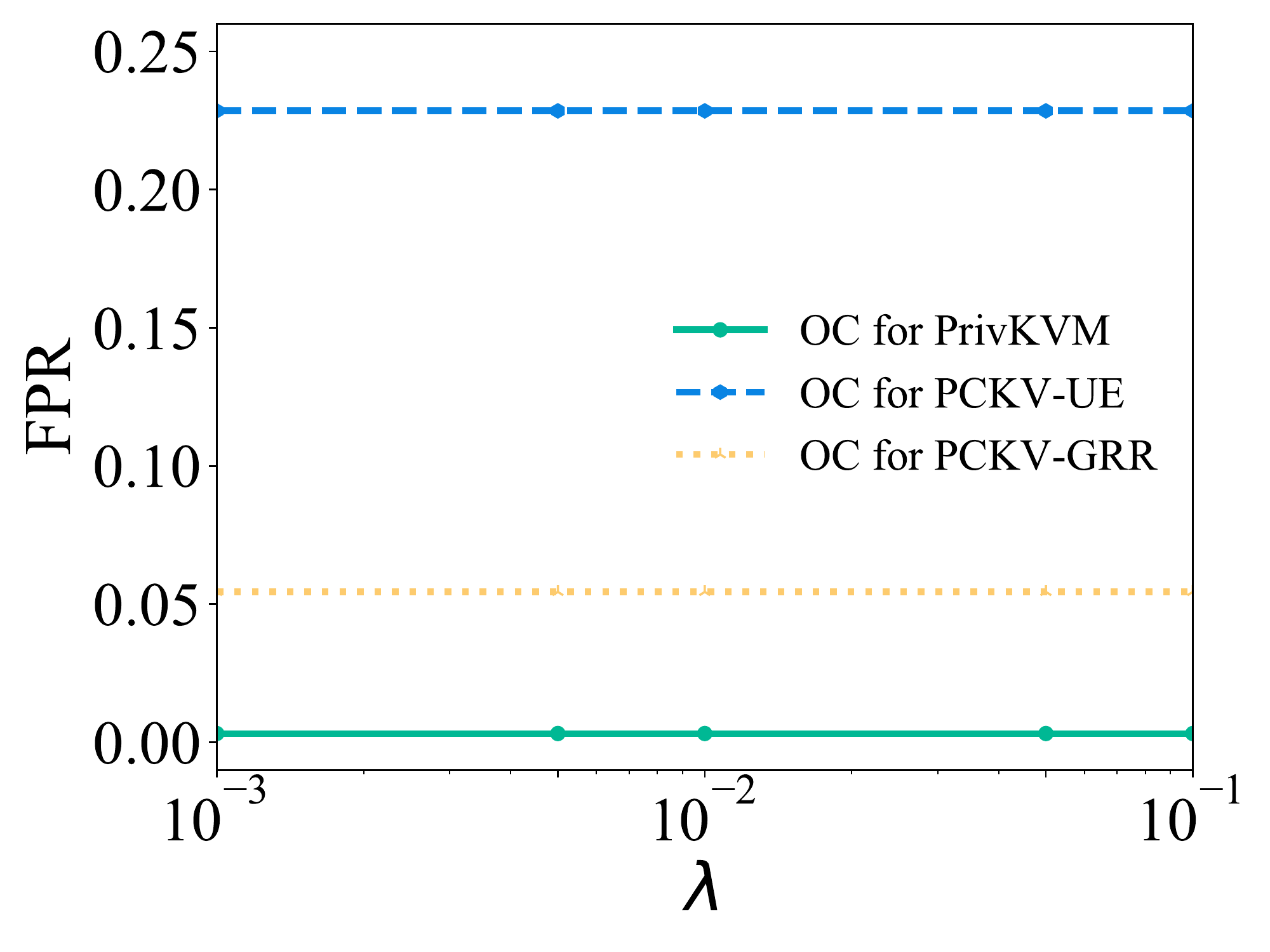}}
    
    {\includegraphics[width=0.3\textwidth]{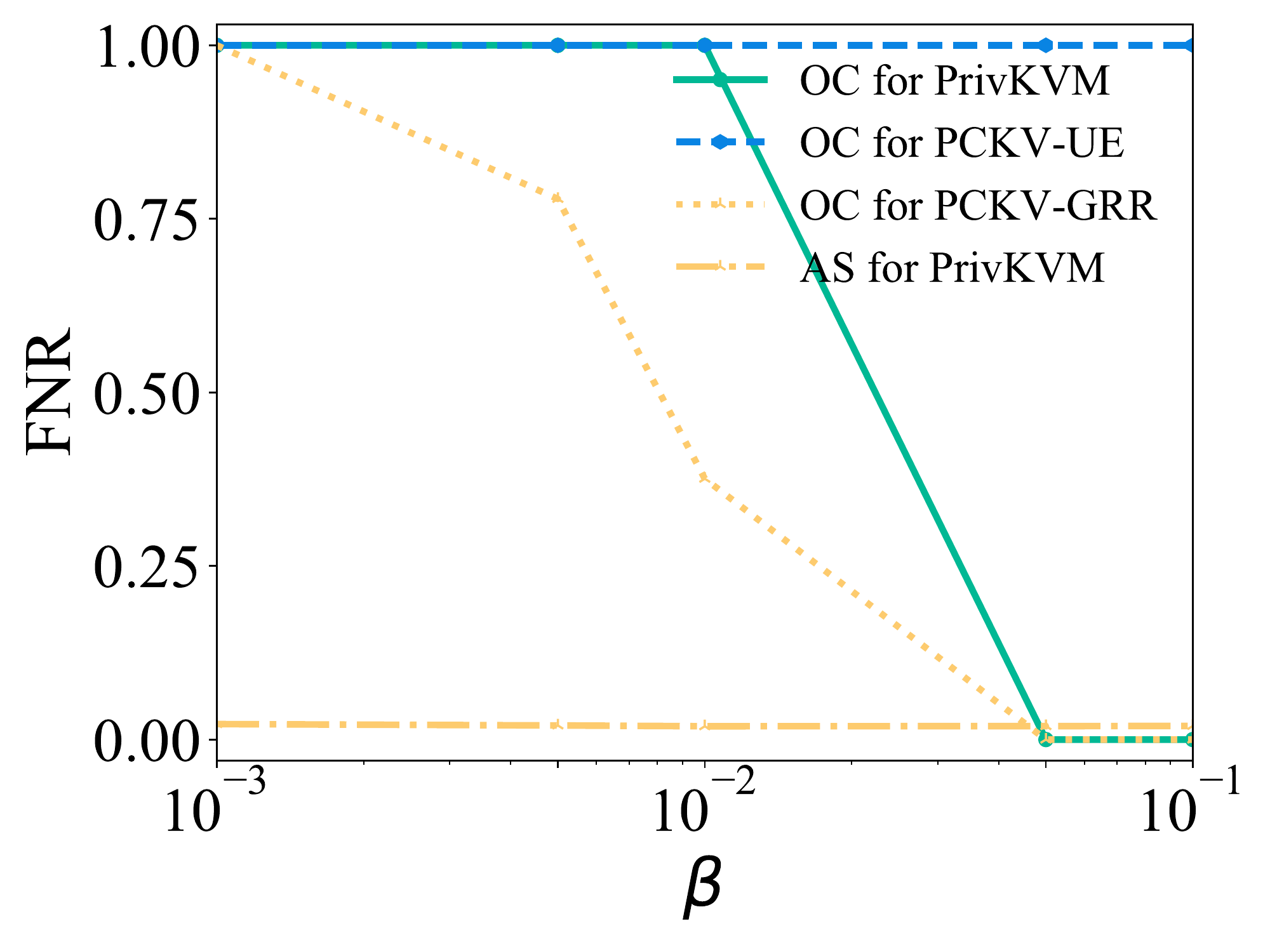}}
    {\includegraphics[width=0.3\textwidth]{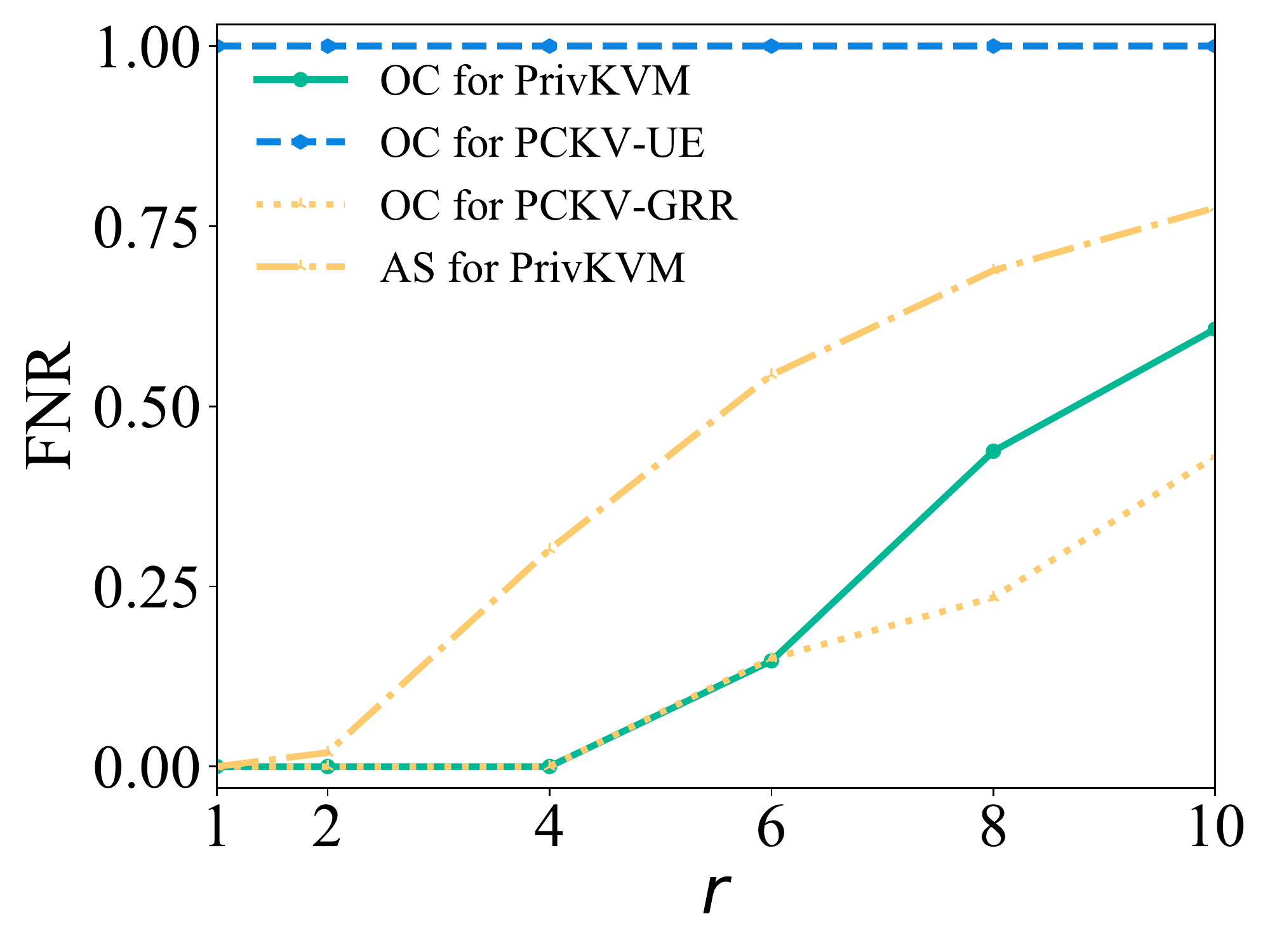}}
    {\includegraphics[width=0.3\textwidth]{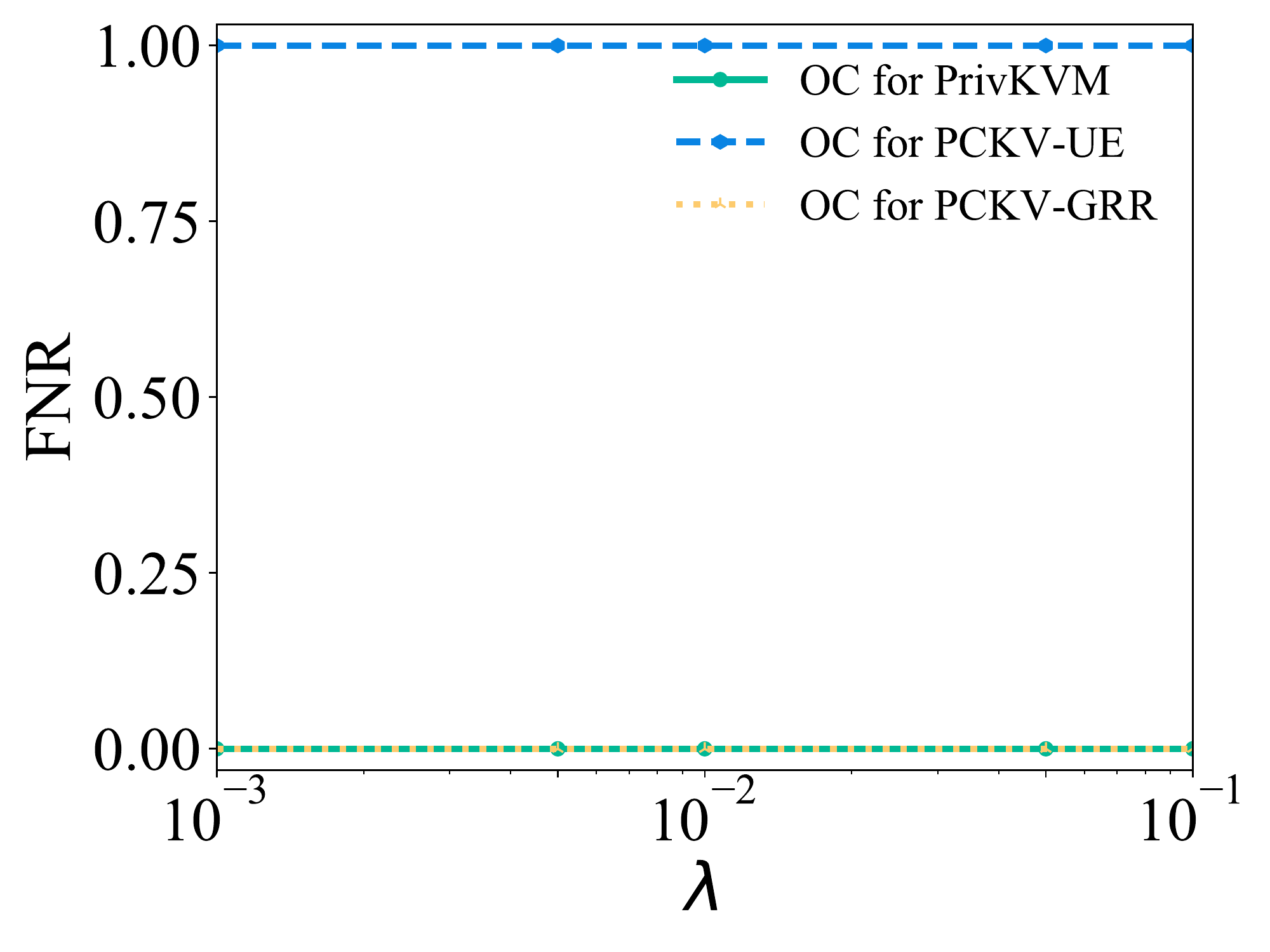}}
\caption{Impact of $\beta$, $r$, and $\lambda$ on FPR (first row) and FNR (second row) of detecting fake users against M2GA on TalkingData.}
\label{fig:exp_defense_talkingdata_m2ga_recall}
\end{figure*}

\begin{figure}[!t]
    \centering 
    {\includegraphics[width=0.20\textwidth]{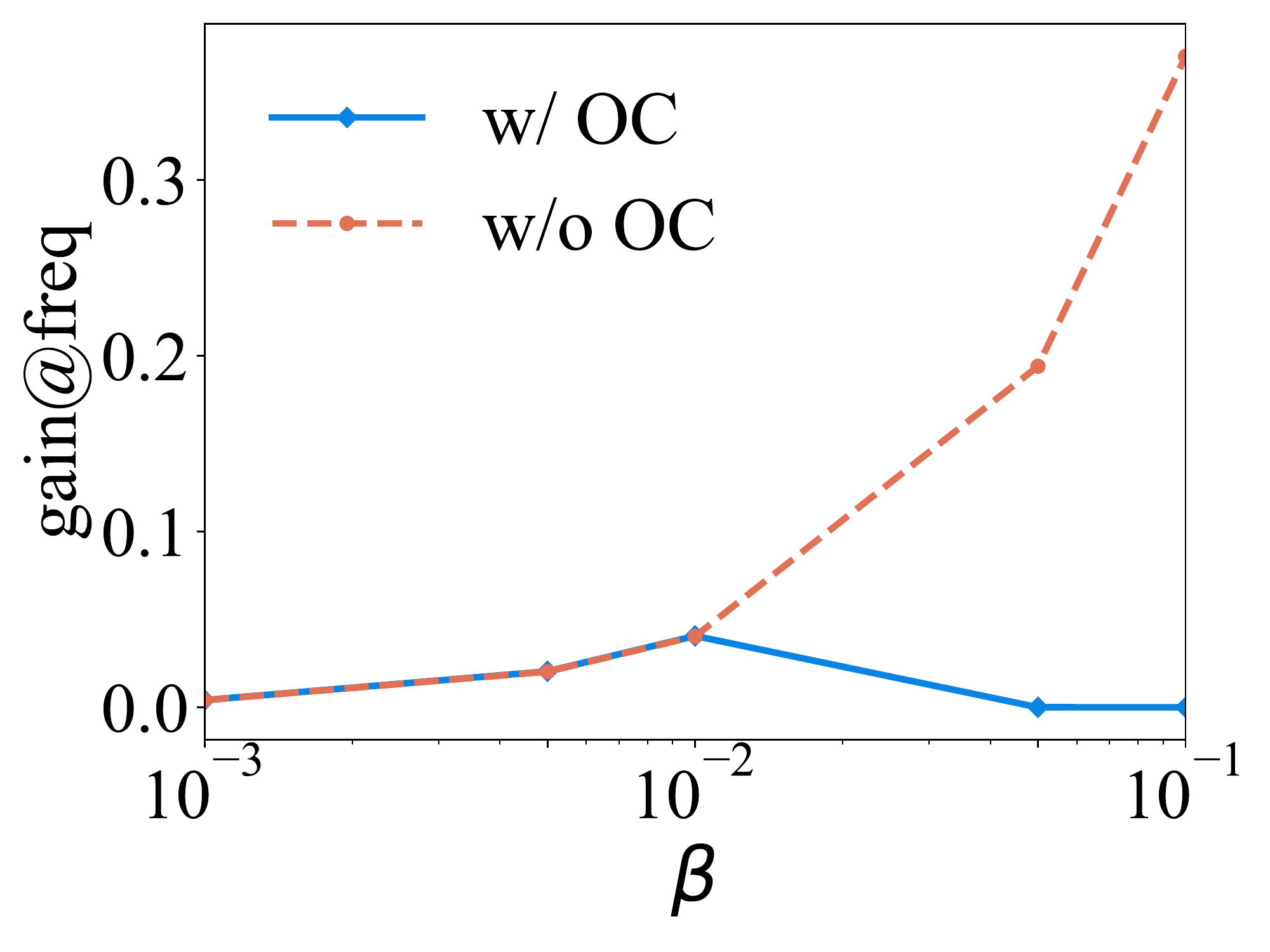}}
    {\includegraphics[width=0.20\textwidth]{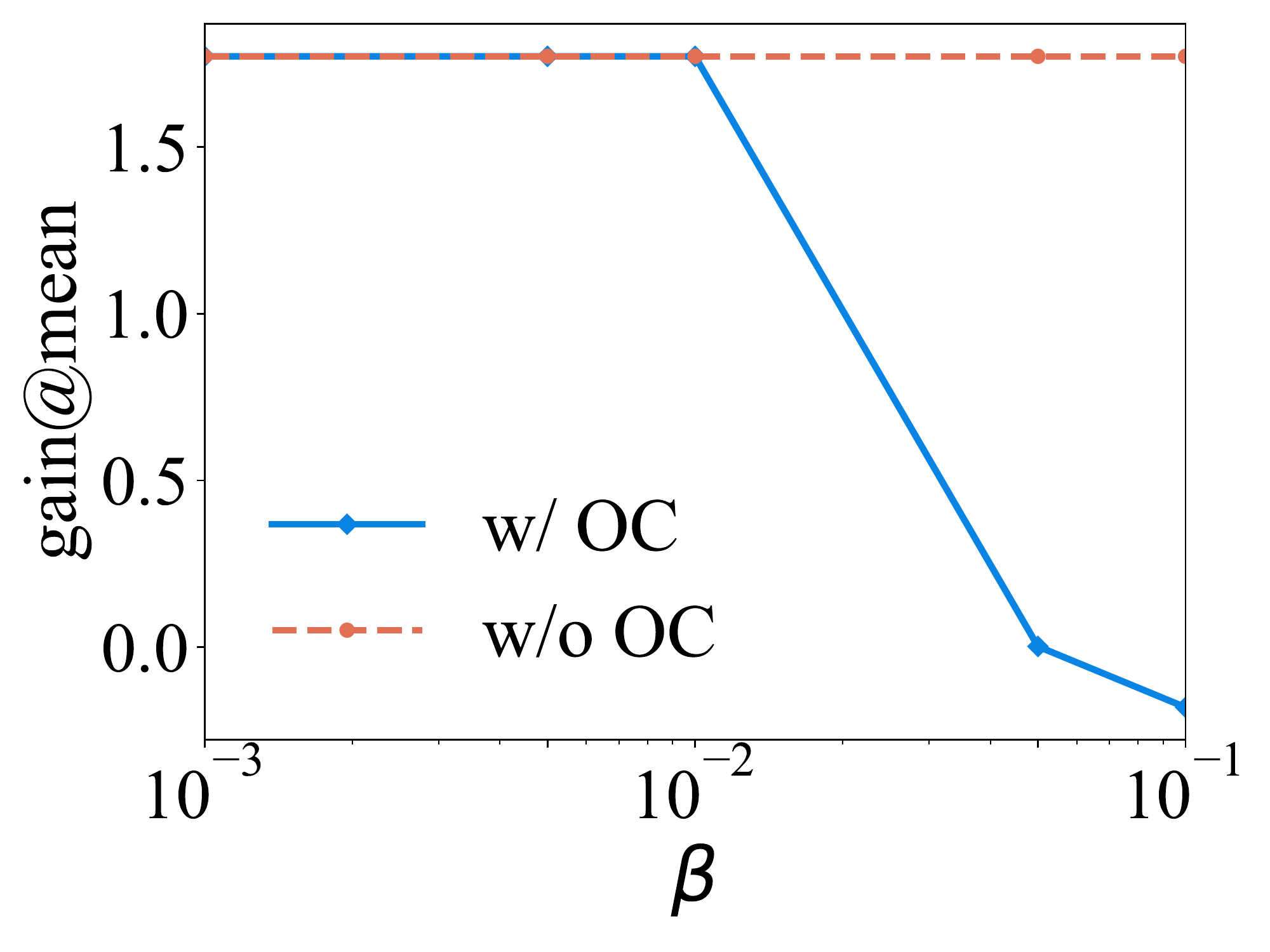}}
    
    {\includegraphics[width=0.20\textwidth]{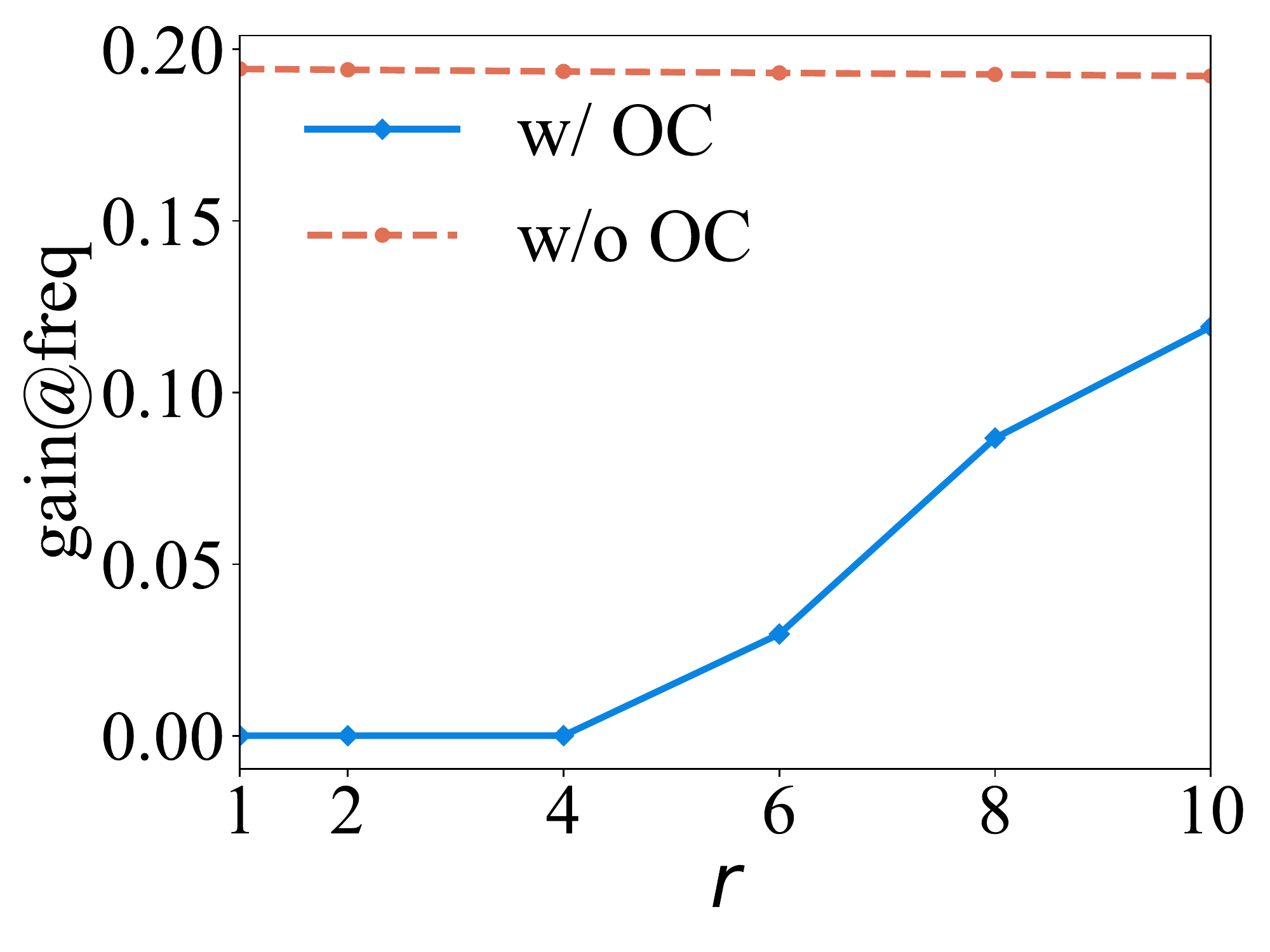}}
    {\includegraphics[width=0.20\textwidth]{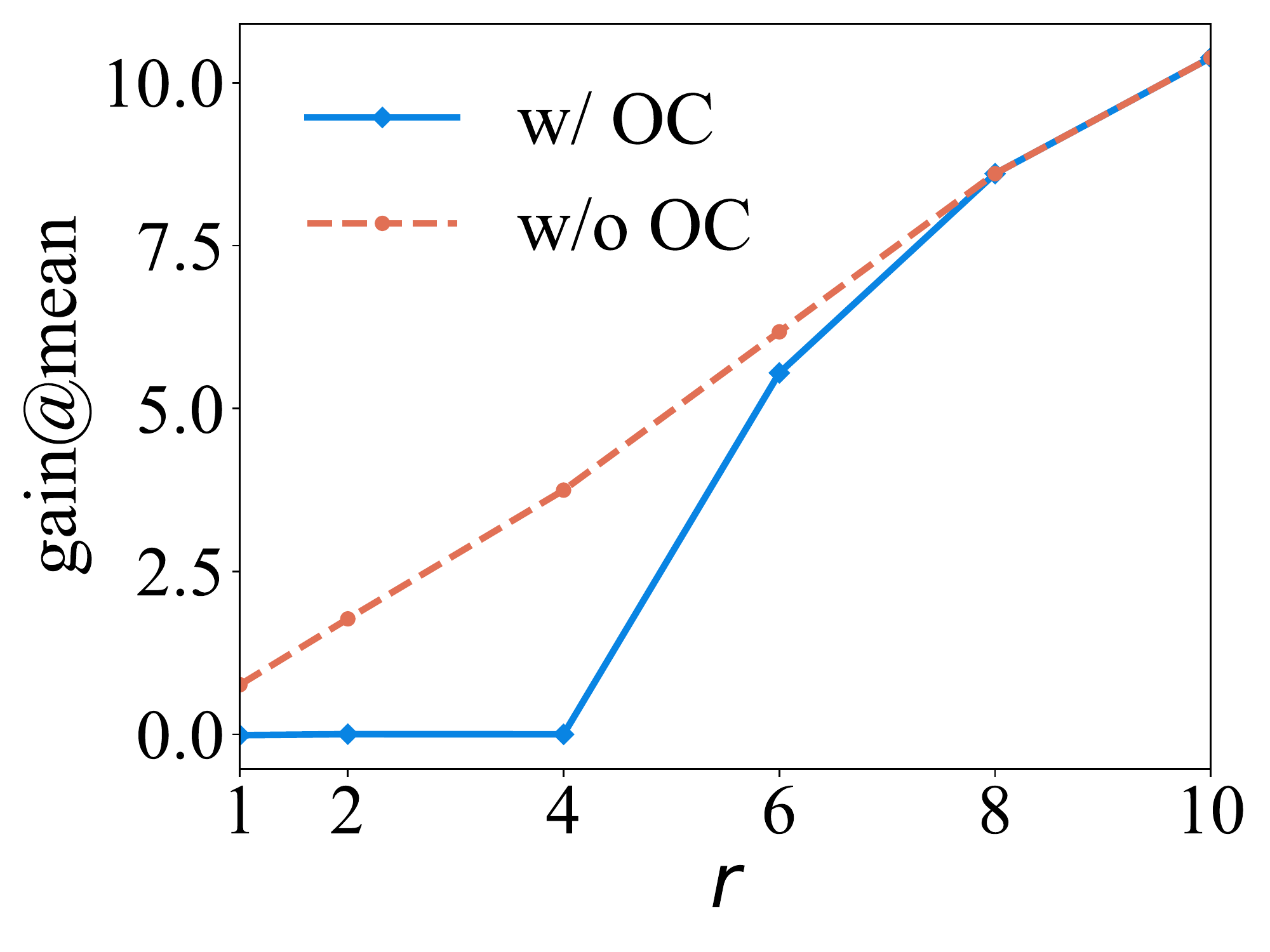}}
\caption{Impact of $\beta$ and $r$ on the defense effectiveness of OC against M2GA for PrivKVM on TalkingData.}
\label{fig:exp_oc_talkingdata_m2ga_privkvm}
\end{figure}

\subsection{Anomaly Score (AS) based Detection}

We note that, multiple rounds of communications are conducted in PrivKVM, allowing us to check the consistency of the messages sent by a user in different rounds. Based on this observation, we propose a method to detect fake users for PrivKVM. 
Recall that, in PrivKVM, each user sends a perturbed KV pair and the index of a key to the server in each round. Since the key is randomly sampled from the large dictionary, it is unlikely that the same key is repeatedly selected in multiple rounds for genuine users. However, since a fake user promotes a target key in each round, it may send the same key to the server in multiple rounds, especially when the number of target keys is small. 

Based on this intuition, we assign an \emph{anomaly score} to each user, which we define as the maximum number of rounds in which the user sends the same index of key to the server.  Specifically, in round $t$, the server computes the number of rounds $N_{k,u}^t$ in which the user $u$ has sent key $k$ to the server. The  anomaly score of user $u$ in round $t$ is the maximum $N_{k,u}^t$ over possible $k$'s. 
If the anomaly score for a user is no smaller than $\eta$ (called \emph{anomaly threshold}), then we mark the user as a fake one. We calculate the anomaly score of each user  and detect fake users in each round. When a user is detected as fake in a certain round, we exclude the user in the subsequent rounds for mean estimation. Moreover, we re-estimate the frequencies of keys based on the messages sent by users in the first round by removing the ones belonging to the detected fake users. 

\begin{figure}[!t]
    \centering 
    {{\includegraphics[width=0.20\textwidth]{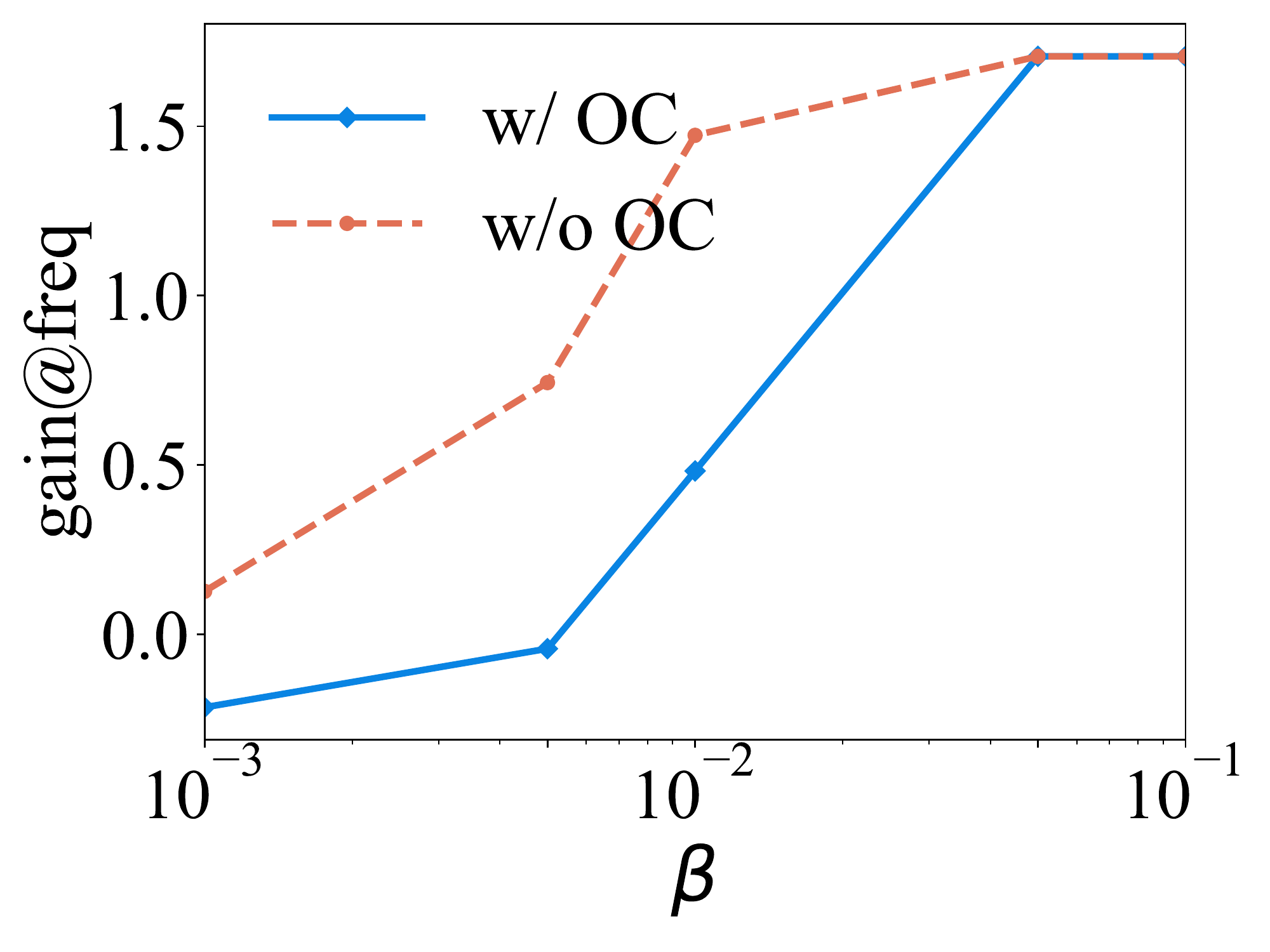}}}
    {{\includegraphics[width=0.20\textwidth]{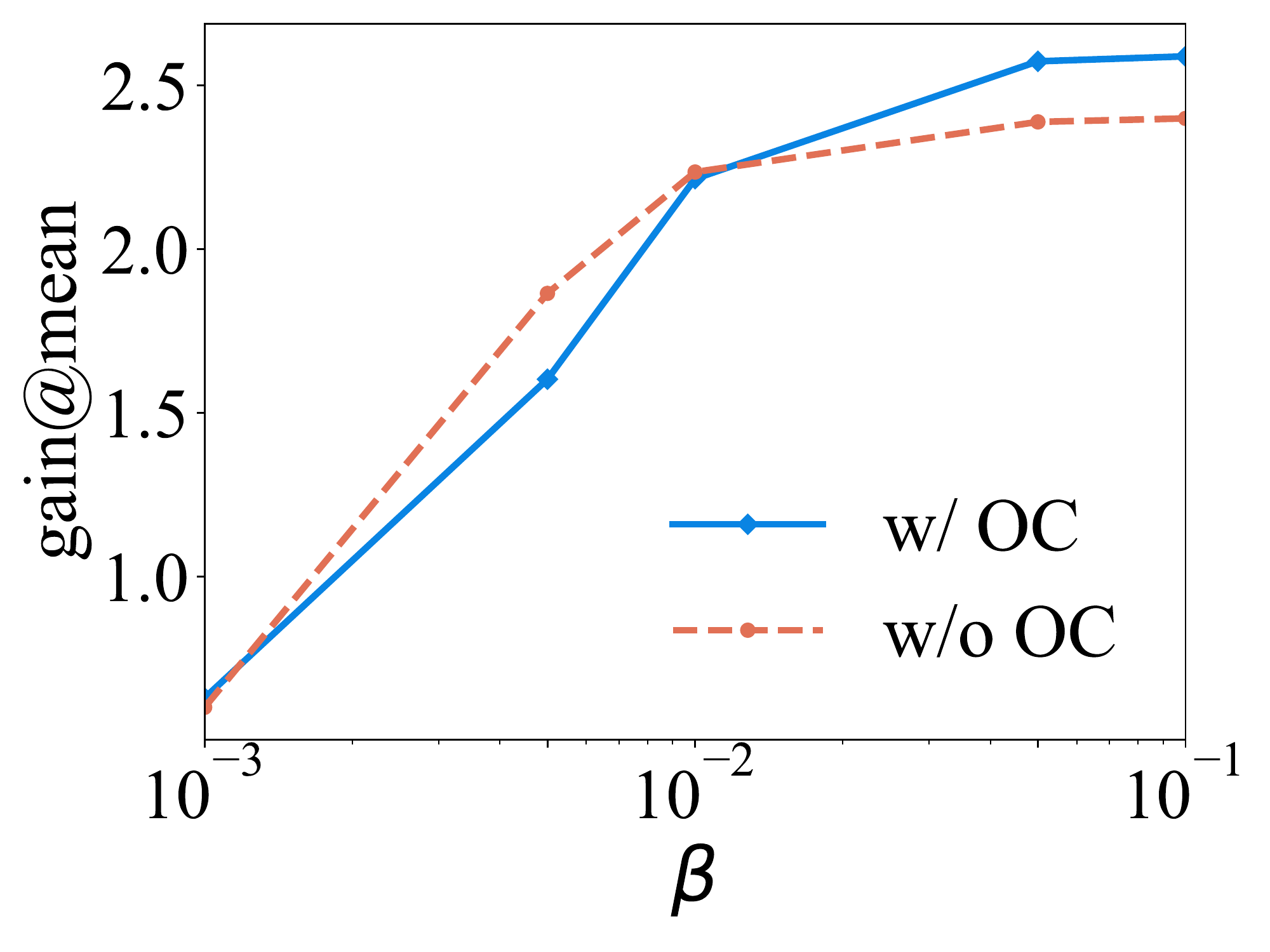}}}
    
    {{\includegraphics[width=0.19\textwidth, height=2.65cm]{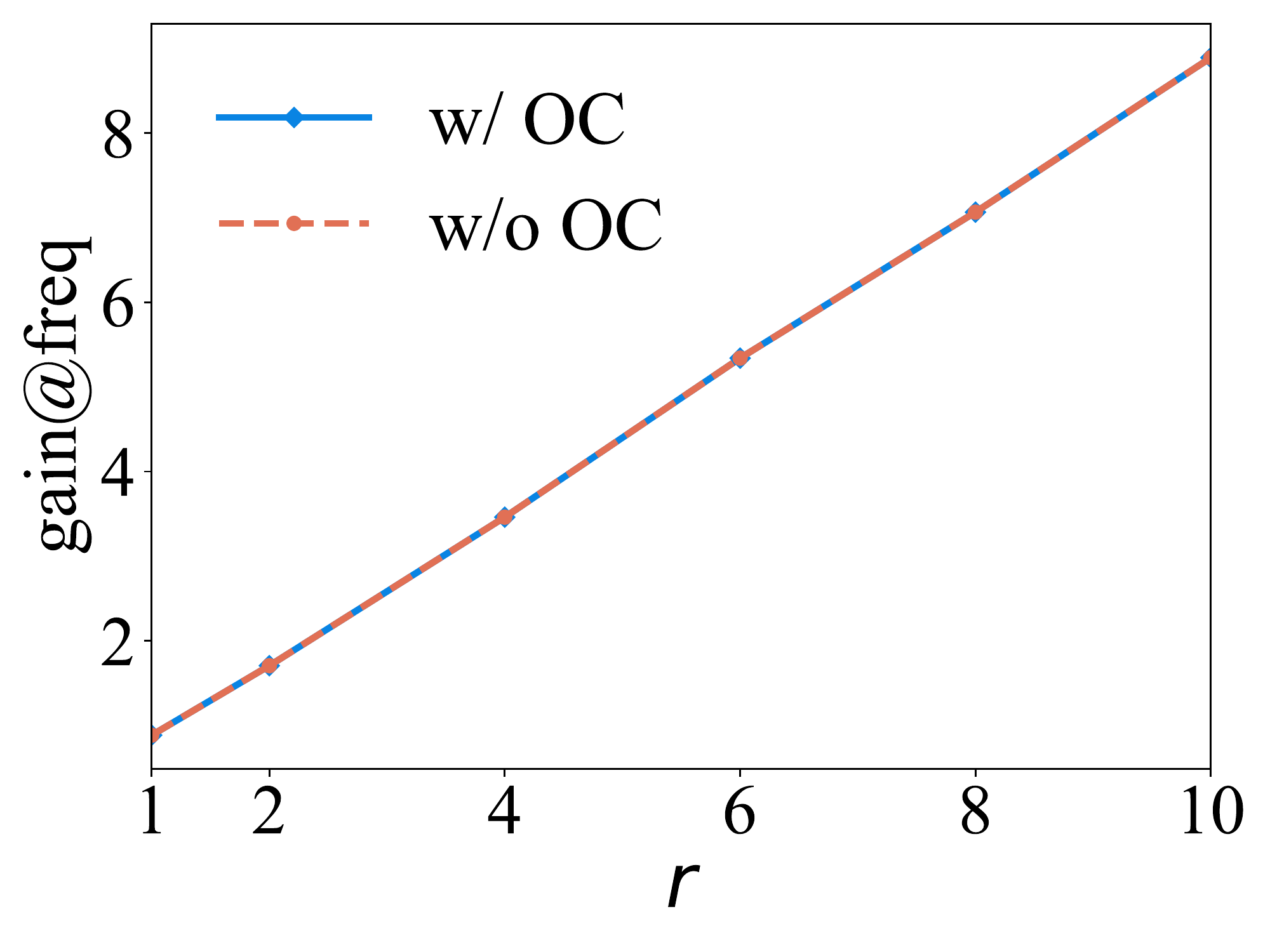}}}
    {{\includegraphics[width=0.20\textwidth]{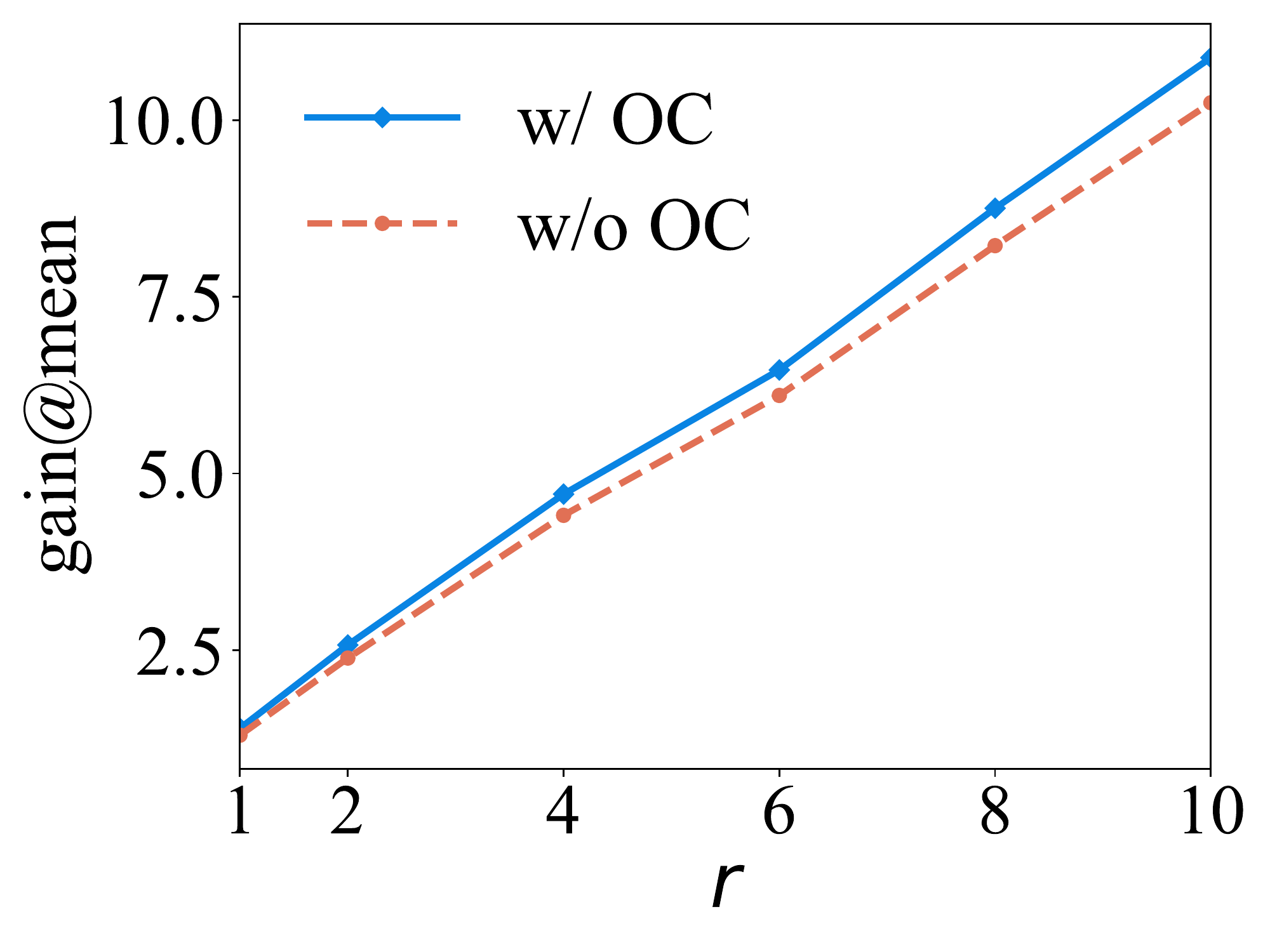}}}
\caption{Impact of $\beta$ and $r$ on the defense effectiveness of OC against M2GA for PCKV-UE on TalkingData.}
\label{fig:exp_oc_talkingdata_m2ga_pckvue}
\end{figure}

\begin{figure}[!t]
    \centering 
    {\includegraphics[width=0.20\textwidth]{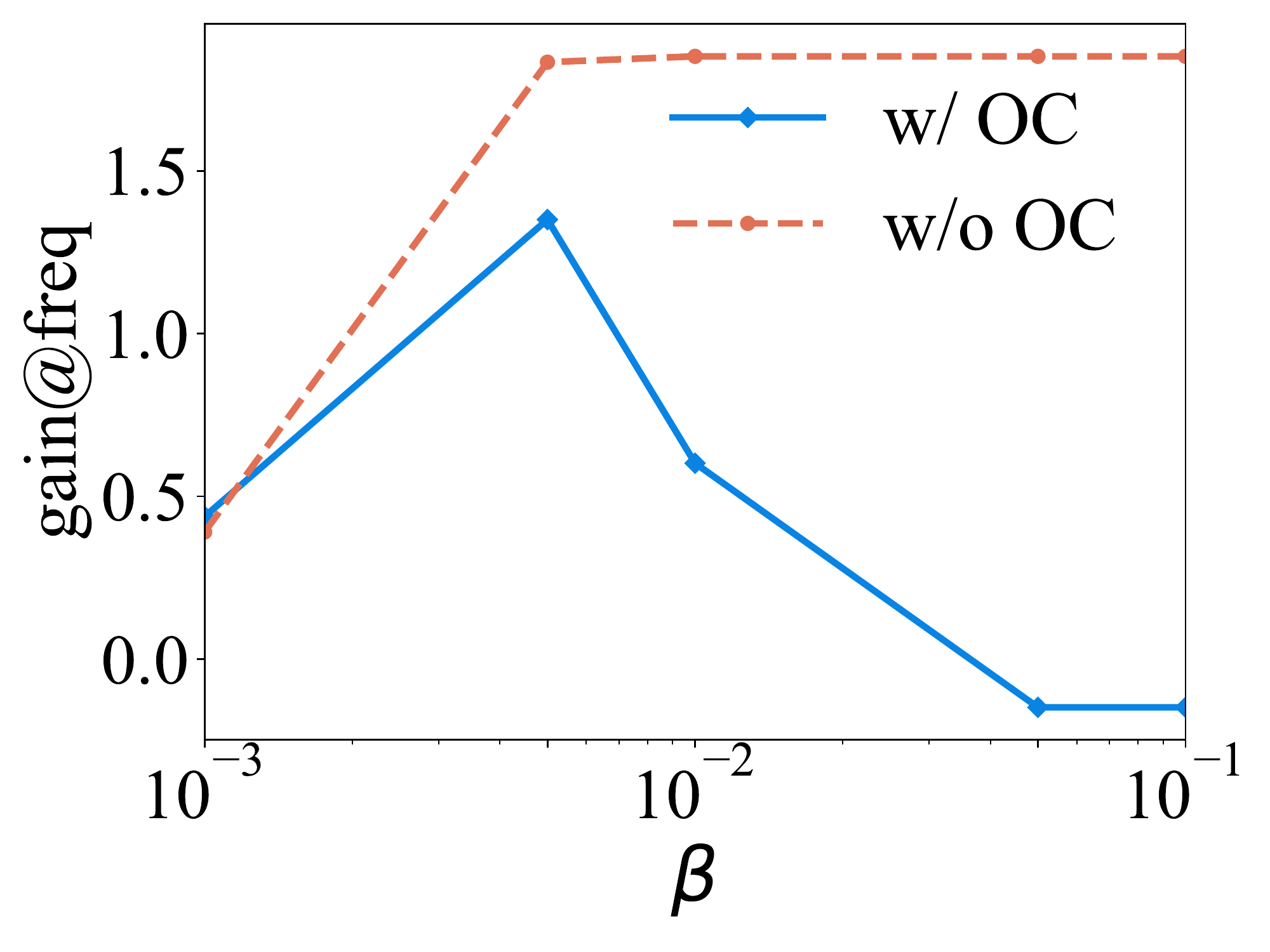}}
    {\includegraphics[width=0.20\textwidth]{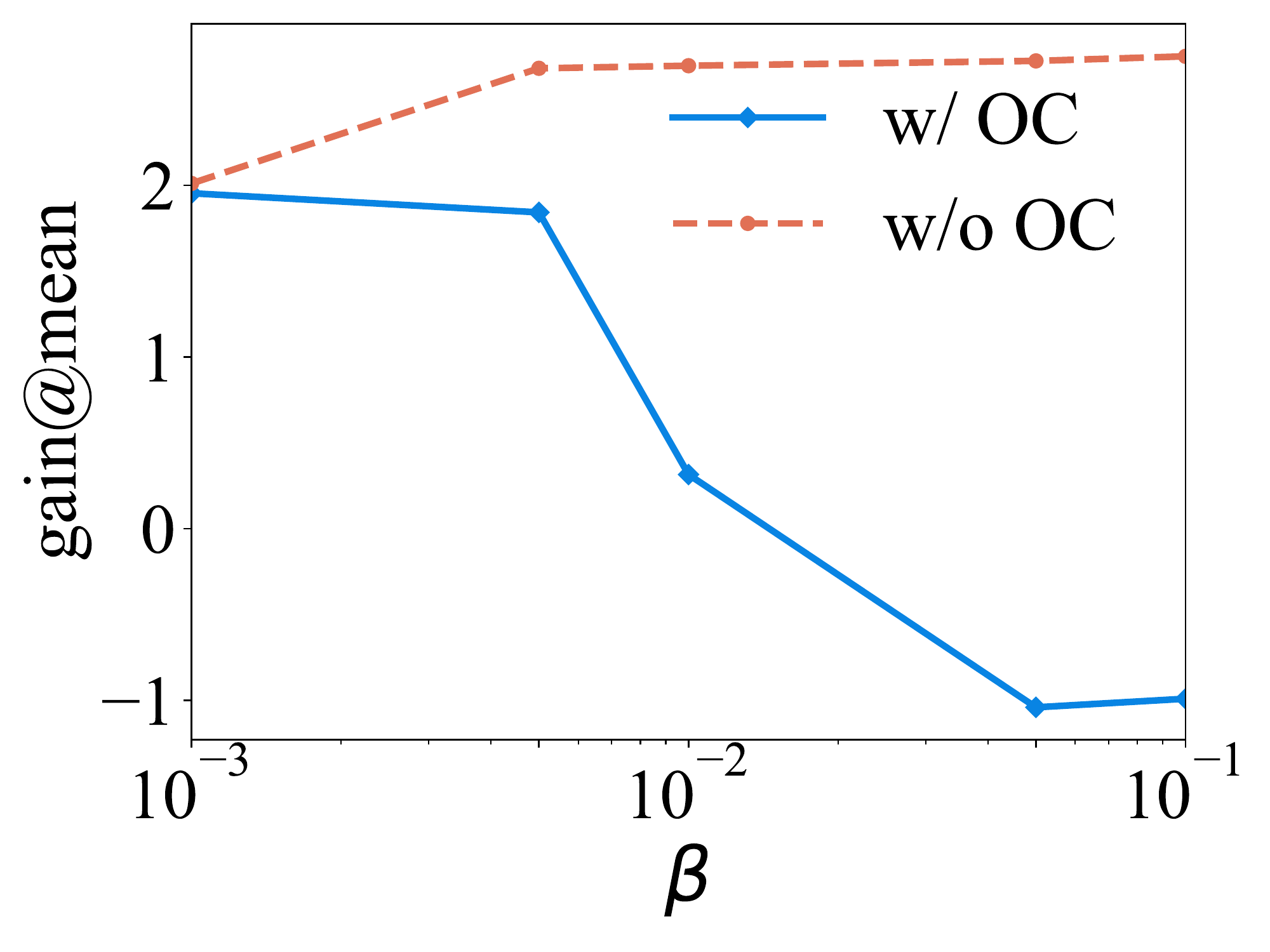}}
    
    {\includegraphics[width=0.20\textwidth, height=2.69cm]{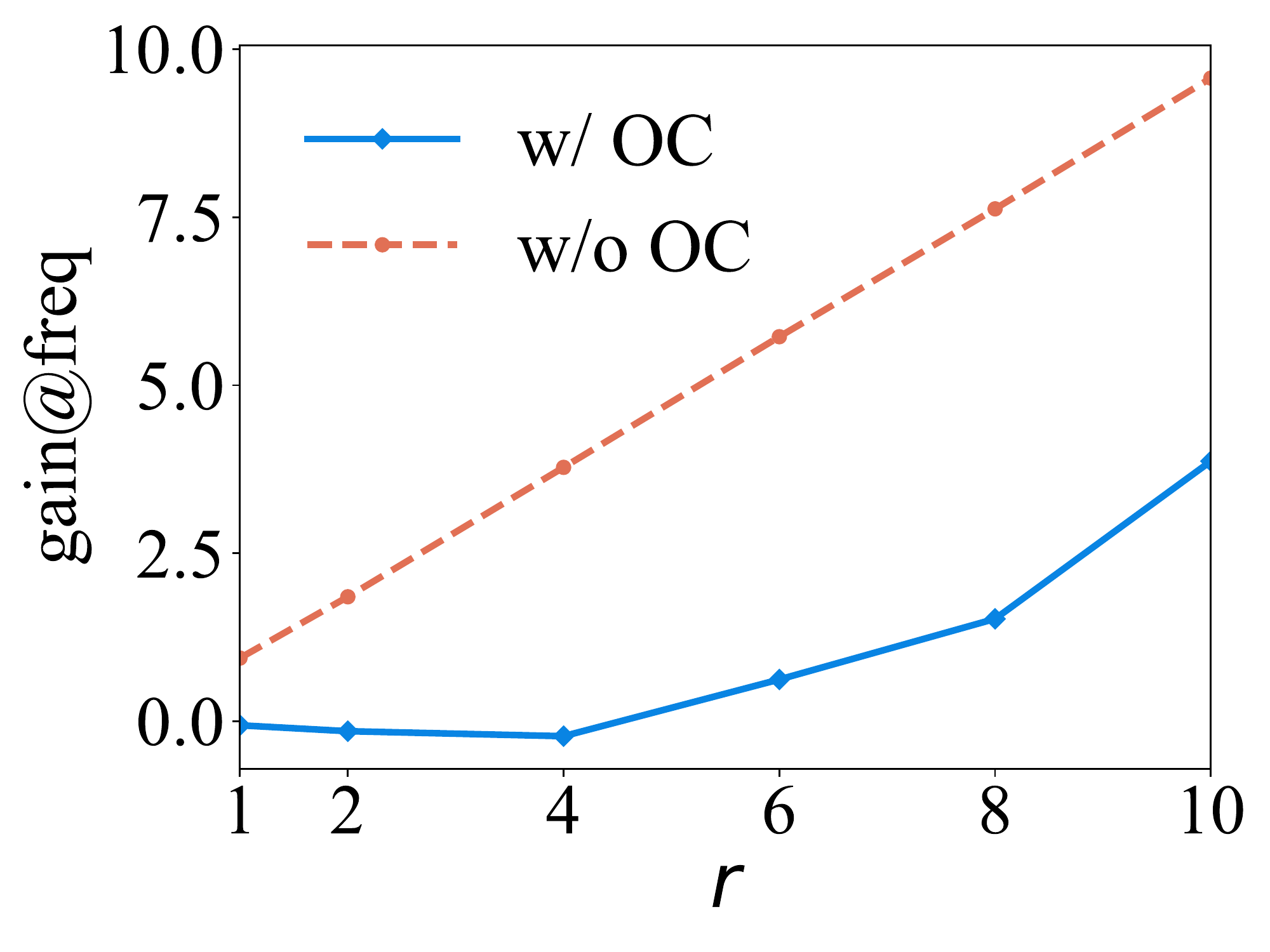}}
    {\includegraphics[width=0.20\textwidth]{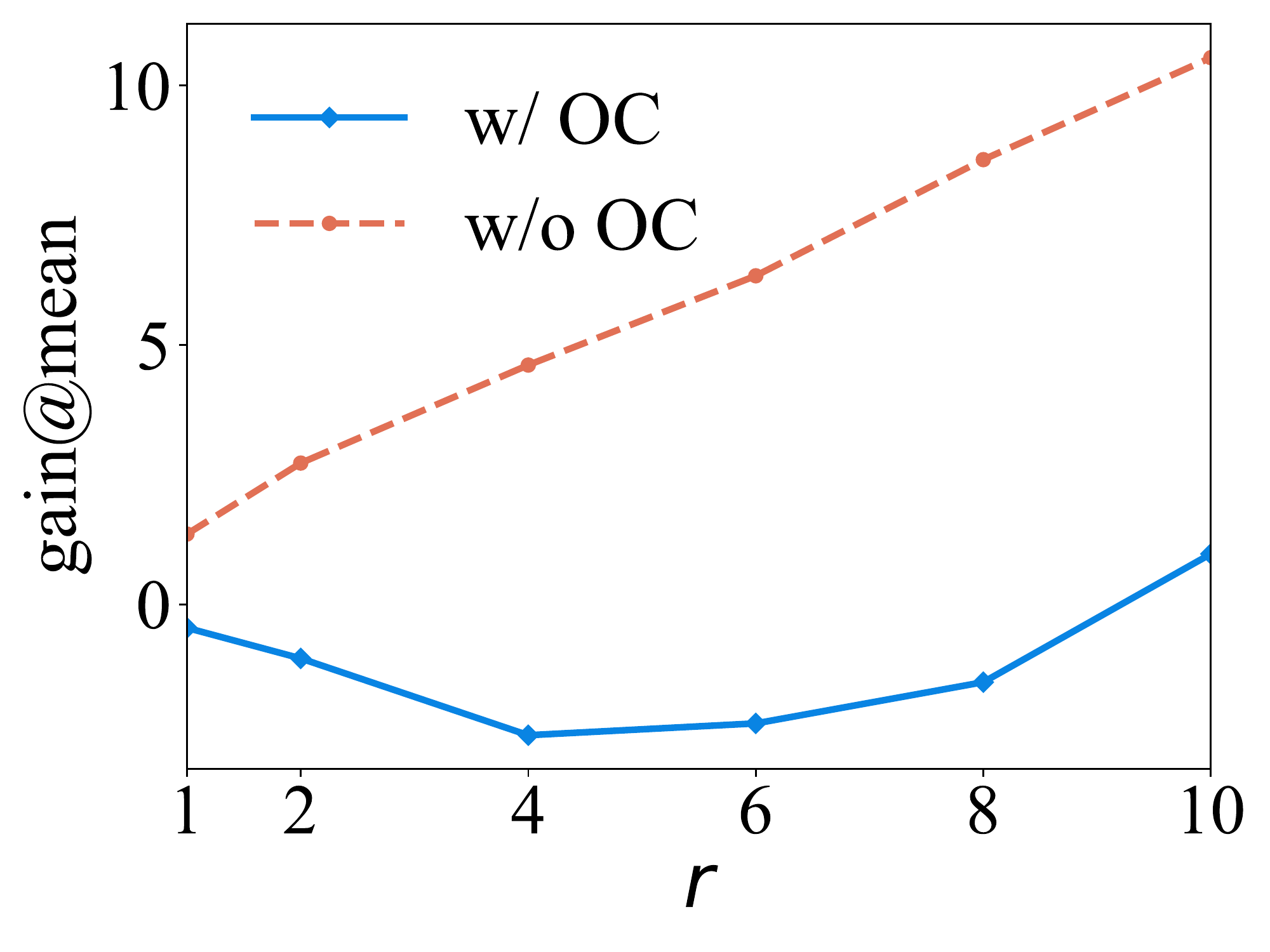}}
\caption{Impact of $\beta$ and $r$ on the defense effectiveness of OC against M2GA for PCKV-GRR on TalkingData.}
\label{fig:exp_oc_talkingdata_m2ga_pckvgrr}
\end{figure}

\subsection{Experiments}
\subsubsection{Experimental Setup}
Unless otherwise mentioned, we adopt the following default parameters: $\beta=0.05$, $r=2$, $\epsilon=1.0$, $N_{\text{iter}}=10$, and $\eta=2$. We adopt \emph{frequency gain (gain@freq)} and \emph{mean gain (gain@mean)} of a set of target keys as the evaluation metrics (please refer to Section~\ref{evaluation_metric_exp_sec} for details). Moreover, we also consider \emph{False Positive Rate (FPR)} (or \emph{False Negative Rate (FNR)}), which is the fraction of genuine (or fake) users that are detected as fake (or genuine). We vary one parameter while keeping the others fixed to their default values to study the impact of it on the effectiveness of our defenses. 
Moreover, we evaluate M2GA since it is the strongest attack.

\subsubsection{Experimental Results}
\label{sec:def_results}
Figure~\ref{fig:exp_defense_talkingdata_m2ga_recall} shows the impact of $\beta$, $r$, and $\lambda$ on the FPRs and FNRs  of OC and AS against M2GA on TalkingData dataset. 
Figure~\ref{fig:exp_oc_talkingdata_m2ga_privkvm},~\ref{fig:exp_oc_talkingdata_m2ga_pckvue}, and~\ref{fig:exp_oc_talkingdata_m2ga_pckvgrr} show the impact of $\beta$ and $r$ on the defense effectiveness of OC against  M2GA  on TalkingData dataset for PrivKVM, PCKV-UE, and PCKV-GRR, respectively. Figure~\ref{fig:exp_as_talkingdata_m2ga} shows the impact of $\beta$ and $r$ on the defense effectiveness of AS against M2GA for PrivKVM  on TalkingData dataset. Note that in Figure~\ref{fig:exp_as_talkingdata_m2ga}, we set $\beta=0.001$ when exploring the impact of $r$ to better illustrate the impact of $r$ (AS is not effective in the default setting $\beta=0.05$ regardless of $r$).  

Our key observation is that the defenses are effective in some scenarios but have limited effectiveness in other scenarios.   For instance, when $\beta$ is small or $r$ is large, OC fails to detect the fake users with high FNR. Moreover, OC has high FPR (e.g., 22\% for PCKV-UE), which results in utility loss as a large fraction of genuine users are excluded from aggregation. For instance, when $\beta=0.05$ and $r=2$, OC for PCKV-GRR has a FPR of 5.5\% and the mean gain is -1.04, which means that the estimated mean decreases by 1.04 compared to the estimated mean without attack and defense. Similarly, when $\beta$ or $r$ is small, AS can detect a large fraction or all of fake users,  and thus  the frequency and mean gains of M2GA under AS become close to 0. However,  when $\beta$ or $r$ is large, the frequency gain and/or mean gain increase substantially. Our results show that new defenses are needed to defend against our attacks. 

\begin{figure}[!t]
    \centering 
    {\includegraphics[width=0.20\textwidth]{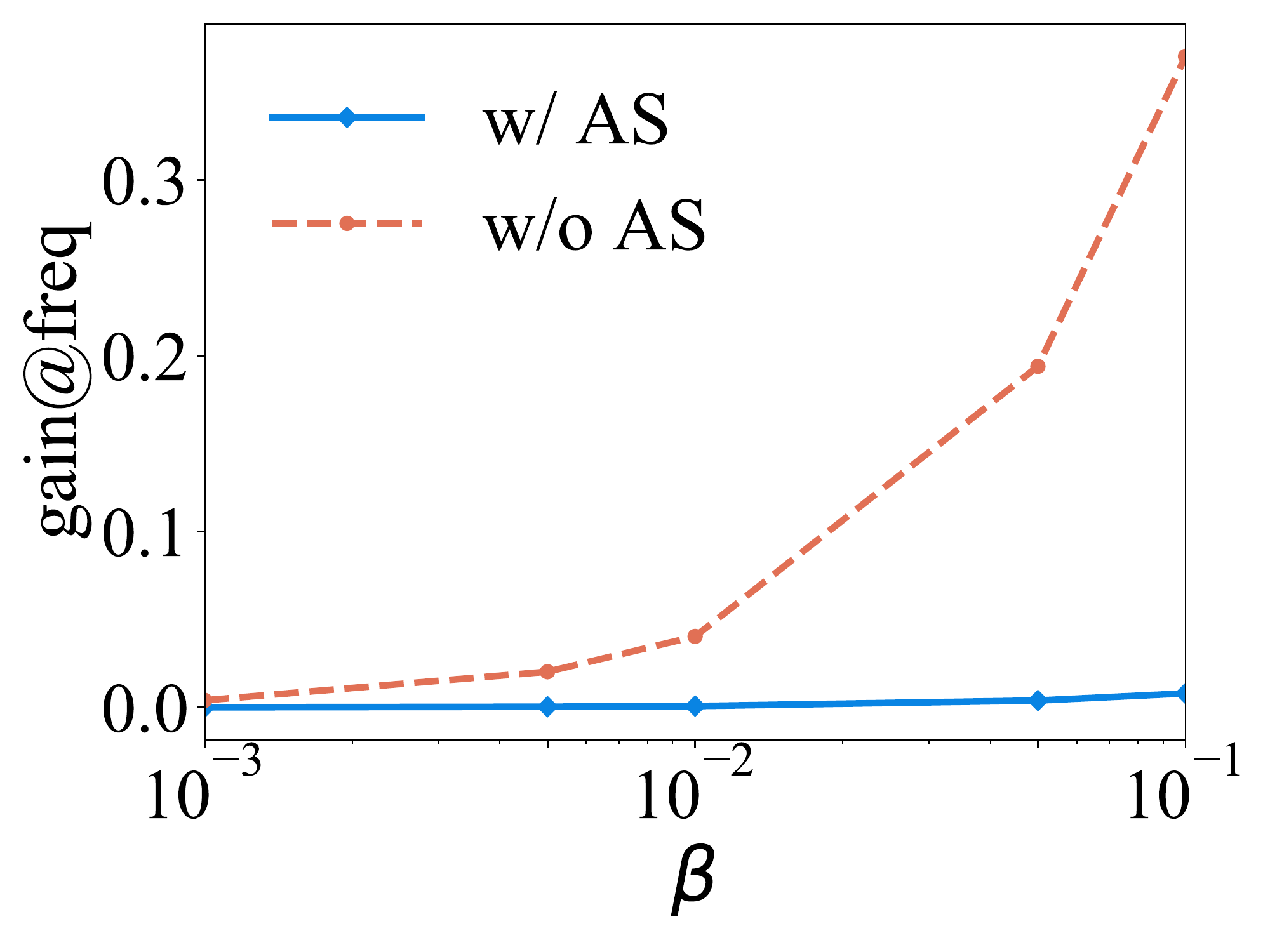}}
    {\includegraphics[width=0.20\textwidth]{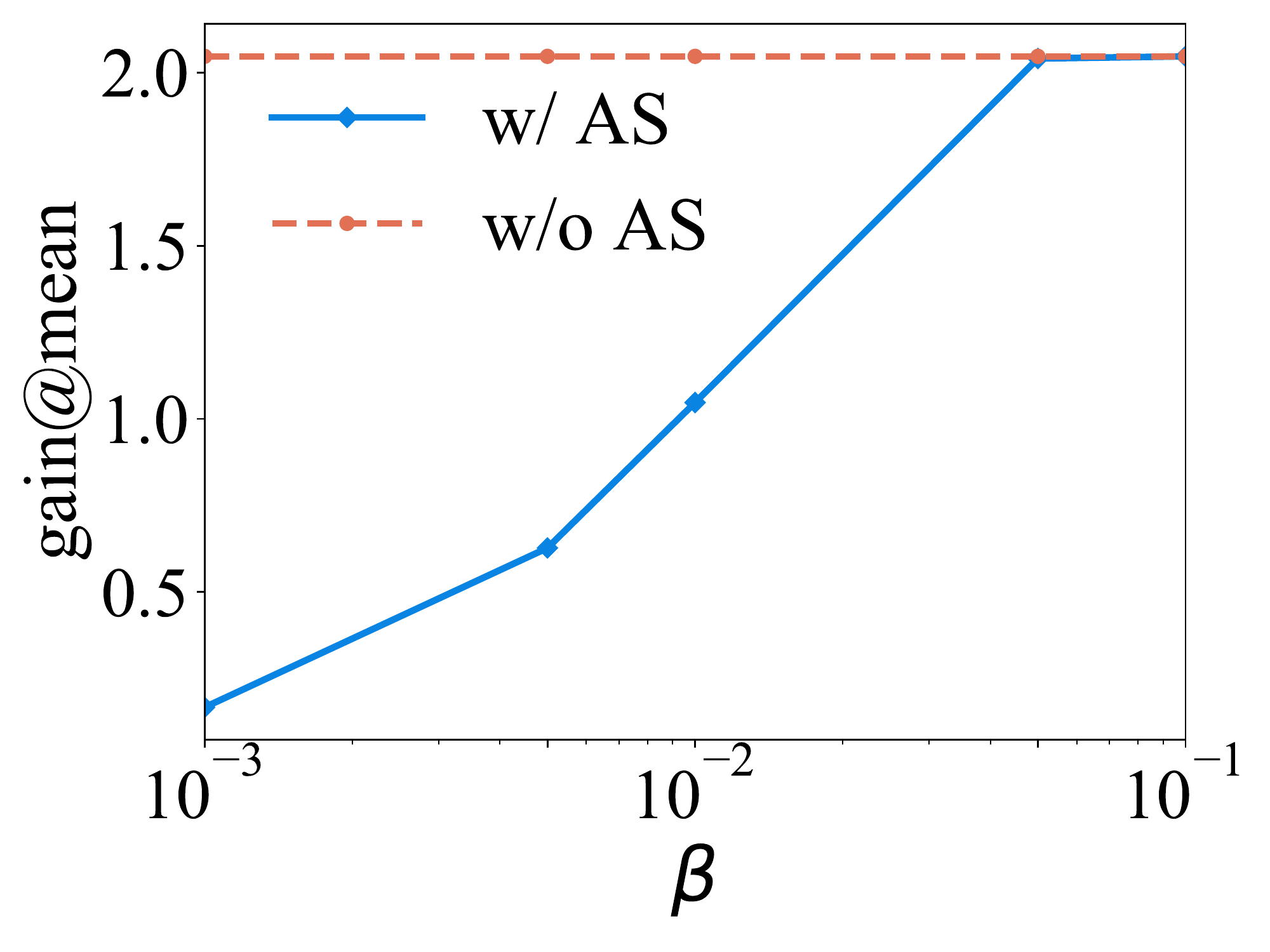}}
    
    {\includegraphics[width=0.20\textwidth]{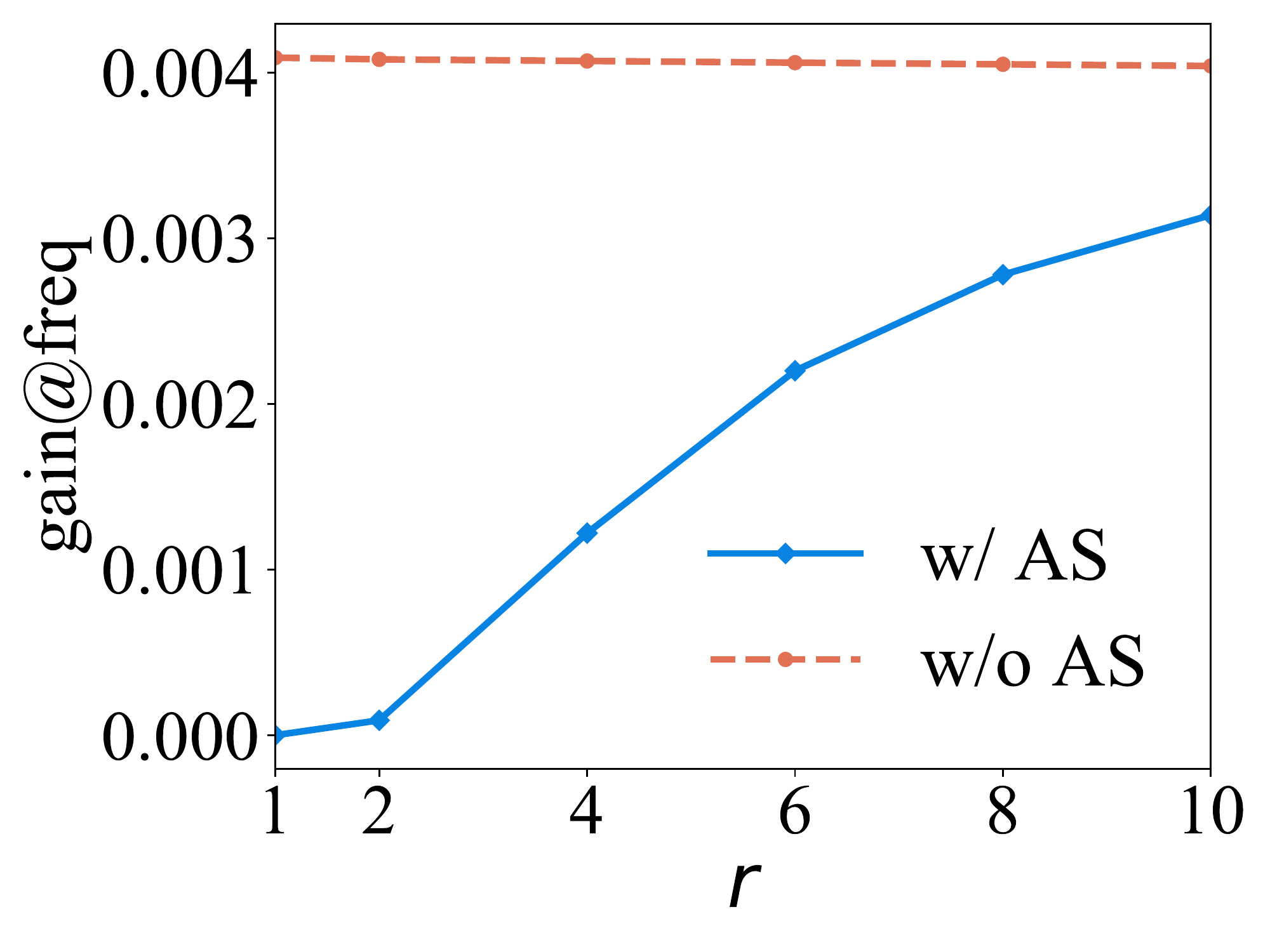}}
    {\includegraphics[width=0.19\textwidth, height=2.65cm]{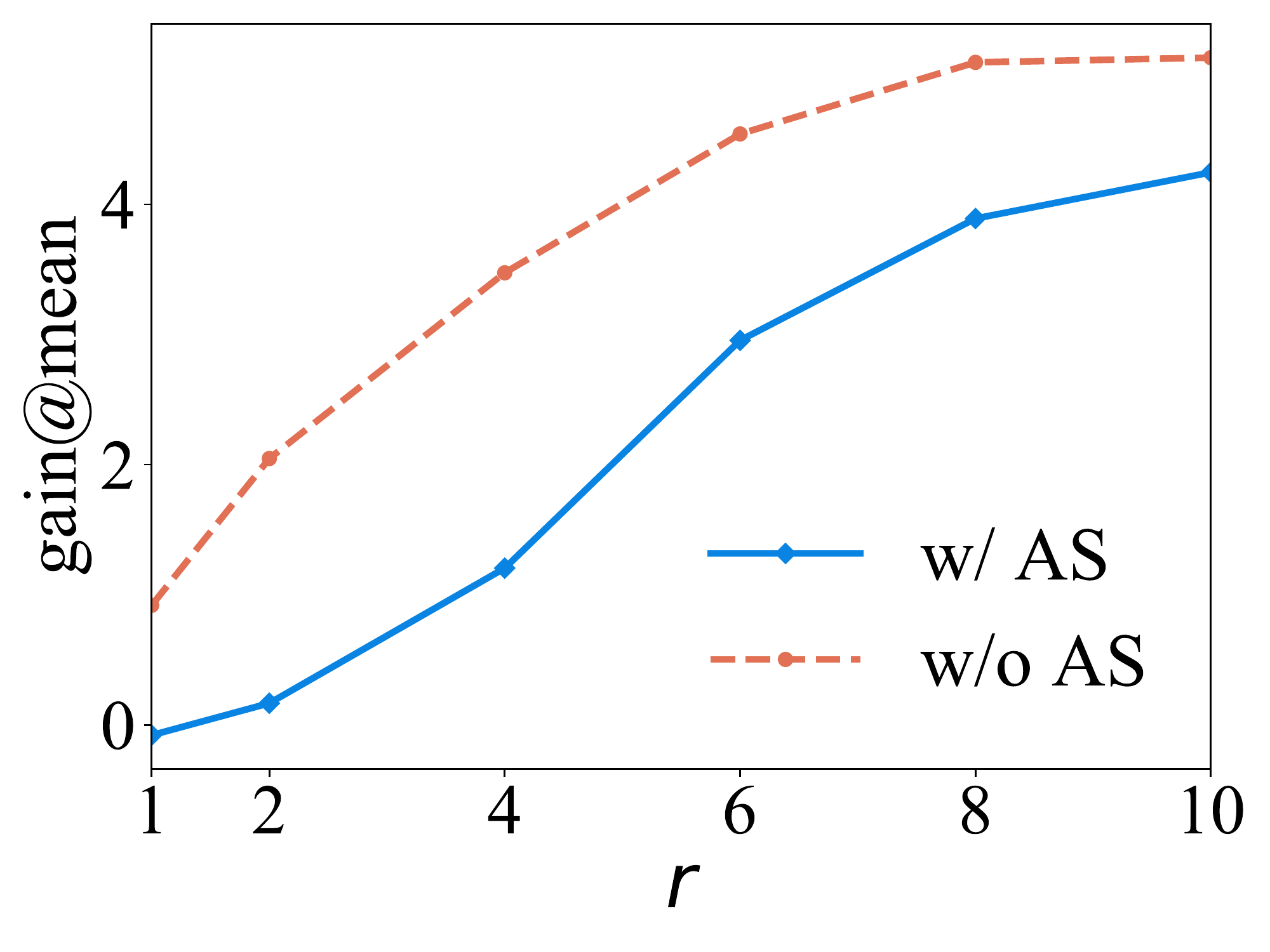}}
\caption{Impact of $\beta$ and $r$ on the defense effectiveness of AS against M2GA for PrivKVM on TalkingData.}
\label{fig:exp_as_talkingdata_m2ga}
\end{figure}

\subsection{Other Defenses}
Another defense is to use verifiable computing in the LDP protocols. For instance, the server may leverage homomorphic encryption \cite{kato2021preventing} when collecting the key-value pairs. However, such methods incur large computational overhead on the user side, downgrading the user experience. Other potential defenses include detecting fake users based on additional information about the users, e.g., their social connections \cite{yu2006sybilguard,danezis2009sybilinfer,gong2014sybilbelief,jia2017random,fu2017robust,wang2017gang,wang19ndss} or registration information~\cite{yuan2019detecting}. Nevertheless, these detection methods are not applicable when the needed information is not available.


\section{Conclusion and Future Work}
In this paper, we conduct the first systematic study on  poisoning attacks to LDP protocols for key-value data. We show  such poisoning attacks can be formulated as a two-objective optimization problem. Our results show that an attacker can promote the estimated frequencies and mean values of attacker-chosen target keys. 
We also explore two defenses, which are effective in some scenarios but are ineffective in others. An interesting future work is to study  defenses against our attacks. 

\section*{Acknowledgments}

We thank the anonymous reviewers for their constructive comments.  This work was supported by the National Science Foundation under Grants No. 1937786 and 2112562. Any opinions, findings and conclusions or recommendations expressed in this material are those of the author(s) and do not necessarily reflect the views of the funding
agencies.


\bibliographystyle{plain}
\bibliography{reference}

\appendix
\section{Appendix}
\subsection{Optimality of M2GA}
\label{appendix:proof}
\begin{theorem}
In a given execution of any of the three LDP protocols, M2GA achieves the optimal mean gain if there is only one target key $k$ and $n_1^k\ge n_{-1}^k>\frac{(n+m)b}{2}$.
\end{theorem}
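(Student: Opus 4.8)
The plan is to reduce the theorem to an elementary one-variable monotonicity statement. Fix the execution of the protocol, so that the genuine users' support counts $n_1^k$ and $n_{-1}^k$ for the single target key $k$ are constants and the pre-attack estimate $\hat m_k$ does not depend on the crafted messages; hence maximizing the realized mean gain $\Delta\hat m_k=\tilde m_k-\hat m_k$ is equivalent to maximizing the post-attack estimate $\tilde m_k$. By the unified mean estimator (Eq.~\ref{eq:mean_estimation_ver2_pckv} applied to all $n+m$ reporting users),
\[
\tilde m_k=\frac{a-b}{a(2p-1)}\cdot\frac{(n_1^k-n_{-1}^k)+(\tilde n_1^k-\tilde n_{-1}^k)}{(n_1^k+n_{-1}^k)+(\tilde n_1^k+\tilde n_{-1}^k)-(n+m)b},
\]
where $\tilde n_1^k,\tilde n_{-1}^k\ge 0$ are the support counts the $m$ fake users contribute to $\langle k,1\rangle$ and $\langle k,-1\rangle$. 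First I would verify that $\tfrac{a-b}{a(2p-1)}>0$ for each of the three protocols --- i.e. $a>b>0$ and $p>\tfrac12$, which reduces in every case to $e^{\epsilon}>1$ (for PCKV-GRR also to $(d'-1)\ell(e^{\epsilon}-1)>0$) --- so this leading factor may be dropped, and it suffices to maximize $h(x,y):=\frac{x-y+C_1}{x+y+C_2'}$ over the attacker's feasible set, where $x=\tilde n_1^k$, $y=\tilde n_{-1}^k$, $C_1=n_1^k-n_{-1}^k$, and $C_2'=n_1^k+n_{-1}^k-(n+m)b$. Consistent with Section~\ref{sec:theo_ana}, I work with the unclipped estimator.

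Second, I would pin down the feasible set. In PrivKVM and PCKV-GRR a fake user reports exactly one key--value pair, and in PCKV-UE the target coordinate $\mathbf y[k]$ of a fake user's vector is a single entry of $\{1,-1,0\}$; whatever a fake user places on non-target coordinates (including the disguise of M2GA for PCKV-UE, or its behavior in later PrivKVM rounds once the execution is fixed) is irrelevant to $\tilde m_k$. Hence each fake user adds at most one unit to $\tilde n_1^k+\tilde n_{-1}^k$, so the feasible region is $\{(x,y):x,y\in\mathbb Z_{\ge0},\ x+y\le m\}$, and M2GA with $r=1$ realizes exactly the corner $(x,y)=(m,0)$: all $m$ fake users report $\langle k,1\rangle$ (equivalently set $\mathbf y[k]=1$).

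Third, the optimization is immediate, and this is where the hypotheses enter. From $n_1^k\ge n_{-1}^k$ and $n_{-1}^k>\tfrac{(n+m)b}{2}$ we get $C_1\ge 0$ and $C_2'\ge 2n_{-1}^k-(n+m)b>0$, so the denominator of $h$ is strictly positive throughout the feasible region and $\tilde m_k$ is well-defined. Writing $s=x+y$ and $t=x-y$, for fixed $s$ the ratio $h=\frac{t+C_1}{s+C_2'}$ is increasing in $t$, so $y=0,\ x=s$ is optimal; substituting gives $h=\frac{s+C_1}{s+C_2'}$, whose $s$-derivative has the sign of $C_2'-C_1=2n_{-1}^k-(n+m)b>0$, so $s=m$ is optimal. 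Thus $(\tilde n_1^k,\tilde n_{-1}^k)=(m,0)$ maximizes $\tilde m_k$, which is exactly M2GA's message profile, proving optimality.

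I do not expect a genuine obstacle here; the only care needed is the per-protocol bookkeeping in the second step --- confirming that M2GA attains $(m,0)$ when $r=1$ and that no other component of a fake user's message can move $\tilde m_k$ --- together with the short positivity checks for $\tfrac{a-b}{a(2p-1)}$ in the first step. Everything else collapses to the one-line monotonicity argument above.
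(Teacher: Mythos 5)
Your proposal is correct and follows essentially the same route as the paper's proof in Appendix A: reduce to maximizing $\tilde m_k$ as a ratio of affine functions of $(\tilde n_1^k,\tilde n_{-1}^k)$ via Eq.~(\ref{eq:mean_estimation_ver2_pckv}), then use the hypotheses $n_1^k\ge n_{-1}^k>\frac{(n+m)b}{2}$ to show the ratio is maximized at $(\tilde n_1^k,\tilde n_{-1}^k)=(m,0)$, which is what M2GA realizes. The only differences are presentational --- you argue monotonicity via the $(s,t)=(x+y,x-y)$ reparametrization over the simplex $x+y\le m$, whereas the paper computes the two partial derivatives directly over the box $[0,m]^2$ --- and both yield the same corner optimum.
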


\begin{proof}

When there is only one target key $k$, the mean gain in a given execution of a LDP protocol can be written as 
$G_m(\mathbb{Y})=\tilde{m}_k - \hat{m}_k$, where the second term is irrelevant to the attack. Therefore, $G_m(\mathbb{Y})$ is maximized when $\tilde{m}_k$ is maximized. According to Equation (\ref{eq:mean_estimation_ver2_pckv}), we have the following equation:
\begin{align}
    \tilde{m}_k=\frac{\left(n^k_{1}-n^k_{-1} + \tilde{n}^k_{1}-\tilde{n}^k_{-1} \right)(a-b)}{a(2 p-1)\left(n^k_{1}+n^k_{-1}+ \tilde{n}^k_{1}+\tilde{n}^k_{-1} - (n+m)b\right)},
\end{align}
where $n_1^k$ and $n_{-1}^k$ are constants in a given execution. 
For simplicity, we let $x=n_1^k+n_{-1}^k-(n+m)b$, $y=n^k_1-n^k_{-1}$, and $z=\frac{a-b}{a(2p-1)}$. Then we can rewrite $\tilde{m}_k$ as follows:
\begin{align}
    \tilde{m}_k=z\cdot\frac{y + \tilde{n}^k_{1}-\tilde{n}^k_{-1}}{x+\tilde{n}^k_{1}+\tilde{n}^k_{-1}},
\end{align}
where $z>0$. Taking the partial derivative with respect to $\tilde{n}^k_{1}$ and $\tilde{n}^k_{-1}$, we have the following equations:
\begin{align}
    \frac{\partial \tilde{m}_k}{\partial \tilde{n}^k_{1}}=\frac{z}{(x+\tilde{n}^k_{1}+\tilde{n}^k_{-1})^2}\cdot (2\tilde{n}^k_{-1}+x-y)\\
    \frac{\partial \tilde{m}_k}{\partial \tilde{n}^k_{-1}}=\frac{z}{(x+\tilde{n}^k_{1}+\tilde{n}^k_{-1})^2}\cdot (-2\tilde{n}^k_{1}-x-y)
\end{align}
If $n_1^k\ge n_{-1}^k>\frac{(n+m)b}{2}$, we have $x-y>0$ and $x+y>0$. Since $\tilde{n}^k_{-1}$ and $\tilde{n}^k_{1}$ are both in the range $[0,m]$, we have $\frac{\partial \tilde{m}_k}{\partial \tilde{n}^k_{1}}>0$ and $\frac{\partial \tilde{m}_k}{\partial \tilde{n}^k_{-1}}<0$. Therefore, $\tilde{m}_k$ reaches the maximum value when $\tilde{n}^k_{1}=m$ and $\tilde{n}^k_{-1}=0$, which is what M2GA does. In other words, M2GA maximizes the mean gain $G_m(\mathbb{Y})$ for the given execution.

\end{proof}

\end{document}